\numberwithin{equation}{section}
\begin{document}

\title{The stochastic field of aggregate utilities and its saddle
  conjugate.}

\author{P. Bank\\
  Technische Universit{\"a}t Berlin\\
  Institut f{\"u}r Mathematik\\
  Stra{\ss}e des 17. Juni 135, 10623 Berlin, Germany \\
  (bank@math.tu-berlin.de) \and D. Kramkov \thanks{The author also
    holds a part-time position at the University of Oxford. This
    research was supported in part by the Carnegie Mellon-Portugal
    Program and by the Oxford-Man Institute for
    Quantitative Finance at the University of Oxford.} \\
  Carnegie Mellon University,\\
  Department of  Mathematical Sciences,\\
  5000 Forbes Avenue, Pittsburgh, PA, 15213-3890, US \\
  (kramkov@cmu.edu)} \date{\today}

\maketitle
\begin{abstract}
  We describe the sample paths of the stochastic field $F =
  F_t(v,x,q)$ of aggregate utilities parameterized by Pareto weights
  $v$ and total cash amounts $x$ and stocks' quantities $q$ in an
  economy. We also describe the sample paths of the stochastic field
  $G = G_t(u,y,q)$, which is conjugate to $F$ with respect to the
  saddle arguments $(v,x)$, and obtain various conjugacy relations
  between these stochastic fields. The results of this paper play a
  key role in our study in \cite{BankKram:11a}, \cite{BankKram:11b},
  and \cite{BankKram:13} of a continuous-time price impact model.
\end{abstract}

\begin{description}
\item[Keywords:] Envelope theorem, equilibrium, utility indifference
  prices, Pareto allocation, price impact model, risk-aversion,
  risk-tolerance, saddle function, stochastic field.
\item[MSC:] 52A41, 60G60, 91G10, 91G20.
\item[JEL Classification:] G11, G12, G13, C61.
\end{description}

\tableofcontents

\section{Setup and motivation}
\label{sec:setup}

Let $u_m = u_m(x)$, $m=1,\dots,M$, be functions on the real line
$\mathbf{R}$ satisfying
\begin{Assumption}
  \label{as:1}
  Each function $u_m$ is strictly concave, strictly increasing,
  continuously differentiable, and such that
  \begin{equation}
    \label{eq:1}
    \lim_{x\to \infty} u_m(x) = 0. 
  \end{equation} 
\end{Assumption}

The normalization to zero in~\eqref{eq:1} is added only for notational
convenience.  From Assumption~\ref{as:1} we clearly deduce that
\begin{equation}
  \label{eq:2}
  \lim_{x\to-\infty} u_m(x) = -\infty. 
\end{equation}  
Many of our results are derived under the additional condition which,
in particular, implies the boundedness of $u_m$ from above.
\begin{Assumption}
  \label{as:2}
  Each function $u_m$ is twice continuously differentiable and, for
  some constant $c>0$,
  \begin{equation}
    \label{eq:3}
    \frac1c \leq a_m(x)  \set -\frac{u_m''(x)}{u_m'(x)} \leq c,
    \quad x \in \mathbf{R}. 
  \end{equation}
\end{Assumption}

In the model of price impact discussed below, the functions $(u_m)$
and $(a_m)$ describe agents' utilities and absolute
risk-aversions. From Assumptions~\ref{as:1} and~\ref{as:2} we deduce
that
\begin{equation}
  \label{eq:4}
  \frac1c \leq -\frac{u_m'(x)}{u_m(x)} \leq c, \quad x \in \mathbf{R}.  
\end{equation}

Denote by $r=r(v,x)$ the $v$-weighted $\sup$-convolution:
\begin{displaymath}
  r(v,x) \set \max_{x^1+\dots+ x^M = x} \sum_{m=1}^M v^m u_m(x^m),
  \quad (v,x) \in (0,\infty)^M \times \mathbf{R}.
\end{displaymath}
The properties of this function are collected in
Section~\ref{sec:repr-mark-maker}. In particular, for every $v\in
(0,\infty)^M$, the function $r(v,\cdot)$ on $\mathbf{R}$ satisfies
same Assumptions~\ref{as:1} and~\ref{as:2} as each $u_m$. As usual, in
financial economics, we call $r=r(v,x)$ the \emph{aggregate utility
  function}.

Let $\Sigma_0$ and $\psi = (\psi^j)_{j=1,\dots,J}$ be random variables
on a complete filtered probability space $(\Omega, \mathcal{F}_T,
(\mathcal{F}_t)_{0 \leq t \leq T}, \mathbb{P})$ with a finite maturity
$T$. Denote
\begin{displaymath}
  \Sigma(x,q) \set \Sigma_0 + x+\ip{q}{\psi} = \Sigma_0  + x +
  \sum_{j=1}^J q^j \psi^j, \quad (x,q) \in \mathbf{R} 
  \times \mathbf{R}^J,
\end{displaymath}
and assume that
\begin{equation}
  \label{eq:5}
  \mathbb{E}[r(v,\Sigma(x,q))] > -\infty, \quad 
  (v,x,q) \in \mathbf{A}, 
\end{equation}
where 
\begin{equation}
  \label{eq:6}
  \mathbf{A} \set (0,\infty)^M \times \mathbf{R} \times \mathbf{R}^J.
\end{equation}

The main results of the paper, Theorems~\ref{th:9} and \ref{th:10},
describe the sample paths of the stochastic fields $F=F_t(a)$ and $G =
G_t(b)$ given, for $t \in [0,T]$, by
\begin{align*}
  F_t(a) &\set \mathbb{E}[r(v,\Sigma(x,q))|\mathcal{F}_t], \quad
  a=(v,x,q)\in \mathbf{A}, \\
  G_t(b) &\set \sup_{v\in (0,\infty)^M}\inf_{x\in
    \mathbf{R}}[\ip{v}{u} + xy - F_t(v,x,q)], \quad b=(u,y,q) \in
  \mathbf{B},
\end{align*}
where 
\begin{equation}
  \label{eq:7}
  \mathbf{B} \set (-\infty,0)^M \times (0,\infty) \times \mathbf{R}^J. 
\end{equation}
In particular, Theorem~\ref{th:9} shows that these fields have
versions which are differentiable in their spatial arguments and RCLL
with respect to time, while Theorem~\ref{th:10} provides uniform
estimates for their spatial second order derivatives depending only on
the risk-aversion bound of Assumption~\ref{as:2}.  In view of its
construction, we call $F$ the \emph{stochastic field of aggregate
  utilities}.

Of course, the basic regularity properties of $r=r(v,x)$ are
well-known, albeit mostly stated for utility functions on the positive
half line. We refer to \citet{Dana:93, DanaLeVan:96} as well as
\citet{KaratLehocShr:90} and the references therein for pertaining
results, e.g., on the differentiability in $x$ and on the
sensitivities with respect to $v$.  To the best of our knowledge
though, the filtered version of the induced expected utilities as
described by the field $F$ has not been investigated in depth
before. In particular, its structure as an RCLL process in a space of
saddle functions seems to be new. We also could not find a reference
where the saddle conjugate $G$ would be analyzed or made used of.

Our work is motivated by the study of a financial
model with price impact; see the accompanying papers
\cite{BankKram:11a}, \cite{BankKram:11b}, and \cite{BankKram:13}.  In
this model, $M$ market makers quote \emph{utility indifference} prices
for $J$ stocks to a large investor. The market makers' preferences are
specified by the utility functions $(u_m)_{m=1,\dots,M}$ for terminal
wealth. Their initial total endowment is given by $\Sigma_0$. The
stocks pay the terminal dividends $\psi = (\psi^j)_{j=1,\dots,J}$.

If, as a result of trading with a large investor up to time $t$, the
market makers acquire a cash amount $x$ and quantities of stocks $q =
(q^j)$, then their total endowment becomes $\Sigma(x,q)$.  The model
assumes that $\Sigma(x,q)$ is distributed among the market makers in
the form of a Pareto optimal allocation $\pi(a) = (\pi^m(a))$ which,
for $a=(v,x,q)\in \mathbf{A}$, is given by
\begin{equation}
  \label{eq:8}
  v^m u'_m(\pi^m(a)) = \frac{\partial{r}}{\partial
    x}(v,\Sigma(x,q)),\quad m=1,\dots,M.
\end{equation} 
Here the Pareto weight $v\in (0,\infty)^M$ is unique up to a
multiplication on a positive constant.  A trade at time $t$ in $\Delta
q$ stocks results in the new Pareto optimal allocation $\pi(a+\Delta
a) = \pi(x+\Delta x, q+\Delta q, v+\Delta v)$ determined by the
condition of utility indifference:
\begin{displaymath}
  \mathbb{E}[u_m(\pi^m(a))|\mathcal{F}_t] =
  \mathbb{E}[u_m(\pi^m(a+\Delta a))|\mathcal{F}_t], \quad
  m=1,\dots,M. 
\end{displaymath}

More generally, given an $\mathbf{R}^J$-valued demand process $Q =
(Q^j_t)$ for stocks, the model's evolution is described by the
processes $X = (X_t)$, of cash amounts, and $V=(V^m_t)$, of Pareto
weights, solving the equation:
\begin{equation}
  \label{eq:9}
  U^m_t(A_t) = U^m_0(A_0) + \int_0^t U^m(A_s,ds), \quad t\in [0,T],\;
  m=1,\dots,M.     
\end{equation}
Here $A \set (V,X,Q)$ is a predictable process with values in
$\mathbf{A}$, the stochastic field
\begin{displaymath}
  U^m_t(a)  = U^m(a,t) \set
  \mathbb{E}[u_m(\pi^m(a))|\mathcal{F}_t], \quad a\in \mathbf{A}, 
\end{displaymath}
represents the expected utilities for the $m$th market maker, and the
nonlinear stochastic integral of $A$ against this field is defined as
in Section~3.2 of \citet{Kunit:90}.  In particular, if $U^m$ admits
the integral representation
\begin{displaymath}
  U^m_t(a) =  U^m_0(a) + \int_0^t K^m_s(a)dB_s
\end{displaymath}
with respect to a $d$-dimensional Brownian motion $B$ such that the
$d$-dimensional stochastic field $K^m = K^m_s(a)$ has continuous
sample paths with respect to $a$, then
\begin{displaymath}
  \int_0^t U^m(A_s,ds) \set \int_0^t K^m_s(A_s)dB_s. 
\end{displaymath}
If we restrict $V$ to take values in the interior of the simplex:
\begin{equation}
  \label{eq:10}
  \mathbf{S}^M \set \descr{w \in (0,1)^M}{\sum_{m=1}^M w^m = 1},
\end{equation}
then, for given $Q$, the model's evolution is well-defined
if~\eqref{eq:9} has a unique solution $(X,V)$. The mathematical
challenge is then to specify the conditions when this is true.

A natural idea in the study of~\eqref{eq:9} is to substitute $U$ for
$(X,V)$ since then this equation takes a familiar ``explicit'' form.
The related construction of the stochastic fields $X = (X_t(u,q))$ and
$V = (V^m_t(u,q))$ inverse to $U = (U^m_t(x,u,q))$, that is, such that
\begin{displaymath}
  U^m_t(X_t(u,q),V_t(u,q),q) = u^m, \quad m=1,\dots,M,  
\end{displaymath}
is greatly simplified by the observation that $U$ is the $v$-gradient
of $F$:
\begin{displaymath}
  U^m_t(a) = \frac{\partial F_t}{\partial v^m}(a), \quad a = (v,x,q),
  \; m=1,\dots, M. 
\end{displaymath}
The theory of conjugate saddle functions, see Part VII of
\citet{Rock:70}, then allows us to express $X$ and $V$ in terms of the
partial derivatives of $G$:
\begin{align*}
  X_t(u,q) & = \frac{\partial G_t}{\partial y}(u,1,q) =
  G_t(u,1,q), \\
  V^m_t(u,q) & = \frac{\partial G_t}{\partial
    u^m}\frac{1}{\sum_{l=1}^M \frac{\partial G_t}{\partial
      u^l}}(u,1,q).
\end{align*}
Thus, the properties of the stochastic fields $U$, $X$, and $V$ follow
from those of $F$ and $G$, motivating the study of the latter.

We refer the reader to \cite{BankKram:11a} and \cite{BankKram:11b} for
more details on the price impact model and to \cite{BankKram:13} for
convenient sufficient conditions guaranteeing unique solvability
of~\eqref{eq:9}.  These papers use extensively the results of the
current study and explain the economic background as well as related
work.

The paper is organized as follows.  The appropriate conjugate spaces
of saddle functions are defined and studied in
Section~\ref{sec:spac-saddle-funct}. In
Section~\ref{sec:proc-indir-util} these spaces are shown to contain
the sample paths of the stochastic fields $F_t=F_t(a)$ and $G_t =
G_t(b)$. The proofs of the main theorems in
Section~\ref{sec:proc-indir-util} also rely on the properties of the
function $r=r(v,x)$ given in Section~\ref{sec:repr-mark-maker}, on the
``envelope'' theorem for saddle functions stated in
Appendix~\ref{sec:envelope-theorem}, and on the criteria for sample
paths of random and stochastic fields with values in saddle functions
presented in Appendices \ref{sec:unif-integr-saddle}
and~\ref{sec:modif-rand-fields}.

\section{Conjugate spaces of saddle functions}
\label{sec:spac-saddle-funct}

In this section we study the conjugate spaces of saddle functions
which later are shown to support the sample paths of the stochastic
fields $F = F_t(a)$ and $G = G_t(b)$.

Recall some standard notations. For a function $f=f(x,y)$, where $x\in
\mathbf{R}^n$ and $y\in \mathbf{R}^m$, we denote by $\frac{\partial
  f}{\partial x}\set \left(\frac{\partial f}{\partial
    x^1},\dots,\frac{\partial f}{\partial x^n}\right)$ the vector of
partial derivatives with respect to $x$ and by $\nabla f \set
(\frac{\partial f}{\partial x}, \frac{\partial f}{\partial y})$ its
gradient.  For a set $A\subset \mathbf{R}^d$, the notations
$\boundary{A}$ and $\closure{A}$ stand for, respectively, the boundary
and the closure.  For $x,y\in \mathbf{R}^d$ denote $\ip{x}{y} \set
\sum_{i=1}^d x^i y^i$ and $\abs{x} \set \sqrt{\ip{x}{x}}$, the
Euclidean scalar product and the norm.

\subsection{The spaces $\mathbf{F}^1$ and $\mathbf{F}^2$}
\label{sec:spaces-F}

Recall the construction of the parameter set $\mathbf{A}$
from~\eqref{eq:6} and the notation $\mathbf{S}^M$ from~\eqref{eq:10}
for the interior of the simplex in $\mathbf{R}^M$. We shall often
decompose $a\in\mathbf{A}$ as $a=(v,x,q)$, where $v\in (0,\infty)^M$,
$x\in \mathbf{R}$, and $q\in \mathbf{R}^J$.

For a function $\map{f}{\mathbf{A}}{(-\infty,0)}$ define the following
conditions:
\begin{enumerate}[label=(F\arabic{*}), ref=(F\arabic{*})]
\item \label{item:F1} The function $f$ is continuously differentiable
  on $\mathbf{A}$.
\item \label{item:F2} For every $(x,q)\in \mathbf{R} \times
  \mathbf{R}^J$, the function $f(\cdot,x,q)$ is positively
  homogeneous:
  \begin{equation}
    \label{eq:11}
    f(zv,x,q) = zf(v,x,q), \mtext{for all} z>0 \mtext{and} v\in
    (0,\infty)^M, 
  \end{equation}
  and strictly decreasing on $(0,\infty)^M$. Moreover, if $M>1$ then
  $f(\cdot,x,q)$ is strictly convex on $\mathbf{S}^M$ and
  \begin{equation}
    \label{eq:12}
    \lim_{n\to \infty}f(w_n,x,q) = 0, 
  \end{equation}
  for every sequence $(w_n)_{n\geq 1}$ in $\mathbf{S}^M$ converging to
  a boundary point of $\mathbf{S}^M$.
\item \label{item:F3} For every $v\in (0,\infty)^M$, the function
  $f(v,\cdot,\cdot)$ is concave on $\mathbf{R}\times \mathbf{R}^J$.
\item \label{item:F4} For every $(v,q)\in (0,\infty)^M\times
  \mathbf{R}^J$, the function $f(v,\cdot,q)$ is strictly concave and
  strictly increasing on $\mathbf{R}$ and
  \begin{equation}
    \label{eq:13}
    \lim_{x\to\infty} f(v,x,q) = 0.
  \end{equation}
\item \label{item:F5} The function $f$ is twice continuously
  differentiable on $\mathbf{A}$ and, for every $a=(v,x,q)\in
  \mathbf{A}$,
  \begin{displaymath}
    \frac{\partial^2 f}{\partial x^2}(a) < 0, 
  \end{displaymath}
  and the matrix $A(f)(a) = (A^{lm}(f)(a))_{l,m=1,\dots,M}$ given by
  \begin{equation}
    \label{eq:14}
    A^{lm}(f)(a) \set \frac{v^lv^m}{\frac{\partial f}{\partial x}}
    \left(\frac{\partial^2 f}{\partial v^l\partial 
        v^m} -  \frac1{\frac{\partial^2 f}{\partial x^2}} \frac{\partial^2
        f}{\partial v^l\partial 
        x}\frac{\partial^2 f}{\partial v^m\partial
        x}\right)(a),
  \end{equation}
  has full rank.
\end{enumerate}

We now define the families of functions:
\begin{align*}
  \mathbf{F}^1 \set&
  \descr{\map{f}{\mathbf{A}}{(-\infty,0)}}{\text{\ref{item:F1}--\ref{item:F4}
      hold}}, \\
  \mathbf{F}^2 \set& \descr{f\in \mathbf{F}^1}{\text{\ref{item:F5}
      holds}}.
\end{align*}

\begin{Remark}
  \label{rem:1}
  Slightly abusing notations we shall use the same symbols
  $\mathbf{F}^i$, $i=1,2$, for the families of functions $f=f(v,x)$ on
  $(0,\infty)^M\times \mathbf{R}$ whose natural extensions $\widetilde
  f(v,x,q) \set f(v,x)$ to functions on $\mathbf{A}$ belong to
  $\mathbf{F}^i$. Note that, in this case, \ref{item:F3} follows
  trivially from \ref{item:F4}.  A similar convention will also be
  used for other spaces of functions introduced below.
\end{Remark}

\subsection{The spaces $\mathbf{G}^1$ and $\mathbf{G}^2$}
\label{sec:spaces-G}

Recall the construction of the parameter set $\mathbf{B}$
in~\eqref{eq:7}. We shall often decompose $b\in\mathbf{B}$ as
$b=(u,y,q)$, where $u\in (-\infty,0)^M$, $y\in (0,\infty)$, and $q\in
\mathbf{R}^J$.

For a function $\map{g}{\mathbf{B}}{\mathbf{R}}$ define the following
conditions:
\begin{enumerate}[label=(G\arabic{*}), ref=(G\arabic{*})]
\item \label{item:G1} The function $g$ is continuously differentiable
  on $\mathbf{B}$.
\item \label{item:G2} For every $(y,q)\in (0,\infty) \times
  \mathbf{R}^J$, the function $g(\cdot,y,q)$ is strictly increasing
  and strictly convex on $(-\infty,0)^M$. Moreover,
  \begin{enumerate}[label=(\alph{*}), ref=(\alph{*})]
  \item \label{item:1} If $(u_n)_{n\geq 1}$ is a sequence in
    $(-\infty,0)^M$ converging to $0$, then
    \begin{equation}
      \label{eq:15}
      \lim_{n\to \infty} g(u_n,y,q) = \infty.
    \end{equation}
  \item \label{item:2} If $(u_n)_{n\geq 1}$ is a sequence in
    $(-\infty,0)^M$ converging to a boundary point of $(-\infty,0)^M$,
    then
    \begin{equation}
      \label{eq:16}
      \lim_{n\to \infty} \abs{\frac{\partial g}{\partial u}(u_n,y,q)} =
      \infty.
    \end{equation}
  \item \label{item:3} If $M>1$ and $(u_n)_{n\geq 1}$ is a sequence in
    $(-\infty,0)^M$ such that
    \begin{equation}
      \label{eq:17}
      \limsup_{n\to \infty} u_n^m<0 \mtext{for all} m=1,\dots,M
    \end{equation}
    and
    \begin{equation}
      \label{eq:18}
      \lim_{n\to \infty} u_n^{m_0} = -\infty \mtext{for some} m_0 \in
      \braces{1,\dots,M},
    \end{equation}
    then
    \begin{equation}
      \label{eq:19}
      \lim_{n\to\infty} g(u_n,y,q) = -\infty. 
    \end{equation}
  \end{enumerate}
\item \label{item:G3} For every $y \in (0,\infty)$, the function
  $g(\cdot,y,\cdot)$ is convex on $(-\infty,0)^M\times \mathbf{R}^J$.
\item \label{item:G4} For every $(u,q)\in (-\infty,0)^M\times
  \mathbf{R}^J$, the function $g(u,\cdot,q)$ is positively
  homogeneous, that is,
  \begin{equation}
    \label{eq:20}
    g(u,y,q) = yg(u,1,q), \quad y>0. 
  \end{equation}
\item \label{item:G5} The function $g$ is twice continuously
  differentiable on $\mathbf{B}$ and, for every $b=(u,y,q)\in
  \mathbf{B}$, the matrix $B(g)(b) = (B^{lm}(g)(b))_{l,m=1,\dots,M}$
  given by
  \begin{equation}
    \label{eq:21}
    B^{lm}(g)(b) \set \frac{y}{\frac{\partial g}{\partial
        u^l}\frac{\partial g}{\partial u^m}} 
    \frac{\partial^2 g}{\partial u^l\partial 
      u^m}(b) 
  \end{equation}
  has full rank.
\end{enumerate}

We define the families of functions
\begin{align*}
  \mathbf{G}^1 \set&
  \descr{\map{g}{\mathbf{B}}{\mathbf{R}}}{\text{\ref{item:G1}--\ref{item:G4}
      hold}}, \\
  \mathbf{G}^2 \set& \descr{g\in \mathbf{G}^1}{\text{\ref{item:G5}
      holds}}.
\end{align*}

\subsection{Conjugacy relations between $\mathbf{F}^1$ and
  $\mathbf{G}^1$}
\label{step:conj-relat-F1-G1}

The following theorem establishes the key conjugacy relations of this
paper.

\begin{Theorem}
  \label{th:1}
  A function $\map{f}{\mathbf{A}}{(-\infty,0)}$ belongs to
  $\mathbf{F}^1$ if and only if there is $g \in \mathbf{G}^1$ which is
  conjugate to $f$ in the sense that, for every $(u,y,q) \in
  \mathbf{B}$,
  \begin{equation}
    \label{eq:22}
    \begin{split}
      g(u,y,q) &= \sup_{v\in (0,\infty)^M}\inf_{x\in
        \mathbf{R}}[\ip{v}{u} + xy
      - f(v,x,q)]\\
      &= \inf_{x\in \mathbf{R}}\sup_{v\in (0,\infty)^M} [\ip{v}{u} +
      xy - f(v,x,q)],
    \end{split}
  \end{equation}
  and, for every $(v,x,q)\in \mathbf{A}$,
  \begin{equation}
    \label{eq:23}
    \begin{split}
      f(v,x,q) &= \sup_{u\in (-\infty,0)^M}\inf_{y\in
        (0,\infty)}[\ip{v}{u} +
      xy - g(u,y,q)], \\
      &= \inf_{y\in (0,\infty)}\sup_{u\in (-\infty,0)^M} [\ip{v}{u} +
      xy - g(u,y,q)].
    \end{split}
  \end{equation}

  The minimax values in \eqref{eq:22} and \eqref{eq:23} are attained
  at unique saddle points and, for every fixed $q\in \mathbf{R}^J$,
  the following conjugacy relationships between $(v,x)\in
  (0,\infty)^M\times\mathbf{R}$ and $(u,y)\in (-\infty,0)^M\times
  (0,\infty)$ are equivalent:
  \begin{enumerate}
  \item \label{item:4} Given $(u,y)$, the minimax values in
    \eqref{eq:22} are attained at $(v,x)$.
  \item \label{item:5} Given $(v,x)$, the minimax values in
    \eqref{eq:23} are attained at $(u,y)$.
  \item \label{item:6} We have $x = \frac{\partial g}{\partial
      y}(u,y,q) = g(u,1,q)$ and $v = \frac{\partial g}{\partial
      u}(u,y,q)$.
  \item \label{item:7} We have $y = \frac{\partial f}{\partial
      x}(v,x,q)$ and $u = \frac{\partial f}{\partial v}(v,x,q)$.
  \end{enumerate}
  Moreover, in this case, $f(v,x,q) = \ip{u}{v}$, $g(u,y,q) = xy$, and
  \begin{equation}
    \label{eq:24}
    \frac{\partial g}{\partial q}(u,y,q) = - \frac{\partial f}{\partial
      q}(v,x,q).
  \end{equation}
\end{Theorem}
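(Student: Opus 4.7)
The strategy is to apply a Fenchel--Rockafellar style saddle conjugacy (compare Part~VII of Rockafellar's \emph{Convex Analysis}) to the bivariate structure on $(0,\infty)^M\times\mathbf{R}$, treating $q$ as a spectator parameter. For each fixed $q$ the map $(v,x)\mapsto f(v,x,q)$ is, by \ref{item:F2} and \ref{item:F4}, a smooth saddle function: convex in $v$ (positively homogeneous and strictly convex on the simplex) and strictly concave in $x$. I would first establish the forward direction --- for $f\in\mathbf{F}^1$, the minimax in \eqref{eq:22} is attained at a unique saddle point and the resulting $g$ lies in $\mathbf{G}^1$ --- and then obtain the biduality \eqref{eq:23} together with the reverse direction by an entirely symmetric argument, with the identities falling out at the end from Euler's formula.

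Fix $(u,y,q)\in\mathbf{B}$. For the inner infimum in $x$, the function $x\mapsto xy-f(v,x,q)$ is smooth and strictly convex; \eqref{eq:13} combined with strict concavity forces $\partial f/\partial x(v,x,q)\to 0$ as $x\to\infty$, so the integrand diverges to $+\infty$ there, while as $x\to-\infty$ the analogous divergence holds whenever $y$ is strictly below $L(v)\set\lim_{x\to-\infty}\partial f/\partial x(v,x,q)\in(0,\infty]$ (for $v$ with $L(v)<y$ the inner inf is $-\infty$, contributes $-\infty$ to the outer sup and can be dropped, and the homogeneity identity $L(zv)=zL(v)$ ensures that rescaling produces admissible $v$ in every direction). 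For the outer supremum, $\ip{v}{u}\to-\infty$ as $|v|\to\infty$ because $u\in(-\infty,0)^M$, while the boundary vanishing \eqref{eq:12} together with the smooth strict convexity of $f$ on the simplex drives the objective to $-\infty$ as $v$ approaches a boundary ray of $(0,\infty)^M$. These two coercivities localize the maximizer to a compact subset of the open orthant, and strict convexity on the simplex plus strict concavity in $x$ then deliver a unique saddle $(v^*,x^*)$ in the interior. The equality of $\sup\inf$ and $\inf\sup$ in \eqref{eq:22} then follows either from a classical minimax theorem on this compactified domain or by direct verification that the common value equals the objective at $(v^*,x^*)$.

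First-order optimality at $(v^*,x^*)$ gives $y=\partial f/\partial x(v^*,x^*,q)$ and $u=\partial f/\partial v(v^*,x^*,q)$, which is condition (\ref{item:7}). Applying the envelope theorem (Appendix~\ref{sec:envelope-theorem}) to $g$ at the saddle yields $\partial g/\partial u(u,y,q)=v^*$ and $\partial g/\partial y(u,y,q)=x^*$, which is (\ref{item:6}); conditions (\ref{item:4}) and (\ref{item:5}) are simply definitional reformulations of (\ref{item:7}) and (\ref{item:6}) respectively. The identity $f(v,x,q)=\ip{u}{v}$ is Euler's relation for the positive homogeneity \eqref{eq:11} of $f$ in $v$, evaluated at the saddle where $u=\partial f/\partial v$; similarly, $g(u,y,q)=xy$ follows from positive homogeneity of $g$ in $y$ --- which one reads off from \eqref{eq:22} by the substitution $v=zv'$ --- combined with $x=\partial g/\partial y=g(u,1,q)$. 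The envelope theorem applied in $q$, which enters \eqref{eq:22} only through $-f$, yields $\partial g/\partial q=-\partial f/\partial q$.

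It remains to verify \ref{item:G1}--\ref{item:G4} for $g$: smoothness by the implicit function theorem applied to the first-order system, convexity of $g(\cdot,y,q)$ as a supremum of affine functions in $u$ (with strict convexity coming from the smoothness of $f$ in $v$), homogeneity \ref{item:G4} as above, and joint convexity of $g(\cdot,y,\cdot)$ (property \ref{item:G3}) from \ref{item:F3} by the standard fact that partial minimisation in one argument preserves joint convexity in the others. The reverse direction $g\in\mathbf{G}^1\Rightarrow f\in\mathbf{F}^1$ with \eqref{eq:23} is entirely symmetric, with $(v,x)$ and $(u,y)$ interchanged, and a bidual check confirms that the two constructions are mutual inverses. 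The main technical obstacle I anticipate is pinning down the boundary behaviour encoded in \eqref{eq:15}--\eqref{eq:19}: these translate into quantitative statements about how $v^*$ and $x^*$ degenerate as $u$ tends to $0$ or approaches a boundary of $(-\infty,0)^M$, and must be bootstrapped carefully from the analogous boundary limit \eqref{eq:12} of $f$ and the coercivity in $\ip{v}{u}$ coming from $u<0$.
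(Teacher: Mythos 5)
Your overall architecture matches the paper's: treat $q$ as a parameter, use saddle conjugacy in $(v,x)\leftrightarrow(u,y)$ in the spirit of Part VII of Rockafellar, characterize the saddle points by first-order conditions to get the equivalence of items \ref{item:4}--\ref{item:7}, read off $f=\ip{u}{v}$ and $g=xy$ from positive homogeneity, and obtain \eqref{eq:24} from the envelope theorem of Appendix~\ref{sec:envelope-theorem}. However, there is a genuine gap: the statement is not just that a conjugate $g$ exists with unique saddle points, but that $g$ belongs to $\mathbf{G}^1$ and, conversely, that the bidual of any $g\in\mathbf{G}^1$ lies in $\mathbf{F}^1$. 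The properties that make this nontrivial --- strict convexity of $g(\cdot,y,q)$ and above all the boundary conditions \ref{item:G2}\ref{item:1}--\ref{item:3} (equations \eqref{eq:15}--\eqref{eq:19}), and in the reverse direction \ref{item:F2} (strict convexity on $\mathbf{S}^M$ and the limit \eqref{eq:12}) and \ref{item:F4} (including \eqref{eq:13}) --- are exactly what you defer at the end as ``the main technical obstacle.'' In the paper these occupy Lemma~\ref{lem:3} and Lemmas~\ref{lem:4}--\ref{lem:7}, and they require specific arguments (extraction of convergent $(v_n,x_n)$ from conjugacy, uniform convergence of derivatives of concave functions via Theorem~25.7 of Rockafellar, and a subdifferentiability argument using \ref{item:G2}\ref{item:2} to exclude boundary saddle points). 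Without them the theorem is not proved. Relatedly, the reverse direction is \emph{not} ``entirely symmetric'': the hypotheses on $f$ (homogeneity in $v$, strict convexity only on the simplex, boundary limit $0$) and on $g$ (homogeneity in $y$, global strict convexity in $u$, three distinct boundary conditions) have different structure, and the paper's identification of the effective domains ($C=\mathbf{R}^M_+$, $D=\mathbf{R}$) and the exclusion of boundary maximizers use \ref{item:G2}\ref{item:2}--\ref{item:3} in an essential, non-symmetric way; the biduality \eqref{eq:23} itself needs the closedness of the extended $f$ (the paper's \eqref{eq:27}) or an equivalent argument, not merely a ``bidual check.''

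Two further points in your sketch need repair. First, uniqueness of the maximizing $v^*$ cannot come from ``strict convexity on the simplex plus strict concavity in $x$'' alone as stated: $f(\cdot,x,q)$ is positively homogeneous, hence never strictly convex on the orthant, so one must split $v=zw$ with $w\in\mathbf{S}^M$ and pin down $w$ by strict convexity on the simplex via \eqref{eq:33} and then $z$ by the first-order condition in $x$ --- this is the paper's argument in Lemma~\ref{lem:1}, and your coercivity claim (``$\ip{v}{u}\to-\infty$'') also needs to be weighed against the inner infimum, which itself grows linearly along rays. Second, your attribution of the strict convexity of $g(\cdot,y,q)$ to ``the smoothness of $f$ in $v$'' is wrong: already for $M=1$, $f(v,x)=v\phi(x)$ is smooth (linear) in $v$ and $g(u,1)=\phi^{-1}(u)$ is strictly convex only because $\phi$ is \emph{strictly} concave; the correct source is the strict concavity of $f(v,\cdot)$ in $x$ (combined, for $M>1$, with strict convexity on $\mathbf{S}^M$), as in the midpoint argument of Lemma~\ref{lem:3}.
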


\subsubsection{Proof of Theorem~\ref{th:1}}
\label{sec:proof-theorem-2}

The proof relies on the theory of saddle functions presented in Part
VII of the classical book \cite{Rock:70} by \citeauthor{Rock:70}.  To
simplify notations we omit the dependence on $q$, where it is not
important, and then interpret the classes $\mathbf{F}^1$ and
$\mathbf{G}^1$ in the sense of Remark~\ref{rem:1}.

\begin{Lemma}
  \label{lem:1}
  Let $\map{f=f(v,x)}{(0,\infty)^M\times \mathbf{R}}{(-\infty,0)}$ be
  in $\mathbf{F}^1$. Then there exists a continuously differentiable
  $\map{g = g(u,y)}{(-\infty,0)^M \times (0,\infty)}{\mathbf{R}}$,
  which is conjugate to $f$ in the sense that, for every $u\in
  (-\infty,0)^M$ and $y\in (0,\infty)$,
  \begin{equation}
    \label{eq:25}
    \begin{split}
      g(u,y) &= \sup_{v\in (0,\infty)^M}\inf_{x\in
        \mathbf{R}}[\ip{v}{u} + xy
      - f(v,x)]\\
      &= \inf_{x\in \mathbf{R}}\sup_{v\in (0,\infty)^M} [\ip{v}{u} +
      xy - f(v,x)],
    \end{split}
  \end{equation}
  and, for every $v\in (0,\infty)^M$ and $x\in \mathbf{R}$,
  \begin{equation}
    \label{eq:26}
    \begin{split}
      f(v,x) &= \sup_{u\in (-\infty,0)^M}\inf_{y\in
        (0,\infty)}[\ip{v}{u} +
      xy - g(u,y)], \\
      &= \inf_{y\in (0,\infty)}\sup_{u\in (-\infty,0)^M} [\ip{v}{u} +
      xy - g(u,y)].
    \end{split}
  \end{equation}
  Moreover, the minimax values in~\eqref{eq:25} and~\eqref{eq:26} are
  attained at unique saddle points.
\end{Lemma}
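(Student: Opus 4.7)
The plan is to construct $g$ as a composition of two partial Legendre--Fenchel transforms of $f$ and then verify the minimax formulas, broadly following Part~VII of \cite{Rock:70}. Conditions \ref{item:F2} and \ref{item:F4} make $f$ convex in $v$ and strictly concave in $x$, and together with $\lim_{x\to\infty}f(v,x)=0$ (and the resulting $\lim_{x\to-\infty}f(v,x)=-\infty$) turn $f$ into a classical convex--concave saddle function on $(0,\infty)^M\times\mathbf{R}$. As a first step I would introduce the partial conjugate
\[
  \phi(v,y)\set\inf_{x\in\mathbf{R}}[xy-f(v,x)],\quad (v,y)\in (0,\infty)^M\times(0,\infty).
\]
Strict concavity of $f(v,\cdot)$ and the two-sided coercivity of $xy-f(v,x)$ make the infimum finite and attained at a unique $x^*(v,y)$ solving $\frac{\partial f}{\partial x}(v,x^*)=y$; the implicit function theorem combined with \ref{item:F1} gives continuous dependence of $x^*$, and the envelope theorem yields $\frac{\partial \phi}{\partial y}=x^*$ and $\frac{\partial \phi}{\partial v}=-\frac{\partial f}{\partial v}(v,x^*)$. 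Concavity of $-f(\cdot,x)$ in $v$ propagates to concavity of $\phi$ in $v$, and \eqref{eq:11} yields the joint scaling $\phi(\lambda v,\lambda y)=\lambda\phi(v,y)$.

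Next I would set
\[
  g(u,y)\set\sup_{v\in (0,\infty)^M}[\ip{v}{u}+\phi(v,y)],\quad (u,y)\in(-\infty,0)^M\times(0,\infty),
\]
and show the sup is finite and uniquely attained at an interior $v^*(u,y)$. Writing $v=\lambda w$ with $w\in\mathbf{S}^M$, the joint homogeneity of $\phi$ gives
\[
  \ip{v}{u}+\phi(v,y)=\lambda\bigl[\ip{w}{u}+\phi(w,y/\lambda)\bigr].
\]
As $\lambda\to\infty$, $\phi(w,y/\lambda)\to 0$ (since the optimizer $x^*(w,y/\lambda)\to+\infty$ and $f(w,x)\to 0$), so the bracket tends to $\ip{w}{u}<0$ uniformly on compact subsets of $\mathbf{S}^M$, forcing $\lambda\times(\text{bracket})\to-\infty$. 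As $\lambda\to 0$, $\phi(\lambda w,y)\to-\infty$ because $-\lambda f(w,\cdot)\to 0$ leaves the linear term $xy$ uncontrolled from below. As $w$ approaches $\partial\mathbf{S}^M$, \eqref{eq:12} gives $f(w,x)\to 0$ and hence $\phi(w,y)\to-\infty$. These three coercivity regimes, together with strict concavity of $v\mapsto\ip{v}{u}+\phi(v,y)$ (from strict convexity of $f(\cdot,x^*)$ on $\mathbf{S}^M$), yield a unique interior maximizer $v^*$ satisfying $u=\frac{\partial f}{\partial v}(v^*,x^*)$.

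With the saddle point $(v^*,x^*)$ in hand, Rockafellar's saddle-value theorem applied to the concave-convex integrand $(v,x)\mapsto\ip{v}{u}+xy-f(v,x)$ produces the second equality in~\eqref{eq:25}. Continuous differentiability of $g$ follows from the envelope theorem: $\frac{\partial g}{\partial u}=v^*$ and $\frac{\partial g}{\partial y}=x^*$, both continuous in $(u,y)$ by the implicit function theorem. For the reverse conjugacy~\eqref{eq:26}, I would show that $f$ is the saddle biconjugate of itself, i.e., that for each $(v,x)$ the dual minimax over $(u,y)$ is uniquely attained at $u=\frac{\partial f}{\partial v}(v,x)$, $y=\frac{\partial f}{\partial x}(v,x)$ with value $f(v,x)$; closedness of $f$ ensured by \ref{item:F1}--\ref{item:F4} then completes the biconjugacy. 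The hardest step will be the coercivity analysis in the second paragraph above, where the three boundary regimes in $v$ (zero, infinity, and simplex boundary) must simultaneously drive the objective to $-\infty$, each relying on a different structural feature of $f$: positive homogeneity \eqref{eq:11}, the normalization $\lim_{x\to\infty}f=0$, and \eqref{eq:12} respectively. The subtle compatibility of these three features is the technical heart of the construction, and is what distinguishes the present saddle conjugacy from a standard convex-analytic Legendre transform.
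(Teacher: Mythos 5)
You take a genuinely different route from the paper (two successive partial Legendre transforms plus a direct coercivity analysis, instead of extending $f$ to a closed saddle function on $\mathbf{R}^{M+1}$ as in \eqref{eq:27} and invoking Rockafellar's Theorem 37.1 and Corollary 37.5.3), but your first step fails as stated. Conditions (F1)--(F4) do not make $x\mapsto xy-f(v,x)$ coercive for every $y>0$: they control $\frac{\partial f}{\partial x}(v,\cdot)$ only through strict concavity, strict monotonicity and $f(v,x)\to 0$ as $x\to\infty$, and they do not force $\frac{\partial f}{\partial x}(v,x)\to\infty$ as $x\to-\infty$. For $M=1$ the function $f(v,x)=-v\log(1+e^{-x})$ belongs to $\mathbf{F}^1$, yet $\sup_x\frac{\partial f}{\partial x}(v,x)=v$, so $\phi(v,y)=\inf_x[xy-f(v,x)]=-\infty$ whenever $y>v$; in general $\phi(v,y)=-\infty$ once $y$ exceeds $\sup_x\frac{\partial f}{\partial x}(v,x)$, and the minimizer $x^*(v,y)$, hence the implicit-function and envelope formulas for $\phi$, exist only when $y<\sup_x\frac{\partial f}{\partial x}(v,x)$. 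Your argument uses finiteness and attainment on all of $(0,\infty)^M\times(0,\infty)$. This is exactly where the paper is careful: it only needs that for each $y>0$ there exists \emph{some} $v$ with $y=\frac{\partial f}{\partial x}(v,1)$, which follows from the homogeneity \eqref{eq:11}, and it otherwise works with extended-real values. Your construction can be repaired by letting $\phi$ take the value $-\infty$ and confining the smooth analysis to the region where it is finite (where the maximizer must lie), but this has to be done explicitly.

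Two further gaps sit at what you yourself call the technical heart. First, uniqueness of $v^*$ cannot be deduced from strict concavity of $v\mapsto\ip{v}{u}+\phi(v,y)$ ``from strict convexity of $f(\cdot,x^*)$ on $\mathbf{S}^M$'': $f(\cdot,x)$ is positively homogeneous, hence affine along rays, and for $M=1$ there is no strict convexity at all; strictness in the radial direction must come from the strict monotonicity of $\frac{\partial f}{\partial x}$, which is how the paper argues (uniqueness of $\widehat x$ from $\widehat x y=g(u,y)$, of $\widehat w$ from strict convexity on the simplex, and of the scale $\widehat z$ from $y=\widehat z\,\frac{\partial f}{\partial x}(\widehat w,\widehat x)$). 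Second, your three coercivity regimes do not cover maximizing sequences $v_n=\lambda_n w_n$ with $\lambda_n\to\infty$ and $w_n\to\boundary{\mathbf{S}^M}$ simultaneously; uniform convergence of $\phi(w,y/\lambda)$ on compact subsets of $\mathbf{S}^M$ is not enough, and one needs a bound uniform over the whole simplex, e.g. $\ip{w}{u}\le\max_m u^m<0$ together with $\sup_{w\in\mathbf{S}^M}\left(-f(w,x_0)\right)\to 0$ as $x_0\to\infty$, a Dini-type argument based on \eqref{eq:12}. Finally, the reverse identity \eqref{eq:26} is only asserted: closedness of $f$ is not automatic from (F1)--(F4) but is obtained from the specific boundary extension \eqref{eq:27} together with \eqref{eq:11} and \eqref{eq:12}, which is the step the paper carries out before applying the biconjugacy theorems. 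With these repairs your two-step conjugation would deliver the lemma, but as written the proposal has real gaps.
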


\begin{proof}
  To facilitate the references to Section 37 in \cite{Rock:70} we
  define $f$ on the whole Euclidean space $\mathbf{R}^{M+1}$ by
  setting its values outside of $(0,\infty)^M \times \mathbf{R}$ as
  \begin{equation}
    \label{eq:27}
    f(v,x) = \left\{
      \begin{aligned}
        0, & \quad v \in \boundary{\mathbf{R}^M_+} \\
        \infty, & \quad v\not\in \mathbf{R}^M_+
      \end{aligned}
    \right., \quad x \in \mathbf{R},
  \end{equation}
  where $\mathbf{R}^M_+ \set [0,\infty)^M$.  By \eqref{eq:11} and
  \eqref{eq:12}, after this extension $f$ becomes a \emph{closed}
  saddle function (according to the definition in Section 34 of
  \cite{Rock:70}) with \emph{effective domain}
  \begin{displaymath}
    \dom{f} \set \dom_1{f} \times \dom_2{f} =  \mathbf{R}^M_+ \times
    \mathbf{R}, 
  \end{displaymath}
  where
  \begin{align*}
    \dom_1{f} &\set \descr{v\in \mathbf{R}^M}{f(v,x) < \infty, \quad
      \forall
      x \in \mathbf{R}} = \mathbf{R}^M_+, \\
    \dom_2{f} &\set \descr{x\in \mathbf{R}}{f(v,x) > -\infty, \quad
      \forall v\in \mathbf{R}^M } = \mathbf{R}.
  \end{align*}
  
  Using the extended version of $f$ we introduce the saddle functions
  \begin{align*}
    \underline{g}(u,y) & \set \sup_{v\in \mathbf{R}^M}\inf_{x\in
      \mathbf{R}}[\ip{v}{u} + xy
    - f(v,x)] \\
    & = \sup_{v\in \mathbf{R}^M_+}\inf_{x\in \mathbf{R}}[\ip{v}{u} +
    xy
    - f(v,x)], \\
    \overline{g}(u,y) & \set \inf_{x\in \mathbf{R}}\sup_{v\in
      \mathbf{R}^M}
    [\ip{v}{u} + xy - f(v,x)] \\
    & = \inf_{x\in \mathbf{R}}\sup_{v\in \mathbf{R}^M_+} [\ip{v}{u} +
    xy - f(v,x)]
  \end{align*}
  defined for $u\in \mathbf{R}^M$ and $y\in \mathbf{R}$ and taking
  values in $[-\infty,\infty]$.  By the duality theory for conjugate
  saddle functions, see \cite{Rock:70}, Theorem 37.1 and Corollaries
  37.1.1 and 37.1.2, the functions $\underline{g}$ and $\overline{g}$
  have a common \emph{effective} domain, which we denote $C\times D$,
  and coincide on $(\interior{C}\times D) \cup (C \times
  \interior{D})$, where $\interior{A}$ denotes the interior of a set
  $A$.

  Hence, on $(\interior{C}\times D) \cup (C \times \interior{D})$ we
  can define a finite saddle function $g = g(u,y)$ such that
  \begin{equation}
    \label{eq:28}
    \begin{split}
      g(u,y) &= \sup_{v\in \mathbf{R}^M_+}\inf_{x\in
        \mathbf{R}}[\ip{v}{u} + xy
      - f(v,x)] \\
      &= \inf_{x\in \mathbf{R}}\sup_{v\in \mathbf{R}^M_+} [\ip{v}{u} +
      xy - f(v,x)].
    \end{split}
  \end{equation}
  Moreover, from the same Theorem 37.1 and Corollaries 37.1.1 and
  37.1.2 in \cite{Rock:70} and since \eqref{eq:27} is the unique
  closed extension of $f$ to $\mathbf{R}^{M+1}$ we deduce
  \begin{equation}
    \label{eq:29}
    \begin{split}
      f(v,x) &= \sup_{u\in C}\inf_{y\in \interior{D}}[\ip{v}{u} +
      xy - g(u,y)], \\
      &= \inf_{y\in D}\sup_{u\in \interior{C}} [\ip{v}{u} + xy -
      g(u,y)], \quad (v,x)\in \mathbf{R}^{M+1}.
    \end{split}
  \end{equation}

  Noting that the continuous differentiability of $g$ on
  $(-\infty,0)^M \times (0,\infty)$ is an immediate consequence of the
  existence and the uniqueness of the saddle points for \eqref{eq:25},
  see \cite{Rock:70}, Theorem 35.8 and Corollary 37.5.3, we obtain
  that the result holds if
  \begin{enumerate}
  \item the interiors of the sets $C$ and $D$ are given by
    \begin{align}
      \label{eq:30}
      \interior{C} &= (-\infty,0)^M, \\
      \label{eq:31}
      \interior{D} &= (0,\infty);
    \end{align}
  \item for $(u,y) \in (-\infty,0)^M \times (0,\infty)$, the minimax
    values in \eqref{eq:28} are attained at a unique $(v,x) \in
    (0,\infty)^M \times \mathbf{R}$;
  \item for $(v,x) \in (0,\infty)^M \times \mathbf{R}$, the minimax
    values in \eqref{eq:29} are attained at a unique $(u,y) \in
    (-\infty,0)^M \times (0,\infty)$.
  \end{enumerate}

  For the set $C$ we have
  \begin{align*}
    C & \set \descr{u\in \mathbf{R}^M}{\overline{g}(u,y) < \infty
      \mtext{for all}
      y \in \mathbf{R} } \\
    & = \descr{u\in \mathbf{R}^M}{\sup_{v\in \mathbf{R}^M_+}[\ip{u}{v}
      - f(v,x)] < \infty \mtext{for some}
      x \in \mathbf{R} } \\
    & = \descr{u\in \mathbf{R}^M}{\sup_{w\in \mathbf{S}^M} [\ip{u}{w}
      - f(w,x)] \leq 0 \mtext{for some} x \in \mathbf{R} },
  \end{align*}
  where at the last step we used \eqref{eq:11}.  As $f\leq 0$ on
  $\mathbf{R}^M_+ \times \mathbf{R}$, we have $C\subset
  (-\infty,0]^M$.  On the other hand, by \eqref{eq:12} and
  \eqref{eq:13}, and, since, for every $w\in \mathbf{S}^M$, the
  function $f(w,\cdot)$ is increasing,
  \begin{displaymath}
    \lim_{x\to \infty} \inf_{w\in \mathbf{S}^M} f(w,x) = 0. 
  \end{displaymath}
  It follows that $(-\infty,0)^M \subset C$, proving \eqref{eq:30}.

  For the set $D$ we obtain
  \begin{align*}
    D & \set \descr{y\in \mathbf{R}}{\underline{g}(u,y) > -\infty
      \mtext{for all} u\in \mathbf{R}^M } \\
    & = \descr{y\in \mathbf{R}}{\inf_{x\in \mathbf{R}}[xy - f(v,x)] >
      -\infty \mtext{for some} v\in \mathbf{R}^M }.
  \end{align*}
  From \eqref{eq:13} we deduce that ${D} \subset \mathbf{R}_+$. As
  $f\leq 0$ on $\mathbf{R}^M_+ \times \mathbf{R}$, the point $y=0$
  belongs to $D$. If $y>0$, then \eqref{eq:11} implies the existence
  of $v\in (0,\infty)^M$ such that
  \begin{displaymath}
    y = \frac{\partial f}{\partial x}(v,1),
  \end{displaymath}
  and, therefore, for such $y$ and $v$,
  \begin{displaymath}
    \inf_{x\in \mathbf{R}}[xy - f(v,x)] = y - f(v,1) > -\infty. 
  \end{displaymath}
  Hence, $D= \mathbf{R}_+$, implying \eqref{eq:31}.

  Fix $v\in (0,\infty)^M$ and $x\in \mathbf{R}$. By the properties of
  $f$,
  \begin{displaymath}
    \nabla f (v,x) \in (-\infty,0)^M \times (0,\infty) = \interior{C} \times
    \interior{D}, 
  \end{displaymath}
  implying that $(u,y) \set \nabla f(v,x)$ is the unique saddle point
  of \eqref{eq:29}, see Corollary 37.5.3 in \cite{Rock:70}.

  Fix now $u\in (-\infty,0)^M$ and $y\in (0,\infty)$. As $f$ (viewed
  as a function on $\mathbf{R}^{M+1}$) is a closed saddle function and
  $(u,y)$ belongs to the \emph{interior} of the effective domain of
  $g$, the minimax values in \eqref{eq:28} are attained on a closed
  convex set of saddle points, namely, the subdifferential of $g$
  evaluated at $(u,y)$, see Corollary 37.5.3 in \cite{Rock:70}. To
  complete the proof it remains to be shown that this set is a
  singleton in $(0,\infty)^M\times \mathbf{R}$.

  If $(\widehat{v},\widehat{x})$ is a saddle point of~\eqref{eq:28},
  then $(\widehat{v},\widehat{x}) \in \dom{f} = \mathbf{R}^M_+ \times
  \mathbf{R}$, and
  \begin{align*}
    g(u,y) &= \widehat x y + \ip{\widehat v}{u} - f(\widehat
    v,\widehat x) =\widehat x y + \sup_{v\in \mathbf{R}^M_+}
    [\ip{v}{u} -
    f(v,\widehat x)] \\
    &= \ip{\widehat v}{u} + \inf_{x\in \mathbf{R}} [x y - f(\widehat
    v, x)].
  \end{align*}
  Accounting for the positive homogeneity property \eqref{eq:11} of
  $f(\cdot, \widehat x)$ we deduce that
  \begin{align}
    \label{eq:32}
    \widehat x y  &= g(u,y), \\
    \label{eq:33}
    \ip{\widehat v}{u} - f(\widehat v,\widehat x) &= \sup_{v\in
      \mathbf{R}^M_+} [\ip{v}{u} - f(v,\widehat x)]
    = 0, \\
    \label{eq:34}
    \widehat x y - f(\widehat v,\widehat x) & = \inf_{x\in \mathbf{R}}
    [x y - f(\widehat v, x)].
  \end{align}

  The equality \eqref{eq:32} defines $\widehat x$ uniquely.  To show
  the uniqueness of $\widehat v$ we observe first that $\widehat v \in
  (0,\infty)^M$.  Indeed, otherwise, we would have $f(\widehat v,x) =
  0$, $x\in \mathbf{R}$, and the right side of \eqref{eq:34} would be
  $-\infty$. Hence, $\widehat v$ can be decomposed as a product of
  $\widehat w \in \mathbf{S}^M$ and $\widehat z>0$. By \eqref{eq:11}
  and \eqref{eq:33},
  \begin{displaymath}
    \ip{\widehat w}{u} - f(\widehat w,\widehat x) = \sup_{w\in
      \mathbf{S}^M} [\ip{w}{u} - f(w,\widehat x)]
    = 0.
  \end{displaymath}
  As $f(\cdot,\widehat x)$ is strictly convex on $\mathbf{S}^M$, this
  identity determines $\widehat w$ uniquely. Finally, from
  \eqref{eq:34} and the continuous differentiability of $f(\widehat
  v,\cdot)$ on $\mathbf{R}$ we deduce that
  \begin{displaymath}
    y = \frac{\partial f}{\partial x}(\widehat v,\widehat x) =
    \widehat z \frac{\partial f}{\partial x}(\widehat w,\widehat x), 
  \end{displaymath}
  proving the uniqueness of $\widehat z$.
\end{proof}

\begin{Lemma}
  \label{lem:2}
  Let $f$ and $g$ be as in Lemma~\ref{lem:1}. Then $g$ satisfies the
  positive homogeneity property~\eqref{eq:20} and for $(v,x)\in
  (0,\infty)^M\times\mathbf{R}$ and $(u,y)\in (-\infty,0)^M\times
  (0,\infty)$ the relations below are equivalent:
  \begin{enumerate}
  \item \label{item:8} Given $(u,y)$ the minimax values
    in~\eqref{eq:25} are attained at $(v,x)$.
  \item \label{item:9} Given $(v,x)$ the minimax values
    in~\eqref{eq:26} are attained at $(u,y)$.
  \item \label{item:10} We have $x = \frac{\partial g}{\partial
      y}(u,y) = g(u,1)$ and $v = \frac{\partial g}{\partial u}(u,y)$.
  \item \label{item:11} We have $y = \frac{\partial f}{\partial
      x}(v,x)$ and $u = \frac{\partial f}{\partial v}(v,x)$.
  \end{enumerate}
  Moreover, in this case, $f(v,x) = \ip{u}{v}$ and $g(u,y) = xy$.
\end{Lemma}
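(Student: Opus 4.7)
My plan is to establish~\eqref{eq:20} by a direct substitution, then derive the equivalence of the four conditions by first-order/envelope arguments applied to the unique saddle points furnished by Lemma~\ref{lem:1}, and finally to read off the two closing identities from Euler's relation and the homogeneity.

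For the positive homogeneity I would substitute $v = y v'$ in~\eqref{eq:25} and use the 1-homogeneity $f(yv',x) = yf(v',x)$ from~\itemFref{item:F2}. The bracketed expression inside the sup-inf becomes $y[\langle v',u\rangle + x - f(v',x)]$, so $g(u,y) = y\,g(u,1)$; differentiating in $y$ then gives $\partial g/\partial y(u,y) = g(u,1)$, which is the second equality in~\ref{item:10}.

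For the equivalences I would first show~\ref{item:8}$\,\Leftrightarrow\,$\ref{item:11}. At the unique interior saddle point $(v,x)$ of~\eqref{eq:25} furnished by Lemma~\ref{lem:1}, the infimum in $x$ gives the first-order condition $y = \partial f/\partial x(v,x)$. For the supremum in $v$, observe that by~\itemFref{item:F2} the map $v \mapsto \langle v,u\rangle - f(v,x)$ is 1-homogeneous on $(0,\infty)^M$, so at an interior maximizer its gradient $u - \partial f/\partial v(v,x)$ must vanish (taking $\pm\delta v$ forces the directional derivative to be zero in every direction); this is the second equation of~\ref{item:11}. Conversely, the convex-concave structure of $f$ (convexity in $v$ from~\itemFref{item:F2}, concavity in $x$ from~\itemFref{item:F4}) turns these first-order conditions into sufficient ones, and uniqueness in Lemma~\ref{lem:1} closes the loop. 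A fully symmetric argument applied to~\eqref{eq:26}, using that $g$ is $C^1$ (Lemma~\ref{lem:1}) and linear in $y$ by~\eqref{eq:20}, yields~\ref{item:9}$\,\Leftrightarrow\,$\ref{item:10}. The remaining link~\ref{item:11}$\,\Leftrightarrow\,$\ref{item:10} then follows by differentiating the envelope identity $g(u,y) = \langle v,u\rangle + xy - f(v,x)$, valid at corresponding saddle points by Lemma~\ref{lem:1}, in $u$ and $y$ to obtain $v = \partial g/\partial u(u,y)$ and $x = \partial g/\partial y(u,y)$; alternatively one may invoke the mutual inversion of $\nabla f$ and $\nabla g$ at smooth points (Corollary~37.5.3 of~\cite{Rock:70}).

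The two concluding identities are then immediate: under~\ref{item:11}, Euler's theorem for the 1-homogeneous $v \mapsto f(v,x)$ gives $f(v,x) = \langle v, \partial f/\partial v(v,x)\rangle = \langle v,u\rangle$, while under~\ref{item:10}, combining with~\eqref{eq:20} yields $g(u,y) = yg(u,1) = yx$. I expect the main technical burden to lie in the middle step, specifically in confirming that the maximizer of the $\sup_v$ really sits inside $(0,\infty)^M$ (so that the interior first-order condition is legitimate) and that the 1-homogeneity does not create an unresolved ray degeneracy; both points were essentially settled in the uniqueness analysis in the proof of Lemma~\ref{lem:1}.
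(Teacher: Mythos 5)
Your proposal is correct, and it reaches the same conclusions as the paper, but by a more self-contained route. The paper's own proof is essentially a citation: homogeneity of $g$ is read off from \eqref{eq:11} and the construction \eqref{eq:25} (your substitution $v=yv'$ is exactly that computation made explicit), and the equivalence of items~\ref{item:8}--\ref{item:11} is delegated wholesale to the subdifferential characterization of saddle points of conjugate saddle functions, Theorem~37.5 and Corollary~37.5.3 in \cite{Rock:70}; the closing identities are declared straightforward (your Euler-identity and $g(u,y)=yg(u,1)=xy$ computations are the intended argument). You instead reprove the needed characterization by hand: interior first-order conditions are necessary at the saddle points supplied by Lemma~\ref{lem:1} and sufficient by the concave--convex structure (convexity of $f(\cdot,x)$, concavity of $f(v,\cdot)$, and dually for $g$), and the cross-link between \ref{item:11} and \ref{item:10} comes from an envelope/affine-support argument at the optimal value, with Corollary~37.5.3 kept as a fallback. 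This buys independence from the cited machinery at the cost of a little bookkeeping: note that your one-sided envelope differentiation gives \ref{item:11}$\implies$\ref{item:10}, and you should either run the symmetric envelope argument on \eqref{eq:26} (or use your \ref{item:9}$\Leftrightarrow$\ref{item:10} step) to close the cycle, as you indicate. Two minor remarks: the appeal to homogeneity when deriving $u=\frac{\partial f}{\partial v}(v,x)$ is unnecessary --- the gradient of a differentiable function vanishes at any interior maximizer --- and the ray degeneracy you worry about is harmless here, since uniqueness of the saddle point (already part of Lemma~\ref{lem:1}, whose proof pins down the scale of $\widehat v$ through $y=\frac{\partial f}{\partial x}$) is what the statement relies on; neither point is a gap.
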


\begin{proof}
  First, we observe that \eqref{eq:20} for $g$ follows from the
  corresponding feature \eqref{eq:11} for $f$ and the construction of
  $g$ in \eqref{eq:25}.  The equivalence of
  items~\ref{item:8}--\ref{item:11} follows from the characterization
  of saddle points in terms of the subdifferentials of conjugate
  functions, see Theorem 37.5 and Corollary 37.5.3 in
  \cite{Rock:70}. The last assertion is straightforward.
\end{proof}

\begin{Lemma}
  \label{lem:3}
  Let $f$ and $g$ be as in Lemma~\ref{lem:1}. Then $g\in
  \mathbf{G}^1$.
\end{Lemma}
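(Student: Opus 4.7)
The plan is to verify each of the four conditions \ref{item:G1}--\ref{item:G4} defining $\mathbf{G}^1$. Throughout I will suppress the parameter $q$ by invoking Remark~\ref{rem:1} and work with the $q$-free versions of $f$ and $g$ studied in Lemmas~\ref{lem:1}--\ref{lem:2}. Condition \ref{item:G1} (continuous differentiability) is already the conclusion of Lemma~\ref{lem:1}, and \ref{item:G4} (positive homogeneity of $g$ in $y$) is part of Lemma~\ref{lem:2}. Condition \ref{item:G3} (joint convexity of $g(\cdot,y,\cdot)$ in $(u,q)$) drops out of the representation~\eqref{eq:22}: for each fixed $v$, the map $(u,x,q)\mapsto \ip{v}{u}+xy - f(v,x,q)$ is affine in $u$ and jointly convex in $(x,q)$ by the joint concavity \ref{item:F3} of $f$, so its infimum over $x$ is convex in $(u,q)$, and taking the supremum over $v$ preserves convexity.

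The substance is in \ref{item:G2}. Strict monotonicity of $g(\cdot,y)$ in each coordinate $u^m$ is immediate from Lemma~\ref{lem:2}: the equivalence \ref{item:10}$\Leftrightarrow$\ref{item:11} identifies $\partial g/\partial u(u,y)$ with the saddle variable $v^*(u,y)\in(0,\infty)^M$, which has strictly positive components. For strict convexity, since $g(\cdot,y)$ is already convex (a supremum of functions affine in $u$), it suffices to show that the gradient map $u\mapsto \partial g/\partial u(u,y)$ is injective; if two distinct $u_1,u_2\in(-\infty,0)^M$ shared a gradient value $v^*$, Lemma~\ref{lem:2}.\ref{item:11} would yield $u_i = \partial f/\partial v(v^*,x_i)$ and $y = \partial f/\partial x(v^*,x_i)$ for some $x_i\in\mathbf{R}$, and strict concavity of $f(v^*,\cdot)$ in $x$ (from \ref{item:F4}) would force $x_1=x_2$, whence $u_1=u_2$.

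For the three limit statements in \ref{item:G2}, the common tool is the representation $g(u,y)=y\,x^*(u)$, where $x^*(u)$ is the unique real solution of $\rho(u,\cdot)=0$ with
\[
\rho(u,x) := \sup_{w\in\mathbf{S}^M}\bigl[\ip{w}{u} - f(w,x)\bigr];
\]
this is implicit in identities~\eqref{eq:33}--\eqref{eq:34} of the proof of Lemma~\ref{lem:1}. Extending $f$ by $0$ to $\overline{\mathbf{S}^M}$ via~\eqref{eq:12}, $\rho$ is continuous on $(-\infty,0)^M\times\mathbf{R}$ and strictly decreasing in $x$. For \ref{item:1}, as $u_n\to 0$, the bound $\rho(u,x)\leq \max_m u^m + \sup_w(-f(w,x))$ together with a Dini-type argument on the compact $\overline{\mathbf{S}^M}$ yields $\sup_w(-f(w,x))\to 0$ as $x\to\infty$, forcing $x^*(u_n)\to+\infty$ and hence $g(u_n,y)\to\infty$. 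For \ref{item:3}, the hypothesis $\max_m\limsup u_n^m<0$ with $u_n^{m_0}\to-\infty$ forces $\rho(u_n,x)<0$ eventually at each fixed $x$: the optimizer $w^*_n\in\overline{\mathbf{S}^M}$ either keeps $w^{*m_0}_n$ bounded away from $0$ (so that $\ip{w^*_n}{u_n}\to-\infty$), or approaches the face $\{w^{m_0}=0\}$ on which $-f(w,x)\to 0$ while $\ip{w^*_n}{u_n}\leq -\delta$ by the limsup bound; either way $x^*(u_n)\to-\infty$ and $g(u_n,y)\to-\infty$.

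The main obstacle is \ref{item:2}, which I will prove by contradiction. Suppose $u_n\to \bar u$ with some $\bar u^{m_0}=0$ but $|v^*_n|$ is bounded along a subsequence; extract further sub-subsequences with $v^*_n\to \bar v\in[0,\infty)^M$ and $x^*_n\to\bar x\in[-\infty,\infty]$. The Euler identity $\ip{v^*_n}{u_n}=f(v^*_n,x^*_n)$ (the ``Moreover'' in Lemma~\ref{lem:2}), the equation $y=\partial f/\partial x(v^*_n,x^*_n)$, the degree-$1$ homogeneity $\partial f/\partial x(zw,x)=z\,\partial f/\partial x(w,x)$, and the identity $f(w,\cdot)\equiv 0$ on $\partial\mathbf{S}^M$ (hence $\partial f/\partial x(w,\cdot)\equiv 0$ there) then rule out each case: $\bar v\in(0,\infty)^M$ with $\bar x$ finite contradicts $\bar u^{m_0}=0$; $\bar v\in\partial(0,\infty)^M$ with $\bar x$ finite yields $y=0$; $\bar x=+\infty$ forces $\partial f/\partial x\to 0$ and again $y=0$; and $\bar x=-\infty$ with $\bar v\neq 0$ drives $f(v^*_n,x^*_n)\to-\infty$ while $\ip{v^*_n}{u_n}$ stays finite. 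The delicate residual corner $\bar v=0$ with $\bar x=-\infty$, where the identities become $0\cdot\infty$ indeterminacies, I expect to close quantitatively by writing $y=z^*_n\,\partial f/\partial x(w^*_n,x^*_n)$ with $z^*_n\to 0$ and using the strict concavity of $f(w,\cdot)$ in $x$ to rule out the required balance of rates.
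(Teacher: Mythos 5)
Most of your verification is sound, and your route to the strict convexity of $g(\cdot,y)$ --- convexity plus injectivity of the gradient map, with injectivity read off from the conjugacy relations of Lemma~\ref{lem:2} and the strict monotonicity of $\frac{\partial f}{\partial x}(v^*,\cdot)$ --- is a legitimate alternative to the paper's midpoint computation. Your treatment of \ref{item:G2}\ref{item:3} via the root $x^*(u)$ of $\rho(u,\cdot)$ is also essentially fine. In \ref{item:G2}\ref{item:1}, however, the cited ingredients point the wrong way: the upper bound $\rho(u,x)\le\max_m u^m+\sup_w(-f(w,x))$ together with Dini only shows that $x^*(u)$ is finite; to force $x^*(u_n)\to\infty$ you need a lower bound such as $\rho(u_n,X)\ge\ip{w_0}{u_n}-f(w_0,X)\to -f(w_0,X)>0$ for a fixed interior $w_0$ --- easily repaired, or argue as the paper does from the monotonicity of $g(\cdot,1)$ together with the fact that every $x\in\mathbf{R}$ equals $g(u,1)$ for some $u$.

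The genuine gap is in \ref{item:G2}\ref{item:2}. Your contradiction argument allows $\bar x=\lim x^*_n\in[-\infty,\infty]$, and you concede that the corner $\bar v=0$, $\bar x=-\infty$ is not closed (``I expect to close quantitatively\dots'' is not a proof). Moreover, the case you do claim, $\bar x=-\infty$ with $\bar v\neq0$, is only justified when $\bar v\in(0,\infty)^M$: if $\bar v$ is a nonzero boundary point of $\mathbf{R}^M_+$, then $f(v^*_n,\cdot)\to 0$ pointwise by \eqref{eq:11}--\eqref{eq:12}, and with $x^*_n\to-\infty$ the quantity $f(v^*_n,x^*_n)$ is an indeterminate expression which need not tend to $-\infty$, so the Euler identity $f(v^*_n,x^*_n)=\ip{u_n}{v^*_n}$ yields no contradiction there. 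The paper sidesteps all of these cases with one observation you are missing: since $v^*_n=\frac{\partial g}{\partial u}(u_n,1)$ is assumed bounded and $(u_n)$ is bounded, the gradient inequality for the convex function $g(\cdot,1)$ gives $x^*_n=g(u_n,1)\le g(u,1)+\abs{v^*_n}\,\abs{u-u_n}$ for any fixed $u\in(-\infty,0)^M$, while monotonicity gives $x^*_n\ge g(u',1)$ for $u'\le u_n$; hence $(x^*_n)$ is bounded and, after extraction, $\bar x$ is finite. Only your two easy cases ($\bar v$ interior, and $\bar v$ on the boundary handled via Theorem 25.7 of \cite{Rock:70}) then remain. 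Without this (or some substitute that genuinely controls the $\bar x=-\infty$ regime), your argument for \eqref{eq:16} is incomplete.
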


\begin{proof} The continuous differentiability of $g=g(u,y)$ and the
  positive homogeneity condition \ref{item:G4} have been already
  established in Lemmas~\ref{lem:1} and~\ref{lem:2}, while
  \ref{item:G3} (for $g$ not depending on $q$) follows trivially from
  \ref{item:G2}. Hence, \ref{item:G2} is the only remaining property
  to be proved. In view of \ref{item:G4} it is sufficient to verify it
  only for $y=1$.

  The function $g(\cdot,1)$ is strictly increasing because its
  gradient is strictly positive in view of the characterization of
  saddle points to~\eqref{eq:25} by item~\ref{item:10} of
  Lemma~\ref{lem:2}. To show the strict convexity of $g(\cdot,1)$,
  select $u_1$ and $u_2$, distinct elements of $(-\infty,0)^M$, denote
  by $u_3$ their midpoint, and, for $i=1,2,3$, set $v_i \set
  \frac{\partial g}{\partial u}(u_i,1)$, $x_i \set g(u_i,1)$. Since
  each $v_i \in (0,\infty)^M$, we can represent it as the product $v_i
  = z_i w_i$ of $z_i \in (0,\infty)$ and $w_i \in \mathbf{S}^M$. From
  Lemma~\ref{lem:2} and the positive homogeneity
  property~\eqref{eq:11} we deduce that
  \begin{align}
    \label{eq:35}
    \ip{u_i}{w_i} - f(w_i,x_i) &= \sup_{w\in \mathbf{S}^M}[\ip{u_i}{w}
    -
    f(w,x_i)] = 0, \\
    \label{eq:36}
    (y=) \; 1 &= z_i \frac{\partial f}{\partial x}(w_i,x_i).
  \end{align}
  Since $(u_i)_{i=1,2,3}$ are distinct, so are $(v_i,x_i)_{i=1,2,3}$
  (as $u_i = \frac{\partial f}{\partial v}(v_i,x_i)$ by the
  equivalence of items~\ref{item:10} and~\ref{item:11} in
  Lemma~\ref{lem:2}) and, hence, by \eqref{eq:36}, so are
  $(w_i,x_i)_{i=1,2,3}$. As $f(v,\cdot)$ is strictly concave on
  $\mathbf{R}$, we deduce from \eqref{eq:35} that
  \begin{align*}
    f(w_3,x_3) & = \ip{u_3}{w_3} = \frac12 ((\ip{u_1}{w_3} -
    f(w_3,x_1)) + (\ip{u_2}{w_3} -
    f(w_3,x_2))) \\
    &\quad + \frac12 (f(w_3,x_1) + f(w_3,x_2)) < f(w_3,\frac12(x_1
    +x_2)).
  \end{align*}
  Since $f(w_3,\cdot)$ is strictly increasing on $\mathbf{R}$, we
  deduce that $x_3 < (x_1 + x_2)/2$, implying the strict convexity of
  $g(\cdot,1)$.

  The assertion \eqref{eq:15} of \ref{item:G2}\ref{item:1} follows
  from the monotonicity of $g(\cdot,1)$ and the fact that, by
  Lemmas~\ref{lem:1} and~\ref{lem:2}, for every $x\in \mathbf{R}$ one
  can find $u\in (-\infty,0)^M$ such that $x = g(u,1)$.

  For the proof of \ref{item:G2}\ref{item:2} and
  \ref{item:G2}\ref{item:3} we shall argue by contradiction. Let
  $(u_n)_{n\geq 1}$ be a sequence in $(-\infty,0)^M$ converging to a
  boundary point of $(-\infty,0)^M$.  Denote $x_n \set g(u_n,1)$, $v_n
  \set \frac{\partial g}{\partial u}(u_n,1)$, $n\geq 1$, and, contrary
  to \eqref{eq:16}, assume that the sequence $(v_n)_{n\geq 1}$ is
  bounded.  Then, by the convexity of $g(\cdot,1)$ and the boundedness
  of $(u_n)_{n\geq 1}$, the sequence $(x_n)_{n\geq 1}$ is also
  bounded. Hence, by passing to a subsequence, we can assume that the
  sequences $(v_n)_{n\geq 1}$ and $(x_n)_{n\geq 1}$ converge to
  \emph{finite} limits $\widehat v \in \mathbf{R}^M_+$ and $\widehat x
  \in \mathbf{R}$, respectively.  From Lemma~\ref{lem:2} we deduce
  that
  \begin{equation}
    \label{eq:37}
    u_n = \frac{\partial f}{\partial v}(v_n,x_n), \quad 1 =\frac{\partial
      f}{\partial x}(v_n,x_n), \quad n\geq 1.
  \end{equation}
  If $\widehat v\in (0,\infty)^M$, then
  \begin{displaymath}
    \lim_{n\to \infty} u_n = \lim_{n\to \infty} \frac{\partial f}{\partial
      v}(v_n,x_n) = \frac{\partial f}{\partial v}(\widehat v,\widehat x) \in
    (-\infty,0)^M,  
  \end{displaymath}
  contradicting our choice of $(u_n)_{n\geq 1}$. If, on the other
  hand, $\widehat v\in \boundary{\mathbf{R}^M_+}$, then, by
  \eqref{eq:11} and \eqref{eq:12},
  \begin{displaymath}
    \lim_{n\to \infty} f(v_n,x) = 0, \quad x\in \mathbf{R}. 
  \end{displaymath}
  Since the functions $f(v_n,\cdot)$ are concave, their pointwise
  convergence to $0$ implies the convergence to $0$ of its
  derivatives, uniformly on compact sets in $\mathbf{R}$, see Theorem
  25.7 in \cite{Rock:70}. It follows that
  \begin{displaymath}
    \lim_{n\to \infty} \frac{\partial f}{\partial x}(v_n,x_n) = 0,
  \end{displaymath}
  contradicting the second equality in \eqref{eq:37}. This finishes
  the proof of \ref{item:G2}\ref{item:2}.

  Let now $(u_n)_{n\geq 1}$ be a sequence in $(-\infty,0)^M$
  satisfying the conditions \eqref{eq:17} and \eqref{eq:18} of
  \ref{item:G2}\ref{item:3}. Denote $x_n \set g(u_n,1)$ and, contrary
  to \eqref{eq:19}, assume that
  \begin{displaymath}
    \limsup_{n\to \infty} g(u_n,1) = \limsup_{n\to \infty} x_n > -\infty.
  \end{displaymath}
  As $g(\cdot,1)$ is an increasing function on $(-\infty,0)^M$,
  \eqref{eq:17} implies the boundedness of the sequence $(x_n)_{n\geq
    1}$ from above. Hence, by passing, if necessary, to a subsequence,
  we can assume that it converges to $\widehat x\in \mathbf{R}$.
  Define a sequence $(w_n)_{n\geq 1}$ in $\mathbf{S}^M$ by
  \begin{displaymath}
    w_n = \frac{\partial g}{\partial u}(u_n,y_n) = y_n \frac{\partial
      g}{\partial u}(u_n,1),
  \end{displaymath}
  for appropriate normalizing constants $y_n$, $n\geq 1$.  By passing
  to a subsequence, we can assume that $(w_n)_{n\geq 1}$ converges to
  $\widehat w$ in the simplex $\closure{\mathbf{S}^M}$. From
  Lemma~\ref{lem:2} we deduce that
  \begin{displaymath}
    u_n = \frac{\partial f}{\partial v}(w_n,x_n), \quad n\geq 1.
  \end{displaymath}
  If $\widehat w\in \mathbf{S}^M$, then
  \begin{displaymath}
    \lim_{n\to \infty} u_n = \lim_{n\to \infty} \frac{\partial f}{\partial
      v}(w_n,x_n) = \frac{\partial f}{\partial v}(\widehat w,\widehat x) \in
    (-\infty,0)^M, 
  \end{displaymath}
  contradicting \eqref{eq:18}. If $\widehat w\in
  \boundary{\mathbf{S}^M}$, then, by \eqref{eq:12}, the sequence of
  concave functions $f(w_n,\cdot)$, $n\geq 1$, on $\mathbf{R}$
  converges to $0$ pointwise and, therefore, also uniformly on compact
  sets. Accounting for Lemma~\ref{lem:2} we deduce
  \begin{displaymath}
    \lim_{n\to \infty} \ip{u_n}{w_n} = \lim_{n\to \infty} f(w_n,x_n) = 0, 
  \end{displaymath}
  contradicting \eqref{eq:17}. This finishes the proof of
  \ref{item:G2}\ref{item:3} and, with it, the proof of the lemma.
\end{proof}

\begin{Lemma}
  \label{lem:4}
  Let $\map{g = g(u,y)}{(-\infty,0)^M \times (0,\infty)}{\mathbf{R}}$
  be in $\mathbf{G}^1$. Then there is a continuously differentiable
  function $\map{f=f(v,x)}{(0,\infty)^M\times
    \mathbf{R}}{(-\infty,0)}$ such that the minimax relations
  \eqref{eq:25} and \eqref{eq:26} hold and have unique saddle points.
\end{Lemma}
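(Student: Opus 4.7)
The plan is to define $f$ as the saddle conjugate of $g$ and to mirror the proof of Lemma~\ref{lem:1}, now with the roles of $f$ and $g$ interchanged, exploiting the symmetry of conjugate saddle functions in Section~37 of \cite{Rock:70}. Concretely, I would set
\[
f(v,x) \set \sup_{u\in (-\infty,0)^M}\inf_{y\in (0,\infty)}[\ip{v}{u} + xy - g(u,y)]
\]
on $(0,\infty)^M\times \mathbf{R}$. As in Lemma~\ref{lem:1}, the first step is to extend $g$ to a closed saddle function on all of $\mathbf{R}^{M+1}$: on the convex $u$-side set $g(u,y) = +\infty$ for $u\notin (-\infty,0]^M$, with \ref{item:G2}\ref{item:1} providing the correct lower semicontinuity at the upper boundary; on the concave $y$-side use the positive homogeneity \ref{item:G4} to set $g(u,0)=0$ and $g(u,y)=-\infty$ for $y<0$.

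By Theorem~37.1 and Corollaries~37.1.1--37.1.2 of \cite{Rock:70}, this extension produces a common effective domain $C'\times D'$ for the sup-inf and inf-sup conjugates of $g$, on the union of whose interiors $f$ is well-defined. I would next verify the domain identifications $\interior C' = (0,\infty)^M$ and $\interior D' = \mathbf{R}$ in direct analogy with the $C,D$-computation in Lemma~\ref{lem:1}. The positive homogeneity \ref{item:G4} reduces the inner $\inf_y$ to the single linear condition $x = g(u,1)$, giving the explicit representation
\[
f(v,x) = \sup\{\ip{v}{u} : u \in (-\infty,0)^M,\ g(u,1) = x\}.
\]
The fact that $g(\cdot,1)$ maps $(-\infty,0)^M$ onto $\mathbf{R}$ follows from the monotonicity of $g(\cdot,1)$ together with the boundary limits \eqref{eq:15} and \eqref{eq:19}, yielding $\interior D' = \mathbf{R}$; and $f < 0$ is then immediate from the display since $\ip{v}{u} < 0$ for $v\in(0,\infty)^M$ and $u\in(-\infty,0)^M$. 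The inclusion $(0,\infty)^M \subseteq \interior C'$ follows similarly using \ref{item:G2}. Uniqueness of the maximizer $\hat u$ in the reduced problem is a consequence of the strict convexity of $g(\cdot,1)$ together with the strict positivity of $v$ (maximizing a strictly positive linear functional on a strictly convex level surface has a unique solution). Continuous differentiability of $f$ on $(0,\infty)^M\times\mathbf{R}$ is then obtained from Theorem~35.8 and Corollary~37.5.3 of \cite{Rock:70} as soon as uniqueness of saddle points at every interior point is in hand.

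The main obstacle, I expect, is the positive-homogeneity degeneracy in $y$: because the map $y\mapsto xy - g(u,y) = y(x - g(u,1))$ is linear, the inner infimum in~\eqref{eq:25} is attained on the whole ray $(0,\infty)$ whenever it is finite, so the variational formula itself does not single out a unique $\hat y$. The resolution is to extract $\hat y$ only after the dual conjugacy is in place, via the gradient identity $\hat y = \partial f/\partial x(v,x)$ and the characterizations in Lemma~\ref{lem:2}. This is the exact mirror of the simplex decomposition $\hat v = \hat z\, \hat w$ through which the uniqueness of $\hat z$ was extracted at the end of the proof of Lemma~\ref{lem:1}, and with that step in place the proof will be complete.
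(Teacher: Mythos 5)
Your overall route is the paper's: extend $g$ by a closure operation, invoke Section 37 of \cite{Rock:70}, identify the effective domain of the conjugate as $\mathbf{R}^M_+\times\mathbf{R}$, and reduce the saddle-point analysis of \eqref{eq:26} to the problem $\sup\braces{\ip{v}{u}:\ g(u,1)\leq x}$. Two steps of that analysis, however, are not carried by what you wrote. First, the Rockafellar machinery produces saddle points $(\widehat u,\widehat y)$ only in the closed extended domain $(-\infty,0]^M\times\mathbf{R}_+$; before strict convexity of $g(\cdot,1)$ (assumed only on the open orthant) can give uniqueness of $\widehat u$, one must rule out boundary maximizers, as well as check $\widehat y>0$ and $\widehat u\neq 0$. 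The paper does this via the homogeneity identities \eqref{eq:41}--\eqref{eq:43} and via \ref{item:G2}\ref{item:2}: attainment of the supremum in \eqref{eq:43} forces $g(\cdot,1)$ to be subdifferentiable at $\widehat u$, and the gradient blow-up at the boundary excludes $\widehat u\in\boundary{(-\infty,0]^M}$. You never invoke \ref{item:G2}\ref{item:2}, and without it the strict-convexity argument does not yet apply.

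Second --- and this is exactly the step you single out as the main obstacle --- your extraction of $\widehat y$ is circular as stated. You propose to recover $\widehat y$ from $\widehat y=\frac{\partial f}{\partial x}(v,x)$ ``once the dual conjugacy is in place,'' citing Lemma~\ref{lem:2}; but Lemma~\ref{lem:2} is proved in the setting of Lemma~\ref{lem:1}, where $f$ is the given datum, and in the present setting the differentiability of the constructed $f$ is precisely what the uniqueness of saddle points is supposed to deliver (Theorem 35.8 and Corollary 37.5.3 of \cite{Rock:70} pass from uniqueness of the subgradient to differentiability, not the other way around). The correct mirror of the end of the proof of Lemma~\ref{lem:1} uses the derivative of the \emph{given} function: the first-order condition for the attained supremum in \eqref{eq:43} gives $v\in\widehat y\,\partial g(\widehat u,1)$, and since $g\in\mathbf{G}^1$ is differentiable with strictly positive $u$-gradient, the equality $v=\widehat y\,\frac{\partial g}{\partial u}(\widehat u,1)$ determines $\widehat y$ uniquely once $\widehat u$ is unique. (A smaller point of the same kind: $f<0$ is not immediate from the sup formula alone; one needs \ref{item:G2}\ref{item:1}, or, as in the paper, attainment at an interior $\widehat u$.) With these substitutions your argument coincides with the paper's proof.
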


\begin{proof}
  We follow similar arguments as in the proof of Lemma~\ref{lem:1}.
  To use the results of Section 37 in \cite{Rock:70} we need to define
  the values of $g=g(u,y)$ at the boundary of the original domain by
  an appropriate closure operation. For $u\in (-\infty,0)^M$ we set,
  by continuity, $g(u,0)\set 0$. Then for $y\geq 0$ we define, by
  lower semi-continuity,
  \begin{equation}
    \label{eq:38}
    g(u,y) \set \lim_{\epsilon\to 0} \inf_{z\in B(u,\epsilon)} g(z,y),
    \quad u\in \boundary{(-\infty,0]^M}, 
  \end{equation}
  where
  \begin{displaymath}
    B(u,\epsilon) \set \descr{z\in (-\infty,0)^M}{|u-z|\leq \epsilon }.
  \end{displaymath}
  Note that in \eqref{eq:38} the value of the limit may be infinite.

  As in the proof of Lemma~\ref{lem:1} we deduce the existence of a
  saddle function $f=f(v,x)$ defined on $(C\times \interior{D}) \cup
  (D \times \interior{C})$, where
  \begin{align*}
    C & \set \descr{v\in \mathbf{R}^M}{f(v,x) < \infty
      \mtext{for all} x \in \mathbf{R} } \\
    &= \descr{v\in \mathbf{R}^M}{\sup_{u\in (-\infty,0]^M}[\ip{v}{u} -
      g(u,y)] < \infty \mtext{for some}
      y \in \mathbf{R}_+ }, \\
    D & \set \descr{x\in \mathbf{R}}{f(v,x) > - \infty
      \mtext{for all} v \in \mathbf{R}^M } \\
    &= \descr{x\in \mathbf{R}}{\inf_{y\in \mathbf{R}_+}[xy - g(u,y)] >
      -\infty \mtext{for some} u\in (-\infty,0]^M },
  \end{align*}
  such that, for every $(v,x) \in (C\times \interior{D}) \cup (D
  \times \interior{C})$,
  \begin{equation}
    \label{eq:39}
    \begin{split}
      f(v,x) &= \sup_{u\in (-\infty,0]^M}\inf_{y\in
        \mathbf{R}_+}[\ip{v}{u} +
      xy - g(u,y)], \\
      &= \inf_{y\in \mathbf{R}_+}\sup_{u\in (-\infty,0]^M} [\ip{v}{u}
      + xy - g(u,y)],
    \end{split}
  \end{equation}
  and, for every $(u,y) \in (-\infty,0)^M\times (0,\infty)$,
  \begin{equation}
    \label{eq:40}
    \begin{split}
      g(u,y) &= \sup_{v\in C}\inf_{x\in \interior{D}}[\ip{v}{u} + xy
      - f(v,x)]\\
      &= \inf_{x\in D}\sup_{v\in \interior{C}} [\ip{v}{u} + xy -
      f(v,x)].
    \end{split}
  \end{equation}
 
  As $g(u,y) = yg(u,1)$, and, by \ref{item:G2}\ref{item:3}, for every
  $x\in\mathbf{R}$ there is $u\in (-\infty,0)^M$ such that $x\geq
  g(u,1)$, we have
  \begin{displaymath}
    D = \mathbf{R}. 
  \end{displaymath}

  Choosing $y=0$ in the second description of $C$ above, we obtain
  \begin{displaymath}
    \sup_{u\in (-\infty,0]^M}[\ip{u}{v} - g(u,0)] = \sup_{u\in
      (-\infty,0]^M}[\ip{u}{v}] < \infty \mtext{iff} v\in
    \mathbf{R}^M_+. 
  \end{displaymath}
  If $v\not\in \mathbf{R}^M_+$, then there is $u_0\in (-\infty,0)^M$
  such that $\ip{u_0}{v} > 0$. By \ref{item:G2}\ref{item:3}, for every
  $y>0$,
  \begin{displaymath}
    \lim_{n\to \infty} g(nu_0,y) = -\infty, 
  \end{displaymath}
  and, therefore,
  \begin{displaymath}
    \sup_{u\in (-\infty,0]^M}[\ip{u}{v} - g(u,y)] \geq \limsup_{n\to\infty}
    [\ip{nu_0}{v} - g(nu_0,y)] = \infty. 
  \end{displaymath}
  It follows that
  \begin{displaymath}
    C = \mathbf{R}^M_+.
  \end{displaymath}

  For $u\in (-\infty,0)^M$ and $y>0$ we have $\nabla g(u,y) \in
  (0,\infty)^M \times \mathbf{R}$, implying that $(v,x) \set \nabla
  g(u,y)$ is the unique saddle point of \eqref{eq:40}. In particular,
  we deduce that the minimax identities \eqref{eq:25} and
  \eqref{eq:40} have the same unique saddle points.

  Let now $v\in (0,\infty)^M$ and $x\in \mathbf{R}$. As $(v,x)$
  belongs to the \emph{interior} of the effective domain of $f$, the
  minimax values in \eqref{eq:39} are attained on a closed convex set
  of saddle points belonging to the subdifferential of $f$ evaluated
  at $(v,x)$, see Corollary 37.5.3 in \cite{Rock:70}. We are going to
  show that this set is a singleton in $(-\infty,0)^M\times
  (0,\infty)$.

  Let $(\widehat{u},\widehat{y})$ be a saddle point. Then
  \begin{displaymath}
    (\widehat{u},\widehat{y}) \in \dom{g} \subset (-\infty,0]^M \times
    \mathbf{R}_+, 
  \end{displaymath}
  and
  \begin{align*}
    f(v,x) &= x\widehat y + \ip{\widehat u}{v} - g(\widehat u,\widehat
    y) = x\widehat y + \sup_{u\in (-\infty,0]^M} [\ip{u}{v} -
    g(u,\widehat y)] \\
    &= \ip{\widehat u}{v} + \inf_{y\in \mathbf{R}_+} [x y - g(\widehat
    u, y)].
  \end{align*}
  As $g(u,y) = y g(u,1)$, we deduce that $\widehat y>0$, $\widehat
  u\not=0$, and
  \begin{align}
    \label{eq:41}
    f(v,x) &= \ip{\widehat u}{v}, \\
    \label{eq:42}
    x &= g(\widehat u, 1), \\
    \label{eq:43}
    \ip{\widehat u}{v} - x\widehat y & = \sup_{u\in (-\infty,0]^M }
    [\ip{u}{v} - \widehat y g(u,1)].
  \end{align}

  The attainability of the upper bound in \eqref{eq:43} at $\widehat
  u$ implies that the subdifferential $\partial g(\widehat u,1)$ is
  well-defined and $v\in \widehat y\partial g(\widehat u,1)$. From
  \ref{item:G2}\ref{item:2} we deduce that $g(\cdot,1)$ is not
  subdifferentiable on the boundary of $(-\infty,0]^M$ and, therefore,
  $\widehat u\in (-\infty,0)^M$.  As $g(\cdot,1)$ is strictly convex
  on $(-\infty,0)^M$, \eqref{eq:42} defines $\widehat u$ uniquely,
  and, as $g(\cdot,1)$ is differentiable on $(-\infty,0)^M$, $\widehat
  y$ is uniquely determined by the equality
  \begin{displaymath}
    v = \widehat y \frac{\partial g}{\partial u}(\widehat u,1).  
  \end{displaymath}

  The uniqueness of the saddle points $(\widehat u,\widehat y)$
  implies the continuous differentiability of $f$ on $(0,\infty)^M
  \times \mathbf{R}$. Finally, from \eqref{eq:41} we deduce that $f <
  0$ on $(0,\infty)^M \times \mathbf{R}$.
\end{proof}

\begin{Lemma}
  \label{lem:5}
  Let $g$ and $f$ be as in Lemma~\ref{lem:4}. Then $f$ satisfies the
  positive homogeneity condition \eqref{eq:11} and, for $(v,x)\in
  (0,\infty)^M\times\mathbf{R}$ and $(u,y)\in (-\infty,0)^M\times
  (0,\infty)$, the assertions of Lemma~\ref{lem:2} hold.
\end{Lemma}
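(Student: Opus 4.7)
The two assertions are essentially packaging results already established in Lemma~\ref{lem:4} together with one direct calculation.

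\emph{Positive homogeneity.} Using the positive homogeneity \ref{item:G4} of $g$ in $y$, the minimax characterization \eqref{eq:39} of $f$ becomes, for any $(v,x)\in(0,\infty)^M\times\mathbf{R}$,
\begin{displaymath}
  f(v,x) = \sup_{u\in (-\infty,0]^M}\inf_{y\in \mathbf{R}_+}\bigl[\ip{v}{u} + y(x - g(u,1))\bigr]
         = \sup_{u:\, g(u,1)\leq x}\ip{v}{u},
\end{displaymath}
since the inner infimum is $\ip{v}{u}$ when $x\geq g(u,1)$ and $-\infty$ otherwise. The constraint set does not involve $v$, so for every $z>0$ we immediately read off
\begin{displaymath}
  f(zv,x) = \sup_{u:\, g(u,1)\leq x}\ip{zv}{u} = z\sup_{u:\, g(u,1)\leq x}\ip{v}{u} = zf(v,x),
\end{displaymath}
which is \eqref{eq:11}.

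\emph{Equivalences and the identities $f(v,x)=\ip{u}{v}$, $g(u,y)=xy$.} Since $g\in\mathbf{G}^1$ and $f$ are mutually conjugate saddle functions (both closed after trivial boundary extensions, with $f<0$ on its effective domain interior and $g$ finite on its own), the equivalence of items~\ref{item:8}--\ref{item:11} follows from the standard characterization of saddle points as pairs of conjugate subdifferentials (Theorem~35.8, Theorem~37.5, and Corollary~37.5.3 in \cite{Rock:70}), in exactly the same way as in Lemma~\ref{lem:2}. Concretely, Lemma~\ref{lem:4} already shows that for given $(v,x)\in(0,\infty)^M\times\mathbf{R}$ the unique saddle point $(\widehat u,\widehat y)\in(-\infty,0)^M\times(0,\infty)$ of \eqref{eq:39} satisfies $x=g(\widehat u,1)=\partial g/\partial y(\widehat u,\widehat y)$ (by \ref{item:G4}) and $v=\widehat y\,\partial g/\partial u(\widehat u,1)=\partial g/\partial u(\widehat u,\widehat y)$, which is \ref{item:10}; conversely, the continuous differentiability of $f$ on $(0,\infty)^M\times\mathbf{R}$ already established in Lemma~\ref{lem:4} forces the unique saddle point of \eqref{eq:40} for given $(u,y)\in(-\infty,0)^M\times(0,\infty)$ to be $(v,x)=\nabla_{(u,y)}$ reparametrized via $(v,x)=(\partial f/\partial v,\partial f/\partial x)$ evaluated at the reciprocal point, yielding \ref{item:11}. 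The identities $f(v,x)=\ip{u}{v}$ and $g(u,y)=xy$ are just \eqref{eq:41} and \eqref{eq:42} combined with \ref{item:G4}, namely $g(u,y)=yg(u,1)=yx$.

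\emph{Main obstacle.} No new analytic difficulty arises beyond what was handled in Lemma~\ref{lem:4}; the only genuine computation is the reduction of \eqref{eq:39} to a concave support function via the homogeneity in $y$, from which \eqref{eq:11} drops out, and everything else is a direct citation of the subdifferential calculus of conjugate saddle functions in \cite{Rock:70}.
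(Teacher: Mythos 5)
Your proof is correct and follows essentially the same route as the paper: the paper likewise obtains \eqref{eq:11} directly from the homogeneity \ref{item:G4} of $g$ (your reduction of the inner infimum in \eqref{eq:39} to the support function $\sup\{\ip{v}{u}: g(u,1)\leq x\}$ just makes that one-line assertion explicit; rescaling $y\mapsto zy$ in \eqref{eq:39} works equally well), and it settles the equivalences and the identities $f(v,x)=\ip{u}{v}$, $g(u,y)=xy$ by the same appeal to the saddle-point/subdifferential characterization of Theorem 37.5 and Corollary 37.5.3 in \cite{Rock:70} used in Lemma~\ref{lem:2}. The only blemish is the garbled phrasing in your second paragraph ("$(v,x)=\nabla_{(u,y)}$ reparametrized\dots"), which should simply say that, given $(v,x)$, the unique saddle point $(u,y)$ of \eqref{eq:39} lies in $\partial f(v,x)$, hence equals $\nabla f(v,x)$ by the differentiability of $f$ established in Lemma~\ref{lem:4}.
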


\begin{proof}
  The positive homogeneity property \eqref{eq:11} for $f$ is a
  consequence of the corresponding feature \eqref{eq:20} for $g$. The
  remaining assertions follow by the same arguments as in the proof of
  Lemma~\ref{lem:2}.
\end{proof}

\begin{Lemma}
  \label{lem:6}
  Let $g$ and $f$ be as in Lemma~\ref{lem:4}. Then $f$ satisfies
  \ref{item:F2}.
\end{Lemma}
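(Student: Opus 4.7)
The plan is to dispose of each clause of \ref{item:F2} separately, relying on the conjugacy relations of Lemma~\ref{lem:5} and the three sub-conditions of \ref{item:G2}. The positive homogeneity~\eqref{eq:11} is already in Lemma~\ref{lem:5}, and the strict monotonicity of $f(\cdot,x)$ on $(0,\infty)^M$ is immediate: by Lemma~\ref{lem:5} we have $\frac{\partial f}{\partial v}(v,x)=u$, where $u\in(-\infty,0)^M$ is the unique saddle point, hence every partial derivative $\frac{\partial f}{\partial v^m}$ is strictly negative.

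For the strict convexity of $f(\cdot,x)$ on $\mathbf{S}^M$ when $M>1$, I would mirror the argument of Lemma~\ref{lem:3}. Fix distinct $w_1,w_2\in\mathbf{S}^M$, set $w_3\set\tfrac12(w_1+w_2)$, and let $u_i\set\frac{\partial f}{\partial v}(w_i,x)\in(-\infty,0)^M$ be the corresponding saddle points. Using the homogeneity $g(u,y)=y\,g(u,1)$ from~\ref{item:G4}, the minimax~\eqref{eq:26} collapses to
\begin{equation*}
f(v,x) = \sup\descr{\ip{v}{u}}{u\in(-\infty,0)^M,\ g(u,1)\le x},
\end{equation*}
and by Lemma~\ref{lem:5} each $u_i$ attains this supremum with $g(u_i,1)=x$ and $f(w_i,x)=\ip{w_i}{u_i}$. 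Since $u_3$ is then admissible in the sup defining $f(w_1,x)$ and $f(w_2,x)$,
\begin{equation*}
f(w_3,x) = \ip{w_3}{u_3} = \tfrac12\bigl(\ip{w_1}{u_3}+\ip{w_2}{u_3}\bigr) \le \tfrac12\bigl(f(w_1,x)+f(w_2,x)\bigr),
\end{equation*}
with equality forcing $u_3=u_1=u_2$ by the uniqueness of the saddle point. But $u_1=u_2\set\widehat u$ combined with the representation $w_i=y_i\frac{\partial g}{\partial u}(\widehat u,1)$ from Lemma~\ref{lem:5} and the constraint $\sum_m w_i^m=1$ forces $y_1=y_2$ and hence $w_1=w_2$, contradicting the choice of $w_1\neq w_2$; so the inequality is strict.

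For the boundary condition~\eqref{eq:12}, let $(w_n)_{n\ge 1}\subset\mathbf{S}^M$ converge to $\widehat w\in\boundary\mathbf{S}^M$, set $I\set\braces{m:\widehat w^m=0}\neq\emptyset$, fix $\epsilon>0$, and define $u(\epsilon,K)\in(-\infty,0)^M$ by $u^m\set-\epsilon$ for $m\notin I$ and $u^m\set-K$ for $m\in I$. By~\ref{item:G2}\ref{item:3}, applied as $K\to\infty$ with $m_0\in I$, one has $g(u(\epsilon,K),1)\to-\infty$, so $u(\epsilon,K)$ lies in the admissible set $\descr{u\in(-\infty,0)^M}{g(u,1)\le x}$ once $K$ is large. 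The sup formula then yields
\begin{equation*}
f(w_n,x) \ge \ip{w_n}{u(\epsilon,K)} = -\epsilon\sum_{m\notin I} w_n^m - K\sum_{m\in I} w_n^m,
\end{equation*}
and since $w_n^m\to 0$ for $m\in I$, the second term vanishes in the limit while the first is bounded below by $-\epsilon$. Hence $\liminf_n f(w_n,x)\ge-\epsilon$, and letting $\epsilon\downarrow 0$ together with the bound $f\le 0$ gives $f(w_n,x)\to 0$.

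The principal technical step I anticipate is the rigorous reduction of the two-sided minimax~\eqref{eq:26} to the sup formula above. This requires some care with the $\inf$ over $y\in\mathbf{R}_+$ (including $y=0$) and with the lower-semicontinuous extension~\eqref{eq:38} of $g$ to the boundary of $(-\infty,0]^M$. Once that reduction is in hand, all three remaining assertions of~\ref{item:F2} are routine translations through the saddle-point dictionary provided by Lemma~\ref{lem:5}.
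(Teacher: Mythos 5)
Your proposal is correct and takes essentially the same route as the paper: strict monotonicity from $\frac{\partial f}{\partial v}=u<0$, strict convexity on $\mathbf{S}^M$ by the same midpoint argument resting on uniqueness of the saddle points of~\eqref{eq:26}, and the boundary limit~\eqref{eq:12} via a test point $u(\epsilon)$ with very negative entries on the vanishing coordinates, using \ref{item:G2}\ref{item:3} and the lower bound $f(v,x)\geq \ip{u(\epsilon)}{v}$. Your reduction of~\eqref{eq:26} to the sup formula $f(v,x)=\sup\descr{\ip{v}{u}}{g(u,1)\leq x}$ is just an explicit form of what the paper uses implicitly, and since the minimax in~\eqref{eq:26} is taken over the open sets $(-\infty,0)^M\times(0,\infty)$, the worry about $y=0$ and the extension~\eqref{eq:38} is unnecessary.
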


\begin{proof}
  Fix $x\in \mathbf{R}$. The positive homogeneity with respect to $v$
  was already established in Lemma \ref{lem:5}. By item~\ref{item:11}
  of Lemma~\ref{lem:2}, $\frac{\partial f}{\partial v} < 0$, implying
  that the function $f(\cdot,x)$ is strictly decreasing.

  Let $(w_i)_{i=1,2}$ be distinct points in $\mathbf{S}^M$, $w_3$ be
  their midpoint, and, for $i=1,2,3$, denote $u_i \set \frac{\partial
    f}{\partial v}(w_i,x)$ and $y_i \set \frac{\partial f}{\partial
    x}(w_i,x)$.  By the characterizations of saddle points in
  Lemma~\ref{lem:2}, for $i=1,2,3$, we have
  \begin{displaymath}
    f(w_i,x) = \ip{u_i}{w_i}, \quad x = g(u_i,1), \mtext{and} w_i = y_i
    \frac{\partial g}{\partial u}(u_i,1). 
  \end{displaymath}
  From the last equality we deduce that the points $(u_i)_{i=1,2,3}$
  are distinct. The uniqueness of saddle points for \eqref{eq:26} then
  implies $\ip{u_3}{w_i} < \ip{u_i}{w_i}$, for $i=1,2$, and,
  therefore,
  \begin{align*}
    f(w_3,x) &= \ip{u_3}{w_3} = \frac12(\ip{u_3}{w_1} + \ip{u_3}{w_2}) \\
    &< \frac12(\ip{u_1}{w_1} + \ip{u_2}{w_2}) = \frac12(f(w_1,x) +
    f(w_2,x)),
  \end{align*}
  proving the strict convexity of $f(\cdot,x)$ on $\mathbf{S}^M$.

  Let now $(w_n)_{n\geq 1}$ be a sequence in $\mathbf{S}^M$ converging
  to $w\in \boundary{\mathbf{S}^M}$. From \ref{item:G2}\ref{item:3}
  for every $\epsilon >0$ we deduce the existence of $u(\epsilon)\in
  (-\infty,0)^M$ such that $g(u(\epsilon),1)\leq x$ and
  \begin{displaymath}
    -\epsilon \leq \ip{u(\epsilon)}{w} = \lim_{n\to\infty}
    \ip{u(\epsilon)}{w_n}. 
  \end{displaymath}
  From the construction of $f$ in \eqref{eq:26} we deduce $f(v,x) \geq
  \ip{u(\epsilon)}{v}$ for every $v\in (0,\infty)^M$. It follows that
  \begin{displaymath}
    \liminf_{n\to\infty} f(w_n,x) \geq \lim_{n\to\infty}
    \ip{u(\epsilon)}{w_n} \geq -\epsilon, 
  \end{displaymath}
  proving \eqref{eq:12}.
\end{proof}

\begin{Lemma}
  \label{lem:7}
  Let $g$ and $f$ be as in Lemma~\ref{lem:4}. Then $f$ satisfies
  \ref{item:F4}.
\end{Lemma}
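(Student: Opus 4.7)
The plan is to verify the three ingredients of \ref{item:F4} for the function $f$ constructed in Lemma~\ref{lem:4}: namely, for fixed $v \in (0,\infty)^M$, the strict monotonicity in $x$, the strict concavity in $x$, and the limit $\lim_{x\to\infty} f(v,x) = 0$. The key reduction, which I would establish first, is to pass from the $\sup\inf$ representation \eqref{eq:26} to the cleaner constrained formula
\begin{equation*}
  f(v,x) = \sup\bigl\{\ip{v}{u} : u \in (-\infty,0)^M,\; g(u,1) \leq x\bigr\}.
\end{equation*}
This follows from~\eqref{eq:26} together with the positive homogeneity $g(u,y)=y g(u,1)$ from \ref{item:G4}: for any fixed $u$, the inner infimum $\inf_{y\geq 0}[\ip{v}{u} + y(x - g(u,1))]$ equals $\ip{v}{u}$ when $g(u,1) \leq x$ (achieved as $y\to 0$) and equals $-\infty$ otherwise.

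Strict monotonicity is then immediate from Lemmas~\ref{lem:4}--\ref{lem:5}: the saddle point $(u,y)$ attached to $(v,x)$ lies in $(-\infty,0)^M\times(0,\infty)$, and by item~\ref{item:11} of Lemma~\ref{lem:2} we have $\frac{\partial f}{\partial x}(v,x) = y > 0$.

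For strict concavity, I fix distinct $x_1,x_2\in\mathbf{R}$ and set $x_3 := (x_1+x_2)/2$. Letting $u_i := \frac{\partial f}{\partial v}(v,x_i)\in(-\infty,0)^M$, the saddle-point characterization (Lemma~\ref{lem:5}) yields $g(u_i,1)=x_i$ and $f(v,x_i)=\ip{u_i}{v}$. Set $\bar u := (u_1+u_2)/2\in(-\infty,0)^M$; since $u_1\neq u_2$ (they map to the distinct values $x_1,x_2$ under $g(\cdot,1)$), the strict convexity of $g(\cdot,1)$ from \ref{item:G2} gives
\begin{equation*}
  g(\bar u,1) < \tfrac{1}{2}(g(u_1,1) + g(u_2,1)) = x_3.
\end{equation*}
Consequently $\bar u$ lies strictly interior to the constraint set $\{g(\cdot,1)\leq x_3\}$, so the perturbation $\tilde u := \bar u + \epsilon(1,\dots,1)$ is still feasible for small $\epsilon>0$ and belongs to $(-\infty,0)^M$. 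Since $v \in (0,\infty)^M$, we get $\ip{v}{\tilde u} > \ip{v}{\bar u}$, and therefore
\begin{equation*}
  f(v,x_3) \geq \ip{v}{\tilde u} > \ip{v}{\bar u} = \tfrac{1}{2}(f(v,x_1)+f(v,x_2)),
\end{equation*}
which is the desired strict concavity.

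For the limit, observe that $f(v,x) < 0$ on all of $(0,\infty)^M\times\mathbf{R}$ by Lemma~\ref{lem:4}. On the other hand, $\sup_{u\in(-\infty,0)^M} \ip{v}{u} = 0$, so for any $\epsilon > 0$ there exists $u_\epsilon \in (-\infty,0)^M$ with $\ip{v}{u_\epsilon} > -\epsilon$; the constrained formula then yields $f(v,x) \geq \ip{v}{u_\epsilon} > -\epsilon$ whenever $x \geq g(u_\epsilon,1)$, proving $\lim_{x\to\infty} f(v,x)=0$.

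The main technical point to handle carefully is the first reduction: one must justify the passage from the $\sup\inf$ over $y\in(0,\infty)$ in~\eqref{eq:26} to a $\sup\inf$ over $y\geq 0$ compatible with the closure prescription $g(u,0):=0$ used in the proof of Lemma~\ref{lem:4}. Once this is done, the rest is an application of the strict convexity of $g(\cdot,1)$ (given by \ref{item:G2}) combined with the positivity of $v$.
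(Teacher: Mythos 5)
Your proposal is correct and follows essentially the same route as the paper: monotonicity from the saddle-point identity $\frac{\partial f}{\partial x}=y>0$ (Lemmas~\ref{lem:4}--\ref{lem:5}), strict concavity from the strict convexity of $g(\cdot,1)$ applied to $u_i=\frac{\partial f}{\partial v}(v,x_i)$ with $g(u_i,1)=x_i$, $f(v,x_i)=\ip{u_i}{v}$, and the limit \eqref{eq:13} from points $u(\epsilon)$ with $\ip{u(\epsilon)}{v}\geq-\epsilon$. The only cosmetic difference is that you obtain the strict inequality $f(v,x_3)>\ip{v}{\tfrac12(u_1+u_2)}$ by a small perturbation $\bar u+\epsilon\idvec$ inside the constrained representation $f(v,x)=\sup\braces{\ip{v}{u}:g(u,1)\leq x}$ (which indeed follows directly from \eqref{eq:26} and \ref{item:G4}, so your flagged ``technical point'' is harmless), whereas the paper invokes the uniqueness of saddle points to conclude $g(u,1)<x\Rightarrow f(v,x)>\ip{u}{v}$.
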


\begin{proof}
  Fix $v\in (0,\infty)^M$.  As, by item~\ref{item:11} of
  Lemma~\ref{lem:2}, $\frac{\partial f}{\partial x} > 0$, the function
  $f(v,\cdot)$ is strictly increasing.

  Let $(x_i)_{i=1,2}$ be distinct elements of $\mathbf{R}$, $x_3 \set
  \frac12(x_1+x_2)$, and $u_i \set \frac{\partial f}{\partial
    v}(v,x_i)$, $i=1,2,3$.  From Lemma~\ref{lem:5} we deduce
  \begin{displaymath}
    g(u_i,1) = x_i, \quad f(v,x_i) = \ip{u_i}{v}, \quad i=1,2,3. 
  \end{displaymath}
  It follows that $(u_i)_{i=1,2,3}$ are distinct, and, hence, by the
  strict convexity of $g(\cdot,1)$,
  \begin{displaymath}
    g (\frac12(u_1+u_2),1) < \frac12 (g(u_1,1) + g(u_2,1)) = \frac12 (x_1 +
    x_2) = x_3. 
  \end{displaymath}
  From the uniqueness of saddle points in \eqref{eq:26} we deduce that
  if $g(u,1) < x$ then $f(v,x) > \ip{u}{v}$. It follows that
  \begin{displaymath}
    f(v,x_3) > \ip{\frac12(u_1+u_2)}{v} = \frac12 (f(v,x_1) + f(v,x_2)), 
  \end{displaymath}
  proving the strict concavity of $f(v,\cdot)$.

  For every $\epsilon > 0$ we can clearly find $u(\epsilon)\in
  (-\infty,0)^M$ such that $\ip{u(\epsilon)}{v} \geq
  -\epsilon$. Denoting $x(\epsilon) \set g(u(\epsilon),1)$ we deduce
  \begin{displaymath}
    \lim_{x\to \infty} f(v,x) > f(v,x(\epsilon)) \geq \ip{u(\epsilon)}{v}
    \geq -\epsilon, 
  \end{displaymath}
  proving \eqref{eq:13}.
\end{proof}

After these preparations we are ready to complete the proof of
Theorem~\ref{th:1}. From this moment, the functions $f$ and $g$ will
depend on the ``auxiliary'' variable $q\in \mathbf{R}^J$.

\begin{proof}[Proof of Theorem~\ref{th:1}.] If $f=f(v,x,q) \in
  \mathbf{F}^1$, then, by Lemmas~\ref{lem:1} and~\ref{lem:3}, the
  function
  \begin{displaymath}
    g(u,y,q) \set  \sup_{v\in (0,\infty)^M}\inf_{x\in
      \mathbf{R}}[\ip{v}{u} + xy 
    - f(v,x,q)]
  \end{displaymath}
  satisfies \ref{item:G2} and \ref{item:G4} and is differentiable with
  respect to $u$ and $y$. Moreover, the concavity of
  $f(v,\cdot,\cdot)$ implies the convexity of
  $g(\cdot,y,\cdot)$. Conversely, if $g=g(u,y,q)\in \mathbf{G}^1$,
  then, by Lemmas~\ref{lem:4}--\ref{lem:7}, the function
  \begin{displaymath}
    f(v,x,q) \set \sup_{u\in (-\infty,0)^M}\inf_{y\in (0,\infty)}
    [\ip{u}{v} + xy - g(u,y,q)] 
  \end{displaymath}
  satisfies \ref{item:F2} and \ref{item:F4} and is differentiable with
  respect to $v$ and $x$. Moreover, as $g$ is convex with respect to
  $(u,q)$, $f$ is concave with respect to $(x,q)$.

  The rest of the proof, namely, the equivalence of the
  differentiability of $f$ and $g$ with respect to $q$ and the
  relation \eqref{eq:24}, follows from the \emph{envelope theorem} for
  saddle functions, Theorem~\ref{th:11}, given in
  Appendix~\ref{sec:envelope-theorem}. Finally, we recall that for
  saddle functions the existence of derivatives implies the continuity
  of derivatives, see Theorem 35.8 and Corollary 35.7.1 in
  \cite{Rock:70}.
\end{proof}

\subsection{Conjugacy relations between $\mathbf{F}^2$ and
  $\mathbf{G}^2$}
\label{sec:conj-relat-betw-F2-G2}

For $f\in \mathbf{F}^2$ and $g\in \mathbf{G}^2$, in addition to the
matrices $A(f)$ and $B(g)$ given by \eqref{eq:14} and \eqref{eq:21},
define the following matrices of second derivatives: for $m=1,\dots,M$
and $i,j=1,\dots,J$,
\begin{align}
  \label{eq:44}
  C^{mj}(f)(v,x,q) &\set \frac{v^m}{\frac{\partial f}{\partial x}}
  \left(\frac{\partial^2 f}{\partial v^m \partial q^j} -
    \frac1{\frac{\partial^2 f}{\partial x^2}} \frac{\partial^2
      f}{\partial v^m\partial x} \frac{\partial^2 f}{\partial
      x \partial
      q^j}\right)(v,x,q), \\
  \label{eq:45}
  D^{ij}(f)(v,x,q) &\set \frac{1}{\frac{\partial f}{\partial x}}
  \left(-\frac{\partial^2 f}{\partial q^i \partial q^j} +
    \frac1{\frac{\partial^2 f}{\partial x^2}} \frac{\partial^2
      f}{\partial x\partial q^i} \frac{\partial^2 f}{\partial
      x\partial q^j}\right)(v,x,q),
\end{align}
and
\begin{align}
  \label{eq:46}
  E^{mj}(g)(u,y,q) &\set \frac{1}{\frac{\partial g}{\partial
      u^m}}\frac{\partial^2 g}{\partial u^m \partial q^j}(u,y,q) =
  \frac{1}{\frac{\partial g}{\partial u^m}}\frac{\partial^2
    g}{\partial u^m \partial q^j}(u,1,q), \\
  \label{eq:47}
  H^{ij}(g)(u,y,q) &\set \frac1y \frac{\partial^2 g}{\partial
    q^i \partial q^j} (u,y,q) = \frac{\partial^2 g}{\partial
    q^i \partial q^j}(u,1,q),
\end{align}
where in~\eqref{eq:46} and~\eqref{eq:47} we used the positive
homogeneity~\eqref{eq:20} of $g$ with respect to $y$.

We use standard notations of linear algebra: for a square matrix $A$
of full rank, $A^{-1}$ denotes its inverse, and, for a matrix $B$,
$B^T$ stands for its transpose.

\begin{Theorem}
  \label{th:2}
  A function $\map{f}{\mathbf{A}}{(-\infty,0)}$ belongs to
  $\mathbf{F}^2$ if and only if it is conjugate to a function $g\in
  \mathbf{G}^2$ in the sense that \eqref{eq:22} and \eqref{eq:23}
  hold.

  Moreover, if, for $q\in \mathbf{R}^J$, the vectors $a=(v,x,q)\in
  \mathbf{A}$ and $b=(u,y,q)\in \mathbf{B}$ are conjugate in the sense
  of the equivalent conditions of items~\ref{item:4}--\ref{item:7} of
  Theorem~\ref{th:1}, then the matrices of the second derivatives for
  $f$, $A(f)$, $C(f)$, and $D(f)$, defined in \eqref{eq:14},
  \eqref{eq:44}, and \eqref{eq:45}, and the matrices of the second
  derivatives for $g$, $B(g)$, $E(g)$, and $H(g)$, defined in
  \eqref{eq:21}, \eqref{eq:46}, and \eqref{eq:47}, are related by
  \begin{align}
    \label{eq:48}
    B(g)(b) &= (A(f)(a))^{-1}, \\
    \label{eq:49}
    E(g)(b) &=  - (A(f)(a))^{-1} C(f)(a), \\
    \label{eq:50}
    H(g)(b) &= (C(f)(a))^T (A(f)(a))^{-1} C(f)(a) + D(f)(a).
  \end{align}
\end{Theorem}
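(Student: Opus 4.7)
The plan is to leverage Theorem~\ref{th:1} and upgrade the $C^1$ conjugacy to $C^2$ by applying the inverse function theorem to the saddle-point map, then derive the Hessian relations~\eqref{eq:48}--\eqref{eq:50} by implicit differentiation of the saddle identities $u=\partial f/\partial v$, $y=\partial f/\partial x$. Fix $q\in\mathbf{R}^J$ and consider $\Phi(v,x)\set(\partial f/\partial v(v,x,q),\partial f/\partial x(v,x,q))$; by items~\ref{item:6}--\ref{item:7} of Theorem~\ref{th:1}, $\Phi$ is a $C^1$ bijection from $(0,\infty)^M\times\mathbf{R}$ onto $(-\infty,0)^M\times(0,\infty)$ with inverse $\Psi(u,y)=(\partial g/\partial u(u,y,q),g(u,1,q))$. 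The Jacobian of $\Phi$ is the $(M+1)\times(M+1)$ block matrix with entries $\partial^2 f/\partial v\partial v$, $\partial^2 f/\partial v\partial x$, $\partial^2 f/\partial x^2$; since $\partial^2 f/\partial x^2<0$, its invertibility is equivalent to that of the Schur complement $S\set \partial^2 f/\partial v\partial v-(\partial^2 f/\partial v\partial x)(\partial^2 f/\partial v\partial x)^T/(\partial^2 f/\partial x^2)$. A direct calculation shows $A(f)=(\partial f/\partial x)^{-1}\,V S V$ with $V\set\mathrm{diag}(v^1,\dots,v^M)$, so condition~\ref{item:F5} is precisely the invertibility of $S$.

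For $\mathbf{F}^2\Rightarrow\mathbf{G}^2$, the IFT then yields $\Psi\in C^1$, so $\partial g/\partial u=v(u,y,q)$ and $\partial g/\partial y=g(u,1,q)$ are $C^1$ in $(u,y,q)$. The envelope identity~\eqref{eq:24} gives $\partial g/\partial q^j=-\partial f/\partial q^j$ at the saddle, $C^1$ by composition. Hence $\nabla g$ is $C^1$ and $g\in C^2$. The reverse direction $\mathbf{G}^2\Rightarrow\mathbf{F}^2$ is handled by the analogous Jacobian of $\Psi$: a bordered block-determinant computation gives $\det J_\Psi=-\det(\partial^2 g/\partial u\partial u)\cdot (v/y)^T(\partial^2 g/\partial u\partial u)^{-1}(v/y)$, which is nonzero under~\ref{item:G5}, since~\ref{item:G2} together with~\ref{item:G5} force $\partial^2 g/\partial u\partial u$ to be positive definite; the IFT plus~\eqref{eq:24} yield $f\in C^2$, and~\eqref{eq:48} (proved below) then gives~\ref{item:F5}.

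The matrix relations are now derived by implicit differentiation. The block-inverse (Schur complement) formula identifies the $(v,u)$-block of $J_\Phi^{-1}$ with $\partial^2 g/\partial u\partial u(u,y,q)=S^{-1}$; combined with the homogeneity $\partial^2 g/\partial u\partial u(u,y,q)=y\,\partial^2 g/\partial u\partial u(u,1,q)$ and $y=\partial f/\partial x$, one obtains $B(g)=y V^{-1}S^{-1}V^{-1}$ and $A(f)=y^{-1}V S V$, so $B(g)A(f)=I$, proving~\eqref{eq:48}. For~\eqref{eq:49}, differentiate $u=\partial f/\partial v(v,x,q)$ and $y=\partial f/\partial x(v,x,q)$ with respect to $q^j$, eliminate $\partial x/\partial q^j$ via the second equation, and use $S^{-1}$ to solve for $\partial v^l/\partial q^j=\partial^2 g/\partial u^l\partial q^j$; recognizing the bracketed mixed-derivative term as proportional to the $(l,j)$-entry of $C(f)$ yields $E(g)=-B(g)C(f)=-A(f)^{-1}C(f)$. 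For~\eqref{eq:50}, differentiate the envelope relation $\partial g/\partial q^j=-\partial f/\partial q^j$ once more in $q^i$, substitute the already-computed $\partial v/\partial q^i$ and $\partial x/\partial q^i$, and collect the resulting terms into $D(f)$ plus the quadratic form $C(f)^T A(f)^{-1}C(f)$.

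The principal conceptual obstacle is the positive-homogeneity degeneracy: $\partial^2 f/\partial v\partial v$ is \emph{always} singular because Euler's identity for $f(\cdot,x,q)$ yields $(\partial^2 f/\partial v\partial v)\,v=0$, so the non-degeneracy required for the IFT must be located in the Schur-complemented and $V$-rescaled matrix~$A(f)$, not in the raw $v$-Hessian. The symmetric wrinkle on the $g$-side is that $\partial^2 g/\partial y\partial y=0$ by homogeneity, which is why $J_\Psi$ is bordered rather than block-diagonal. Once these two degeneracies are factored out through the Schur-complement reductions that define $A(f)$ and $B(g)$, and the pervasive $y$-scalings coming from $g(u,y,q)=y\,g(u,1,q)$ are tracked with care, the derivation of~\eqref{eq:48}--\eqref{eq:50} becomes routine chain-rule bookkeeping.
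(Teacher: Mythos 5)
Your overall route is the same as the paper's: identify the gradient maps $\Phi=(\frac{\partial f}{\partial v},\frac{\partial f}{\partial x})$ and $\Psi=(\frac{\partial g}{\partial u},\frac{\partial g}{\partial y})$ as mutually inverse (items~\ref{item:6}--\ref{item:7} of Theorem~\ref{th:1}), transfer $C^2$-regularity by the Implicit Function Theorem, absorb the homogeneity degeneracies through the Schur complement in $x$ (this is exactly the paper's Lemmas~\ref{lem:8} and~\ref{lem:9}, with $A(f)=\frac1y V S V$ and $B(g)=y V^{-1}\widetilde B V^{-1}$), and obtain \eqref{eq:49}--\eqref{eq:50} by implicitly differentiating the first-order conditions in $q$ and differentiating the envelope identity \eqref{eq:24}; the paper's auxiliary functions $h$ and the constants $a$, $b$ are just a packaged form of the same elimination you perform.

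There is, however, one genuine gap in your $\mathbf{G}^2\Rightarrow\mathbf{F}^2$ direction. Condition \ref{item:F5} requires $\frac{\partial^2 f}{\partial x^2}<0$ in addition to the full rank of $A(f)$, and the matrix $A(f)$ in \eqref{eq:14} -- as well as the Schur complement $S$ on which your derivation of \eqref{eq:48} rests -- is only \emph{defined} once $\frac{\partial^2 f}{\partial x^2}\neq 0$ is known. So the step ``the IFT plus \eqref{eq:24} yield $f\in C^2$, and \eqref{eq:48} then gives \ref{item:F5}'' is circular as written: you cannot invoke \eqref{eq:48} before establishing the nondegeneracy in $x$. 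The paper proves this point separately (end of the proof of Lemma~\ref{lem:9}): if $\frac{\partial^2 f}{\partial x^2}=0$, the identity $KL=I$ for the two Hessians forces the mixed derivatives $\frac{\partial^2 f}{\partial v\partial x}$ to vanish, contradicting the invertibility of $K$, and concavity in $x$ then gives strict negativity. In your bordered-matrix setup the fix is equally short: since $K=L^{-1}$ with $L=\bigl(\begin{smallmatrix}\widetilde B & w\\ w^T & 0\end{smallmatrix}\bigr)$, $w=v/y$, the $(y,y)$-entry of $L^{-1}$ is $-1/(w^T\widetilde B^{-1}w)<0$ by positive definiteness of $\widetilde B$, which is precisely $\frac{\partial^2 f}{\partial x^2}$; with this line added, your argument closes and coincides in substance with the paper's.
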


\begin{Remark}
  \label{rem:2}
  Our choice of the specific form for the matrices $A(f)(a)$,
  $C(f)(a)$, and $D(f)(a)$ and $B(g)(b)$, $E(g)(b)$, and $H(g)(b)$ was
  partially motivated by the fact that they are invariant under the
  transformations $(v,x,q)\rightarrow (zv,x,q)$ and
  $(u,y,q)\rightarrow (u,zy,q)$, $z>0$, which are natural in light of
  the positive homogeneity conditions \eqref{eq:11} and \eqref{eq:20}.
\end{Remark}

\subsubsection{Proof of Theorem~\ref{th:2}}
\label{sec:proof-theor-refth:2}

As in the proof of Theorem~\ref{th:1} we begin with several lemmas,
where we omit the dependence on $q$.

\begin{Lemma}
  \label{lem:8}
  Let $f=f(v,x)$ and $g=g(u,y)$ be as in Lemma~\ref{lem:1}. Then the
  following assertions are equivalent:
  \begin{enumerate}
  \item $f$ is twice continuously differentiable and for all $v\in
    (0,\infty)^M$ and $x\in \mathbf{R}$ its Hessian matrix $K(v,x) =
    (K^{kl}(v,x))_{k,l=1,\dots,M+1}$ has full rank.
  \item $g$ is twice continuously differentiable and for all $u\in
    (-\infty,0)^M$ and $y\in (0,\infty)$ its Hessian matrix $L(u,y) =
    (L^{kl}(u,y))_{k,l=1,\dots,M+1}$ has full rank.
  \end{enumerate}
  Moreover, if $(v,x)\in (0,\infty)^M\times\mathbf{R}$ and $(u,y)\in
  (-\infty,0)^M\times (0,\infty)$ are conjugate saddle points in the
  sense of Lemma~\ref{lem:2}, then $L(u,y)$ is inverse to $K(v,x)$.
\end{Lemma}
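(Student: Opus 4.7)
The plan is to exploit the observation, already contained in Lemma~\ref{lem:2}, that the gradient maps
\begin{displaymath}
  \nabla f \colon (0,\infty)^M\times\mathbf{R} \longrightarrow (-\infty,0)^M\times(0,\infty),
  \qquad
  \nabla g \colon (-\infty,0)^M\times(0,\infty)\longrightarrow (0,\infty)^M\times\mathbf{R}
\end{displaymath}
are mutually inverse bijections which send conjugate saddle points to each other: items~\ref{item:10} and~\ref{item:11} of Lemma~\ref{lem:2} give precisely $(u,y)=\nabla f(v,x)$ and $(v,x)=\nabla g(u,y)$. The equivalence of the two differentiability/full-rank statements, as well as the matrix identity $L(u,y)=K(v,x)^{-1}$, will then follow from two applications of the inverse function theorem.

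For (i)$\Rightarrow$(ii), assume $f$ is twice continuously differentiable on $(0,\infty)^M\times\mathbf{R}$ and that $K(v,x)$ is nonsingular at every point. Then $\nabla f$ is a $C^1$ map between open subsets of $\mathbf{R}^{M+1}$ whose Jacobian $K$ is everywhere invertible, and by Lemma~\ref{lem:2} its global set-theoretic inverse is $\nabla g$. Applying the inverse function theorem locally at each point, and gluing with the globally available bijective inverse, yields that $\nabla g$ is $C^1$ on $(-\infty,0)^M\times(0,\infty)$; equivalently, $g$ is twice continuously differentiable. Differentiating the identity $\nabla g\circ \nabla f = \mathrm{id}$ gives, at any pair of conjugate points,
\begin{displaymath}
  L(u,y)\,K(v,x) = I_{M+1},
\end{displaymath}
so $L(u,y) = K(v,x)^{-1}$ has full rank as well. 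The reverse implication (ii)$\Rightarrow$(i) is proved by an entirely symmetric argument with the roles of $f$ and $g$, and of $K$ and $L$, interchanged, relying again on Lemma~\ref{lem:2} for the reverse bijectivity.

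The only delicate point is making sure that $\nabla f$ and $\nabla g$ are genuinely mutually inverse bijections with the correct open target sets, because this is what lets us upgrade the purely local inverse function theorem to a global $C^1$ inverse; but this is exactly the content of Lemma~\ref{lem:2}, which was already established. Everything else is bookkeeping with the $(M+1)\times(M+1)$ blocks of the Hessians, and I do not foresee an additional obstacle.
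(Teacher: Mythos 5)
Your argument is correct and is essentially the paper's own proof: the paper likewise derives the lemma directly from the gradient characterization of conjugate saddle points in Lemma~\ref{lem:2} together with the Implicit Function Theorem, of which your use of the inverse function theorem applied to the mutually inverse $C^1$ maps $\nabla f$ and $\nabla g$ is just the explicit form. Differentiating $\nabla g\circ\nabla f=\mathrm{id}$ to get $L(u,y)K(v,x)=I$ is exactly the intended bookkeeping, so there is nothing to add.
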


\begin{proof} The asserted equivalence is a well-known fact in the
  theory of saddle functions and is a direct consequence of the
  characterization of the gradients of the conjugate functions $f$ and
  $g$ given in Lemma~\ref{lem:2} and the Implicit Function Theorem.
\end{proof}

In the following statement we shall make the relationship between the
Hessian matrices of $f$ and $g$ more explicit by taking into account
the positive homogeneity property~\eqref{eq:20} of $g$.

\begin{Lemma}
  \label{lem:9}
  Let $f$ and $g$ be as in Lemma~\ref{lem:1}. Then the following
  assertions are equivalent:
  \begin{enumerate}
  \item \label{item:12} $f$ is twice continuously differentiable and
    for all $v\in (0,\infty)^M$ and $x\in \mathbf{R}$
    \begin{equation}
      \label{eq:51}
      \frac{\partial^2 f}{\partial x^2}(v,x) < 0 
    \end{equation}
    and the Hessian matrix $K(v,x)$ of $f$ has full rank.
  \item \label{item:13} $f$ is twice continuously differentiable and
    for all $v\in (0,\infty)^M$ and $x\in \mathbf{R}$ the inequality
    \eqref{eq:51} holds and the $M\times M$ matrix $\widetilde A(v,x)$
    with entries
    \begin{equation}
      \label{eq:52}
      \widetilde{A}^{kl}(v,x) \set 
      \left(\frac{\partial^2 f}{\partial v^k\partial
          v^l} -  \frac1{\frac{\partial^2 f}{\partial x^2}} \frac{\partial^2
          f}{\partial v^k\partial 
          x}\frac{\partial^2 f}{\partial v^l\partial
          x}\right)(v,x),
    \end{equation} 
    has full rank.
  \item \label{item:14} $g$ is twice continuously differentiable and
    for all $u\in (-\infty,0)^M$ and $y\in (0,\infty)$ the $M\times M$
    matrix $\widetilde B(u,y)$ with entries
    \begin{equation}
      \label{eq:53}
      \widetilde{B}^{kl}(u,y) \set \frac{\partial^2 g}{\partial
        u^k\partial u^l}(u,y),
    \end{equation}
    has full rank.
  \end{enumerate}
  Moreover, if $(v,x)\in (0,\infty)^M\times\mathbf{R}$ and $(u,y)\in
  (-\infty,0)^M\times (0,\infty)$ are conjugate saddle points in the
  sense of Lemma~\ref{lem:2}, then $\widetilde{B}(u,y)$ is the inverse
  of $\widetilde{A}(v,x)$.
\end{Lemma}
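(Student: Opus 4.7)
My plan is to reduce both equivalences to block-matrix manipulations of the Hessian $K$ of $f$ and of its inverse $L$ at conjugate saddle points, supplied by Lemma~\ref{lem:8}, using the positive homogeneity of $g$ in $y$ to pin down the block structure of $L$.

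For (i)\,$\Leftrightarrow$\,(ii), the matrix $\widetilde A$ in~\eqref{eq:52} is, by inspection, the Schur complement of the $(x,x)$-entry $f_{xx}$ inside $K$. Since $f_{xx}<0$ is assumed in both conditions, the identity $\det K=f_{xx}\det\widetilde A$ immediately transfers full rank between $K$ and $\widetilde A$.

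For (ii)\,$\Leftrightarrow$\,(iii), the identity $g(u,y)=yg(u,1)$ forces $L$ into the block form
\begin{displaymath}
  L(u,y)=\begin{pmatrix} y\widetilde B(u,1) & \nabla_u g(u,1)\\ \nabla_u g(u,1)^T & 0\end{pmatrix},
\end{displaymath}
so that $\widetilde B(u,y)=y\widetilde B(u,1)$ is precisely the upper-left $M\times M$ block of $L$. Starting from (ii), Lemma~\ref{lem:8} yields $L=K^{-1}$, and the block-inverse formula identifies the upper-left $M\times M$ block of $K^{-1}$ with $\widetilde A^{-1}$, giving both the inverse identity $\widetilde B(u,y)=\widetilde A(v,x)^{-1}$ and the full-rank property of $\widetilde B$.

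The step I expect to be most delicate is the converse, starting from (iii): the zero in the lower-right corner of $L$ prevents a direct transfer of full rank from $\widetilde B$ to $L$. Here I would appeal to~\ref{item:G2}: convexity of $g(\cdot,1)$ makes $\widetilde B(u,1)$ positive semi-definite, so under the full-rank hypothesis it is positive definite, while strict monotonicity forces $\nabla_u g(u,1)\in(0,\infty)^M$. A Schur complement around the upper-left block of $L$ then produces the strictly negative quantity $-y^{-1}\nabla_u g(u,1)^T\widetilde B(u,1)^{-1}\nabla_u g(u,1)$, so $L$ is full rank; Lemma~\ref{lem:8} then supplies twice differentiability of $f$ and full rank of $K$. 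The analogous Schur computation for the $(x,x)$-entry of $K=L^{-1}$ yields $f_{xx}<0$ for free, completing~(ii).
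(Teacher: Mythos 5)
Your proposal is correct and follows essentially the same route as the paper: both rest on Lemma~\ref{lem:8} together with the block structure of the Hessian $L$ forced by the homogeneity $g(u,y)=yg(u,1)$, and both handle the delicate direction from \ref{item:14} by observing that positive semi-definiteness of $\widetilde B$ plus the nonzero border $\frac1y v=\nabla_u g(u,1)$ makes $L$ nonsingular via a Schur complement. The only cosmetic difference is that you extract $\widetilde B=\widetilde A^{-1}$ and $\frac{\partial^2 f}{\partial x^2}<0$ from the block-inversion formulas, whereas the paper reads them off the block equations of $LK=I$ (and argues $z<0$ by contradiction), which is the same computation in a slightly different guise.
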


\begin{proof}
  \ref{item:12} $\Longleftrightarrow$ \ref{item:13}.
  From~\eqref{eq:51} and the construction of the matrix
  $\widetilde{A}(v,x)$ in~\eqref{eq:52} we deduce that for $a\in
  \mathbf{R}^M$ and $b\in \mathbf{R}$ the equation
  \begin{displaymath}
    K(v,x)
    \begin{pmatrix}
      a \\b
    \end{pmatrix}
    = 0
  \end{displaymath}
  is equivalent to
  \begin{displaymath}
    b = - \frac1{\frac{\partial^2 f}{\partial x^2}(v,x)} \sum_{m=1}^M
    \frac{\partial^2 f}{\partial v^m \partial x}(v,x) a^m 
  \end{displaymath}
  and
  \begin{displaymath}
    \widetilde{A}(v,x) a = 0.
  \end{displaymath}
  It follows that under \eqref{eq:51} the matrices $K(v,x)$ and
  $\widetilde{A}(v,x)$ can have full rank only simultaneously.

  \ref{item:12} $\Longleftrightarrow$ \ref{item:14}. We fix arguments
  $(u,y)$ and $(v,x)$ satisfying the conjugacy relations of
  Lemma~\ref{lem:2}. From the definition of $\widetilde{B} =
  \widetilde{B}(u,y)$ in \eqref{eq:53} we deduce that the Hessian
  matrix of $g$ at $(u,y)$ has the representation
  \begin{displaymath}
    L(u,y) =
    \begin{pmatrix}
      \widetilde{B}(u,y)  & \frac{\partial g}{\partial u}(u,1) \\
      \transpose{(\frac{\partial g}{\partial u}(u,1))} & 0
    \end{pmatrix}
    =
    \begin{pmatrix}
      \widetilde{B}  & \frac1y v \\
      \frac1y \transpose{v} & 0
    \end{pmatrix}.
  \end{displaymath}

  To simplify notations we shall also represent the Hessian matrix of
  $f$ at $(v,x)$ as
  \begin{displaymath}
    K(v,x) =     \begin{pmatrix}
      M & p \\
      \transpose{p} & z
    \end{pmatrix},
  \end{displaymath}
  where $M$ is the Hessian matrix of $f(\cdot,x)$ at $v$, $p\set
  (\frac{\partial^2 f}{\partial v^m\partial x}(v,x))_{m=1,\dots,M}$ is
  the vector-column of mixed derivatives, and $z \set \frac{\partial^2
    f}{\partial x^2}(v,x)$. Observe that the matrix $\widetilde{A}
  =\widetilde{A}(v,x)$ defined in \eqref{eq:52} is given by
  \begin{equation}
    \label{eq:54}
    \widetilde{A} = M - \frac1z p \transpose{p}. 
  \end{equation}

  As $\widetilde{B}$ is a symmetric positive semi-definite matrix and
  $\frac1y v\not=0$, the full rank of $\widetilde{B}$ implies the full
  rank of $L(u,y)$.  Hence, by Lemma~\ref{lem:8}, under the conditions
  of either item~\ref{item:12} or item~\ref{item:14}, the Hessian
  matrices $K(v,x)$ and $L(u,y)$ have full rank and are inverse to
  each other. Denoting by $I$ the $M\times M$ identity matrix we
  deduce
  \begin{equation}
    \label{eq:55}
    \begin{split}
      \widetilde{B} M + \frac1y v \transpose{p} &= I, \\
      \widetilde{B} p + \frac1y v z & = 0.
    \end{split}
  \end{equation}

  If $z<0$, that is, \eqref{eq:51} holds, then, by \eqref{eq:54} and
  \eqref{eq:55}, $\widetilde{B}\widetilde{A} =I$. Hence,
  $\widetilde{B}$ is the inverse of $\widetilde{A}$ and, in
  particular, it has full rank, proving
  \ref{item:12}$\implies$\ref{item:14}.

  Conversely, if $\widetilde{B}$ has full rank, then $z =
  \frac{\partial^2 f}{\partial x^2}(v,x) \not=0$. Indeed, otherwise
  from the second equality in \eqref{eq:55} we obtain $p=0$
  contradicting the full rank of $K(v,x)$. Since, $f(v,\cdot)$ is
  concave, we deduce $z<0$, proving
  \ref{item:14}$\implies$\ref{item:12}.
\end{proof}

\begin{proof}[Proof of Theorem~\ref{th:2}.] If $(v,x,q)\in \mathbf{A}$
  and $(u,y,q)\in \mathbf{B}$ satisfy the equivalent relations of
  items~\ref{item:4}--\ref{item:7} of Theorem~\ref{th:1}, then the
  matrices $A$ and $B$ defined in \eqref{eq:14} and \eqref{eq:21} and
  the matrices $\widetilde A$ and $\widetilde B$ defined in
  \eqref{eq:52} and \eqref{eq:53} are related by
  \begin{align*}
    A^{kl} = \frac{v^k v^l}{y} \widetilde{A}^{kl}, \quad B^{kl} =
    \frac{y}{v^k v^l} \widetilde{B}^{kl}, \quad k,l=1,\dots,M.
  \end{align*}
  Lemma~\ref{lem:9} then implies \eqref{eq:48} as well as the other
  assertions of the theorem except those involving the second
  derivatives with respect to $q$.

  Assume first that $f\in \mathbf{F}^2$. We have to show that $g$ is
  two-times continuously differentiable and \eqref{eq:49} and
  \eqref{eq:50} hold. For $a\in \mathbf{R}^M$ define the function
  $\map{h}{(0,\infty)^M\times (-\infty,0)^M \times
    \mathbf{R}^J}{\mathbf{R}^M}$ by
  \begin{displaymath}
    h(v,u,q) \set (\frac{\partial f}{\partial v}(v,g(u,1,q),q)-u) + a
    (\frac{\partial f}{\partial x}(v,g(u,1,q),q) - 1).
  \end{displaymath}
  From Theorem~\ref{th:1} we deduce, for every $(u,y,q)\in
  \mathbf{B}$,
  \begin{displaymath}
    h(\frac{\partial g}{\partial u}(u,y,q),u,q) = 0.
  \end{displaymath}

  Fix $(u_0,y_0, q_0)\in \mathbf{B}$, denote $v_0 \set \frac{\partial
    g}{\partial u}(u_0,y_0,q_0)$, $x_0 \set g(u_0,1,q_0)$, and choose
  \begin{displaymath}
    a^m \set - \left(\frac{\partial^2 f}{\partial v^m \partial 
        x}/\frac{\partial^2 f}{\partial x^2}\right)(v_0,x_0,q_0), \quad
    m=1,\dots, M.  
  \end{displaymath}
  Direct computations show that, for $m,l=1,\dots,M$ and
  $j=1,\dots,J$,
  \begin{align*}
    \frac{\partial h^m}{\partial v^l}(v_0,u_0,q_0) &=
    \widetilde{A}^{ml}(v_0,x_0,q_0) = \frac{y_0}{v_0^m
      v_0^l}{A}^{ml}(v_0,x_0,q_0), \\
    \frac{\partial h^m}{\partial q^j}(v_0,u_0,q_0) & =
    \frac{y_0}{v^m_0}C^{mj}(v_0,x_0,q_0).
  \end{align*}
  By the Implicit Function Theorem the function $\frac{\partial
    g}{\partial u}=\frac{\partial g}{\partial u}(u,y,q)$ is
  continuously differentiable with respect to $q$ in a neighborhood of
  $(u_0,y_0,q_0)$ and the relation \eqref{eq:49} holds at this point.

  To prove the existence of the continuous second derivatives of $g$
  with respect to $q$ and the remaining identity \eqref{eq:50} we
  denote
  \begin{displaymath}
    b^j \set (\frac{\partial^2 f}{\partial x\partial
      q^j}/\frac{\partial^2 
      f}{\partial x^2})(v_0,x_0,q_0), \quad j=1,\dots, J. 
  \end{displaymath}
  From Theorem~\ref{th:1} we deduce, for every $(u,y,q)\in
  \mathbf{B}$,
  \begin{align*}
    \frac{\partial g}{\partial q}(u,y,q) + by = &- \frac{\partial
      f}{\partial q}(\frac{\partial g}{\partial
      u}(u,y,q),g(u,1,q),q) \\
    & + b \frac{\partial f}{\partial x}(\frac{\partial g}{\partial
      u}(u,y,q),g(u,1,q),q).
  \end{align*}
  This implies the two-times continuous differentiability of $g$ with
  respect to $q$. Moreover, direct computations show that the
  differentiation of the above identity with respect to $q$ at
  $(u_0,y_0,q_0)$ yields \eqref{eq:50} at this point.

  Assume now that $g\in \mathbf{G}^2$. To complete the proof we have
  to show that $f$ has continuous second derivatives involving $q$. By
  Theorem~\ref{th:1}, for every $(v,x,q)\in \mathbf{A}$ we have the
  equalities
  \begin{align*}
    \frac{\partial g}{\partial u}(\frac{\partial f}{\partial
      v}(v,x,q),\frac{\partial f}{\partial x}(v,x,q),q) - v
    &= 0, \\
    \frac{\partial g}{\partial y}(\frac{\partial f}{\partial
      v}(v,x,q),\frac{\partial f}{\partial x}(v,x,q),q) - x =
    g(\frac{\partial f}{\partial
      v}(v,x,q),1,q) - x & = 0, \\
    \frac{\partial f}{\partial q}(v,x,q) + \frac{\partial g}{\partial
      q}(\frac{\partial f}{\partial v}(v,x,q),\frac{\partial
      f}{\partial x}(v,x,q),q) &= 0.
  \end{align*}
  By Lemmas~\ref{lem:8} and~\ref{lem:9}, the full rank of the matrix
  $B(u,y,q)$ implies the full rank of the Hessian matrix of
  $g(\cdot,\cdot,q)$ at $(u,y)$. An application of the Implicit
  Function Theorem to the first two equalities above then leads to the
  continuous differentiability of $\frac{\partial f}{\partial v}$ and
  $\frac{\partial f}{\partial x}$ with respect to $q$. By the third
  identity, this implies the existence and the continuity of
  $\frac{\partial^2 f}{\partial q^i\partial q^j}$.
\end{proof}

\subsection{Stability under convergence}
\label{sec:stab-under-conv}

Let $m$ be a non-negative integer and $U$ be an open subset of
$\mathbf{R}^d$. Denote by $\mathbf{C}^m =
\mathbf{C}^m(U,\mathbf{R}^n)$ the Fr\'echet space of $m$-times
continuously differentiable maps $\map{f}{U}{\mathbf{R}^n}$ with the
topology generated by the semi-norms
\begin{displaymath}
  \norm{f}_{m,C} \set \sum_{0\leq \abs{k} \leq m} \sup_{x\in C}
  \abs{D^{{k}} f(x)}, 
\end{displaymath}
where $C$ is a compact subset of $U$, ${k} = (k_1,\dots,k_d)$ is a
multi-index of non-negative integers, $\abs{k} \set \sum_{i=1}^d k_i$,
and
\begin{equation}
  \label{eq:56}
  D^k \set 
  \frac{\partial^{|k|}}{\partial x_1^{k_1}\dots \partial x_d^{k_d}}. 
\end{equation}
In particular, for $m=0$, $D^{0}$ is the identity operator and
$\norm{f}_{0,C} \set \sup_{x\in C} |f(x)|$.

The following results show that the conjugacy relations between the
spaces $\mathbf{F}^i$ and $\mathbf{G}^i$ established in
Theorems~\ref{th:1} and~\ref{th:2} remain stable under
$\mathbf{C}^i$-convergence, $i=1,2$. These results will be used in the
proofs of our main Theorems~\ref{th:9} and \ref{th:10} to establish
that the stochastic fields $F$ and $G$ have versions which are RCLL
(right-continuous with left limits) in the time-variable $t$.

\begin{Theorem}
  \label{th:3}
  Let $(f_n)_{n\geq 1}$ and $f$ belong to $\mathbf{F}^1$ and
  $(g_n)_{n\geq 1}$ and $g$ be their conjugate counterparts from
  $\mathbf{G}^1$. Then $(f_n)_{n\geq 1}$ converges to $f$ in
  $\mathbf{C}^1(\mathbf{A})$ if and only if $(g_n)_{n\geq 1}$
  converges to $g$ in $\mathbf{C}^1(\mathbf{B})$.
\end{Theorem}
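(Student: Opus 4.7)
By the symmetric role of $f$ and $g$ in Theorem~\ref{th:1}, it suffices to prove that $f_n\to f$ in $\mathbf{C}^1(\mathbf{A})$ implies $g_n\to g$ in $\mathbf{C}^1(\mathbf{B})$. The idea is to reduce the convergence of $g_n$ and its derivatives to the convergence of the associated saddle-point maps. For $h\in\mathbf{F}^1$, define the gradient map
\begin{displaymath}
  \Xi_h\colon \mathbf{A}\to\mathbf{B},\quad (v,x,q)\mapsto\left(\frac{\partial h}{\partial v}(v,x,q),\;\frac{\partial h}{\partial x}(v,x,q),\;q\right),
\end{displaymath}
and let $\Phi_h\colon \mathbf{B}\to\mathbf{A}$ be its inverse, which exists by the uniqueness of saddle points in Theorem~\ref{th:1}. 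Writing $\Phi_h(b)=(v(b),x(b),q)$ for $b=(u,y,q)\in\mathbf{B}$, items~\ref{item:6}--\ref{item:7} and the relation~\eqref{eq:24} of Theorem~\ref{th:1} give
\begin{displaymath}
  g(b)=y\,x(b),\quad \frac{\partial g}{\partial u}(b)=v(b),\quad \frac{\partial g}{\partial y}(b)=x(b),\quad \frac{\partial g}{\partial q}(b)=-\frac{\partial h}{\partial q}(\Phi_h(b)),
\end{displaymath}
together with the analogous formulas for $(f_n,g_n)$. Thus it suffices to show that $\Phi_{f_n}\to \Phi_f$ uniformly on every compact $K\subset\mathbf{B}$: once this is established, on a common compact set $V'\subset\mathbf{A}$ eventually containing $\Phi_{f_n}(K)$ the uniform $\mathbf{C}^1$-convergence of $f_n$ supplies the four required uniform convergences of $g_n,\partial g_n/\partial u,\partial g_n/\partial y,\partial g_n/\partial q$ on $K$.

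The sets $\mathbf{A}$ and $\mathbf{B}$ are open subsets of $\mathbf{R}^{M+1+J}$ of equal dimension, and $\Xi_h$ is a continuous injection by Theorem~\ref{th:1}. Hence Brouwer's invariance-of-domain theorem makes $\Xi_h$ an open map and therefore a homeomorphism onto $\mathbf{B}$. The hypothesis $f_n\to f$ in $\mathbf{C}^1(\mathbf{A})$ immediately gives $\Xi_{f_n}\to\Xi_f$ uniformly on every compact subset of $\mathbf{A}$.

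The central technical step is to upgrade this $\mathbf{C}^0$-convergence of $\Xi_{f_n}$ to a uniform convergence of the inverses $\Phi_{f_n}$. Fix a compact $K\subset\mathbf{B}$, set $V:=\Phi_f(K)$, which is compact by continuity of $\Phi_f$, and choose a compact $V'\subset\mathbf{A}$ with $V\subset\interior{V'}$. By injectivity of $\Xi_f$, the compact $\Xi_f(\boundary{V'})$ is disjoint from $K\subset\Xi_f(V)$, so $\eta := \operatorname{dist}(K,\Xi_f(\boundary{V'}))>0$. Pick $N$ with $\sup_{V'}|\Xi_{f_n}-\Xi_f|<\eta/2$ for $n\geq N$; the linear homotopy $(1-t)\Xi_f+t\,\Xi_{f_n}$ then avoids $K$ on $\boundary{V'}$ for all $t\in[0,1]$, so homotopy invariance of the Brouwer degree yields $\deg(\Xi_{f_n},\interior{V'},b) = \deg(\Xi_f,\interior{V'},b)\in\{\pm 1\}$ for every $b\in K$. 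In particular $\Phi_{f_n}(K)\subset \interior{V'}$ for $n\geq N$. The identity $\Xi_f(\Phi_{f_n}(b))-b = \Xi_f(\Phi_{f_n}(b))-\Xi_{f_n}(\Phi_{f_n}(b))$ forces $\Xi_f\circ\Phi_{f_n}\to\operatorname{id}_K$ uniformly on $K$, and applying $\Phi_f=\Xi_f^{-1}$, which is uniformly continuous on the compact $\Xi_f(V')$, concludes that $\Phi_{f_n}\to\Phi_f$ uniformly on $K$.

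The principal obstacle, neatly circumvented by this topological approach, is to prevent $\Phi_{f_n}(K)$ from escaping to $\boundary{\mathbf{A}}$ as $n\to\infty$: a direct attack via the qualitative boundary conditions~\itemFref{item:F2} and~\itemFref{item:F4} would be delicate, in particular because \itemFref{item:F4} gives no control of $f(v,x,q)$ as $x\to -\infty$. Invariance of domain combined with homotopy invariance of the Brouwer degree converts the bare continuity of the inverse $\Phi_f$ into the topological rigidity that survives arbitrary small $\mathbf{C}^0$-perturbations of the gradient $\nabla f$, which is precisely the information supplied by the $\mathbf{C}^1$-convergence of $f_n$.
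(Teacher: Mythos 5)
Your proof is correct, but it takes a genuinely different route from the paper's. The paper reduces the claim to pointwise convergence (for convex and saddle functions pointwise convergence on an open set is equivalent to $\mathbf{C}^1$-convergence) and then verifies pointwise convergence of $g_n$ by an elementary sandwich: bracket $u_1<u<u_2$ with $|g(u_2,1,q)-g(u_1,1,q)|<\epsilon$, transport $u_1,u_2$ through the conjugacy to points $u_{i,n}$ where $g_n$ is known to take the values $g(u_i,1,q)$, and use the monotonicity of elements of $\mathbf{G}^1$ in $u$; the converse is argued analogously with the monotonicity of $\mathbf{F}^1$ in $(v,x)$, and both directions are done separately. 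You instead work with the gradient maps $\Xi_{f_n}$ and their inverses $\Phi_{f_n}$, use invariance of domain together with homotopy invariance of the Brouwer degree to show that $\Phi_{f_n}(K)$ stays in a fixed compact subset of $\mathbf{A}$ and converges uniformly to $\Phi_f$, and then read off $g_n$, $\partial g_n/\partial u$, $\partial g_n/\partial y$, $\partial g_n/\partial q$ from items~\ref{item:6}--\ref{item:7} and \eqref{eq:24} of Theorem~\ref{th:1}. Your method delivers the convergence of the derivatives directly, without invoking the Rockafellar-type equivalence of pointwise and $\mathbf{C}^1$-convergence, and it does not depend on the specific monotonicity structure of $\mathbf{F}^1$ and $\mathbf{G}^1$; the price is heavier machinery (degree theory) where the paper needs only convexity and monotonicity. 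Three points you should make explicit: the reduction of the converse ``by symmetry'' rests on the observation that Theorem~\ref{th:1} provides the identical structure with the roles of $(\mathbf{A},f)$ and $(\mathbf{B},g)$ exchanged (the gradient map of $g$ is exactly $\Phi_f$, and $f_n(a)=\ip{\frac{\partial f_n}{\partial v}(a)}{v}$, $\frac{\partial f_n}{\partial q}(a)=-\frac{\partial g_n}{\partial q}(\Xi_{f_n}(a))$ play the roles of your four formulas); the compact neighborhood $V'$ should be chosen as the closure of a bounded open set, so that $\partial(\interior{V'})\subset\partial V'$ and the degree on $\interior{V'}$ is well defined; and the fact that $\deg(\Xi_f,\interior{V'},b)\neq 0$ for $b\in K$ (it is $\pm1$ for a continuous injection) deserves a reference---alternatively you can drop invariance of domain altogether, since $\Phi_f$ is continuous already by item~\ref{item:6} of Theorem~\ref{th:1}, $g\in\mathbf{G}^1$ being continuously differentiable. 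These are presentational gaps only; the argument itself is sound.
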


\begin{Theorem}
  \label{th:4}
  Let $(f_n)_{n\geq 1}$ and $f$ belong to $\mathbf{F}^2$ and
  $(g_n)_{n\geq 1}$ and $g$ be their conjugate counterparts from
  $\mathbf{G}^2$. Then $(f_n)_{n\geq 1}$ converges to $f$ in
  $\mathbf{C}^2(\mathbf{A})$ if and only if $(g_n)_{n\geq 1}$
  converges to $g$ in $\mathbf{C}^2(\mathbf{B})$.
\end{Theorem}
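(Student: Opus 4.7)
The plan is to reduce the $\mathbf{C}^2$-equivalence for the pair $(f,g)$ to a $\mathbf{C}^1$-equivalence for the associated conjugacy maps. Following items~\ref{item:6}--\ref{item:7} of Theorem~\ref{th:1}, I would work with
\begin{displaymath}
  \Phi_n(v,x,q) \set \bigl(\tfrac{\partial f_n}{\partial v}(v,x,q),\tfrac{\partial f_n}{\partial x}(v,x,q),q\bigr), \qquad \Psi_n(u,y,q) \set \bigl(\tfrac{\partial g_n}{\partial u}(u,y,q),\tfrac{\partial g_n}{\partial y}(u,y,q),q\bigr),
\end{displaymath}
and their limit counterparts $\Phi,\Psi$; by Theorem~\ref{th:1} these are mutually inverse homeomorphisms between $\mathbf{A}$ and $\mathbf{B}$. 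Since Theorems~\ref{th:1} and~\ref{th:2} are symmetric in $f$ and $g$, it suffices to prove the implication $f_n\to f$ in $\mathbf{C}^2(\mathbf{A}) \;\Longrightarrow\; g_n\to g$ in $\mathbf{C}^2(\mathbf{B})$; the reverse follows by swapping roles.

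First I would invoke Theorem~\ref{th:3} to get the $\mathbf{C}^1$-part of the conclusion for free: this already supplies $g_n\to g$ locally uniformly and $\Psi_n\to\Psi$ locally uniformly on $\mathbf{B}$. What remains is to upgrade this to the locally uniform convergence of the Jacobians $D\Psi_n\to D\Psi$, which is exactly the locally uniform convergence of the second derivatives of $g_n$. Note that, modulo the $\mathbf{C}^0$-piece already handled by Theorem~\ref{th:3}, the hypothesis $f_n\to f$ in $\mathbf{C}^2(\mathbf{A})$ translates precisely into $\Phi_n\to\Phi$ in $\mathbf{C}^1(\mathbf{A})$.

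The key tool is the inverse-mapping identity $D\Psi_n(b)=[D\Phi_n(\Psi_n(b))]^{-1}$. Condition~\ref{item:F5} together with Lemma~\ref{lem:9} guarantees that the Hessian of $f(\cdot,\cdot,q)$, and hence the restriction of $D\Phi$ to each $q$-slice, is invertible at every point of $\mathbf{A}$, so $\Phi$ and each $\Phi_n$ are genuine $\mathbf{C}^1$-diffeomorphisms. Fix a compact $K\subset\mathbf{B}$; its image $\Psi(K)$ is a compact subset of $\mathbf{A}$, so by the locally uniform convergence $\Psi_n\to\Psi$ the sets $\Psi_n(K)$ lie inside a fixed compact neighborhood $K'\Subset\mathbf{A}$ for all large $n$. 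Uniform convergence $D\Phi_n\to D\Phi$ on $K'$, combined with uniform invertibility of $D\Phi$ on $K'$, yields uniform convergence of the inverse matrices $(D\Phi_n)^{-1}\to (D\Phi)^{-1}$ on $K'$; composing with $\Psi_n\to\Psi$ on $K$ and using uniform continuity of $(D\Phi)^{-1}$ on $K'$ then delivers $D\Psi_n\to D\Psi$ uniformly on $K$, which is the desired $\mathbf{C}^2$ convergence of $g_n$.

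The main technical point I anticipate is the uniform invertibility step: one must show that the smallest singular value of $D\Phi_n$ stays bounded away from zero on $K'$ for large $n$. This does not follow from pointwise invertibility alone, but is obtained from compactness of $K'$, the continuity and strict positivity of the smallest singular value of $D\Phi$ on $K'$ (guaranteed by Lemma~\ref{lem:9} applied to $f\in\mathbf{F}^2$), and the uniform convergence $D\Phi_n\to D\Phi$. An alternative, more computational route would be to verify convergence of the second derivatives of $g_n$ directly from the explicit matrix identities \eqref{eq:48}--\eqref{eq:50} of Theorem~\ref{th:2}; however, the diffeomorphism viewpoint above is cleaner and, via Lemma~\ref{lem:8}, manifestly symmetric in $f$ and $g$.
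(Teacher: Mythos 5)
Your diffeomorphism argument is sound as far as it goes, but it does not prove the full statement: the Jacobian $D\Psi_n$ of the map $\Psi_n(u,y,q)=\bigl(\tfrac{\partial g_n}{\partial u},\tfrac{\partial g_n}{\partial y},q\bigr)$ contains only those second derivatives of $g_n$ that involve at least one of the variables $(u,y)$ — its last block row is simply $(0,0,I)$ — so the convergence $D\Psi_n\to D\Psi$ that you obtain from $D\Psi_n(b)=[D\Phi_n(\Psi_n(b))]^{-1}$ says nothing about the pure second derivatives $\tfrac{\partial^2 g_n}{\partial q^i\partial q^j}$ (the matrix $H(g_n)$ of \eqref{eq:47}). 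Hence the sentence ``which is the desired $\mathbf{C}^2$ convergence of $g_n$'' overreaches, and by symmetry the same block is missing in the reverse implication for $f_n$. The gap is repairable within your framework: by \eqref{eq:24} one has $\tfrac{\partial g_n}{\partial q}(b)=-\tfrac{\partial f_n}{\partial q}(\Psi_n(b))$, and differentiating this identity in $q$ expresses $\tfrac{\partial^2 g_n}{\partial q^i\partial q^j}$ through second derivatives of $f_n$ evaluated at $\Psi_n(b)$ and through the entries of $D\Psi_n$ you already control; composing with $\Psi_n\to\Psi$ then gives the missing locally uniform convergence (equivalently, one can invoke the identity \eqref{eq:50} directly). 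Without some such supplement the proof is incomplete.

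For comparison, the paper argues differently: it takes a sequence $b_n\to b$ in $\mathbf{B}$, uses Theorem~\ref{th:3} to get $a_n\to a$, deduces convergence of the invariant matrices $(A,C,D)(f_n)(a_n)\to(A,C,D)(f)(a)$ from the $\mathbf{C}^2(\mathbf{A})$ hypothesis, and transfers this to $(B,E,H)(g_n)(b_n)\to(B,E,H)(g)(b)$ via \eqref{eq:48}--\eqref{eq:50}, recovering all second derivatives of $g_n$ (including the $q$--$q$ block) from the construction of these matrices; the converse uses Lemmas~\ref{lem:10} and~\ref{lem:11} to recover the plain second derivatives of $f_n$ from $A(f_n)$ and $C(f_n)$. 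Your uniform-invertibility step (smallest singular value bounded below on a compact neighborhood, via Lemma~\ref{lem:9} and uniform convergence of $D\Phi_n$) is correct and is essentially the same stability mechanism that \eqref{eq:48} encodes, so your route is a legitimate alternative for the $(u,y)$-involving derivatives — it just must be completed for the $q$--$q$ derivatives as indicated.
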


\subsubsection{Proofs of Theorems~\ref{th:3} and \ref{th:4}}
\label{sec:proofs-theor-3}

\begin{proof}[Proof of Theorem~\ref{th:3}.]
  Recall that for convex or saddle functions the convergence in
  $\mathbf{C}^1$ is equivalent to the pointwise convergence. We also
  remind the reader that the conjugacy operations, as in~\eqref{eq:22}
  and~\eqref{eq:23}, are, in general, not continuous under this
  convergence and, hence, the result does not hold automatically.  A
  standard verification method in this case is to show the equivalence
  of the pointwise convergence and the \emph{epi}-convergence (or its
  analogs such as \emph{epi}-\emph{hypo}-convergence), under which the
  conjugacy operations are continuous; see \citet{RockWets:98},
  Theorem 11.34. We find it simpler to give a direct argument.

  Assume first that $(f_n)_{n\geq 1}$ converges to $f$ in 
  $\mathbf{C}^1(\mathbf{A})$.  By the positive homogeneity condition
  \ref{item:G4} and because they are saddle functions, it is
  sufficient to verify the pointwise convergence for $(g_n)_{n\geq 1}$
  at $b=(u,y,q)\in \mathbf{B}$ with $y=1$. Fix $\epsilon>0$ and find
  $u_i \in (-\infty,0)^M$, $i=1,2$, such that $u_1 < u < u_2$ and
  \begin{equation}
    \label{eq:57}
    |g(b_2) - g(b_1)| < \epsilon,  
  \end{equation}
  where $b_i \set (u_i,1,q)$.  Denote, for $i=1,2$,
  \begin{displaymath}
    a_i = (v_i, x_i, q) \set (\frac{\partial g}{\partial u}(b_i),
    g(b_i),q), 
  \end{displaymath}
  and, for $n\geq 1$,
  \begin{displaymath}
    b_{i,n} = (u_{i,n},1,q) \set (\frac{\partial f_n}{\partial v}(a_i),1,q). 
  \end{displaymath}

  The conjugacy relations between $f_n$ and $g_n$ and between $f$ and
  $g$ imply that $g_{n}(b_{i,n}) = x_i$ and $u_i = \frac{\partial
    f}{\partial v}(a_i)$.  From the $\mathbf{C}^1$-convergence of
  $(f_n)_{n\geq 1}$ to $f$ we deduce
  \begin{displaymath}
    \lim_{n\to\infty} u_{i,n} = \lim_{n\to\infty} \frac{\partial f_n}{\partial
      v}(a_i) = \frac{\partial f}{\partial v}(a_i) = u_i, \quad i=1,2, 
  \end{displaymath}
  and, hence, there is $n_0>1$ such that $u_{1,n} < u < u_{2,n}$ for
  $n\geq n_0$.  Accounting for the monotonicity of the elements of
  $\mathbf{G}^1$ with respect to $u$ we obtain
  \begin{gather*}
    g(b_1) < g(b) < g(b_2), \\
    g(b_1) = g_n(b_{1,n}) < g_n(b) < g_n(b_{2,n}) = g(b_2), \quad
    n\geq n_0,
  \end{gather*}
  and then \eqref{eq:57} yields
  \begin{displaymath}
    |g_n(b) - g(b)| < \epsilon, \quad n\geq n_0, 
  \end{displaymath}
  thus proving the pointwise, hence, also the
  $\mathbf{C}^1(\mathbf{B})$, convergence of $(g_n)_{n\geq 1}$ to $g$.

  Assume now that $(g_n)_{n\geq 1}$ converges to $g$ in
  $\mathbf{C}^1(\mathbf{B})$. We follow the same path as in the proof
  of the previous implication.  Fix $\epsilon>0$, take $a=(v,x,q)\in
  \mathbf{A}$ and let $a_i = (v_i,x_i,q) \in \mathbf{A}$, $i=1,2$, be
  such that $v_1 > v > v_2$, $x_1 < x < x_2$, and
  \begin{equation}
    \label{eq:58}
    |f(a_2) - f(a_1)| < \epsilon. 
  \end{equation}
  Denote, for $i=1,2$,
  \begin{displaymath}
    b_i = (u_i,y_i,q) \set (\frac{\partial f}{\partial v}(a_i), 
    \frac{\partial f}{\partial x}(a_i),q), 
  \end{displaymath}
  and, for $n\geq 1$,
  \begin{displaymath}
    a_{i,n} = (v_{i,n},x_{i,n},q) \set (\frac{\partial g_n}{\partial
      u}(b_i),\frac{\partial g_n}{\partial y}(b_i), q).  
  \end{displaymath}

  From the conjugacy relations between $f_n$ and $g_n$ and between $f$
  and $g$ we deduce that $f_{n}(a_{i,n}) = \ip{u_i}{v_{i,n}}$,
  $f(a_{i}) = \ip{u_i}{v_{i}}$, and $a_i = (\frac{\partial g}{\partial
    u}(b_i),\frac{\partial g}{\partial y}(b_i), q)$.  As the
  $\mathbf{C}^1$-convergence of $(g_n)_{n\geq 1}$ to $g$ implies the
  convergence of $(a_{i,n})_{n\geq 1}$ to $a_i$, there is $n_0>1$ such
  that, for $n\geq n_0$, $ v_{1,n} > v > v_{2,n}$, $x_{1,n} < x <
  x_{2,n}$, and $|\ip{u_i}{v_{i,n}} -
  \ip{u_i}{v_i}|<\epsilon$. Accounting for the monotonicity of the
  elements of $\mathbf{F}^1$ with respect to $v$ and $x$, we deduce
  \begin{gather*}
    f(a_1) < f(a) < f(a_2), \\
    f(a_1) - \epsilon < f_n(a_{1,n}) < f_n(a) < f_n(a_{2,n}) < f(a_2)
    + \epsilon, \quad n\geq n_0,
  \end{gather*}
  and, then, \eqref{eq:58} implies
  \begin{displaymath}
    |f_n(a) - f(a)| < 2\epsilon, \quad n\geq n_0, 
  \end{displaymath}
  proving the pointwise (hence, also $\mathbf{C}^1(\mathbf{A})$)
  convergence of $(f_n)_{n\geq 1}$ to $f$.
\end{proof}

For the proof of Theorem~\ref{th:4} we need some elementary identities
for the matrices $A(f)$ and $C(f)$.

\begin{Lemma}
  \label{lem:10}
  For $f\in \mathbf{F}^2$, the matrix $A(f)$ defined in \eqref{eq:14}
  satisfies
  \begin{align*}
    \sum_{m=1}^M A^{lm}(f) &= - v^l\frac{\partial^2 f}{\partial
      v^l \partial x}/\frac{\partial^2 f}{\partial x^2}, \quad
    l=1,\dots,M,
    \\
    \sum_{l,m=1}^M A^{lm}(f) &= -\frac{\partial f}{\partial
      x}/\frac{\partial^2 f}{\partial x^2}.
  \end{align*}
\end{Lemma}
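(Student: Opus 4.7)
The plan is to exploit the positive homogeneity condition \ref{item:F2}, namely $f(zv,x,q)=zf(v,x,q)$ for $z>0$, which makes $f(\cdot,x,q)$ homogeneous of degree one in $v$. Euler's identity then gives $\sum_{m=1}^M v^m \frac{\partial f}{\partial v^m}=f$, and differentiating this relation once with respect to $v^l$ and once with respect to $x$ produces the two auxiliary identities
\begin{align*}
  \sum_{m=1}^M v^m \frac{\partial^2 f}{\partial v^l\partial v^m} &= 0, \quad l=1,\dots,M, \\
  \sum_{m=1}^M v^m \frac{\partial^2 f}{\partial v^m\partial x} &= \frac{\partial f}{\partial x}.
\end{align*}
Both sides of the first identity come from differentiating $\sum_m v^m \partial_{v^m}f = f$ in $v^l$, which yields $\partial_{v^l}f+\sum_m v^m \partial^2_{v^l v^m}f = \partial_{v^l}f$; the second comes from differentiating in $x$.

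Next I would substitute these into the definition \eqref{eq:14} of $A^{lm}(f)$. Summing over $m$ and factoring out the quantities that do not depend on $m$,
\begin{displaymath}
  \sum_{m=1}^M A^{lm}(f) = \frac{v^l}{\partial f/\partial x}\left(\sum_{m=1}^M v^m\frac{\partial^2 f}{\partial v^l\partial v^m} - \frac{1}{\partial^2 f/\partial x^2}\frac{\partial^2 f}{\partial v^l\partial x}\sum_{m=1}^M v^m\frac{\partial^2 f}{\partial v^m\partial x}\right),
\end{displaymath}
and the two auxiliary identities collapse this to $-v^l (\partial^2 f/\partial v^l\partial x)/(\partial^2 f/\partial x^2)$, establishing the first formula.

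For the second formula I would sum the just-proved identity over $l$ and apply the second auxiliary identity once more:
\begin{displaymath}
  \sum_{l,m=1}^M A^{lm}(f) = -\frac{1}{\partial^2 f/\partial x^2}\sum_{l=1}^M v^l\frac{\partial^2 f}{\partial v^l\partial x} = -\frac{\partial f/\partial x}{\partial^2 f/\partial x^2}.
\end{displaymath}
There is no real obstacle here: the only subtlety is being careful that $f\in\mathbf{F}^2$ is $C^2$ and that $\partial^2 f/\partial x^2<0$ by \ref{item:F5}, so all quotients are well defined. The argument is a direct computation powered entirely by Euler's relation for degree-one homogeneous functions.
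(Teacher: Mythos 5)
Your proof is correct and follows essentially the same route as the paper: both deduce from the positive homogeneity of $f(\cdot,x,q)$ the identities $\sum_m v^m \frac{\partial^2 f}{\partial v^l\partial v^m}=0$ and $\sum_m v^m \frac{\partial^2 f}{\partial v^m\partial x}=\frac{\partial f}{\partial x}$, and then substitute them into the definition \eqref{eq:14} of $A(f)$. Your write-up merely makes explicit the Euler-identity derivation and the final summations that the paper leaves as "the result follows."
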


\begin{proof}
  From the positive homogeneity condition \eqref{eq:11} we deduce
  \begin{align*}
    \sum_{m=1}^M v^m \frac{\partial^2 f}{\partial v^l \partial v^m} &=0, \\
    \sum_{m=1}^M v^m \frac{\partial^2 f}{\partial x\partial v^m}
    &=\frac{\partial f}{\partial x},
  \end{align*}
  and the result follows.
\end{proof}

\begin{Lemma}
  \label{lem:11}
  For $f\in \mathbf{F}^2$, the matrix $C(f)$ defined in~\eqref{eq:44}
  satisfies
  \begin{equation}
    \label{eq:59}
    \sum_{m=1}^M C^{mj}(f)  = \frac{1}{\frac{\partial f}{\partial x}}
      \frac{\partial f}{\partial q^j} - \frac{1}{\frac{\partial^2 f}{\partial x^2}}
      \frac{\partial^2 f}{\partial q^j \partial x}, \quad
    j=1,\dots,J. 
  \end{equation}
\end{Lemma}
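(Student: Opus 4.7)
The identity is a direct consequence of Euler's relation applied to the positive homogeneity condition \ref{item:F2}. The plan is to differentiate the homogeneity identity in two different ways to obtain the two sums needed, and then substitute into the definition \eqref{eq:44}.

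First, I would start from the relation
\begin{displaymath}
\sum_{m=1}^M v^m \frac{\partial f}{\partial v^m}(v,x,q) = f(v,x,q),
\end{displaymath}
which follows from the positive homogeneity \eqref{eq:11} by Euler's theorem (obtained by differentiating $f(zv,x,q)=zf(v,x,q)$ at $z=1$). Differentiating this with respect to $q^j$ yields
\begin{displaymath}
\sum_{m=1}^M v^m \frac{\partial^2 f}{\partial v^m \partial q^j} = \frac{\partial f}{\partial q^j},
\end{displaymath}
while differentiating with respect to $x$ yields
\begin{displaymath}
\sum_{m=1}^M v^m \frac{\partial^2 f}{\partial v^m \partial x} = \frac{\partial f}{\partial x},
\end{displaymath}
which is already the second identity recorded in Lemma~\ref{lem:10}.

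Next I would sum the definition \eqref{eq:44} over $m$:
\begin{displaymath}
\sum_{m=1}^M C^{mj}(f) = \frac{1}{\frac{\partial f}{\partial x}}\sum_{m=1}^M v^m\frac{\partial^2 f}{\partial v^m\partial q^j} - \frac{1}{\frac{\partial f}{\partial x}\,\frac{\partial^2 f}{\partial x^2}}\left(\sum_{m=1}^M v^m\frac{\partial^2 f}{\partial v^m\partial x}\right)\frac{\partial^2 f}{\partial x\partial q^j}.
\end{displaymath}
Substituting the two Euler identities above into the respective sums gives exactly \eqref{eq:59}, with the $\partial f/\partial x$ factors cancelling in the second term.

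There is no real obstacle here: once the positive homogeneity is used in the form of Euler's identity and then differentiated in $q^j$ and in $x$ under the smoothness granted by \ref{item:F5}, the claim reduces to a one-line algebraic substitution analogous to the proof of Lemma~\ref{lem:10}.
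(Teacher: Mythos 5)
Your argument is correct and coincides with the paper's own proof: both derive Euler's identity $f=\sum_{m}v^m\frac{\partial f}{\partial v^m}$ from the positive homogeneity \eqref{eq:11}, differentiate it with respect to $x$ and $q^j$, and substitute the resulting sums into the definition \eqref{eq:44} to obtain \eqref{eq:59}. The only difference is that you spell out the summation and cancellation step, which the paper leaves implicit.
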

\begin{proof} The positive homogeneity property~\eqref{eq:11} yields
  Euler's identity:
  \begin{displaymath}
    f = \sum_{m=1}^M v^m \frac{\partial f}{\partial v^m}, 
  \end{displaymath}
  which, in turn, implies
  \begin{align*}
    \frac{\partial f}{\partial x} &= \sum_{m=1}^M v^m \frac{\partial^2
      f}{\partial v^m \partial x}, \\
    \frac{\partial f}{\partial q^j}&= \sum_{m=1}^M v^m
    \frac{\partial^2 f}{\partial v^m \partial q^j},\quad j=1,\dots,J,
  \end{align*}
  proving \eqref{eq:59}.
\end{proof}

\begin{proof}[Proof of Theorem~\ref{th:4}]
  In view of Theorem~\ref{th:3}, we only have to establish the uniform
  on compact sets convergences of second derivatives. Recall the
  notations $A(f)$, $C(f)$, and $D(f)$, for the matrices defined in
  \eqref{eq:14}, \eqref{eq:44} and \eqref{eq:45}, and $B(g)$, $E(g)$,
  and $H(g)$, for the matrices defined in~\eqref{eq:21},
  \eqref{eq:46}, and~\eqref{eq:47}.

  Assume first that $(f_n)_{n\geq 1}$ converges to $f$ in
  $\mathbf{C}^2(\mathbf{A})$.  Let $(b_n)_{n\geq 1}$ be a sequence in
  $\mathbf{B}$ that converges to $b\in \mathbf{B}$. By
  Theorem~\ref{th:3}, the sequence $a_n \set (\frac{\partial
    g_n}{\partial u}(b_n), \frac{\partial g_n}{\partial y}(b_n),q_n)$,
  $n\geq 1$, converges to $a \set (\frac{\partial g}{\partial u}(b),
  \frac{\partial g}{\partial y}(b),q)$. The convergence of
  $(f_n)_{n\geq 1}$ to $f$ in $\mathbf{C}^2(\mathbf{A})$ then implies
  the convergence of the matrices
  $((A(f_n),C(f_n),D(f_n))(a_n))_{n\geq 1}$, to $(A(f),C(f),D(f))(a)$.
  By the identities~\eqref{eq:48}, \eqref{eq:49}, and~\eqref{eq:50},
  this implies the convergence of the matrices
  $((B,E,H)(g_n)(b_n))_{n\geq 1}$, to $(B,E,H)(g)(b)$, which, by the
  construction of these matrices, yields the convergence of all second
  derivatives of $g_n$ at $b_n$, $n\geq 1$, to the corresponding
  second derivatives of $g$ at $b$. This, clearly, implies the uniform
  on compact sets convergence of the second derivatives of
  $(g_n)_{n\geq 1}$ to $g$.

  Similar arguments show that the
  $\mathbf{C}^2(\mathbf{B})$-convergence of $(g_n)_{n\geq 1}$ to $g$
  implies that for every sequence $(a_n)_{n\geq 1}$ in $\mathbf{A}$
  converging to $a\in \mathbf{A}$ the matrices
  $((A(f_n),C(f_n),D(f_n))(a_n))_{n\geq 1}$ converge to
  $(A(f),C(f),D(f))(a)$. This, in turn, implies the convergence of the
  second derivatives of $f_n$ at $a_n$, $n\geq 1$, to the second
  derivatives of $f$ at $a$ if we account for the
  identity~\eqref{eq:59} for the matrix $C(f)$ and the equalities for
  the matrix $A(f)$ from Lemma~\ref{lem:10}.
\end{proof}

\subsection{Additional conjugacy relations}
\label{sec:addit-conj-relat}

If $f\in \mathbf{F}^1$ and $g\in \mathbf{G}^1$ are conjugate in the
sense that~\eqref{eq:22} and \eqref{eq:23} hold true, then any extra
condition for $f$ has its conjugate analog for $g$. Below we shall
present several such extensions, which will appear in the description
of the sample paths of the stochastic fields $F$ and $G$. Throughout
this section we fix a constant $c>0$.

For a function $\map{f}{\mathbf{A}}{(-\infty,0)}$ define the following
conditions:

\begin{enumerate}[label=(F\arabic{*}), ref=(F\arabic{*})]
  \setcounter{enumi}{5}
\item \label{item:F6} If $M>1$, then for every $(x,q)\in \mathbf{R}
  \times \mathbf{R}^J$ and every sequence $(w_n)_{n\geq 1}$ in
  $\mathbf{S}^M$ converging to a boundary point of $\mathbf{S}^M$ we
  have
  \begin{displaymath}
    \lim_{n\to \infty} \sum_{m=1}^M \frac{\partial f}{\partial v^m}(w_n,x,q)
    = -\infty.
  \end{displaymath}
\item \label{item:F7} For every $a=(v,x,q)\in \mathbf{A}$ and
  $m=1,\dots,M$,
  \begin{displaymath}
    \frac1c\frac{\partial f}{\partial
      x}(a)  \leq -v^m \frac{\partial f}{\partial v^m}(a) \leq c
    \frac{\partial f}{\partial x}(a).
  \end{displaymath}
\item \label{item:F8} For every $a\in \mathbf{A}$ and every $z\in
  \mathbf{R}^M$,
  \begin{displaymath}
    \frac1{c} \ip{z}{z} \leq \ip{z}{{A}(f)(a)z} \leq c \ip{z}{z}, 
  \end{displaymath}
  where the matrix $A(f)(a)$ is defined in \eqref{eq:14}.
\item \label{item:F9} For every $a=(v,x,q)\in \mathbf{A}$ and
  $m=1,\dots,M$,
  \begin{displaymath}
    -\frac1c\frac{\partial^2 f}{\partial x^2}(a)
    \leq v^m \frac{\partial^2 f}{\partial v^m \partial x}(a) \leq -c 
    \frac{\partial^2 f}{\partial x^2}(a).
  \end{displaymath}
\end{enumerate}

For a function $\map{g}{\mathbf{B}}{\mathbf{R}}$ define the following
conditions:

\begin{enumerate}[label=(G\arabic{*}), ref=(G\arabic{*})]
  \setcounter{enumi}{5}
\item \label{item:G6} For every $(y,q)\in (0,\infty) \times
  \mathbf{R}^J$ and every sequence $(u_n)_{n\geq 1}$ in
  $(-\infty,0)^M$ converging to a boundary point of $(-\infty,0)^M$ we
  have
  \begin{displaymath}
    \lim_{n\to \infty} g(u_n,y,q) = \infty.
  \end{displaymath}
\item \label{item:G7} For every $(u,q)\in (-\infty,0)^M\times
  \mathbf{R}^J$ and $m=1,\dots,M$,
  \begin{displaymath}
    \frac1c \leq -u^m \frac{\partial g}{\partial u^m}(u,1,q) \leq c. 
  \end{displaymath}
\item \label{item:G8} For every $b\in \mathbf{B}$ and every $z\in
  \mathbf{R}^M$,
  \begin{displaymath}
    \frac1{c} \ip{z}{z} \leq \ip{z}{{B}(g)(b)z} \leq c \ip{z}{z}, 
  \end{displaymath}
  where the matrix $B(g)(b)$ is defined in \eqref{eq:21}.
\item \label{item:G9} For every $(u,q)\in (-\infty,0)^M\times
  \mathbf{R}^J$, the vector $z\in \mathbf{R}^M$ solving the linear
  equation:
  \begin{displaymath}
    B(g)(u,1,q) z = \idvec,
  \end{displaymath}
  where $\idvec\set (1,\dots,1) \in \mathbf{R}^M$, satisfies
  \begin{displaymath}
    \frac1c \leq z^m \leq c, \quad m=1,\dots, M. 
  \end{displaymath}
\end{enumerate}

Note that for $g\in \mathbf{G}^1$ the condition \ref{item:G6} implies
(a) and (b) in \ref{item:G2} and holds trivially when $M=1$
by~\eqref{eq:15}.

\begin{Theorem}
  \label{th:5}
  Let $f\in \mathbf{F}^1$ and $g\in \mathbf{G}^1$ be conjugate in the
  sense of \eqref{eq:22} and \eqref{eq:23}. Then \ref{item:F6} is
  equivalent to \ref{item:G6} and \ref{item:F7} is equivalent to
  \ref{item:G7}.  If, in addition, $f\in \mathbf{F}^2$ and (hence)
  $g\in \mathbf{G}^2$ then \ref{item:F8} is equivalent to
  \ref{item:G8} and \ref{item:F9} is equivalent to \ref{item:G9}.
\end{Theorem}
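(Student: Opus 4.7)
The strategy is to dispose of three of the four equivalences by direct algebraic manipulation of the conjugacy relations of Theorems~\ref{th:1} and~\ref{th:2}, and to tackle \ref{item:F6}$\,\Leftrightarrow\,$\ref{item:G6} by a topological argument, which will be the main obstacle.

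For \ref{item:F7}$\,\Leftrightarrow\,$\ref{item:G7}, at conjugate points Theorem~\ref{th:1} gives $u^m = \partial f/\partial v^m(v,x,q)$, $y = \partial f/\partial x(v,x,q)$, and $v^m = y\,\partial g/\partial u^m(u,1,q)$; hence
\[
\frac{-v^m (\partial f/\partial v^m)(a)}{(\partial f/\partial x)(a)} \;=\; \frac{-v^m u^m}{y} \;=\; -u^m\,\frac{\partial g}{\partial u^m}(u,1,q),
\]
so the inequalities in \ref{item:F7} and \ref{item:G7} coincide. For \ref{item:F8}$\,\Leftrightarrow\,$\ref{item:G8} and \ref{item:F9}$\,\Leftrightarrow\,$\ref{item:G9}, Theorem~\ref{th:2} gives $B(g)(b) = A(f)(a)^{-1}$; both matrices are symmetric by construction and positive definite via Lemma~\ref{lem:9} and the strict convexity of $g(\cdot,1,q)$. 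The standard equivalence $\tfrac{1}{c}I \leq A \leq cI \Leftrightarrow \tfrac{1}{c}I \leq A^{-1}\leq cI$ for symmetric positive definite $A$ yields the first. For the second, Lemma~\ref{lem:10} together with $\partial^2 f/\partial x^2 < 0$ gives $(A(f)(a)\idvec)^l = v^l(\partial^2 f/\partial v^l\partial x)/(-\partial^2 f/\partial x^2)$, so \ref{item:F9} reads $\tfrac1c\leq (A(f)(a)\idvec)^m \leq c$; since $B=A^{-1}$, the vector $z$ of \ref{item:G9} equals $A(f)(a)\idvec$, and the two conditions match.

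For \ref{item:F6}$\,\Rightarrow\,$\ref{item:G6}, fix $q$ and reduce to $y=1$ using \ref{item:G4}. Let $u_n \to u^* \in \boundary{(-\infty,0)^M}$ and set $x_n \set g(u_n,1,q)$, $v_n \set \partial g/\partial u(u_n,1,q)$, decomposed as $v_n = z_n w_n$ with $w_n\in\mathbf{S}^M$ and $z_n>0$. Theorem~\ref{th:1} together with positive homogeneity yields $u_n^m = \partial f/\partial v^m(w_n,x_n,q)$ and $1/z_n = \partial f/\partial x(w_n,x_n,q)$, while \ref{item:G2}\ref{item:2} forces $|v_n|\to\infty$, hence $z_n\to\infty$. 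Comparing $u_n$ against any fixed $u_0\in(-\infty,0)^M$ dominated componentwise by the tail of $u_n$ and using strict monotonicity of $g(\cdot,1,q)$ shows $x_n$ is bounded below, so one needs only preclude $\limsup_n x_n<\infty$. Along such a subsequence, extract $x_n\to x^*\in\mathbf{R}$ and $w_n\to w^*\in\closure{\mathbf{S}^M}$: if $w^*\in\mathbf{S}^M$, continuity of $\partial f/\partial v$ forces $u_n\to \partial f/\partial v(w^*,x^*,q)\in(-\infty,0)^M$, contradicting $u_n^{m_0}\to 0$; if $w^*\in\boundary{\mathbf{S}^M}$, \ref{item:F6} yields $\sum_m u_n^m\to -\infty$, contradicting the boundedness of $u_n$.

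For the reverse \ref{item:G6}$\,\Rightarrow\,$\ref{item:F6}, fix $(x,q)$ and let $w_n\in\mathbf{S}^M$ with $w_n\to w^*\in\boundary{\mathbf{S}^M}$. Setting $u_n \set \partial f/\partial v(w_n,x,q)$, Theorem~\ref{th:1} yields $g(u_n,1,q)\equiv x$ and $w_n = \partial f/\partial x(w_n,x,q)\cdot \partial g/\partial u(u_n,1,q)$. If $\sum_m u_n^m$ were bounded below along some subsequence, $u_n$ would be bounded (each $u_n^m\leq 0$), and a further subsequence would give $u_n\to u^*\in(-\infty,0]^M$. If $u^*\in\boundary{(-\infty,0)^M}$, \ref{item:G6} would force $g(u_n,1,q)\to\infty$, contradicting $g(u_n,1,q)=x$; if $u^*\in(-\infty,0)^M$, continuity yields $\partial g/\partial u(u_n,1,q)\to\partial g/\partial u(u^*,1,q)\in(0,\infty)^M$, so $w_n^{m_0}\to 0$ forces $\partial f/\partial x(w_n,x,q)\to 0$, driving every component of $w_n$ to $0$ and contradicting $\sum_m w_n^m=1$. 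The delicate step is this last case, where the key observation is that the homogeneity factor $y_n = \partial f/\partial x(w_n,x,q)$ collapses the simplex-valued $w_n$ to $0$ in the interior-limit scenario.
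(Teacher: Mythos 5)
Your treatment of \ref{item:F7}$\Leftrightarrow$\ref{item:G7}, \ref{item:F8}$\Leftrightarrow$\ref{item:G8} and \ref{item:F9}$\Leftrightarrow$\ref{item:G9} is correct and is essentially the paper's own argument (Lemmas~\ref{lem:13}--\ref{lem:15}): the pointwise identity $-v^mu^m/y=-u^m\frac{\partial g}{\partial u^m}(u,1,q)$ at conjugate points, the inverse relation $B(g)(b)=(A(f)(a))^{-1}$ from \eqref{eq:48}, and the reformulation of \ref{item:F9} as $\frac1c\le (A(f)(a)\idvec)^m\le c$ via Lemma~\ref{lem:10}. Your direction \ref{item:G6}$\Rightarrow$\ref{item:F6} is also sound and follows the same contradiction scheme as Lemma~\ref{lem:12}, with a legitimate small variation: you obtain $y_n=\frac{\partial f}{\partial x}(w_n,x,q)\to 0$ from $w_n^{m_0}\to 0$ and $\frac{\partial g}{\partial u}(u^*,1,q)\in(0,\infty)^M$, rather than from \eqref{eq:12} and the convergence of derivatives of concave functions.

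The gap is in \ref{item:F6}$\Rightarrow$\ref{item:G6}, in the case $w^*\in\boundary{\mathbf{S}^M}$, which is in fact the hardest step of the whole theorem. There you invoke \ref{item:F6} to conclude $\sum_m u_n^m=\sum_m\frac{\partial f}{\partial v^m}(w_n,x_n,q)\to-\infty$, but \ref{item:F6} is a statement for a \emph{fixed} $(x,q)$, whereas along your sequence the argument $x_n$ varies (it only converges to $x^*$). Since $\frac{\partial f}{\partial v}$ is continuous only on the open set $\mathbf{A}$, and nothing in \ref{item:F1}--\ref{item:F4} gives any monotonicity or control of $\frac{\partial f}{\partial v^m}(w_n,x_n,q)$ in terms of $\frac{\partial f}{\partial v^m}(w_n,x^*,q)$ near $\boundary{\mathbf{S}^M}$, the step is not licensed as written. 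What you need is a locally uniform-in-$x$ version of \ref{item:F6}; this is precisely the content of Lemma~\ref{lem:16}, proved via convexity and positive homogeneity in $v$ (the directional-derivative bound $\ip{\frac{\partial f}{\partial v}(w_n,x_n,q)}{\idvec}\le\ip{\frac{\partial f}{\partial v}(w_n+\epsilon\idvec,x_n,q)}{\idvec}$, passage to the limit at the interior point $(w^*+\epsilon\idvec,x^*,q)$, rescaling to $\mathbf{S}^M$, then $\epsilon\to 0$), and that lemma appears only later and is not used in the paper's proof of Theorem~\ref{th:5}. The paper's Lemma~\ref{lem:12} instead shows $f(w_n,x_n)\to 0$, deduces that the extension of $f(\cdot,x^*)$ by $0$ to $\boundary{(0,\infty)^M}$ is subdifferentiable at $w^*$ (the subgradient inequality passes to the limit), and then contradicts \ref{item:F6} at the \emph{fixed} $x^*$ by approximating a subgradient with gradients via Theorem 25.6 of \cite{Rock:70}. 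Either repair works, but one of them must be supplied; as it stands, the crucial case is asserted rather than proved.
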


\subsubsection{Proof of Theorem \ref{th:5}}
\label{sec:proof-theor-refth:11}

The proof follows from the lemmas below, where $f\in \mathbf{F}^1$ and
$g\in \mathbf{G}^1$ are conjugate as in \eqref{eq:22}.

\begin{Lemma}
  \label{lem:12}
  Suppose $M>1$. Then the conditions \ref{item:F6} and \ref{item:G6}
  are equivalent.
\end{Lemma}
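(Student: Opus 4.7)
The equivalence of \ref{item:F6} and \ref{item:G6} is a contrapositive duality argument built on Theorem~\ref{th:1}, on Euler's identity $f(w,x,q) = \langle w, \frac{\partial f}{\partial v}(w,x,q)\rangle$ for $w\in\mathbf{S}^M$ (a consequence of the positive homogeneity \eqref{eq:11}), and on the boundary decay \eqref{eq:12} of \ref{item:F2}.

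For \ref{item:G6}$\Rightarrow$\ref{item:F6}, suppose \ref{item:F6} fails at some $(x,q)\in\mathbf{R}\times\mathbf{R}^J$: a sequence $w_n\to w_*\in\partial\mathbf{S}^M$ keeps $\sum_m u_n^m$ bounded below, where $u_n:=\frac{\partial f}{\partial v}(w_n,x,q)\in(-\infty,0)^M$. Each $u_n^m\le 0$ then forces the coordinates of $u_n$ into a fixed interval $[-C,0]$, so along a subsequence $u_n\to u_*\in(-\infty,0]^M$. Euler combined with \eqref{eq:12} gives $\langle w_*,u_*\rangle=\lim f(w_n,x,q)=0$; since $\sum_m w_*^m=1$, some $m_1$ has $w_*^{m_1}>0$, which forces $u_*^{m_1}=0$ and hence $u_*\in\partial(-\infty,0]^M$. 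But item~\ref{item:6} of Theorem~\ref{th:1} gives $g(u_n,1,q)=x$ for every $n$, whereas \ref{item:G6} requires $g(u_n,1,q)\to\infty$---a contradiction.

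For \ref{item:F6}$\Rightarrow$\ref{item:G6}, the positive homogeneity \ref{item:G4} lets us take $y=1$. Given $u_n\to u_*\in\partial(-\infty,0]^M$ with $u_*^{m_0}=0$, we shall prove $\liminf_n g(u_n,1,q)\ge K$ for arbitrary $K\in\mathbf{R}$. Starting from the variational formula~\eqref{eq:22}, testing $v=zw$ with $w\in\mathbf{S}^M$, $z>0$, and optimizing first in $x$ (at $\frac{\partial f}{\partial x}(w,x,q)=1/z$) and then in $z$ (at $f(w,x,q)=\langle u,w\rangle$) yields the lower bound
\[
 g(u,1,q)\ \ge\ \xi(u,w)\qquad\text{whenever }\langle u,w\rangle<0,
\]
where $\xi(u,w)\in\mathbf{R}$ is the unique solution of $f(w,\xi(u,w),q)=\langle u,w\rangle$ (well-defined since $f(w,\cdot,q)$ is a strictly increasing bijection of $\mathbf{R}$ onto $(-\infty,0)$). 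Fix $K\in\mathbf{R}$ and let $e_m$ denote the standard basis vector; choose the test weight $w_K:=(1-(M-1)\delta)e_{m_0}+\delta\sum_{m\ne m_0}e_m\in\mathbf{S}^M$ for a small $\delta>0$ to be determined. Since $w_K\to e_{m_0}\in\partial\mathbf{S}^M$ as $\delta\to 0$, Euler combined with \eqref{eq:12} gives $w_K^{m_0}\bigl|\frac{\partial f}{\partial v^{m_0}}(w_K,K,q)\bigr|\le|f(w_K,K,q)|\to 0$, hence $\bigl|\frac{\partial f}{\partial v^{m_0}}(w_K,K,q)\bigr|\to 0$, so \ref{item:F6} forces $\sum_{m\ne m_0}\bigl|\frac{\partial f}{\partial v^m}(w_K,K,q)\bigr|\to\infty$. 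Choose $\delta$ small enough that this last sum exceeds $|S|+1$ with $S:=\sum_{m\ne m_0}u_*^m\le 0$; Euler then yields $|f(w_K,K,q)|\ge\delta(|S|+1)$, while $\langle u_n,w_K\rangle\to\delta S$ (because $u_n^{m_0}\to 0$), so for all sufficiently large $n$ we have $|\langle u_n,w_K\rangle|<\delta(|S|+1)\le|f(w_K,K,q)|$. Monotonicity of $f(w_K,\cdot,q)$ in $x$ then forces $\xi(u_n,w_K)>K$, whence $g(u_n,1,q)\ge\xi(u_n,w_K)>K$. As $K$ is arbitrary, $g(u_n,1,q)\to\infty$.

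The main obstacle is the second direction: the collapse rate $\delta$ of $w_K$ toward the vertex $e_{m_0}$ must be calibrated so that \ref{item:F6} bounds $|f(w_K,K,q)|$ from below by a multiple of $\delta$ that beats the residual linear term $|\langle u_n,w_K\rangle|$; this is the single place where \ref{item:F6} enters, and its role is precisely to upgrade the qualitative boundary decay of \ref{item:F2} to a quantitative rate sufficient to dominate the first-order term. Everything else reduces to routine application of Euler's identity and the conjugacy relations of Theorem~\ref{th:1}.
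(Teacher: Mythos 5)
Your proof is correct, but the hard direction \ref{item:F6}$\Rightarrow$\ref{item:G6} takes a genuinely different route from the paper's. The paper argues by contradiction: assuming $x_n=g(u_n,1)$ has a finite limit point, it extracts limits $\widehat x$ and $\widehat w\in\closure{\mathbf{S}^M}$ of the (normalized) conjugate points, and in the delicate case $\widehat w\in\boundary{\mathbf{S}^M}$ it shows that the extended convex function $f(\cdot,\widehat x)$ is subdifferentiable at $\widehat w$ and invokes Theorem~25.6 of \cite{Rock:70} to produce a sequence in $\mathbf{S}^M$ along which $\frac{\partial f}{\partial v}$ stays bounded, contradicting \ref{item:F6}. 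You instead give a direct quantitative argument: the weak-duality bound $g(u,1,q)\ge\xi(u,w)$ with $f(w,\xi(u,w),q)=\ip{u}{w}$, obtained from \eqref{eq:22} by testing $v=zw$, combined with a test weight $w_K$ collapsing onto the vertex $e_{m_0}$, where \ref{item:F6} supplies the rate $|f(w_K,K,q)|\ge\delta(|S|+1)$ that dominates $|\ip{u_n}{w_K}|\approx\delta|S|$. This avoids the subdifferential-limit machinery entirely and yields an explicit lower bound for $g$, at the cost of a somewhat longer calibration argument. One spot in your write-up is terse but not a gap: the bound $g(u,1,q)\ge\xi(u,w)$ is justified by ``optimize in $x$, then in $z$'', and the inner infimum need not be attained for every $z$; however, the bound does hold, since with $z=1/\frac{\partial f}{\partial x}(w,\xi(u,w),q)$ the concavity (tangent-line) inequality for $f(w,\cdot,q)$ gives $\ip{zw}{u}+x-f(zw,x,q)\ge\xi(u,w)$ for all $x\in\mathbf{R}$. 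Your \ref{item:G6}$\Rightarrow$\ref{item:F6} direction is essentially the paper's contradiction argument, except that you identify the boundary limit $u_*$ via Euler's identity and \eqref{eq:12}, whereas the paper uses $y_n=\frac{\partial f}{\partial x}(w_n,x)\to0$ together with $w_n=y_n\frac{\partial g}{\partial u}(u_n,1)$; both variants then contradict \ref{item:G6} through the constancy of $g(u_n,1,q)=x$.
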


\begin{proof} To simplify notations we shall omit the dependence of
  $f$ and $g$ on the irrelevant parameter $q$. Recall the notations
  $\boundary{A}$ and $\closure{A}$ for the boundary and the closure of
  a set $A$.

  \ref{item:F6} $\implies$ \ref{item:G6}. Let $(u_n)_{n\geq 1}$ be a
  sequence in $(-\infty,0)^M$ converging to $\widehat u\in
  \boundary{(-\infty,0)^M}$. Denote $x_n \set g(u_n,1)$, $n\geq 1$,
  and, contrary to \ref{item:G6}, suppose
  \begin{displaymath}
    \liminf_{n\to \infty} g(u_n,1) = \liminf_{n\to \infty} x_n < \infty. 
  \end{displaymath}
  As $g(\cdot,1)$ is an increasing function on $(-\infty,0)^M$ and the
  sequence $(u_n)_{n\geq 1}$ is bounded from below, the sequence
  $(x_n)_{n\geq 1}$ is also bounded from below. Hence, by passing to a
  subsequence, we can assume that $(x_n)_{n\geq 1}$ converges to some
  $\widehat x \in \mathbf{R}$.

  Denoting $v_n \set \frac{\partial g}{\partial u}(u_n,1)$ and $w_n
  \set \frac{v_n}{\sum_{m=1}^M v_n^m}$ we deduce from
  items~\ref{item:6} and \ref{item:7} of Theorem~\ref{th:1} and the
  positive homogeneity condition \eqref{eq:11} for $f$ that
  \begin{displaymath}
    u_n = \frac{\partial f}{\partial v}(v_n,x_n) = \frac{\partial
      f}{\partial v}(w_n,x_n), \quad n\geq 1.  
  \end{displaymath}
  As $w_n\in \mathbf{S}^M$, passing to a subsequence, we can assume
  that $(w_n)_{n\geq 1}$ converges to $\widehat
  w\in\closure{\mathbf{S}^M}$.  If $\widehat w\in \mathbf{S}^M$, then
  \begin{displaymath}
    \widehat u =  \lim_{n\to \infty} u_n = \lim_{n\to \infty}
    \frac{\partial f}{\partial 
      v}(w_n,x_n) = \frac{\partial f}{\partial v}(\widehat w,\widehat
    x) \in (-\infty,0)^M,  
  \end{displaymath}
  contradicting our choice of $\widehat u$.  If, on the other hand,
  $\widehat w\in \boundary{\mathbf{S}^M}$, then, by \eqref{eq:12} and
  the monotonicity of $f=f(v,x)$ with respect to $x$,
  \begin{displaymath}
    \lim_{n\to \infty} f(w_n,x_n) = 0. 
  \end{displaymath}

  It follows that, for every $v\in (0,\infty)^M$,
  \begin{gather*}
    0 \geq f(v,\widehat x) = \lim_{n\to \infty} f(v,x_n) \geq
    \lim_{n\to \infty}\left(f(w_n,x_n) + \ip{\frac{\partial
          f}{\partial v}(w_n,x_n)}{v-w_n}\right) \\ = \lim_{n\to
      \infty}(f(w_n,x_n) + \ip{u_n}{v-w_n}) = \ip{\widehat
      u}{v-\widehat w}.
  \end{gather*}
  Hence, if we extend, by continuity, the convex function
  $f(\cdot,\widehat x)$ to the boundary of its domain by setting
  \begin{displaymath}
    f(v,\widehat x) = 0, \quad v \in \boundary{(0,\infty)^M},
  \end{displaymath}
  then its subdifferential $\partial f_v(\widehat w,\widehat x)$ at
  $\widehat w$ contains $\widehat u$ and, therefore, is non-empty. In
  this case, there are $\widetilde u \in \partial f_v(\widehat
  w,\widehat x)$ and a sequence $(\widetilde v_n)_{n\geq 1} \subset
  (0,\infty)^M$ convergent to $\widehat w$ such that
  \begin{displaymath}
    \widetilde u =  \lim_{n\to \infty}  \frac{\partial f}{\partial
      v}(\widetilde v_n,\widehat x),
  \end{displaymath}
  see Theorem 25.6 in \cite{Rock:70}. Denoting $\widetilde w_n \set
  \frac{\widetilde v_n}{\sum_{m=1}^M \widetilde v_n^m}$, $n\geq 1$,
  and accounting for \eqref{eq:11} we obtain
  \begin{displaymath}
    \widetilde u =  \lim_{n\to \infty}  \frac{\partial f}{\partial
      v}(\widetilde w_n,\widehat x),
  \end{displaymath}
  which contradicts \ref{item:F6}.

  \ref{item:G6} $\implies$ \ref{item:F6}. Fix $x\in\mathbf{R}$, let
  $(w_n)_{n\geq 1}$ be a sequence in $\mathbf{S}^M$ converging to
  $w\in \boundary{\mathbf{S}^M}$, and denote
  \begin{displaymath}
    u_n \set \frac{\partial f}{\partial v}(w_n,x), \quad y_n \set
    \frac{\partial f}{\partial x}(w_n,x), \quad n\geq 1. 
  \end{displaymath}
  By \eqref{eq:12}, the concave functions $f(w_n,\cdot)$, $n\geq 1$,
  converge to $0$. Hence, their derivatives also converge to $0$,
  implying that
  \begin{equation}
    \label{eq:60}
    \lim_{n\to\infty} y_n =0.
  \end{equation}
  Contrary to \ref{item:F6} suppose $(u_n)_{n\geq 1}$ contains a
  bounded subsequence. Passing to a subsequence we can then assume
  that $(u_n)_{n\geq 1}$ converges to $u\in (-\infty,0]^M$.

  By the equivalence of items~\ref{item:6} and \ref{item:7} in
  Theorem~\ref{th:1},
  \begin{displaymath}
    w_n = y_n \frac{\partial g}{\partial u}(u_n,1), \quad x=g(u_n,1), \quad
    n\geq 1. 
  \end{displaymath}
  In view of \eqref{eq:60}, the first equality implies that $u\not\in
  (-\infty,0)^M$. Hence, $u\in \boundary{(-\infty,0)^M}$,
  contradicting \ref{item:G6} and the second equality.
\end{proof}

Recall that a constant $c>0$ appearing in \ref{item:F7}--\ref{item:F9}
and \ref{item:G7}--\ref{item:G9} is fixed.

\begin{Lemma}
  \label{lem:13}
  The conditions \ref{item:F7} and \ref{item:G7} are equivalent.
\end{Lemma}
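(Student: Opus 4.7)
The plan is to observe that \ref{item:F7} and \ref{item:G7} are just two ways of writing the same quantity, up to the substitution of conjugate variables provided by Theorem~\ref{th:1}. As in the proofs of the previous lemmas, I drop the dependence on $q$.

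First, I would recall that by items~\ref{item:6} and \ref{item:7} of Theorem~\ref{th:1}, the map $(v,x)\mapsto (u,y)\set(\frac{\partial f}{\partial v}(v,x),\frac{\partial f}{\partial x}(v,x))$ is a bijection from $(0,\infty)^M\times\mathbf{R}$ onto $(-\infty,0)^M\times(0,\infty)$, whose inverse is given by $u\mapsto (v,x)=(\frac{\partial g}{\partial u}(u,1),g(u,1))$ (using also \ref{item:G4}). Consequently, the universal quantifiers in \ref{item:F7} and \ref{item:G7} range over sets that are in bijection, and it suffices to verify that the two sets of inequalities agree pointwise along conjugate pairs.

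Next, I would exploit the positive homogeneity \eqref{eq:20} of $g$ in $y$: differentiating in $u^m$ yields $\frac{\partial g}{\partial u^m}(u,y)=y\frac{\partial g}{\partial u^m}(u,1)$. Combined with $v^m=\frac{\partial g}{\partial u^m}(u,y)$ and $y=\frac{\partial f}{\partial x}(v,x)$, this gives
\begin{equation*}
-u^m\frac{\partial g}{\partial u^m}(u,1)=\frac{-u^m v^m}{y}=\frac{-v^m\frac{\partial f}{\partial v^m}(v,x)}{\frac{\partial f}{\partial x}(v,x)}.
\end{equation*}
Since $\frac{\partial f}{\partial x}(v,x)=y>0$, the double inequality in \ref{item:F7} can be divided through by $\frac{\partial f}{\partial x}(v,x)$ to produce exactly the double inequality in \ref{item:G7} at the conjugate point $(u,1)$, and, conversely, multiplying \ref{item:G7} by $\frac{\partial f}{\partial x}(v,x)$ reproduces \ref{item:F7} at $(v,x)$. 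This yields the equivalence.

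There is no real obstacle here: everything follows from the pointwise conjugacy identities of Theorem~\ref{th:1} and the homogeneity of $g$ in $y$. The only point requiring any care is making sure that the bijection between the spatial variables $(v,x)$ and $(u,y)$ covers all of $\mathbf{A}$ and all of $(-\infty,0)^M\times(0,\infty)$, so that the two universally-quantified statements are actually equivalent; this is exactly what Theorem~\ref{th:1} guarantees.
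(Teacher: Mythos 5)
Your proof is correct and follows exactly the paper's argument: the paper's proof of this lemma simply cites items~\ref{item:6} and \ref{item:7} of Theorem~\ref{th:1} together with the positive homogeneity \eqref{eq:20} of $g$, which is precisely the identity $-u^m\frac{\partial g}{\partial u^m}(u,1,q) = -v^m\frac{\partial f}{\partial v^m}(a)/\frac{\partial f}{\partial x}(a)$ you derive. Your additional remark about the bijection between the parameter ranges is the right (implicit) justification for transferring the universally quantified inequalities.
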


\begin{proof}
  Follows from the items~\ref{item:6} and \ref{item:7} in the list of
  equivalent characterizations of saddle points in Theorem~\ref{th:1}
  and the positive homogeneity condition \eqref{eq:20} for $g$.
\end{proof}

From now on we assume, in addition, that $f\in \mathbf{F}^2$ and $g\in
\mathbf{G}^2$.

\begin{Lemma}
  \label{lem:14}
  The conditions \ref{item:F8} and \ref{item:G8} are equivalent.
\end{Lemma}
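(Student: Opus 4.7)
The plan is to reduce the equivalence of \ref{item:F8} and \ref{item:G8} to a pure linear algebra fact, using the inverse relationship $B(g)(b)=(A(f)(a))^{-1}$ between the matrices established in Theorem~\ref{th:2}. By that theorem, conjugacy gives a bijection between pairs $(a,b)$ with $a\in\mathbf{A}$ and $b\in\mathbf{B}$ satisfying items~\ref{item:6}--\ref{item:7} of Theorem~\ref{th:1}, so \ref{item:F8} holds at every $a\in\mathbf{A}$ if and only if the inverse matrix $(A(f)(a))^{-1}=B(g)(b)$ satisfies the appropriate companion bound at every associated $b\in\mathbf{B}$.

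First I would observe that $A(f)(a)$ is symmetric: the expression~\eqref{eq:14} is symmetric in $l,m$ because $\frac{\partial^2 f}{\partial v^l\partial v^m}=\frac{\partial^2 f}{\partial v^m\partial v^l}$ and the correction term $\frac1{\partial^2_x f}\frac{\partial^2 f}{\partial v^l\partial x}\frac{\partial^2 f}{\partial v^m\partial x}$ is manifestly symmetric. Likewise $B(g)(b)$ is symmetric by~\eqref{eq:21}. Moreover, under \ref{item:F8} the matrix $A(f)(a)$ is positive definite (the quadratic form is bounded below by $\tfrac1c\|z\|^2>0$ for $z\neq 0$), and similarly for $B(g)(b)$ under \ref{item:G8}.

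Next I would invoke the following elementary fact: for a symmetric positive definite matrix $A$ with inverse $B$, the two-sided bound
\begin{displaymath}
\tfrac1c\ip{z}{z}\le\ip{z}{Az}\le c\ip{z}{z},\quad z\in\mathbf{R}^M,
\end{displaymath}
holds if and only if the analogous bound holds for $B$. This follows by spectral decomposition: writing $A=U\Lambda U^T$ with $\Lambda=\operatorname{diag}(\lambda_1,\dots,\lambda_M)$ and $\lambda_i>0$, the quadratic-form bounds for $A$ are equivalent to $\tfrac1c\le\lambda_i\le c$ for each $i$, and since $B=U\Lambda^{-1}U^T$ has eigenvalues $1/\lambda_i$, the same inequalities $\tfrac1c\le 1/\lambda_i\le c$ hold, which in turn are the quadratic-form bounds for $B$.

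Combining these two inputs yields the lemma: given a conjugate pair $(a,b)$, the bound \ref{item:F8} at $a$ is equivalent by the spectral argument to the same type of bound for $(A(f)(a))^{-1}=B(g)(b)$, i.e.\ to \ref{item:G8} at $b$. Since Theorem~\ref{th:1} guarantees that every $a\in\mathbf{A}$ admits a (unique, modulo the free positive scaling of $v$) conjugate partner $b\in\mathbf{B}$ and vice versa, quantifying over $a\in\mathbf{A}$ and $b\in\mathbf{B}$ gives the desired equivalence of the two global conditions. No serious obstacle is expected; the only thing to be careful about is the invocation of Theorem~\ref{th:2}, which requires $f\in\mathbf{F}^2$ and $g\in\mathbf{G}^2$, precisely the running assumption preceding the lemma.
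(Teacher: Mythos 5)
Your proposal is correct and follows essentially the paper's own argument: the paper proves Lemma~\ref{lem:14} simply by citing the inverse relation \eqref{eq:48} between $A(f)(a)$ and $B(g)(b)$, and you supply the implicit elementary fact (via symmetry and the spectral decomposition) that a symmetric positive definite matrix satisfies the two-sided bound with constant $c$ if and only if its inverse does. The only minor inaccuracy is the parenthetical about uniqueness ``modulo positive scaling of $v$'' --- by Theorem~\ref{th:1} the conjugate pair is unique outright --- but this does not affect the argument.
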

\begin{proof}
  Follows from the inverse relation \eqref{eq:48} between the matrices
  $A(f)(a)$ and $B(g)(b)$.
\end{proof}

\begin{Lemma}
  \label{lem:15}
  The conditions \ref{item:F9} and \ref{item:G9} are equivalent.
\end{Lemma}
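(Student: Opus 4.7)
The plan is to reduce both \ref{item:F9} and \ref{item:G9} to the same uniform bound on the vector $A(f)(a)\idvec$, and then apply the inversion identity~\eqref{eq:48} of Theorem~\ref{th:2} together with the row-sum formula of Lemma~\ref{lem:10}.

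First I would rewrite \ref{item:F9} in matrix form. By Lemma~\ref{lem:10},
\[
(A(f)(a)\idvec)^m = \sum_{k=1}^M A^{mk}(f)(a) = -\,\frac{v^m\,\partial^2 f/\partial v^m\partial x(a)}{\partial^2 f/\partial x^2(a)},\qquad a=(v,x,q)\in\mathbf{A},\ m=1,\dots,M.
\]
Since $\partial^2 f/\partial x^2(a)<0$ by \ref{item:F5}, dividing the double inequality in \ref{item:F9} by $-\partial^2 f/\partial x^2(a)>0$ shows that \ref{item:F9} is equivalent to the bound
\[
\frac{1}{c}\leq (A(f)(a)\idvec)^m\leq c,\qquad a\in\mathbf{A},\ m=1,\dots,M.
\]

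Next I would translate \ref{item:G9}. For $b=(u,1,q)\in\mathbf{B}$, the vector $z$ in \ref{item:G9} is characterized by $B(g)(b)z=\idvec$, i.e., $z=B(g)(b)^{-1}\idvec$. By \eqref{eq:48} we have $B(g)(b)^{-1}=A(f)(a)$ at the conjugate point $a=(v,x,q)\in\mathbf{A}$ supplied by Theorem~\ref{th:1}, so that $z=A(f)(a)\idvec$. Thus \ref{item:G9} coincides with the bound above, but restricted to those $a\in\mathbf{A}$ arising as conjugates of some $b=(u,1,q)$, namely those $a$ on the section $\{\partial f/\partial x(a)=1\}$.

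The only remaining point is to see that the bound on this section is equivalent to the same bound on the whole of $\mathbf{A}$. This is the scale-invariance content of Remark~\ref{rem:2}: the homogeneity \ref{item:F2} of $f$ in $v$ makes $\partial^2 f/\partial v^m\partial x$ homogeneous of degree $0$ and $\partial^2 f/\partial x^2$ homogeneous of degree $1$ in $v$, so each term in the row-sum expression for $(A(f)(a)\idvec)^m$ is unchanged under $v\mapsto zv$ with $z>0$. Since every $a=(v,x,q)\in\mathbf{A}$ can be rescaled, via $z=1/\partial f/\partial x(a)>0$, onto the section $\{\partial f/\partial x=1\}$ without altering $A(f)(a)\idvec$, the bound on the section is automatically equivalent to the bound on all of $\mathbf{A}$. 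This establishes \ref{item:F9}$\Leftrightarrow$\ref{item:G9}.

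I expect the only subtle step to be precisely this scale matching between the full parametrization $a\in\mathbf{A}$ used in \ref{item:F9} and the restricted one $b=(u,1,q)\in\mathbf{B}$ used in \ref{item:G9}; once the $v$-scale invariance of $A(f)(a)\idvec$ is noted, the proof is a direct application of Lemma~\ref{lem:10} and the inversion formula~\eqref{eq:48}.
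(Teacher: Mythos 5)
Your proposal is correct and follows essentially the same route as the paper: rewrite \ref{item:F9} via Lemma~\ref{lem:10} as $\frac1c \leq (A(f)(a)\idvec)^m \leq c$ and then identify the vector $z$ in \ref{item:G9} with $A(f)(a)\idvec$ through the inversion relation~\eqref{eq:48}. Your additional discussion of the $y=1$ normalization versus general $a\in\mathbf{A}$ is a correct elaboration of a point the paper leaves implicit (the scale invariance of $A(f)$ and $B(g)$ noted in Remark~\ref{rem:2}), but it does not change the argument.
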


\begin{proof}
  By Lemma~\ref{lem:10}, the condition \ref{item:F9} can be
  equivalently stated as
  \begin{displaymath}
    \frac1c \leq (A(f)(a)\idvec)^m \leq c, \quad m=1,\dots, M, 
  \end{displaymath}
  and the result follows from the inverse relation \eqref{eq:48}
  between the matrices $A(f)(a)$ and $B(g)(b)$.
\end{proof}

\subsection{The spaces $\widetilde{\mathbf{F}}^1$,
  $\widetilde{\mathbf{G}}^1$ and $\widetilde{\mathbf{F}}^2(c)$,
  $\widetilde{\mathbf{G}}^2(c)$}
\label{sec:spaces-tilde}

To simplify future references we define the following families of functions:
\begin{align*}
  \widetilde{\mathbf{F}}^1 \set \descr{f\in
    \mathbf{F}^1}{\text{\ref{item:F6} holds}}, \\
  \widetilde{\mathbf{G}}^1 \set \descr{g\in
    \mathbf{G}^1}{\text{\ref{item:G6} holds}},
\end{align*}
and, for a constant $c>0$,
\begin{align*}
  \widetilde{\mathbf{F}}^2(c) \set \descr{f\in \mathbf{F}^2 \cap
    \widetilde{\mathbf{F}}^1}{\text{\ref{item:F7}--\ref{item:F9} hold
      for given $c$}}, \\
  \widetilde{\mathbf{G}}^2(c) \set \descr{g\in \mathbf{G}^2\cap
    \widetilde{\mathbf{G}}^1}{\text{\ref{item:G7}--\ref{item:G9} hold
      for given $c$}}.
\end{align*}
Note that when $M=1$ the conditions~\ref{item:F6} and~\ref{item:G6}
hold trivially; in particular, $\widetilde{\mathbf{F}}^1=\mathbf{F}^1$
and $\widetilde{\mathbf{G}}^1= \mathbf{G}^1$.

From Theorems~\ref{th:1}, ~\ref{th:2}, and~\ref{th:5} we immediately
obtain

\begin{Theorem}
  \label{th:6}
  A function $\map{f}{\mathbf{A}}{(-\infty,0)}$ belongs to
  $\widetilde{\mathbf{F}}^1$ if and only if it is conjugate to a
  function $g\in \widetilde{\mathbf{G}}^1$ in the sense of
  \eqref{eq:22} and \eqref{eq:23}. Moreover, if $c>0$, then $f\in
  \widetilde{\mathbf{F}}^2(c)$ if and only if $g\in
  \widetilde{\mathbf{G}}^2(c)$.
\end{Theorem}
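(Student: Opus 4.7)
The plan is to recognize that Theorem~\ref{th:6} is a direct assembly of the earlier Theorems~\ref{th:1}, \ref{th:2}, and \ref{th:5}, with the definitions of $\widetilde{\mathbf{F}}^i$ and $\widetilde{\mathbf{G}}^i$ chosen precisely to package the extra conditions whose equivalence under conjugation was already proved. There is no genuinely new argument to make; the work is to verify, in both directions, that each defining condition of the ``tilde'' spaces has been matched to its conjugate counterpart.

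For the first assertion, I would argue as follows. If $f\in \widetilde{\mathbf{F}}^1$, then in particular $f\in \mathbf{F}^1$, so by Theorem~\ref{th:1} there is a unique $g\in \mathbf{G}^1$ conjugate to $f$ via \eqref{eq:22}--\eqref{eq:23}. Since $f$ additionally satisfies \ref{item:F6}, Theorem~\ref{th:5} gives that $g$ satisfies \ref{item:G6}, hence $g\in \widetilde{\mathbf{G}}^1$. Conversely, starting from $g\in \widetilde{\mathbf{G}}^1 \subset \mathbf{G}^1$, Theorem~\ref{th:1} produces a conjugate $f\in \mathbf{F}^1$, and then Theorem~\ref{th:5} transports \ref{item:G6} back to \ref{item:F6}, so $f\in \widetilde{\mathbf{F}}^1$. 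When $M=1$, the conditions \ref{item:F6} and \ref{item:G6} are vacuous, consistent with the identifications $\widetilde{\mathbf{F}}^1 = \mathbf{F}^1$ and $\widetilde{\mathbf{G}}^1 = \mathbf{G}^1$ noted in the text.

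For the second assertion, I would combine Theorem~\ref{th:2} (which gives $f\in\mathbf{F}^2 \iff g\in\mathbf{G}^2$ under the conjugacy) with the three pairwise equivalences \ref{item:F7}$\iff$\ref{item:G7}, \ref{item:F8}$\iff$\ref{item:G8}, and \ref{item:F9}$\iff$\ref{item:G9} supplied by Theorem~\ref{th:5}. Together with the first assertion of the current theorem handling \ref{item:F6}$\iff$\ref{item:G6}, this yields $f\in \widetilde{\mathbf{F}}^2(c) \iff g\in \widetilde{\mathbf{G}}^2(c)$ for every fixed $c>0$, since the constant $c$ in the bounds is preserved by each of those equivalences (as inspection of the proofs of Lemmas~\ref{lem:13}--\ref{lem:15} confirms: they rely on the pointwise identities \eqref{eq:48} and the positive homogeneity of $g$, neither of which introduces a new constant).

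There is no real obstacle; the only thing to be careful about is simply to quote each ingredient in the right direction and to observe that the constant $c$ is tracked unchanged through Theorem~\ref{th:5}. Accordingly the written proof can be a single short paragraph referring to Theorems~\ref{th:1}, \ref{th:2}, and \ref{th:5}.
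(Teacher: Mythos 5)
Your proposal is correct and matches the paper exactly: the paper simply records Theorem~\ref{th:6} as an immediate consequence of Theorems~\ref{th:1}, \ref{th:2}, and \ref{th:5}, which is precisely the assembly you describe. Your additional remarks on the direction of each equivalence and the preservation of the constant $c$ are sound and consistent with the proofs of Lemmas~\ref{lem:13}--\ref{lem:15}.
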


\section{Aggregate utility function}
\label{sec:repr-mark-maker}

Recall that the aggregate utility function $r=r(v,x)$ is given by
\begin{equation}
  \label{eq:61}
  r(v,x) \set \sup_{x^1 + \dots + x^M = x} \sum_{m=1}^M v^m u_m(x^m), \quad
  v\in (0,\infty)^M, x\in \mathbf{R}.
\end{equation}
Theorems~\ref{th:7} and~\ref{th:8} below identify $r=r(v,x)$ as an
element of $\widetilde{\mathbf{F}}^1$ and
$\widetilde{\mathbf{F}}^2(c)$ under Assumptions~\ref{as:1} and
\ref{as:2}, respectively. Note that throughout this section we
interpret these families of functions, defined in
Section~\ref{sec:spaces-tilde}, in the sense of Remark~\ref{rem:1}.

\begin{Theorem}
  \label{th:7}
  Under Assumption~\ref{as:1} the function $r = r(v,x)$ belongs to
  $\widetilde{\mathbf{F}}^1$. Moreover, for every $(v,x)\in
  (0,\infty)^M \times \mathbf{R}$, the supremum in \eqref{eq:61} is
  attained at the vector $\widehat{x} \in \mathbf{R}^M$ uniquely
  determined by \eqref{eq:62} or, equivalently, \eqref{eq:63} below:
  \begin{align}
    \label{eq:62}
    v^m u'_m(\widehat{x}^m) &= \frac{\partial r}{\partial x} (v,x), \\
    \label{eq:63}
    u_m(\widehat{x}^m) & = \frac{\partial r}{\partial v^m} (v,x),
    \quad m=1, \dots, M.
  \end{align}
\end{Theorem}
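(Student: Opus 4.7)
\bigskip
\noindent\emph{Proof proposal.} The plan is to first solve, for each fixed $(v,x)\in(0,\infty)^M\times\mathbf{R}$, the finite-dimensional Pareto problem defining $r(v,x)$, and then to read off both \eqref{eq:62}--\eqref{eq:63} and the properties characterising $\widetilde{\mathbf{F}}^1$ from the behaviour of its unique maximizer.

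Under Assumption~\ref{as:1}, the objective $(x^1,\dots,x^M)\mapsto\sum_m v^m u_m(x^m)$ is strictly concave on the hyperplane $\braces{\sum_m x^m=x}$ and, because of \eqref{eq:1}--\eqref{eq:2} and $u_m<0$, coercive there: if any coordinate of a sequence tends to $-\infty$ the objective tends to $-\infty$, while the linear constraint prevents any coordinate from tending to $+\infty$ without another tending to $-\infty$. Hence a unique maximizer $\widehat{x}(v,x)$ exists, and Lagrangian duality for the linear constraint produces a multiplier $\lambda(v,x)$ with $v^m u'_m(\widehat{x}^m)=\lambda(v,x)$ for all $m$. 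Since $r(v,\cdot)$ is a sup-convolution of concave functions (hence concave) and $r(\cdot,x)$ is a supremum of linear functions (hence convex), uniqueness of $\widehat{x}$ combined with the envelope/Danskin theorem yields $\partial r/\partial x(v,x)=\lambda(v,x)$ and $\partial r/\partial v^m(v,x)=u_m(\widehat{x}^m)$, which are exactly \eqref{eq:62}--\eqref{eq:63}. Continuity of $\widehat{x}(v,x)$ via Berge's maximum theorem, together with continuity of $u_m$ and $u'_m$, then gives \ref{item:F1}.

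The verification of \ref{item:F2}--\ref{item:F4} is then mostly routine. Positive homogeneity in $v$ is immediate from pulling $z>0$ out of the sum, strict monotonicity in $v$ follows from $\partial r/\partial v^m=u_m(\widehat{x}^m)<0$, and strict convexity of $r(\cdot,x)$ on $\mathbf{S}^M$ comes from the sup-of-linear representation combined with the fact that distinct $v_1,v_2\in\mathbf{S}^M$ yield distinct maximizers (otherwise the first-order conditions would force $v_1^m/v_2^m$ to be constant in $m$). The boundary limit \eqref{eq:12} is obtained by exhibiting a feasible allocation that places a very large value on coordinates where the weight stays positive (so that the corresponding $u_l$ approach $0$) and absorbs the remaining cash into coordinates where $w_n^m\to 0$ (so that $w_n^m u_m(\cdot)\to 0$ there), while $r\le 0$ provides the matching upper bound. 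Condition \ref{item:F3} is trivial via Remark~\ref{rem:1}. For \ref{item:F4}, strict concavity of $r(v,\cdot)$ follows by averaging the optimizers for distinct $x_1\neq x_2$ (they are distinct because $\sum_m\widehat{x}_i^m=x_i$) and invoking strict concavity of each $u_m$; strict monotonicity comes from $\partial r/\partial x=\lambda>0$; and $r(v,x)\to 0$ as $x\to\infty$ follows from the feasible allocation $x^m=x/M$ together with \eqref{eq:1}.

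The principal obstacle is \ref{item:F6}. Fix $x\in\mathbf{R}$ and let $(w_n)_{n\geq 1}$ in $\mathbf{S}^M$ converge to some $w_\infty\in\boundary\mathbf{S}^M$; by \eqref{eq:63}, $\sum_{m=1}^M\partial r/\partial v^m(w_n,x)=\sum_{m=1}^M u_m(\widehat{x}^m_n)$ where $\widehat{x}_n\set\widehat{x}(w_n,x)$, and since each $u_m<0$ it suffices to show that some coordinate $u_m(\widehat{x}^m_n)\to-\infty$. I would argue by contradiction: if these quantities stay bounded below, strict monotonicity of $u_m$ implies each $\widehat{x}^m_n$ is bounded below, and the constraint $\sum_m\widehat{x}^m_n=x$ then makes each $\widehat{x}^m_n$ bounded above as well. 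Pass to a subsequence along which $\widehat{x}_n\to\widehat{x}\in\mathbf{R}^M$. For an index $m_0$ with $w_n^{m_0}\to 0$, continuity of $u'_{m_0}$ keeps $u'_{m_0}(\widehat{x}^{m_0}_n)$ bounded, so the first-order condition forces $\lambda_n=w_n^{m_0}u'_{m_0}(\widehat{x}^{m_0}_n)\to 0$. For any index $l$ with $w_\infty^l>0$ the same first-order condition then gives $u'_l(\widehat{x}^l_n)=\lambda_n/w_n^l\to 0$, and since $u'_l$ is positive, strictly decreasing, with $u'_l(x)\to 0$ only as $x\to+\infty$, this forces $\widehat{x}^l_n\to+\infty$, contradicting the boundedness established above. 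The delicate point—and what makes this the hard step—is precisely the transfer of information between the vanishing-weight coordinate and the rest of the allocation through the shared multiplier $\lambda_n$.
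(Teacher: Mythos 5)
Your proposal is correct, and its overall skeleton---solve the allocation problem, derive the first-order conditions, obtain \eqref{eq:62}--\eqref{eq:63} from an envelope theorem, and then verify \itemFref{item:F1}--\itemFref{item:F4} and \itemFref{item:F6}---is the same as the paper's; the differences lie in how two sub-steps are carried out. For differentiability, the paper eliminates the constraint by substituting $x^M = x-\sum_{m<M}z^m$, applies its own envelope lemmas (Lemmas~\ref{lem:20} and~\ref{lem:21}), and gets continuity of the gradient for free from the saddle-function fact that existence of partial derivatives everywhere implies $\mathbf{C}^1$ (Theorem 35.8 and Corollary 35.7.1 in \cite{Rock:70}); you instead keep the constraint, introduce the multiplier $\lambda(v,x)$, and deduce \itemFref{item:F1} from continuity of $\widehat x(v,x)$. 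Be aware that classical Danskin and Berge theorems presuppose a compact index/feasible set, which fails on the hyperplane here, so you need the small localization step (uniform coercivity near the unique maximizer, exactly what Lemma~\ref{lem:21} supplies) --- a cosmetic fix, not a gap. For \itemFref{item:F6} your route is genuinely different: the paper argues directly that, by \eqref{eq:12}, each term $w_n^m u_m(\widehat x_n^m)\to 0$, hence $\widehat x_n^k\to+\infty$ for every $k$ with $w^k>0$, and the budget constraint then forces some $\widehat x_n^{m_0}\to-\infty$, so that $\sum_m \frac{\partial r}{\partial v^m}(w_n,x)\le u_{m_0}(\widehat x_n^{m_0})\to-\infty$; you argue by contradiction through the shadow price, showing $\lambda_n=w_n^{m_0}u'_{m_0}(\widehat x_n^{m_0})\to 0$ and hence $u'_l(\widehat x_n^l)\to 0$, i.e.\ $\widehat x_n^l\to+\infty$, for any index $l$ with positive limiting weight, contradicting the boundedness of the allocation. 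Both are valid: the paper's version recycles the already-established boundary limit and avoids subsequence extraction, while yours is self-contained and additionally exhibits the vanishing of the multiplier at the boundary; just phrase the contradiction along subsequences (if the sum does not tend to $-\infty$, pass to a subsequence on which it, and hence each nonpositive summand, is bounded below).
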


Denote by $t_m = t_m(x)$ the risk-tolerance coefficients of the
utility functions $u_m = u_m(x)$:
\begin{equation}
  \label{eq:64}
  t_m(x) \set - \frac{u'_m(x)}{u''_m(x)}=\frac1{a_m(x)}, \quad x\in
  \mathbf{R},\; m=1,\dots,M.
\end{equation}
Hereafter, the symbol $\widehat x = \widehat x(v,x)$ is used to define
the functional dependence of the maximal vector $\widehat x =
(\widehat x^m)_{m=1,\dots,M}$ from Theorem~\ref{th:7} on $v$ and $x$.

\begin{Theorem}
  \label{th:8}
  Under Assumptions~\ref{as:1} and~\ref{as:2} the function $r =
  r(v,x)$ belongs to $\widetilde{\mathbf{F}}^2(c)$ with the same
  constant $c>0$ as in \eqref{eq:3}, the function $\widehat x =
  \widehat x(v,x)$ is continuously differentiable, and, for
  $l,m=1,\dots,M$,
  \begin{align}
    \label{eq:65}
    \frac{\partial \widehat x^m}{\partial x}(v,x) & =
    \frac{t_m(\widehat
      x^m)}{\sum_{k=1}^M t_k(\widehat x^k)}, \\
    \label{eq:66}
    v^l \frac{\partial \widehat x^m}{\partial v^l}(v,x) & = v^m
    \frac{\partial \widehat x^l}{\partial v^m}(v,x) = t_m(\widehat
    x^m)\left(\delta_{lm} - \frac{t_l(\widehat x^l)}{\sum_{k=1}^M
        t_k(\widehat x^k)}\right),
  \end{align}
  where $\delta_{lm} \set \ind{l=m}$ is the Kronecker delta,
  \begin{align}
    \label{eq:67}
    \frac{\partial^2 r}{\partial x^2}(v,x) &= -\frac{\partial
      r}{\partial
      x}(v,x) \frac1{\sum_{k=1}^M t_k(\widehat x^k)}, \\
    \label{eq:68}
    v^m \frac{\partial^2 r}{\partial v^m\partial x}(v,x) &=
    \frac{\partial r}{\partial
      x}(v,x) \frac{t_m(\widehat x^m)}{\sum_{k=1}^M t_k(\widehat x^k)}, \\
    \label{eq:69}
    v^l v^m \frac{\partial^2 r}{\partial v^l \partial v^m}(v,x) &=
    \frac{\partial r}{\partial x}(v,x) t_l(\widehat
    x^l)\left(\delta_{lm} - \frac{t_m(\widehat x^m)}{\sum_{k=1}^M
        t_k(\widehat x^k)}\right),
  \end{align}
  and, for the matrix $A(r)$ in \ref{item:F5},
  \begin{equation}
    \label{eq:70}
    \begin{split}
      A^{lm}(r)(v,x) &\set \frac{v^lv^m}{\frac{\partial r}{\partial
          x}} \left(\frac{\partial^2 r}{\partial v^l\partial v^m} -
        \frac1{\frac{\partial^2 r}{\partial x^2}} \frac{\partial^2
          r}{\partial v^l\partial x}\frac{\partial^2 r}{\partial
          v^m\partial
          x}\right)(v,x) \\
      &= t_l(\widehat x^l) \delta_{lm}.
    \end{split}
  \end{equation}
\end{Theorem}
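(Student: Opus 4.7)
The plan is to upgrade the regularity established in Theorem~\ref{th:7} from $C^1$ to $C^2$ via the implicit function theorem, then to read off the formulas \eqref{eq:65}--\eqref{eq:70} by straightforward differentiation of the first-order conditions \eqref{eq:62}--\eqref{eq:63}, and finally to verify the quantitative bounds \ref{item:F7}--\ref{item:F9} directly from Assumption~\ref{as:2}.

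First I would fix $(v,x)$ and view~\eqref{eq:62} as a system for $\widehat x$ together with the constraint $\sum_m \widehat x^m = x$. Under Assumption~\ref{as:2} the $u_m$ are $C^2$ with $u_m''<0$, so the Jacobian of this system is nondegenerate and the implicit function theorem produces a $C^1$ function $\widehat x = \widehat x(v,x)$. Since $\partial r/\partial x = v^m u_m'(\widehat x^m)$ and $\partial r/\partial v^m = u_m(\widehat x^m)$ by \eqref{eq:62}--\eqref{eq:63}, this immediately gives $r\in C^2$.

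To derive~\eqref{eq:65} and~\eqref{eq:67}, I would differentiate~\eqref{eq:62} in $x$ and use $\sum_m \partial_x \widehat x^m = 1$; rewriting $v^m u_m''(\widehat x^m) = -(\partial r/\partial x)/t_m(\widehat x^m)$ reduces this to a scalar linear equation for $\partial^2 r/\partial x^2$. The same recipe in the direction $v^l$, now using $\sum_m \partial_{v^l}\widehat x^m = 0$, yields~\eqref{eq:66} and~\eqref{eq:68}. Then differentiating~\eqref{eq:63} gives $\partial^2 r/(\partial v^l\partial v^m) = u_m'(\widehat x^m)\,\partial_{v^l}\widehat x^m$, which combined with~\eqref{eq:66} produces~\eqref{eq:69} (the symmetry of this expression under $l\leftrightarrow m$ reconfirms Schwarz's identity). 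Plugging these formulas into~\eqref{eq:14} and cancelling the two cross terms proportional to $t_l t_m/\sum_k t_k$ leaves the diagonal form~\eqref{eq:70}.

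The remaining checks are then essentially free. Property \ref{item:F5} holds because $\partial^2 r/\partial x^2<0$ by~\eqref{eq:67} (all $t_k>0$ and $\partial r/\partial x>0$) and $A(r)$ is diagonal with strictly positive entries $t_l(\widehat x^l)$. For \ref{item:F7}, the ratio $-v^m(\partial r/\partial v^m)/(\partial r/\partial x)$ equals $-u_m(\widehat x^m)/u_m'(\widehat x^m)\in[1/c,c]$ by~\eqref{eq:4}. For \ref{item:F8}, \eqref{eq:70} gives $\langle z, A(r)(v,x)z\rangle = \sum_m t_m(\widehat x^m)(z^m)^2$ with $t_m = 1/a_m \in[1/c,c]$ by~\eqref{eq:3}. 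For \ref{item:F9}, the quotient $v^m(\partial^2 r/\partial v^m\partial x)/(\partial^2 r/\partial x^2) = -t_m(\widehat x^m)$ by~\eqref{eq:67}--\eqref{eq:68}, again yielding a bound in $[-c,-1/c]$. Finally \ref{item:F6} is already part of $r\in\widetilde{\mathbf{F}}^1$ from Theorem~\ref{th:7}. The only nontrivial step is the bookkeeping in the differentiation stage; its decisive feature is that $\sum_m \partial_x\widehat x^m = 1$ and $\sum_m \partial_{v^l}\widehat x^m = 0$ decouple the linear systems into scalar equations once $v^m u_m''$ is rewritten in terms of $t_m$ and $\partial r/\partial x$.
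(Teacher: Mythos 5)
Your proposal is correct and follows essentially the same route as the paper: apply the Implicit Function Theorem to the first-order conditions \eqref{eq:62} together with the budget constraint (the paper treats $\widehat x$ and $z=\frac{\partial r}{\partial x}$ jointly as the $(M+1)$-dimensional unknown and explicitly inverts the Jacobian, whereas you differentiate the optimality conditions directly and decouple via $\sum_m \partial_x\widehat x^m=1$ and $\sum_m \partial_{v^l}\widehat x^m=0$, which is the same computation organized differently), then read off \eqref{eq:65}--\eqref{eq:70} and verify \ref{item:F5}, \ref{item:F7}--\ref{item:F9} from Assumption~\ref{as:2} exactly as the paper does.
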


\subsection{Proofs of Theorems \ref{th:7} and \ref{th:8}}
\label{sec:proofs-theor-refth:6}

\begin{proof}[Proof of Theorem~\ref{th:7}.]
  Define the function $\map{g=g(v,x,z)}{(0,\infty)^M \times \mathbf{R}
    \times \mathbf{R}^{M-1}}{\mathbf{R}}$ by
  \begin{displaymath}
    g(v,x,z) \set \sum_{m=1}^{M-1} v^m u_m(z^m) + v^M u_M(x -
    \sum_{m=1}^{M-1} z^m) 
  \end{displaymath}
  and observe that
  \begin{equation}
    \label{eq:71}
    r(v,x) = \sup_{z\in \mathbf{R}^{M-1}} g(v,x,z), \quad v\in (0,\infty)^M,
    x\in \mathbf{R}. 
  \end{equation}
  For every $v\in (0,\infty)^M$, the function $g(v,\cdot,\cdot)$ is
  strictly concave, continuously differentiable, and, by \eqref{eq:1}
  and \eqref{eq:2}, for every $x\in \mathbf{R}$,
  \begin{displaymath}
    \lim_{\abs{z}\to \infty} g(v,x,z) = - \infty. 
  \end{displaymath}
  It follows that the upper bound in \eqref{eq:71} is attained at a
  unique $\widehat z=\widehat z(v,x)$ satisfying, for $m=1,\dots,M-1$,
  \begin{equation}
    \label{eq:72}
    0 = \frac{\partial g}{\partial z^m}(v,x,\widehat z) =
    v^mu'_m(\widehat{z}^m) - v^M u_M'(x - \sum_{k=1}^{M-1}
    \widehat z^k).
  \end{equation}
  Hence, the upper bound in \eqref{eq:61} is attained at the unique
  $\widehat x = (\widehat{x}^m)_{m=1,\dots,M}$ given by
  \begin{align*}
    \widehat{x}^m &= \widehat{z}^m, \quad m=1,\dots,M-1, \\
    \widehat{x}^M &= x - \sum_{m=1}^{M-1} \widehat z^m.
  \end{align*}

  By Lemma~\ref{lem:20} in Appendix~\ref{sec:envelope-theorem} the
  function $r(v,\cdot)$ is concave, differentiable, (hence,
  continuously differentiable), and
  \begin{displaymath}
    \frac{\partial r}{\partial x}(v,x) = \frac{\partial g}{\partial
      x}(v,x,\widehat z) = v^M u_M'(x - \sum_{m=1}^{M-1}
    \widehat z^k)  = v^M u'_M(\widehat x^M),
  \end{displaymath}
  which jointly with \eqref{eq:72} proves \eqref{eq:62}. As $u'_M>0$
  we have $\frac{\partial r}{\partial x}>0$ and, hence, $r(v,\cdot)$
  is strictly increasing. From \eqref{eq:1} we obtain
  \begin{displaymath}
    \lim_{x\to\infty} r(v,x)  = 0. 
  \end{displaymath}
  Finally, the strict concavity of $r(v,\cdot)$ follows directly from
  the strict concavity of $(u_m)_{m=1,\dots,M}$ and the attainability
  of the upper bound in \eqref{eq:61}, thus finishing the verification
  of \ref{item:F4}.

  For $(x,z) \in \mathbf{R}^M$, the function $g(\cdot,x,z)$ is affine
  on $(0,\infty)^M$ and, in particular, convex and continuously
  differentiable.  Hence, by Lemma~\ref{lem:21} in
  Appendix~\ref{sec:envelope-theorem}, the function $r(\cdot,x)$ is
  convex, differentiable, (hence, continuously differentiable), and
  \begin{displaymath}
    \frac{\partial r}{\partial v^m}(v,x) = \frac{\partial g}{\partial
      v^m}(v,x,\widehat z) = u_m(\widehat x^m), \quad m=1,\dots,M, 
  \end{displaymath}
  proving \eqref{eq:63}.  As $u_m<0$, the function $r(\cdot,x)$ is
  strictly decreasing. It is, clearly, positively
  homogeneous. Moreover, if $M>1$ then by \eqref{eq:1}
  \begin{equation}
    \label{eq:73}
    \lim_{n\to\infty} r(w_n,x) = 0, 
  \end{equation}
  for every sequence $(w_n)_{n\geq 1}$ in $\mathbf{S}^M$ converging to
  $w\in \boundary{\mathbf{S}^M}$. Hence, to complete the verification
  of \ref{item:F2} we only need to show the strict convexity of this
  function on $\mathbf{S}^M$.

  Let $w_1$ and $w_2$ be distinct elements of $\mathbf{S}^M$, $w_3$ be
  their midpoint, and $\widehat x_i$ be the points in $\mathbf{R}^M$
  where the upper bound in \eqref{eq:61} is attained for $r(x,w_i)$,
  $i=1,2,3$. From \eqref{eq:62} we deduce that the points $(\widehat
  x_i)_{i=1,2,3}$ are distinct and, hence,
  \begin{align*}
    r(w_3,x) &= \sum_{m=1}^M w_3^m u_m(\widehat x^m_3) =
    \frac12\left(\sum_{m=1}^M w_1^m u_m(\widehat x^m_3) + \sum_{m=1}^M
      w_2^m
      u_m(\widehat x^m_3)  \right) \\
    & < \frac12\left(\sum_{m=1}^M w_1^m u_m(\widehat x^m_1) +
      \sum_{m=1}^M w_2^m u_m(\widehat x^m_2) \right) =
    \frac12\left(r(x,w_1) + r(x,w_2) \right).
  \end{align*}
  This finishes the verification of \ref{item:F2}.

  As we have already shown, $r=r(v,x)$ is a saddle function with
  well-defined partial derivatives at every point. In this case, $r$
  is continuously differentiable, see Theorem 35.8 and Corollary
  35.7.1 in \cite{Rock:70}, and, hence, satisfies \ref{item:F1}.

  With \ref{item:F3} following trivially from \ref{item:F4}, to
  complete the proof, we only have to verify \ref{item:F6}.  Assume
  $M>1$ and let $(w_n)_{n\geq 1}$ be a sequence in $\mathbf{S}^M$
  converging to $w\in \boundary{\mathbf{S}^M}$. For $n\geq 1$ denote
  by $\widehat x_n\in \mathbf{R}^M$ the maximal allocation of $x$
  corresponding to $w_n$.  In view of \eqref{eq:73}, $\lim_{n\to
    \infty}\widehat x^k_n = \infty$ for every index $k$ with $w^k>0$.
  As $\sum_{m=1}^M \widehat{x}^m_n = x$, there is an index $m_0$ such
  that $\lim_{n\to \infty}\widehat x^{m_0}_n = -\infty$ and,
  therefore, accounting for \eqref{eq:63} and \eqref{eq:2},
  \begin{displaymath}
    \lim_{n\to \infty} \sum_{m=1}^M \frac{\partial r}{\partial v^m}(w^m_n,x)
    \leq \lim_{n\to \infty} \frac{\partial r}{\partial v^{m_0}}(w^{m_0}_n,x)
    = \lim_{n\to \infty} u_{m_0}(\widehat x^{m_0}_n) = -\infty.
  \end{displaymath}
\end{proof}

\begin{proof}[Proof of Theorem~\ref{th:8}.] The proof relies on the
  Implicit Function Theorem. Define the function
  $\map{h=h(v,x,y,z)}{(0,\infty)^M\times \mathbf{R} \times
    \mathbf{R}^M \times \mathbf{R}}{\mathbf{R}^{M+1}}$ by
  \begin{align*}
    h^m(v,x,y,z) &= z - v^m u'_m(y^m), \quad m=1,\dots,M, \\
    h^{M+1}(v,x,y,z) &= \sum_{m=1}^M y^m - x,
  \end{align*}
  and observe that, by Theorem~\ref{th:7},
  \begin{displaymath}
    h\left(v,x,\widehat x(v,x),\frac{\partial r}{\partial x}(v,x)\right) = 0 ,
    \quad (v,x) \in (0,\infty)^M\times \mathbf{R}. 
  \end{displaymath}

  Fix $(v_0,x_0)$, set $y_0 \set \widehat x(v_0,x_0)$, $z_0 \set
  \frac{\partial r}{\partial x}(v_0,x_0)$, and denote by $B =
  (B^{kl})_{k,l = 1,\dots,M+1}$ the Jacobian of
  $h(v_0,x_0,\cdot,\cdot)$ evaluated at $(y_0,z_0)$.  Accounting for
  the fact that
  \begin{displaymath}
    v^m_0 u'_m(y^m_0) = z_0, \quad m=1,\dots,M,
  \end{displaymath}
  we deduce
  \begin{align*}
    B^{kl} &= B^{lk} = -v^k_0 u''_k(y^k_0) \delta_{kl} =
    \frac{z_0}{t_k(y^k_0)} \delta_{kl}, \quad
    k,l=1,\dots,M, \\
    B^{(M+1)m} &= B^{m(M+1)} = 1, \quad m=1,\dots,M, \\
    B^{(M+1)(M+1)} &= 0.
  \end{align*}

  Direct computations show that the inverse matrix $C \set B^{-1}$ 
  is given by
  \begin{align*}
    C^{kl} & = C^{lk} = \frac{t_k(y^k_0)}{z_0} \left(\delta_{kl} -
      \frac{t_l(y^l_0)}{\sum_{i=1}^M t_i(y^i_0)} \right), \quad
    k,l=1,\dots,M,
    \\
    C^{(M+1)m } &= C^{m(M+1)} = \frac{t_m(y^m_0)}{\sum_{i=1}^M
      t_i(y^i_0)},
    \quad m=1,\dots,M, \\
    C^{(M+1)(M+1)} &= -\frac{z_0}{\sum_{i=1}^M t_i(y^i_0)}.
  \end{align*}
  Since, for $m=1,\dots,M+1$ and $l=1,\dots,M$,
  \begin{align*}
    \frac{\partial h^m}{\partial x}(v_0,x_0,y_0,z_0) &= - \delta_{m(M+1)}, \\
    v^l \frac{\partial h^m}{\partial v^l}(v_0,x_0,y_0,z_0) &= - v^m
    u'_m(y^m_0) \delta_{lm} = - z_0 \delta_{lm},
  \end{align*}
  the Implicit Function Theorem implies the continuous
  differentiability of the functions $\widehat x = \widehat x(v,x)$
  and $\frac{\partial r}{\partial x}=\frac{\partial r}{\partial
    x}(v,x)$ in the neighborhood of $(v_0,x_0)$ and the identities:
  \begin{align*}
    \frac{\partial \widehat x^m}{\partial x}(v_0,x_0) &=
    -\sum_{k=1}^{M+1}
    C^{mk} \frac{\partial h^k}{\partial x}(v_0,x_0,y_0,z_0) = C^{m(M+1)}, \\
    v^l\frac{\partial \widehat x^m}{\partial v^l}(v_0,x_0) &=
    -\sum_{k=1}^{M+1} C^{mk} v^l \frac{\partial h^k}{\partial
      v^l}(v_0,x_0,y_0,z_0) = z_0 C^{ml},\\
    \frac{\partial^2 r}{\partial x^2}(v_0,x_0) &= -\sum_{k=1}^{M+1}
    C^{(M+1)k} \frac{\partial h^k}{\partial x}(v_0,x_0,y_0,z_0) =
    C^{(M+1)(M+1)}, \\
    v^l\frac{\partial^2 r}{\partial x\partial v^l}(v_0,x_0) &=
    -\sum_{k=1}^{M+1} C^{(M+1)k} v^l \frac{\partial h^k}{\partial
      v^l}(v_0,x_0,y_0,z_0) = z_0 C^{(M+1)l},
  \end{align*}
  proving \eqref{eq:65}--\eqref{eq:66} and
  \eqref{eq:67}--\eqref{eq:68}.

  The continuous differentiability of $\frac{\partial r}{\partial v} =
  \frac{\partial r}{\partial v}(v,x)$ with respect to $v$ and the
  identity \eqref{eq:69} follow from \eqref{eq:63} and
  \eqref{eq:66}. Direct computations relying on \eqref{eq:67},
  \eqref{eq:68}, and \eqref{eq:69} lead to the expression
  \eqref{eq:70} for $A(r)$, which jointly with \eqref{eq:67} implies
  the validity of \ref{item:F5} for $r=r(v,x)$.

  Finally, accounting for~\eqref{eq:4} and observing that~\eqref{eq:3}
  can be equivalently stated as
  \begin{displaymath}
    \frac1c \leq t_m(x) \leq c, \quad x\in \mathbf{R}, \; m=1,\dots,M,
  \end{displaymath} 
  we deduce that, for the function $r=r(v,x)$, the property
  \ref{item:F7} follows from~\eqref{eq:62} and~\eqref{eq:63},
  \ref{item:F8} is implied by~\eqref{eq:70}, and \ref{item:F9} follows
  from~\eqref{eq:67} and~\eqref{eq:68}.
\end{proof}

\section{Stochastic field of aggregate utilities and its conjugate}
\label{sec:proc-indir-util}

We remind the reader of some terminology. For a set $A\subset
\mathbf{R}^d$ a map $\map{\xi}{A}{\mathbf{L}^0(\mathbf{R}^n)}$ is
called a \emph{random field}; $\xi$ is continuous, convex, etc., if
its \emph{sample paths} $\map{\xi(\omega)}{A}{\mathbf{R}^n}$ are
continuous, convex, etc., for all $\omega\in \Omega$.  If $\xi$ and
$\eta$ are random fields on $A$ then $\eta$ is a \emph{modification}
of $\xi$ if $\xi(x) = \eta(x)$ for every $x\in A$.  A random field
$\map{X}{A\times [0,T]}{\mathbf{L}^0(\mathbf{R}^n)}$ is called a
\emph{stochastic field} if, for $t\in [0,T]$, $\map{X_t \set
  X(\cdot,t)}{A}{\mathbf{L}^0(\mathcal{F}_t,\mathbf{R}^n)}$, that is,
the random variable $X_t$ is $\mathcal{F}_t$-measurable.

Recall that the stochastic field of aggregate utilities is given by
\begin{align*}
  F_t(a) &\set \mathbb{E}[r(v,\Sigma(x,q))|\mathcal{F}_t], \quad
  a=(v,x,q)\in \mathbf{A},\; t\in [0,T],
\end{align*}
and its saddle conjugate with respect to $(v,x)$ is defined as
\begin{equation}
  \label{eq:74}
  G_t(b) \set \sup_{v\in (0,\infty)^M}\inf_{x\in
    \mathbf{R}}[\ip{v}{u} + xy - F_t(v,x,q)], \quad b=(u,y,q) \in
  \mathbf{B}.
\end{equation}
The sample paths of these stochastic fields are described in
Theorems~\ref{th:9} and~\ref{th:10} which constitute the main results
of the paper.

By $\mathbf{D}([0,T], \mathbf{X})$ we denote the space of RCLL
(right-continuous with left limits) maps of $[0,T]$ into a metric
space $\mathbf{X}$. Hereafter, for $i=1,2$, we view $\mathbf{F}^i$ as
topological subspaces of the Fr\'echet spaces
$\mathbf{C}^i(\mathbf{A})$ defined at the beginning of
Section~\ref{sec:stab-under-conv}.  A similar convention is also used
for $\widetilde{\mathbf{F}}^1$, $\widetilde{\mathbf{G}}^1$ and
$\widetilde{\mathbf{F}}^2(c)$, $\widetilde{\mathbf{G}}^2(c)$.

\begin{Theorem}
  \label{th:9}
  Suppose Assumption~\ref{as:1} and condition~\eqref{eq:5} hold. Then
  the stochastic fields $F= F_t(a)$ and $G=G_t(b)$ have modifications
  with sample paths in $\mathbf{D}([0,T], \widetilde{\mathbf{F}}^1)$
  and $\mathbf{D}([0,T], \widetilde{\mathbf{G}}^1)$, respectively, and
  their left-limits $F_{t-}(\cdot)$ and $G_{t-}(\cdot)$ are conjugate
  to each other as in~\eqref{eq:74}.  Moreover, for every compact set
  $C\subset \mathbf{A}$
  \begin{equation}
    \label{eq:75}
    \mathbb{E}[\norm{F_T(\cdot)}_{1,C}]<\infty, 
  \end{equation}
  and, for $a=(v,x,q) \in \mathbf{A}$, $t\in [0,T]$, and
  $i=1,\dots,M+1+J$,
  \begin{equation}
    \label{eq:76}
    \frac{\partial F_t}{\partial a^i}(a) = 
    \mathbb{E}[\frac{\partial F_T}{\partial a^i}(a)|\mathcal{F}_t]. 
  \end{equation}
\end{Theorem}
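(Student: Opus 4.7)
By Theorem~\ref{th:7} the aggregate utility $r$ lies in $\widetilde{\mathbf{F}}^1$, and $\Sigma(x,q)$ is affine in $(x,q)$, so the integrand $r(v,\Sigma(x,q))$ inherits in the spatial variables $(v,x,q)$ all the structural properties in \ref{item:F2}--\ref{item:F4} and \ref{item:F6} (positive homogeneity in $v$, strict monotonicity and strict concavity in $x$, joint concavity in $(x,q)$, strict convexity on $\mathbf{S}^M$, limits at $\infty$ and on the boundary). These properties are all preserved under conditional expectation (with monotone/dominated convergence handling the boundary and limit conditions), so for each fixed $t$ the sample paths of $F_t$ satisfy \ref{item:F2}--\ref{item:F4} and \ref{item:F6} almost surely, \emph{apart from} continuous differentiability. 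The first main task is therefore to establish \ref{item:F1} together with~\eqref{eq:75} and~\eqref{eq:76}, by justifying the interchange
$$\frac{\partial F_t}{\partial a^i}(a) = \mathbb{E}\Bigl[\tfrac{\partial}{\partial a^i}r(v,\Sigma(x,q))\,\Big|\,\mathcal{F}_t\Bigr].$$
The dominating functions will be produced from the fact that, by concavity (resp.\ convexity) of $r$ in $x$ (resp.\ $v$), the difference quotients of $r(v,\Sigma(x,q))$ on a compact $C\subset \mathbf{A}$ are controlled by the gradients at points of a slightly larger compact $C'$, which in turn are controlled (Theorem~24.7 of~\cite{Rock:70}) by values of $r(v',\Sigma(x',q'))$ at finitely many corners of $C'$. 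These values are integrable by~\eqref{eq:5}, giving both~\eqref{eq:75} and the dominated-convergence justification for~\eqref{eq:76}.

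The second step is to produce a joint modification with paths in $\mathbf{D}([0,T],\widetilde{\mathbf{F}}^1)$. For each $a$ in a countable dense set $D \subset \mathbf{A}$ the process $(F_t(a))_{t\in [0,T]}$ is a bounded-in-$L^1$ martingale, hence admits an RCLL version. Combining these countably many pointwise RCLL versions with the uniform concavity/convexity bounds of Step~1 and the sample-path criteria for saddle-type stochastic fields from Appendices~\ref{sec:unif-integr-saddle} and~\ref{sec:modif-rand-fields} yields a single modification of $F$ whose paths a.s.\ lie in $\mathbf{D}([0,T],\widetilde{\mathbf{F}}^1)$; the left limits $F_{t-}(\cdot)$ then lie a.s.\ in $\widetilde{\mathbf{F}}^1$ by the closedness of the defining conditions under $\mathbf{C}^1$-limits.

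Finally, the transfer to $G$ uses stability of the saddle conjugation. Theorem~\ref{th:6} says the conjugate of any element of $\widetilde{\mathbf{F}}^1$ lies in $\widetilde{\mathbf{G}}^1$, so $G_t(\cdot,\omega)\in \widetilde{\mathbf{G}}^1$ pathwise; and Theorem~\ref{th:3} says conjugation is a homeomorphism from $\mathbf{F}^1$ equipped with $\mathbf{C}^1(\mathbf{A})$-topology onto $\mathbf{G}^1$ equipped with $\mathbf{C}^1(\mathbf{B})$-topology. Applying this $\omega$-wise to $t\mapsto F_t(\cdot,\omega)$ transports the RCLL structure, so $G$ has a modification in $\mathbf{D}([0,T],\widetilde{\mathbf{G}}^1)$ and $G_{t-}$ is the conjugate of $F_{t-}$ for every $t$.

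\emph{Main obstacle.} The delicate issue is Step~1: obtaining $\omega$-wise continuous differentiability of $F_t$ with uniform control of derivatives on compacta, which is what both~\eqref{eq:75} and~\eqref{eq:76} require and what the pasting argument in Step~2 needs in order to aggregate countably many RCLL versions into a jointly RCLL $\widetilde{\mathbf{F}}^1$-valued process. This cannot be done by naive pointwise differentiation; it must rely on convex-analytic subgradient bounds that convert local Lipschitz/gradient control into dominating functions integrable against $\mathbb{P}$.
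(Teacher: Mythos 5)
Your plan follows the paper's own route: identify $F_T(\cdot)=r(\cdot,\Sigma(\cdot,\cdot))$ via Theorem~\ref{th:7}, get \eqref{eq:75} from the convex-analytic domination argument (this is exactly the content of Theorem~\ref{th:12}), get the $\mathbf{D}([0,T],\mathbf{C}^1)$-modification and \eqref{eq:76} from Lemma~\ref{lem:25}, upgrade to $\widetilde{\mathbf{F}}^1$-valued paths via the criteria of Lemmas~\ref{lem:26} and~\ref{lem:27}, and transfer to $G$ by Theorems~\ref{th:6} and~\ref{th:3}. So the architecture is the same. However, one of your justifications is wrong: $\widetilde{\mathbf{F}}^1$ is \emph{not} closed under $\mathbf{C}^1(\mathbf{A})$-limits. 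Strict concavity in $x$, strict monotonicity, strict convexity on $\mathbf{S}^M$, and the boundary/limit conditions \eqref{eq:12}, \eqref{eq:13} and \ref{item:F6} can all be lost along a locally uniformly convergent sequence, so "closedness of the defining conditions under $\mathbf{C}^1$-limits" cannot be the reason the left limits $F_{t-}$ lie in $\widetilde{\mathbf{F}}^1$. In the paper this is precisely where the martingale structure is used: the strict properties hold simultaneously for all $t$ and for left limits because a martingale with strictly positive terminal value stays strictly positive on all of $[0,T]$ (items (C5)--(C7) of Lemma~\ref{lem:26}, after re-parameterization), and the decay conditions hold uniformly in $t$ by Doob's inequality (items \ref{item:22}--\ref{item:23} of Lemma~\ref{lem:27}). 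If you rely on the appendix criteria, as you do earlier in the same sentence, the left-limit membership is already part of their conclusion and needs no closedness argument; but the closedness claim itself should be removed, since it is false and it is exactly the delicate point of the theorem.

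A second, smaller gap is that applying Lemma~\ref{lem:27} is not free: its hypothesis \eqref{eq:104} has to be verified separately for each of the limit conditions. The paper does this by exploiting monotonicity of $F_T$ in $v$ (so $F_T(w,x,q)>F_T(\idvec,x,q)$ on $\mathbf{S}^M$) for \eqref{eq:12}, by restricting $x$ to $[0,\infty)$ and using monotonicity in $x$ for \eqref{eq:13}, by noting $\frac{\partial F}{\partial v}<0$ for \ref{item:F6}, and by Lemma~\ref{lem:16} to match \ref{item:F6} with condition \ref{item:23}. Your appeal to "monotone/dominated convergence" at fixed $t$ does not substitute for these checks, because the theorem needs the conditions along the whole RCLL path, not $t$ by $t$ outside a $t$-dependent null set. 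With these two points repaired, your argument coincides with the paper's proof.
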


\begin{Theorem}
  \label{th:10}
  Suppose Assumptions~\ref{as:1} and~\ref{as:2} and the
  condition~\eqref{eq:5} hold. Then the stochastic fields $F= F_t(a)$
  and $G=G_t(b)$ have modifications with sample paths in
  $\mathbf{D}([0,T],\widetilde{\mathbf{F}}^2(c))$ and
  $\mathbf{D}([0,T],\widetilde{\mathbf{G}}^2(c))$, respectively, with
  the constant $c>0$ from Assumption~\ref{as:2}. Moreover, for every
  compact set $C\subset \mathbf{A}$
  \begin{equation}
    \label{eq:77}
    \mathbb{E}[\norm{F_T(\cdot)}_{2,C}]<\infty, 
  \end{equation}
  and, for $a=(v,x,q) \in \mathbf{A}$, $t\in [0,T]$, and
  $i,j=1,\dots,M+1+J$,
  \begin{equation}
    \label{eq:78}
    \frac{\partial^2 F_t}{\partial a^i\partial a^j}(a) =
    \mathbb{E}[\frac{\partial^2 F_T}{\partial a^i \partial
      a^j}(a)|\mathcal{F}_t].  
  \end{equation}
\end{Theorem}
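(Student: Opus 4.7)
The plan extends the proof of Theorem~\ref{th:9} by carrying second-order regularity through the same pipeline. First, by Theorem~\ref{th:8}, $r\in\widetilde{\mathbf{F}}^2(c)$ with the explicit formulas \eqref{eq:65}--\eqref{eq:70}, and Assumption~\ref{as:2} combined with \eqref{eq:4} bounds every first and second partial of $r(v,y)$ by a fixed multiple of $|r(v,y)|$. Since $(x,q)\mapsto\Sigma(x,q)$ is affine and $v$-independent, $\tilde r(v,x,q,\omega):=r(v,\Sigma(x,q)(\omega))$ inherits membership in $\widetilde{\mathbf{F}}^2(c)$ for each $\omega$, with $A(\tilde r)=\mathrm{diag}(t_m(\widehat x^m(v,\Sigma)))$ having eigenvalues in $[1/c,c]$. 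A monotonicity--covering argument (using monotonicity of $|r(v,\cdot)|$ in $y$, of $|r(\cdot,y)|$ in each $v^m$, and attainment of the extremes of the affine map $\Sigma$ on any compact polytope at its vertices) combined with \eqref{eq:5} dominates $\|\tilde r\|_{2,C}$ on any compact $C\subset\mathbf{A}$ by an integrable random variable, giving \eqref{eq:77}. The criteria of Appendices~\ref{sec:unif-integr-saddle}--\ref{sec:modif-rand-fields} then produce a modification of $F_t$ with sample paths in $\mathbf{D}([0,T],\mathbf{C}^2(\mathbf{A}))$, and $L^1$-dominated convergence on difference quotients yields \eqref{eq:76} and \eqref{eq:78}.

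Theorem~\ref{th:9} already supplies \ref{item:F1}--\ref{item:F4} and \ref{item:F6} for the sample paths of $F_t$. Conditions \ref{item:F7} and \ref{item:F9} are \emph{linear} inequalities in the derivatives of $F_t$, which via \eqref{eq:76}--\eqref{eq:78} are conditional expectations of the corresponding derivatives of $\tilde r$; they therefore transfer with the same constant $c$ from the pointwise validity for $\tilde r$.

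The matrix conditions \ref{item:F5} and \ref{item:F8} on the nonlinear functional $A(F_t)$ are the principal obstacle. The strategy uses the Schur-complement max-representation
\[
\langle\zeta,\widetilde A(f)\zeta\rangle \;=\; \max_{w\in\mathbf{R}}\bigl[\,w^2\,\partial^2 f/\partial x^2 + 2w\,\langle p(f),\zeta\rangle + \langle\zeta,M(f)\zeta\rangle\,\bigr],
\]
where $\zeta^l=v^l z^l$, $M(f)$ is the $v$-Hessian and $p(f)$ the mixed-derivative vector; the key point is that the bracketed expression is \emph{linear} in $f$. For $F_t=\mathbb{E}[\tilde r\mid\mathcal{F}_t]$ the inequality $\max_w\mathbb{E}[\,\cdot\mid\mathcal{F}_t]\le\mathbb{E}[\max_w\,\cdot\mid\mathcal{F}_t]$ combined with the diagonal bound $A(\tilde r)\in[1/c,c]\cdot I$ immediately yields $\langle z,A(F_t)z\rangle\le c\|z\|^2$. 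The matching lower bound $\langle z,A(F_t)z\rangle\ge\|z\|^2/c$ is equivalent, via Theorem~\ref{th:5}, to $\langle z,B(G_t)z\rangle\le c\|z\|^2$ on the conjugate field, and will be established from the analogous max-representation for $B(G_t)$ combined with the explicit form $\tilde g(u,y,q,\omega)=\rho(u,y)-y(\Sigma_0(\omega)+\langle q,\psi(\omega)\rangle)$ of the pointwise saddle conjugate of $\tilde r$ (where $\rho\in\widetilde{\mathbf{G}}^2(c)$ is the deterministic conjugate of $r$ by Theorems~\ref{th:2},~\ref{th:6} applied in light of Theorem~\ref{th:8}), which gives $B(\tilde g)=B(\rho)\in[1/c,c]\cdot I$ pointwise. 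Condition \ref{item:F5} then follows, with $\partial^2F_t/\partial x^2<0$ inherited from $\tilde r$ via the uniform lower bound $|\partial^2\tilde r/\partial x^2|\ge|\partial\tilde r/\partial x|/(Mc)$ from \eqref{eq:67} and strict positivity of conditional expectation. Finally, Theorem~\ref{th:6} transfers $F_t\in\widetilde{\mathbf{F}}^2(c)$ a.s.\ to $G_t\in\widetilde{\mathbf{G}}^2(c)$ a.s., and Theorem~\ref{th:4} lifts the $\mathbf{C}^2$-RCLL modification property from $F$ to $G$, completing the proof.
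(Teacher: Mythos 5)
Your outline reproduces the easier parts of the argument correctly: the reduction of \eqref{eq:77}--\eqref{eq:78} to the pathwise properties of $F_T(\cdot)=r(v,\Sigma(x,q))$ from Theorem~\ref{th:8} together with the martingale machinery of Appendices~\ref{sec:unif-integr-saddle} and~\ref{sec:modif-rand-fields}; the transfer of the linear conditions \ref{item:F7}, \ref{item:F9} and of $\frac{\partial^2 F_t}{\partial x^2}<0$ by conditioning; and the final passage to $G$ via Theorems~\ref{th:6} and~\ref{th:4}. Your Schur-complement max-representation of $\ip{\zeta}{\widetilde A(f)\zeta}$ is also a correct and pleasant way to get the \emph{upper} bound $\ip{z}{A(F_t)(a)z}\le c\abs{z}^2$ from \eqref{eq:78} and Jensen. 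But the proposal breaks down exactly at the point the paper flags as delicate: the \emph{lower} bound in \ref{item:F8} for $t<T$. Your plan is to convert it, via Theorem~\ref{th:5} and \eqref{eq:48}, into the upper bound $\ip{z}{B(G_t)z}\le c\abs{z}^2$ and then to control $B(G_t)$ by the pointwise bound $B(G_T)=B(\rho)\in[\frac1c,c]\cdot I$ through "the analogous max-representation". This cannot work as stated: $G_t$ is the saddle conjugate of $F_t$, not the conditional expectation of $G_T=\tilde g$, and conjugation does not commute with conditional expectation; there is no analogue of \eqref{eq:78} for the second derivatives of $G$, so no max/Jensen argument transports the bound on $B(\tilde g)$ to $B(G_t)$. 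Moreover, invoking Theorem~\ref{th:5} (equivalently the inverse relation \eqref{eq:48}) already presupposes $F_t\in\mathbf{F}^2$, i.e.\ that $A(F_t)$ has full rank --- which is part of \ref{item:F5}, precisely what you are trying to prove --- so the route through the conjugate field is also circular. The paper closes this gap by a dedicated computation (Lemma~\ref{lem:19}): it writes $A(F_t)(a)$ explicitly as a conditional expectation under the measure $\mathbb{R}(a)$ of \eqref{eq:82}, using $A(F_T)=\mathrm{diag}(\tau^m(a))$, the martingale property of $R(a)$, and \eqref{eq:85}, and then proves the lower bound $\frac1c\abs{z}^2$ by the algebraic estimate with $\theta^m=\tau^m-\frac1c\ge 0$ and a Cauchy--Schwarz-type inequality. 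Some argument of this kind is indispensable; Jensen-type reasoning only ever gives the upper bound.

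A secondary, smaller gap is in your derivation of \eqref{eq:77}. Bounding the second partials of $r$ by multiples of $\abs{r}$ reduces the problem to integrability of $\sup_{a\in C}\abs{\psi}^2\,\abs{r(v,\Sigma(x,q))}$ (the $q$-derivatives produce factors $\psi^i\psi^j$), and this does not follow from monotonicity, vertex attainment of the affine map $\Sigma$ on a polytope, and \eqref{eq:5} alone, because $\psi$ is unbounded. One needs the exponential growth estimate for $\frac{\partial r}{\partial x}$ implied by the two-sided risk-aversion bound \eqref{eq:3} (as in the proof of Lemma~\ref{lem:17}, inequality \eqref{eq:81} and the display following it), which absorbs the polynomial factor $1+\abs{\psi}^2$ into an evaluation of $F_T$ over a larger compact set $D$, after which $\norm{F_T}_{1,D}\in\mathbf{L}^1$ from Theorem~\ref{th:9} finishes the argument.
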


Theorems~\ref{th:1}, \ref{th:2}, and \ref{th:5} in
Section~\ref{sec:spac-saddle-funct} allow us to establish various
identities between the first and second derivatives of $F_t(\cdot)$
and $G_t(\cdot)$. We state one such corollary mentioned in
Section~\ref{sec:setup}. Recall that $\mathbf{C}^0$ denotes the space
of continuous functions with the topology of uniform convergence on
compacts sets; see Section~\ref{sec:stab-under-conv}.

\begin{Corollary}
  \label{cor:1}
  Suppose Assumption~\ref{as:1} and the condition~\eqref{eq:5}
  hold. Then the stochastic fields
  \begin{displaymath}
    U^m_t(a) \set \frac{\partial F_t}{\partial v^m}(a), \quad
    a=(v,x,q)\in
    \mathbf{A}, \; m=1,\dots,M, 
  \end{displaymath}
  have sample paths in $\mathbf{D}([0,T],\mathbf{C}^0(\mathbf{A}))$,
  the stochastic fields
  \begin{align*}
    X_t(u,q) &\set G_t(b), \quad
    b=(u,1,q)\in \mathbf{B},\\
    V^m_t(u,q) &\set \frac{\partial G_t}{\partial
      u^m}\frac{1}{\sum_{l=1}^M \frac{\partial G_t}{\partial u^l}}(b),
    \quad b=(u,1,q)\in \mathbf{B}, \; m=1,\dots,M,
  \end{align*}
  have sample paths in $\mathbf{D}([0,T],\mathbf{C}^0((-\infty,0)^M,
  \mathbf{R}^J))$, and the following invertibility relations hold:
  \begin{align*}
    u^m &= U^m_t(V_t(u,q), X_t(u,q),q), \; m=1,\dots,M, \\
    x &= X_t(U_t(v,x,q),1,q), \\
    v^m &= V^m_t(U_t(v,x,q),1,q), \; m=1,\dots,M,
  \end{align*}
  where $u\in (-\infty,0)^M$, $x\in \mathbf{R}$, $v\in \mathbf{S}^M$,
  $q\in \mathbf{R}^J$, and $t\in [0,1]$.

  If, in addition, Assumption~\ref{as:2} holds then these stochastic
  fields have sample paths in $\mathbf{D}([0,T],\mathbf{C}^1)$.
\end{Corollary}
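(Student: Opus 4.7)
The plan is to read off all claims from Theorems~\ref{th:9} and~\ref{th:10} combined with the pointwise conjugacy characterization of Theorem~\ref{th:1}. First, I would note that Theorem~\ref{th:9} gives $F$ a modification with sample paths in $\mathbf{D}([0,T], \widetilde{\mathbf{F}}^1)$, topologized as a subspace of $\mathbf{C}^1(\mathbf{A})$. Convergence in this topology forces uniform-on-compacts convergence of $F_t$ \emph{together with} its first partial derivatives, so the field $U^m_t = \partial F_t/\partial v^m$ inherits continuous sample paths in $a$ and RCLL sample paths in $t$ with values in $\mathbf{C}^0(\mathbf{A})$. The analogous argument applied to the $G$-modification in $\mathbf{D}([0,T], \widetilde{\mathbf{G}}^1) \subset \mathbf{D}([0,T], \mathbf{C}^1(\mathbf{B}))$ shows that $X_t(u,q) = G_t(u,1,q)$ and $V^m_t(u,q)$ are jointly continuous in $(u,q)$ and RCLL in $t$; the denominator $\sum_l \partial G_t/\partial u^l(u,1,q)$ in the definition of $V^m_t$ is strictly positive by the strict monotonicity of $g(\cdot,1,q)$ in~\ref{item:G2}, so the quotient is well-defined and continuous.

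Next, for the invertibility identities I would fix $\omega, t, q$ and apply Theorem~\ref{th:1} to the conjugate pair $F_t(\cdot,\cdot,q) \in \widetilde{\mathbf{F}}^1$ and $G_t(\cdot,\cdot,q) \in \widetilde{\mathbf{G}}^1$. Given $u \in (-\infty,0)^M$, I would set $\widetilde v := \partial G_t/\partial u(u,1,q) \in (0,\infty)^M$ and $x := G_t(u,1,q) = X_t(u,q)$; item~\ref{item:7} of Theorem~\ref{th:1} (with $y=1$) then yields $u = \partial F_t/\partial v(\widetilde v,x,q)$. The positive homogeneity~\eqref{eq:11} of $F_t$ in $v$ forces $U^m_t$ to be degree-zero homogeneous in $v$, so $U^m_t(\widetilde v, x, q) = U^m_t(V_t(u,q), X_t(u,q), q) = u^m$, establishing the first identity. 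Conversely, given $v \in \mathbf{S}^M$ and $x \in \mathbf{R}$, I would set $u := U_t(v,x,q)$ and $y := \partial F_t/\partial x(v,x,q)$; item~\ref{item:6} then gives $x = G_t(u,1,q) = X_t(u,q)$ and $v = \partial G_t/\partial u(u,y,q) = y\,\partial G_t/\partial u(u,1,q)$, where the second equality uses the homogeneity~\eqref{eq:20} of $G_t$. Summing over $m$ and using $v \in \mathbf{S}^M$ gives $y \sum_l \partial G_t/\partial u^l(u,1,q) = 1$, whence $V^m_t(u,q) = v^m$ at once.

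Finally, under Assumption~\ref{as:2}, Theorem~\ref{th:10} upgrades the sample paths of $F$ and $G$ to $\mathbf{D}([0,T], \widetilde{\mathbf{F}}^2(c)) \hookrightarrow \mathbf{D}([0,T], \mathbf{C}^2(\mathbf{A}))$ and analogously for $G$. Since RCLL-in-$\mathbf{C}^2$ of a random field transfers to RCLL-in-$\mathbf{C}^1$ of its first partial derivatives, and since the denominator appearing in $V^m_t$ stays locally bounded away from zero on compacts (so the quotient operation preserves $\mathbf{C}^1$-regularity), the fields $U^m_t$, $X_t$, and $V^m_t$ acquire sample paths in $\mathbf{D}([0,T], \mathbf{C}^1)$. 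I do not anticipate a serious obstacle beyond careful bookkeeping: the analytic heavy lifting is already done inside Theorems~\ref{th:9} and~\ref{th:10}, and the only genuinely new ingredient is the pointwise invertibility, which reduces directly to Theorem~\ref{th:1} and the homogeneity relations~\eqref{eq:11} and~\eqref{eq:20}.
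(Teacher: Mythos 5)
Your proposal is correct and follows essentially the same route as the paper: the paper's proof likewise reads the regularity off Theorems~\ref{th:9} and~\ref{th:10} and derives the invertibility identities from items~\ref{item:6} and~\ref{item:7} of Theorem~\ref{th:1} together with the homogeneity property~\eqref{eq:20} (your use of the degree-zero homogeneity of $\partial F_t/\partial v$ coming from~\eqref{eq:11} is the same normalization step, just made explicit). The details you supply — positivity of the denominator in $V^m_t$ and the $y\sum_l\partial G_t/\partial u^l(u,1,q)=1$ identity from $v\in\mathbf{S}^M$ — are exactly the bookkeeping the paper leaves to the reader.
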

\begin{proof}
  The result follows directly from Theorems~\ref{th:9} and \ref{th:10}
  and the conjugacy relations in items~\ref{item:6} and \ref{item:7}
  of Theorem~\ref{th:1} as soon as we account for the positive
  homogeneity property~\eqref{eq:20} of the elements of
  $\mathbf{G}^1$.
\end{proof}

\begin{Remark}
  \label{rem:3}
  Consider the price impact model from Section~\ref{sec:setup}.
  Recall the definition of Pareto allocation $\pi(a) = (\pi^m(a))$
  from~\eqref{eq:8} and observe that by Theorems~\ref{th:7}
  and~\ref{th:9}
  \begin{displaymath}
    U^m_t(a) = \frac{\partial F_t}{\partial v^m}(a) =
    \mathbb{E}[u_m(\pi^m(a))|\mathcal{F}_t].
  \end{displaymath}
  Hence, $U^m_t(a)$ represents the \emph{expected utility} of the
  $m$th market maker at time $t$ given the Pareto allocation
  $\pi(a)$. By the invertibility relations in Corollary~\ref{cor:1},
  the random variables $X_t(u,q)$ and $V_t(u,q)$ define the
  \emph{collective cash amount} and the \emph{Pareto weights} of the
  market makers at time $t$ when their current expected utilities are
  given by $u$ and they jointly own $q$ stocks.
\end{Remark}

\subsection{Proofs of Theorems \ref{th:9} and \ref{th:10}}
\label{sec:proof-theorem}

For the proof of Theorem~\ref{th:9} we need the following result
linking the condition \ref{item:F6} in the definition of
$\widetilde{\mathbf{F}}^1$ with the condition \ref{item:23} in
Lemma~\ref{lem:27} from Appendix~\ref{sec:modif-rand-fields}.

\begin{Lemma}
  \label{lem:16}
  Let $M>1$. A function $f\in \mathbf{F}^1$ satisfies {\rm
    \ref{item:F6}} (that is, belongs to $\widetilde{\mathbf{F}}^1$) if
  and only if for every increasing sequence $(C_n)_{n\geq 1}$ of
  compact sets in $\mathbf{S}^M$ with $\cup_{n\geq 1} C_n =
  \mathbf{S}^M$ and for every compact set $D\subset \mathbf{R}^{1+J}$
  \begin{equation}
    \label{eq:79}
    \lim_{n\to\infty}\sup_{w\in \mathbf{S}^M/C_n}\sup_{(x,q)\in D} \sum_{m=1}^M
    \frac{\partial f}{\partial v^m}(w,x,q) 
    = -\infty.
  \end{equation}
\end{Lemma}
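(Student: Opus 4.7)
The lemma asserts the equivalence between the pointwise boundary-divergence condition~\ref{item:F6} and its uniform-on-compacts counterpart~\eqref{eq:79}, so I plan to treat the two directions separately.

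The implication \eqref{eq:79}~$\Rightarrow$~\ref{item:F6} is immediate. Given $(x,q) \in \mathbf{R} \times \mathbf{R}^J$ and a sequence $(w_n)$ in $\mathbf{S}^M$ converging to a boundary point, apply \eqref{eq:79} with $D = \{(x,q)\}$ and the canonical exhausting family $C_n \set \{w \in \mathbf{S}^M : d(w, \partial \mathbf{S}^M) \geq 1/n\}$. Since $d(w_n, \partial \mathbf{S}^M) \to 0$, one has $w_n \in \mathbf{S}^M \setminus C_n$ eventually, and the sum $\sum_m \frac{\partial f}{\partial v^m}(w_n, x, q)$ is dominated by the supremum in \eqref{eq:79}, which tends to $-\infty$.

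For the converse \ref{item:F6}~$\Rightarrow$~\eqref{eq:79}, I argue by contradiction. Suppose \eqref{eq:79} fails for some exhausting $(C_n)$ and compact $D$: there exist $N > 0$, indices $n_k \to \infty$, and witnesses $w_{n_k} \in \mathbf{S}^M \setminus C_{n_k}$ and $(x_{n_k}, q_{n_k}) \in D$ with $\sum_m \frac{\partial f}{\partial v^m}(w_{n_k}, x_{n_k}, q_{n_k}) \geq -N$. By compactness of $D$ and $\overline{\mathbf{S}^M}$, pass to subsequences so that $(x_{n_k}, q_{n_k}) \to (x_\infty, q_\infty) \in D$ and $w_{n_k} \to w_\infty \in \overline{\mathbf{S}^M}$. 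The first key step is to rule out $w_\infty \in \mathbf{S}^M$: if $w_\infty$ were interior, the exhaustion would yield $w_\infty \in C_{N_0}$ for some $N_0$, and after a standard thickening of $(C_n)$ into an exhausting family $(\widetilde C_n)$ with $\widetilde C_n \subset \interior(\widetilde C_{n+1})$ (chosen so that failure of \eqref{eq:79} for $(C_n)$ produces, by a diagonal re-indexing, failure for $(\widetilde C_n)$ witnessed by the same sequences), a neighborhood of $w_\infty$ would lie in $\widetilde C_{n_k}$ for $k$ large, contradicting $w_{n_k} \notin \widetilde C_{n_k}$. Thus $w_\infty \in \partial \mathbf{S}^M$, and \ref{item:F6} applied at the fixed $(x_\infty, q_\infty)$ gives $\sum_m \frac{\partial f}{\partial v^m}(w_{n_k}, x_\infty, q_\infty) \to -\infty$.

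The principal obstacle is then to upgrade this pointwise divergence to divergence along the varying points $(x_{n_k}, q_{n_k})$: since $f$ is only $C^1$, the modulus of continuity of $\frac{\partial f}{\partial v^m}(w, \cdot, \cdot)$ in $(x, q)$ can in principle degrade as $w \to \partial \mathbf{S}^M$, so a naive Taylor or equicontinuity argument is unavailable. My plan is to exploit the conjugate picture of Theorem~\ref{th:5}: by \ref{item:F6}~$\Leftrightarrow$~\ref{item:G6} and the gradient characterization of saddle points in items~\ref{item:6}--\ref{item:7} of Theorem~\ref{th:1}, the quantity $\sum_m \frac{\partial f}{\partial v^m}(w, x, q)$ equals $\sum_m u^m$ where $(u, y)$ is the conjugate pair and depends continuously on $(w, x, q)$ via $\nabla f$. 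The $C^1$-regularity of $g$ on $\mathbf{B}$ together with its positive homogeneity in $y$ (no boundary singularity in the $y$-direction), combined with the compactness of $D$, provides the uniform control needed to bridge from the fixed-$(x_\infty, q_\infty)$ divergence to the divergence along $(x_{n_k}, q_{n_k})$, thereby closing the contradiction.
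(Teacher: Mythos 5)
Your easy direction is fine, and your reduction to sequences correctly isolates where the work lies; but the decisive step is missing. After extracting witnesses $w_{n_k}\to w_\infty\in\boundary{\mathbf{S}^M}$ and $(x_{n_k},q_{n_k})\to(x_\infty,q_\infty)$ with $\sum_m\frac{\partial f}{\partial v^m}(w_{n_k},x_{n_k},q_{n_k})\geq -N$, applying \ref{item:F6} at the frozen point $(x_\infty,q_\infty)$ gives nothing about the varying arguments, and at exactly the place you call ``the principal obstacle'' you only assert that the $\mathbf{C}^1$-regularity of $g$ on $\mathbf{B}$, the homogeneity in $y$, and the compactness of $D$ ``provide the uniform control''. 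They do not: setting $u_{n_k}\set\frac{\partial f}{\partial v}(w_{n_k},x_{n_k},q_{n_k})$, the contradiction hypothesis makes $(u_{n_k})$ bounded, and the case that must be excluded is the one where its limit point lies on $\boundary{(-\infty,0)^M}$, i.e.\ on the boundary of $\mathbf{B}$, precisely where interior regularity of $g$ is silent. To close that case you need a boundary statement for $g$ that is uniform in $q$ (e.g.\ that $x_{n_k}=g(u_{n_k},1,q_{n_k})$ must explode), whereas \ref{item:G6} is stated for fixed $(y,q)$; so your plan merely transplants the same uniformity problem to the conjugate side. It can be repaired -- for instance via the joint convexity \ref{item:G3} of $g(\cdot,1,\cdot)$ in $(u,q)$ together with the fact that the closure of a convex function at a boundary point of its domain is computed along line segments (see \cite{Rock:70}) -- but none of this appears in your text, and it is the mathematical heart of the lemma.

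The paper avoids the conjugate detour entirely with a short argument you do not have: by convexity of $f(\cdot,x_n,q_n)$ the gradient is monotone, so $\ip{\frac{\partial f}{\partial v}(w_n,x_n,q_n)}{\idvec}\leq\ip{\frac{\partial f}{\partial v}(w_n+\epsilon\idvec,x_n,q_n)}{\idvec}$; the shifted points stay in the open set $\mathbf{A}$, so ordinary continuity of $\nabla f$ handles the varying $(x_n,q_n)$, positive homogeneity rescales $w+\epsilon\idvec$ back into $\mathbf{S}^M$, and letting $\epsilon\to 0$ invokes \ref{item:F6}. Separately, your patch for witnesses accumulating at an \emph{interior} point does not work as written: if $w_{n_k}\to w_\infty\in\mathbf{S}^M$, those same points eventually lie inside any properly nested exhaustion $(\widetilde C_n)$, so they cannot ``witness failure of \eqref{eq:79} for $(\widetilde C_n)$ by re-indexing''; this wrinkle stems from the literal formulation of the exhaustion in the statement (the paper's proof simply reduces \eqref{eq:79} to the sequential assertion for sequences approaching $\boundary{\mathbf{S}^M}$), but the transfer argument you describe is not correct as stated.
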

\begin{proof} The ``if'' statement is straightforward. Hereafter we
  shall focus on the opposite implication.

  To verify~\eqref{eq:79} we have to show that for $f\in
  \widetilde{\mathbf{F}}^1$ and every $a_n = (w_n,x_n,q_n)\in
  \mathbf{S}^M\times \mathbf{R}\times \mathbf{R}^J$, $n\geq 1$,
  converging to $(w,x,q) \in \partial \mathbf{S}^M\times
  \mathbf{R}\times \mathbf{R}^J$ we have
  \begin{equation}
    \label{eq:80}
    \lim_{n\to\infty} \sum_{m=1}^M 
    \frac{\partial f}{\partial v^m}(a_n) = \lim_{n\to\infty}
    \ip{\frac{\partial f}{\partial v}(a_n)}{\idvec} = -\infty,
  \end{equation}
  where $\idvec \set (1,\dots,1)$.

  Let $\epsilon>0$. Accounting for the convexity and the positive
  homogeneity of the functions $f(\cdot,x_n,q_n)$, $n\geq 1$, on
  $(0,\infty)^M$ we deduce 
  \begin{gather*}
    \lim_{n\to\infty} \ip{\frac{\partial f}{\partial v}(a_n)}{\idvec}
    \leq \lim_{n\to\infty}
    \ip{\frac{\partial f}{\partial v}(w_n+\epsilon\idvec,x_n,q_n)}{\idvec} \\
    = \ip{\frac{\partial f}{\partial v}(w+\epsilon\idvec,x,q)}{\idvec}
    = \ip{\frac{\partial f}{\partial v}(w(\epsilon),x,q)}{\idvec},
  \end{gather*}
  where $w(\epsilon) \set \frac{w+\epsilon\idvec}{1+\epsilon M}$
  belongs to $\mathbf{S}^M$.  By \ref{item:F6}, the passage to the
  limit when $\epsilon \to 0$ yields \eqref{eq:80}.
\end{proof}

\begin{proof}[Proof of Theorem~\ref{th:9}.]
  As
  \begin{displaymath}
    F_T(a) = r(v,\Sigma(x,q)), \quad a=(v,x,q)\in \mathbf{A}, 
  \end{displaymath}
  the assertions concerning the sample paths of $F_T=F_T(a)$ are
  immediate corollaries of the corresponding properties of $r=r(v,x)$
  established in Theorem~\ref{th:7}. By~\eqref{eq:5} we have that
  $F_T(a)\in \mathbf{L}^1$, $a\in \mathbf{A}$, and then, by
  Theorem~\ref{th:12} from Appendix~\ref{sec:unif-integr-saddle},
  obtain \eqref{eq:75}. Lemma~\ref{lem:25} then implies that $F$ has
  sample paths in $\mathbf{D}([0,T], \mathbf{C}^1(\mathbf{A}))$ and
  that the equality~\eqref{eq:76} holds.

  To verify that the sample paths of $F$ belong to
  $\mathbf{D}([0,T],\mathbf{F}^1)$ it is sufficient to match the
  properties \ref{item:F1}--\ref{item:F4} in the description of
  $\mathbf{F}^1$ with the properties \ref{item:15}--\ref{item:22} in
  Lemmas~\ref{lem:26} and~\ref{lem:27}. For the most part these
  correspondences are straightforward with the links between
  \eqref{eq:12} in \ref{item:F2} or \eqref{eq:13} in \ref{item:F4} and
  their respective versions of \ref{item:22} holding due to the
  equivalence of the pointwise and the uniform on compact sets
  convergences for a sequence of convex or saddle functions.

  Note that in order to use \ref{item:22} in Lemma~\ref{lem:27} we
  still have to verify the integrability condition~\eqref{eq:104} in
  that lemma. Its adaption to \eqref{eq:12} in \ref{item:F2} takes the
  form:
  \begin{displaymath}
    \mathbb{E}[\inf_{w\in \mathbf{S}^M} \inf_{(x,q)\in D} F_T(w,x,q)]
    > -\infty, 
  \end{displaymath}
  for every compact set $D\subset \mathbf{R}^{1+J}$, and follows
  because of \eqref{eq:75} and since the sample paths of $F_T(\cdot)$
  are decreasing with respect to $v$ (hence, $F_T(w,x,q) >
  F_T(\idvec,x,q)$, $w\in \mathbf{S}^M$, where $\idvec \set
  (1,\dots,1)$).  To perform a similar verification for the
  convergence~\eqref{eq:13} in \ref{item:F4} we restrict the domain of
  $x$ to $[0, \infty)$. The analog of~\eqref{eq:104} then has the
  form:
  \begin{displaymath}
    \mathbb{E}[\inf_{x\geq 0} \inf_{(v,q)\in D} F_T(v,x,q)] > -\infty,
  \end{displaymath}
  for every compact set $D\subset (0,\infty)^M\times \mathbf{R}^J$,
  and follows from~\eqref{eq:75} and the monotonicity of $F$ with
  respect to $x$. Thus, we have shown that the sample paths of $F$
  belong to $\mathbf{D}([0,T],\mathbf{F}^1)$.

  The connection between \ref{item:23} and \ref{item:F6} has been
  established in Lemma~\ref{lem:16}. The adaptation of \eqref{eq:104}
  to this case holds trivially as $\frac{\partial F}{\partial v} <
  0$. Hence, the sample paths of $F$ belong to
  $\mathbf{D}([0,T],\widetilde{\mathbf{F}}^1)$.

  This finishes the proof of the assertions related to $F$.  The
  remaining part, concerning $G$, follows directly from
  Theorems~\ref{th:6} and~\ref{th:3}.
\end{proof}

We divide the proof of Theorem \ref{th:10} into lemmas.

\begin{Lemma}
  \label{lem:17}
  Under the conditions of Theorem~\ref{th:10} the sample paths of the
  random field $F_T=F_T(\cdot)$ belong to the space
  $\widetilde{\mathbf{F}}^2(c)$ with the same constant $c>0$ as in
  Assumption~\ref{as:2}. Moreover, \eqref{eq:77} holds for every
  compact set $C\subset \mathbf{A}$.
\end{Lemma}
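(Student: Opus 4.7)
My plan is to reduce both claims to Theorem~\ref{th:8}. For each $\omega$, the pathwise map $F_T(\omega,\cdot,\cdot,\cdot)$ is the composition of the deterministic function $r=r(v,x)$, which by Theorem~\ref{th:8} belongs to $\widetilde{\mathbf{F}}^2(c)$, with the affine map
\[
\iota:(v,x,q)\longmapsto\bigl(v,\;\Sigma_0(\omega)+x+\langle q,\psi(\omega)\rangle\bigr).
\]
Since $\iota$ is the identity on the $v$-slot and affine in $(x,q)$ with unit $x$-Jacobian, the chain rule gives $\partial F_T/\partial v^m=(\partial r/\partial v^m)\circ\iota$, $\partial F_T/\partial x=(\partial r/\partial x)\circ\iota$, and $\partial F_T/\partial q^j=\psi^j\,(\partial r/\partial x)\circ\iota$, together with analogous second-derivative formulas; in particular $A(F_T)(v,x,q)=A(r)(v,\Sigma(x,q))$. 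From these identities each of \ref{item:F1}--\ref{item:F9} transfers from $r$ to $F_T$ with the \emph{same} constant $c$: \ref{item:F2} and \ref{item:F6} concern only the $v$-slot and pass directly; \ref{item:F1}, \ref{item:F3}, \ref{item:F4} follow from smoothness of $r$ and the affinity of $\iota$ in $(x,q)$; and \ref{item:F5}, \ref{item:F7}, \ref{item:F8}, \ref{item:F9} pass through because the ratios and matrices they involve are evaluated at $(v,\Sigma(x,q))$, where $r$ already satisfies the corresponding bounds.

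For the integrability bound~\eqref{eq:77}, I would combine the explicit formulas~\eqref{eq:65}--\eqref{eq:70} with Euler's identity $r=\sum_m v^m\partial_{v^m}r$ and condition \ref{item:F7} for $r$ to obtain the pointwise estimate $|\partial^\alpha r(v,x)|\le K\,|r(v,x)|$ for every multi-index $|\alpha|\le 2$, with $K$ depending only on $c$ and on a positive lower bound on the components of $v$. Passing through $\iota$, the $q$-derivatives introduce factors of at most $(1+|\psi|)^2$, giving, on any compact $C\subset\mathbf{A}$,
\[
\|F_T\|_{2,C}\;\le\;K_C\,(1+|\psi|)^2\sup_{(v,x,q)\in C}|r(v,\Sigma(x,q))|,
\]
for a deterministic constant $K_C$. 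The supremum on the right is controlled by the monotonicity of $r$ in $v$ and $x$ and the exponential bound $|r(v,x-h)|\le|r(v,x)|e^{ch}$ (which follows for $r$ from Assumption~\ref{as:2} and~\eqref{eq:4}); the required exponential moments of the form $\mathbb{E}[|r(v_*,\Sigma(x_*,q_*))|\,e^{c'|\psi|}]<\infty$ are then produced by applying~\eqref{eq:5} at auxiliary parameter points $(v_*,x_*,q_*\pm s e_j)$ and invoking the exponential bound on $|r|$ once more. A second-order adaptation of Theorem~\ref{th:12} from Appendix~\ref{sec:unif-integr-saddle} assembles these pointwise bounds into~\eqref{eq:77}.

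The main technical obstacle is this last passage from pointwise to supremum-on-compacta integrability: although $\mathbb{E}[|F_T(a)|]<\infty$ is immediate from~\eqref{eq:5}, obtaining $\mathbb{E}[\|F_T\|_{2,C}]<\infty$ requires the exponential $|\psi|$-moments of $|r|$, which are extracted from~\eqref{eq:5} only by exploiting it at a family of auxiliary parameter points rather than at a single one. Once this exponential integrability is in hand, the remaining steps are routine chain-rule calculations and inheritance of properties from~$r$.
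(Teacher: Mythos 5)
Your argument is correct in substance and, for the pathwise assertion, coincides with the paper's, which simply reads off the properties of $F_T(a)=r(v,\Sigma(x,q))$ from Theorem~\ref{th:8}. For the integrability bound \eqref{eq:77} you take a genuinely different (though closely related) route. The paper bounds the second derivatives on a compact $C$ by $b_1\frac{\partial F_T}{\partial x}(a)(1+\abs{\psi}^2)$ (using the formulas of Theorem~\ref{th:8} and Assumption~\ref{as:2}), then uses the exponential growth in $x$ of $\frac{\partial r}{\partial x}$ implied by \eqref{eq:67} to absorb the polynomial factor into $\frac{\partial F_T}{\partial x}$ evaluated at $q$-shifted points, obtaining $\norm{F_T}_{2,C}\le b_3\norm{F_T}_{1,D}$ for a larger compact $D$; \eqref{eq:77} then follows at once from \eqref{eq:75} of Theorem~\ref{th:9}. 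You instead reduce to the zeroth-order norm, $\norm{F_T}_{2,C}\le K_C(1+\abs{\psi})^2\norm{F_T}_{0,C}$ (via Euler's identity, \ref{item:F7} and \eqref{eq:65}--\eqref{eq:70}, giving $\abs{\partial^\alpha r}\le K\abs{r}$ on sets where $v$ is bounded away from the boundary), and then re-derive the required integrability in the form of exponential moments $\mathbb{E}[\abs{r(v_*,\Sigma(x_*,q_*))}e^{c'\abs{\psi}}]<\infty$ directly from \eqref{eq:5} at auxiliary parameter points, using the two-sided exponential bounds on $\abs{r}$. Both proofs hinge on the same mechanism: the uniform risk-aversion bound converts the polynomial $\psi$-factor into something dominated by $r$ (or $\partial r/\partial x$) at shifted parameter values whose integrability comes from \eqref{eq:5}; the paper's version economizes by reusing the already-established first-order estimate \eqref{eq:75} instead of deriving exponential moments explicitly.

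Two points to tighten. First, shifting $q_*$ by $\pm s e_j$ one coordinate at a time only controls $e^{s\abs{\psi^j}/(cM)}$ on the events $\braces{\pm\psi^j\ge 0}$ for each $j$ separately; to dominate $e^{c'\abs{\psi}}$ on all of $\Omega$ you should use the $2^J$ shifts $q_*+s\sigma$, $\sigma\in\braces{-1,1}^J$, and argue on the corresponding orthant events of $\psi$ (the paper's enlargement of $C$ to $D$ plays the same role). Second, no ``second-order adaptation of Theorem~\ref{th:12}'' is needed, nor is one available (second derivatives of saddle functions are not controlled by function values); once you have the pathwise bound $\norm{F_T}_{2,C}\le K_C(1+\abs{\psi})^2\norm{F_T}_{0,C}$ together with the exponential moments, \eqref{eq:77} follows by taking expectations directly.
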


\begin{proof} The assertions for the sample paths of
  \begin{displaymath}
    F_T(a) = r(v,\Sigma(x,q)), \quad a=(v,x,q)\in \mathbf{A}, 
  \end{displaymath}
  follow directly from the properties of $r=r(v,x)$ in
  Theorem~\ref{th:8}.

  To verify \eqref{eq:77} fix a compact set $C\subset \mathbf{A}$.
  From the formulas for the second derivatives of $r=r(v,x)$ in
  Theorem~\ref{th:8} and from Assumption~\ref{as:2} we deduce the
  existence of a constant $b_1>0$ such that
  \begin{equation}
    \label{eq:81}
    \abs{\frac{\partial^2 F_T}{\partial a^i\partial a^j}(a)} \leq b_1
    \frac{\partial F_T}{\partial x}(a) (1+\abs{\psi}^2), \quad a\in
    C. 
  \end{equation}

  From~\eqref{eq:67} we deduce that
  \begin{displaymath}
    \frac1c \frac{\partial r}{\partial x}(v,x) \leq -M\frac{\partial^2
      r}{\partial x^2}(v,x) \leq c \frac{\partial r}{\partial x}(v,x),
    \quad x \in \mathbf{R},
  \end{displaymath}
  where $c>0$ is the constant from Assumption~\ref{as:2}. This yields
  the exponential growth property
  \begin{displaymath}
    e^{-y^+c/M+y^-/(cM)} \leq \frac{\frac{\partial r}{\partial
        x}(v,x+y)}{\frac{\partial
        r}{\partial x}(v,x)} \leq e^{-y^+/(cM)+y^-c/M}, \quad x,y \in \mathbf{R}, 
  \end{displaymath}
  where $x^+\set \max(x,0)$ and $x^- \set (-x)^+$.

  It follows that the right side of~\eqref{eq:81} is dominated by
  $b_2\norm{F_T}_{1,D}$ for some constant $b_2>0$ and a compact set
  $D$ in $\mathbf{A}$ containing $C$. Hence, there is a constant
  $b_3>0$ such that
  \begin{displaymath}
    \norm{F_T}_{2,C} \leq b_3 \norm{F_T}_{1,D}
  \end{displaymath}
  and \eqref{eq:77} holds because $\norm{F_T}_{1,D}$ has a finite
  expected value by Theorem~\ref{th:9}.
\end{proof}

\begin{Lemma}
  \label{lem:18}
  Under the conditions of Theorem~\ref{th:10} the stochastic field $F
  = F_t(a)$ has sample paths in
  $\mathbf{D}([0,T],\mathbf{C}^2(\mathbf{A}))$ and~\eqref{eq:78}
  holds.
\end{Lemma}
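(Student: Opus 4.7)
The plan is to mirror the argument used for the $\mathbf{C}^1$ case in the proof of Theorem~\ref{th:9}, but one derivative higher. The two ingredients are already in place: the second-order integrability estimate $\mathbb{E}[\norm{F_T}_{2,C}]<\infty$ established in Lemma~\ref{lem:17}, and the general modification criterion Lemma~\ref{lem:25} from Appendix~\ref{sec:modif-rand-fields}, which takes an integrable random field with $\mathbf{C}^m$ sample paths and returns an RCLL modification of the associated martingale in $\mathbf{C}^m$, together with the intertwining of spatial partial derivatives and conditional expectation. Recall that at the $\mathbf{C}^1$ level this exact combination — integrability of $\norm{F_T}_{1,C}$ plus Lemma~\ref{lem:25} — was what produced the sample paths in $\mathbf{D}([0,T],\mathbf{C}^1(\mathbf{A}))$ and the identity~\eqref{eq:76}.

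Concretely, I would first record from Lemma~\ref{lem:17} that $F_T(\cdot)$ has sample paths in $\widetilde{\mathbf{F}}^2(c)\subset \mathbf{C}^2(\mathbf{A})$ almost surely and that $\mathbb{E}[\norm{F_T}_{2,C}]<\infty$ for every compact $C\subset \mathbf{A}$. I would then apply Lemma~\ref{lem:25} at order $m=2$ to the martingale $a\mapsto \mathbb{E}[F_T(a)\mid \mathcal{F}_t]$; this directly yields a modification with sample paths in $\mathbf{D}([0,T],\mathbf{C}^2(\mathbf{A}))$ and the commutation identity~\eqref{eq:78}. To close, I would observe that any two RCLL modifications of the same random field which coincide at each fixed $t$ must agree as sample paths, so the $\mathbf{C}^2$-valued modification produced here necessarily refines the $\mathbf{C}^1$-valued one already constructed in Theorem~\ref{th:9}, allowing us to keep working with a single process.

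I do not expect a genuine obstacle in this lemma per se. All the heavy lifting was done upstream: Lemma~\ref{lem:17} supplied the second-order integrability by controlling the mixed second partial derivatives $\partial^2 r/\partial a^i\partial a^j$ against $\partial r/\partial x$ via the bounded risk-tolerance afforded by Assumption~\ref{as:2}, while Lemma~\ref{lem:25} is precisely the appendix tool engineered to promote pointwise martingale regularity into joint spatial $\mathbf{C}^m$ regularity. The only bookkeeping check is that the integrability hypothesis of Lemma~\ref{lem:25} at $m=2$ is exactly the content of~\eqref{eq:77}, which makes the application immediate.
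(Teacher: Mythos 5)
Your proposal is correct and is essentially the paper's own argument: Lemma~\ref{lem:17} supplies the $\mathbf{C}^2(\mathbf{A})$ sample paths of $F_T(\cdot)$ together with the integrability estimate~\eqref{eq:77}, and Lemma~\ref{lem:25} applied with $m=2$ immediately yields the modification with sample paths in $\mathbf{D}([0,T],\mathbf{C}^2(\mathbf{A}))$ and the identity~\eqref{eq:78}. The extra remark about the $\mathbf{C}^2$-modification refining the $\mathbf{C}^1$-one is harmless bookkeeping and does not change the argument.
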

\begin{proof}
  By Lemma~\ref{lem:17}, the random field $F_T(\cdot)$ has sample
  paths in $\mathbf{C}^2(\mathbf{A})$ and~\eqref{eq:77} holds. The
  result then follows from Lemma~\ref{lem:25}.
\end{proof}

Recall the notation $A(f)$ for the matrix defined in \eqref{eq:14}. A
delicate point in the proof of Theorem~\ref{th:10} is to verify
\ref{item:F8} for the matrices $A(F_t)$, $t\in [0,T]$.

For $a\in \mathbf{A}$, define the probability measure $\mathbb{R}(a)$
with
\begin{equation}
  \label{eq:82}
  \frac{d\mathbb{R}(a)}{d\mathbb{P}} \set \frac{\partial^2 F_T}{\partial
    x^2}(a)/\frac{\partial^2 F_0}{\partial x^2}(a), 
\end{equation}
the stochastic process
\begin{equation}
  \label{eq:83}
  R_t(a) \set -\frac{\partial F_t}{\partial
    x}(a)/\frac{\partial^2 F_t}{\partial x^2}(a),\quad t\in [0,T], 
\end{equation}
and the random variables
\begin{equation}
  \label{eq:84}
  \tau^m(a) \set t_m(\pi^m(a)), \quad m=1,\dots,M,
\end{equation}
where $t_m = t_m(x)$ is the absolute risk-tolerance of $u_m =u_m(x)$
and $\pi^m(a)$ is a Pareto optimal allocation; see~\eqref{eq:64}
and~\eqref{eq:8}. Observe that $R(a)$ is a martingale under
$\mathbb{R}(a)$, that, by \eqref{eq:67},
\begin{equation}
  \label{eq:85}
  \sum_{m=1}^M \tau^m(a) = {R_T(a)},
\end{equation}
and, that by Assumption~\ref{as:2},
\begin{equation}
  \label{eq:86}
  \frac1c \leq \tau^m\leq c, \quad m=1,\dots,M.
\end{equation}

\begin{Lemma}
  \label{lem:19}
  Assume that the conditions of Theorem~\ref{th:10} hold. Then the
  matrix $A(F_t)(a)$ is given by
  \begin{displaymath}
    \begin{split}
      A^{lm}(F_t)(a) &= \frac1{R_t(a)}
      \mathbb{E}_{\mathbb{R}(a)}[\tau^l(a)(\delta_{lm}\sum_{k=1}^M
      \tau^k(a) - \tau^m(a))|\mathcal{F}_t] \\
      & +
      \frac1{R_t(a)}\mathbb{E}_{\mathbb{R}(a)}[\tau^l(a)|\mathcal{F}_t]
      \mathbb{E}_{\mathbb{R}(a)}[\tau^m(a)|\mathcal{F}_t], \quad
      l,m=1,\dots,M,
    \end{split}
  \end{displaymath}
  where the probability measure $\mathbb{R}(a)$, the stochastic
  process $R(a)$, and the random variable $\tau(a)$ are defined in
  \eqref{eq:82}, \eqref{eq:83}, and \eqref{eq:84}, respectively and
  $\delta_{lm} \set\ind{l=m}$ is the Kronecker delta.

  Moreover, for every $z\in \mathbf{R}^n$,
  \begin{equation}
    \label{eq:87} 
    \frac1{c} \abs{z}^2 \leq \ip{z}{A(F_t)(a)z} \leq  c\abs{z}^2, 
  \end{equation}
  where the constant $c>0$ is given in Assumption~\ref{as:2}.
\end{Lemma}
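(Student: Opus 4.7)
My plan has three stages: first, differentiate $F_t(a) = \mathbb{E}[r(v,\Sigma(x,q))|\cF_t]$ under the conditional expectation using Lemma~\ref{lem:18}, substituting the closed-form derivatives of $r$ from Theorem~\ref{th:8}; second, change measure to $\mathbb{R}(a)$ via Bayes's rule to express the result in the claimed form; third, establish the bounds in~\eqref{eq:87}. The upper bound will follow easily by dropping a non-negative variance term, but the lower bound is the main obstacle and will require a positive-semidefinite decomposition exploiting $\tau^m \geq 1/c$ from~\eqref{eq:86}.

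For the formula, Lemma~\ref{lem:18} and~\eqref{eq:77} permit exchanging first and second derivatives of $F_t(a)$ with $\mathbb{E}[\cdot|\cF_t]$. Writing $\rho \set \partial_x r(v,\Sigma(x,q))$ and using~\eqref{eq:85} to identify $\sum_k \tau^k = R$, the formulas~\eqref{eq:67}--\eqref{eq:69} give
\[
\partial_x F_t = \mathbb{E}[\rho|\cF_t],\quad \partial^2_{xx}F_t = -\mathbb{E}[\rho/R|\cF_t],\quad v^m\partial^2_{v^m x}F_t = \mathbb{E}[\rho\tau^m/R|\cF_t],
\]
and $v^l v^m \partial^2_{v^l v^m} F_t = \mathbb{E}[\rho\tau^l(\delta_{lm}-\tau^m/R)|\cF_t]$. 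By~\eqref{eq:82}, the density of $\mathbb{R}(a)$ with respect to $\mathbb{P}$ is proportional to $\rho/R$, so Bayes's rule gives $\mathbb{E}_{\mathbb{R}(a)}[X|\cF_t] = \mathbb{E}[(\rho/R)X|\cF_t]/\mathbb{E}[\rho/R|\cF_t]$ for every integrable $X$, and in particular $R_t(a) = \mathbb{E}[\rho|\cF_t]/\mathbb{E}[\rho/R|\cF_t]$. Plugging these into the definition~\eqref{eq:14} of $A^{lm}(F_t)(a)$ and canceling the common factor $\mathbb{E}[\rho/R|\cF_t]$ in numerator and denominator yields the claimed formula.

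For the bounds, set $S \set \sum_m z^m\tau^m$, $\bar\tau^m \set \mathbb{E}_{\mathbb{R}(a)}[\tau^m|\cF_t]$ and $\Sigma^{lm} \set \operatorname{Cov}_{\mathbb{R}(a)}(\tau^l,\tau^m|\cF_t)$. Direct rearrangement of the formula gives $R_t(a)\,A(F_t)(a) = R_t\operatorname{diag}(\bar\tau) + D_\Sigma - \Sigma$ with $D_\Sigma \set \operatorname{diag}(\Sigma\idvec)$, equivalently $R_t\ip{z}{A(F_t)z} = \sum_m(z^m)^2\mathbb{E}_{\mathbb{R}(a)}[\tau^m R|\cF_t] - \operatorname{Var}_{\mathbb{R}(a)}(S|\cF_t)$. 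Dropping the (non-negative) variance and using $\tau^m \leq c$ from~\eqref{eq:86} gives $R_t\ip{z}{A(F_t)z} \leq cR_t|z|^2$, establishing the upper bound. For the lower bound, set $\eta^m \set \tau^m - 1/c \geq 0$ by~\eqref{eq:86}; since covariance is shift-invariant, $\Sigma^{lm} = \mathbb{E}_{\mathbb{R}(a)}[\eta^l\eta^m|\cF_t] - \bar\eta^l\bar\eta^m$. Symmetrization gives
\[
\ip{z}{(D_\Sigma-\Sigma)z} = \tfrac12\sum_{l,m}\Sigma^{lm}(z^l-z^m)^2 = \mathbb{E}_{\mathbb{R}(a)}[Q(\eta)|\cF_t] - Q(\bar\eta),
\]
where $Q(\xi) \set \tfrac12\sum_{l,m}\xi^l\xi^m(z^l-z^m)^2 = \bigl(\sum_l\xi^l\bigr)\bigl(\sum_l\xi^l(z^l)^2\bigr) - \bigl(\sum_l\xi^l z^l\bigr)^2 \geq 0$ for componentwise non-negative $\xi$ by the weighted Cauchy-Schwarz inequality. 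Since $\eta \geq 0$, $\mathbb{E}_{\mathbb{R}(a)}[Q(\eta)|\cF_t] \geq 0$, and using $R_t = M/c + \sum_l \bar\eta^l$ one finds
\[
R_t\ip{z}{(A(F_t)-I/c)z} \geq R_t\sum_l(z^l)^2\bar\eta^l - Q(\bar\eta) = (M/c)\sum_l\bar\eta^l(z^l)^2 + \bigl(\sum_l\bar\eta^l z^l\bigr)^2 \geq 0,
\]
which is the desired lower bound $\ip{z}{A(F_t)z} \geq |z|^2/c$.
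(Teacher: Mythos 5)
Your proposal is correct and follows essentially the same route as the paper: differentiate under the conditional expectation using Lemma~\ref{lem:18} and the formulas of Theorem~\ref{th:8}, pass to $\mathbb{R}(a)$ by Bayes's rule to get the stated expression for $A(F_t)(a)$, obtain the upper bound by discarding the conditional variance, and obtain the lower bound via the shift $\tau^m-1/c\geq 0$ combined with weighted Cauchy--Schwarz. The only difference is cosmetic bookkeeping in the lower bound (you symmetrize the covariance term and apply Cauchy--Schwarz separately to $Q(\eta)$ and $Q(\bar\eta)$, whereas the paper bounds $R_T\geq\ip{\theta}{\idvec}$ and argues pointwise inside the expectation), which yields the same estimate.
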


\begin{proof}
  From the expressions for the second derivatives of $r=r(v,x)$ in
  Theorem~\ref{th:8} we deduce
  \begin{align*}
    v^m\frac{\partial^2 F_T}{\partial v^m\partial x}(a) &=
    -\frac{\partial^2
      F_T}{\partial x^2}(a) \tau^m(a), \\
    v^lv^m\frac{\partial^2 F_T}{\partial v^l\partial v^m}(a) &=
    -\frac{\partial^2 F_T}{\partial x^2}(a) \tau^l(a)(\delta_{lm}
    \sum_{k=1}^M \tau^k(a) - \tau^m(a)),
  \end{align*}
  and the expression for $A(F_t)$ follows by direct computations.

  To simplify notations, in the proof of~\eqref{eq:87}, we omit the
  dependence on $a\in \mathbf{A}$ and consider the case
  $t=0$. Elementary calculations show that
  \begin{align*}
    \ip{z}{A(F_0)z} &=
    \frac1{R_0}\left(\mathbb{E}_{\mathbb{R}}\left[R_T\sum_{m=1}^M
        \tau^m z_m^2 - \ip{\tau}{z}^2\right] +
      \ip{\mathbb{E}_{\mathbb{R}}[\tau]}{z}^2\right) \\
    &= \frac1{R_0}\mathbb{E}_{\mathbb{R}}\left[R_T\sum_{m=1}^M \tau^m
      z_m^2 - \ip{\tau - \mathbb{E}_{\mathbb{R}}[\tau]}{z}^2\right],
  \end{align*}
  where we used~\eqref{eq:85}.  This immediately implies the upper
  bound in~\eqref{eq:87}:
  \begin{displaymath}
    \ip{z}{A(F_0)z}  \leq \frac1{R_0}
    \mathbb{E}_{\mathbb{R}}[R_T\sum_{m=1}^M \tau^m z_m^2] \leq c
    \abs{z}^2,  
  \end{displaymath}
  where we used the inequality~\eqref{eq:86} and the martingale
  property of $R$ under $\mathbb{R}$.

  To verify the lower bound observe that by \eqref{eq:86}
  \begin{displaymath}
    \theta^m \set \tau^m - \frac1c \geq 0.
  \end{displaymath}
  We obtain
  \begin{align*}
    \ip{z}{A(F_0)z} &= \frac1{R_0} \mathbb{E}_{\mathbb{R}}\left[R_T
      \sum_{m=1}^M (\frac1c + \theta^m) z_m^2 -
      \ip{\theta - \mathbb{E}_{\mathbb{R}}[\theta]}{z}^2 \right] \\
    & = \frac1c \abs{z}^2 + \frac1{R_0}
    \mathbb{E}_{\mathbb{R}}\left[R_T \sum_{m=1}^M \theta^m z_m^2 -
      \ip{\theta - \mathbb{E}_{\mathbb{R}}[\theta]}{z}^2 \right].
  \end{align*}
  As $R_T = \ip{\tau}{\idvec} \geq \ip{\theta}{\idvec}$, where $\idvec
  \set (1,\dots,1)$, we deduce
  \begin{gather*}
    R_0\left(\ip{z}{A(F_0)z} - \frac1c \abs{z}^2\right) \geq
    \mathbb{E}_{\mathbb{R}}\left[\ip{\theta}{\idvec} \sum_{m=1}^M
      \theta^m
      z_m^2 - \ip{\theta - \mathbb{E}_{\mathbb{R}}[\theta]}{z}^2 \right] \\
    = \mathbb{E}_{\mathbb{R}}\left[\frac1{\ip{\theta}{\idvec}}
      \sum_{m=1}^M \theta^m(z_m\ip{\theta}{\idvec} - \ip{\theta}{z})^2
    \right] + \ip{\mathbb{E}_{\mathbb{R}}[\theta]}{z}^2 \geq 0.
  \end{gather*}
\end{proof}

\begin{proof}[Proof of Theorem~\ref{th:10}.] The
  inequality~\eqref{eq:77} and the fact that the sample paths of
  $F_T(\cdot)$ belong to $\widetilde{\mathbf{F}}^2(c)$ are established
  in Lemmas~\ref{lem:17}, while the identity~\eqref{eq:78} and the
  fact that the sample paths of $F=F_t(a)$ belong to
  $\mathbf{D}([0,T],\mathbf{C}^2(\mathbf{A}))$ are proved in
  Lemma~\ref{lem:18}. The remaining properties of $F$ follow from
  Lemma~\ref{lem:26} if we account for the properties of the sample
  paths for $F_T(\cdot)$ and use Lemma~\ref{lem:19}. Finally, the
  properties of the sample paths for $G=G_t(b)$ follow directly from
  those of $F$ and Theorems~\ref{th:6} and~\ref{th:4}.
\end{proof}

\appendix
\section{An envelope theorem for saddle functions}
\label{sec:envelope-theorem}

In the proof of Theorem~\ref{th:1} we used the following version of
the folklore ``envelope'' theorem for saddle functions. As usual, $\ri
C$ denotes the \emph{relative interior} of a convex set $C$.

\begin{Theorem}
  \label{th:11}
  Let $C$ be a convex set in $\mathbf{R}^n$, $D$ be a convex set in
  $\mathbf{R}^m$, $E$ be a convex open set in $\mathbf{R}^l$, $f =
  f(x,y,z): C\times D \times E \rightarrow \mathbf{R}$ be a function
  convex with respect to $x$ and concave with respect to $(y,z)$, and
  let $z_0\in E$. Denote
  \begin{displaymath}
    g(z) \set \sup_{y\in D}\inf_{x\in C} f(x,y,z), \quad z\in E, 
  \end{displaymath}
  and assume that the maximin value $g(z_0)$ is attained at a unique
  $x_0\in \ri C$ and some (not necessarily unique) $y_0\in D$ and the
  function $f(x_0,y_0,\cdot)$ is differentiable at $z_0$.

  Then the function $\map{g}{E}{\mathbf{R}\cup \braces{-\infty}}$ is
  concave, differentiable at $z_0$ (in particular, finite in a
  neighborhood of $z_0$) and
  \begin{displaymath}
    \nabla g(z_0) = \frac{\partial f}{\partial z}(x_0,y_0,z_0). 
  \end{displaymath}
\end{Theorem}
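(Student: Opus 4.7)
The plan is to combine concavity of $g$ with a sandwich argument. First, I will show that $g$ is concave on $E$: the function $(y,z)\mapsto\inf_{x\in C}f(x,y,z)$ is concave in $(y,z)$, being the infimum over $x\in C$ of the concave family $f(x,\cdot,\cdot)$, and the partial supremum in $y$ of a jointly concave function preserves concavity in $z$. The attainment hypothesis means that $(x_0,y_0)$ is a saddle point of $f(\cdot,\cdot,z_0)$ on $C\times D$, so $g(z_0)=f(x_0,y_0,z_0)$ with $f(x_0,y,z_0)\le f(x_0,y_0,z_0)\le f(x,y_0,z_0)$ for all $(x,y)\in C\times D$. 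I then introduce the two concave envelopes
\begin{displaymath}
\phi(z)\set\inf_{x\in C}f(x,y_0,z),\qquad\tilde g(z)\set\sup_{y\in D}f(x_0,y,z),
\end{displaymath}
which satisfy $\phi(z)\le g(z)\le\tilde g(z)$ on $E$ with equality at $z_0$. The plan is to show that \emph{both} $\phi$ and $\tilde g$ are differentiable at $z_0$ with gradient $\frac{\partial f}{\partial z}(x_0,y_0,z_0)$; the sandwich then forces $g$ to be differentiable at $z_0$ with that same gradient.

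For $\tilde g$ the argument is short. Since $\tilde g\ge f(x_0,y_0,\cdot)$ with equality at $z_0$, the one-sided directional derivative satisfies $\tilde g'(z_0;u)\ge\langle\frac{\partial f}{\partial z}(x_0,y_0,z_0),u\rangle$ for every $u\in\mathbf{R}^l$. Applying this also to $-u$ and combining with the general inequality $\tilde g'(z_0;u)+\tilde g'(z_0;-u)\le 0$ for the directional derivative of a concave function forces both bounds to be equalities; hence $\tilde g$ is G\^ateaux, and by local Lipschitz continuity of concave functions on the interior of their domain also Fr\'echet, differentiable at $z_0$. The argument for $\phi$ is more delicate and is where the hypotheses that $x_0\in\ri C$ and that $x_0$ is the \emph{unique} minimizer are used decisively. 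A standard coercivity argument for a convex function with a unique minimizer in the relative interior produces a compact convex neighbourhood $K$ of $x_0$ inside $\operatorname{aff}C$ on whose relative boundary $f(\cdot,y_0,z_0)$ is bounded strictly away from $f(x_0,y_0,z_0)$. Local continuity of the convex--concave function $f(\cdot,y_0,\cdot)$ on the interior of its domain propagates this strict separation to all $z$ in some neighbourhood of $z_0$, and convexity in $x$ then forces the infimum of $f(\cdot,y_0,z)$ over $C$ to be attained in the interior of $K$, so $\phi(z)=\inf_{x\in K}f(x,y_0,z)$ locally. Danskin's envelope theorem applied on the compact $K$, equivalently to $-\phi=\sup_{x\in K}[-f(x,y_0,\cdot)]$ which is a supremum of convex functions with unique maximiser $x_0$ at $z_0$, then yields differentiability of $\phi$ at $z_0$ with gradient $\frac{\partial f}{\partial z}(x_0,y_0,z_0)$.

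Combining, for $z$ close to $z_0$ one has
\begin{displaymath}
\phi(z)-\phi(z_0)\le g(z)-g(z_0)\le\tilde g(z)-\tilde g(z_0),
\end{displaymath}
and the two outer quantities both equal $\langle\frac{\partial f}{\partial z}(x_0,y_0,z_0),z-z_0\rangle+o(|z-z_0|)$, giving differentiability of $g$ at $z_0$ with the asserted gradient. Concavity of $g$ has been established in the first step, and finiteness of $g$ in a neighbourhood of $z_0$ is automatic once $\tilde g$ is known to be differentiable, hence finite, there. The main obstacle I expect is the argument for $\phi$: localising the infimum to a compact $K$ via the coercivity/strict-separation argument, and invoking Danskin's theorem on $K$, is the technical heart of the proof and is precisely where the hypotheses of uniqueness and $x_0\in\ri C$ are indispensable. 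By contrast, the corresponding step for $\tilde g$ is essentially automatic from the concave directional-derivative calculus.
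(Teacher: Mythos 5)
Your sandwich argument needs its upper slice to touch $g$ at $z_0$, and that is exactly where the proof as written has a gap. You assert that ``the attainment hypothesis means that $(x_0,y_0)$ is a saddle point of $f(\cdot,\cdot,z_0)$'', i.e.\ that $\sup_{y\in D}f(x_0,y,z_0)=f(x_0,y_0,z_0)$, and you use this twice: to get $\tilde g(z_0)=g(z_0)$ (without which differentiability of $\tilde g$ says nothing about $g$), and to get finiteness of $g$ near $z_0$ via $\tilde g$. But the hypothesis only states that the \emph{maximin} value is attained: $g(z_0)=\inf_{x\in C}f(x,y_0,z_0)=f(x_0,y_0,z_0)$ with $x_0$ the unique minimizer. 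The missing inequality $f(x_0,y,z_0)\le f(x_0,y_0,z_0)$ for all $y\in D$ is the nontrivial half of the saddle-point property; it is not a rephrasing of maximin attainment, and for general convex--concave functions it is precisely what a duality gap destroys. It does turn out to be true under the stated hypotheses, but only because $x_0$ is the \emph{unique} minimizer and lies in $\ri C$: one must show (via the margin that uniqueness and relative-interior continuity give around $x_0$, an affine minorant of $f(\cdot,y_1,z_0)$, and concavity of $f(x,\cdot,z_0)$ along the segment from $y_0$ to a putative $y_1$ with $f(x_0,y_1,z_0)>f(x_0,y_0,z_0)$) that near-minimizers of $f(\cdot,y_s,z_0)$ are forced back to $x_0$ as $s\downarrow0$, and then continuity of the convex function $f(\cdot,y_1,z_0)$ at $x_0$ yields a contradiction. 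That is a lemma of roughly the same depth as your treatment of $\phi$; stating it as immediate is a genuine gap, and no differentiability in $y$ is available to shortcut it.

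The good news is that the gap can be closed by deleting $\tilde g$ rather than by proving the saddle-point claim. You already have: $g$ concave, $\phi\le g$ on $E$, $\phi(z_0)=g(z_0)$, and $\phi$ differentiable at $z_0$ (your localization of the infimum to a compact $K$ using $x_0\in\ri C$ and uniqueness, followed by Danskin on $K$, is in substance the paper's Lemma~\ref{lem:21}). Then every supergradient of $g$ at $z_0$ is a supergradient of $\phi$ at $z_0$, so the superdifferential of $g$ at $z_0$ is contained in $\braces{\nabla\phi(z_0)}$; it is nonempty because $g\ge\phi>-\infty$ near $z_0$ and a concave function finite at $z_0$ and bounded below on a neighborhood of $z_0$ cannot take the value $+\infty$, so $g$ is finite near $z_0$. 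A concave function finite near $z_0$ with a singleton superdifferential there is differentiable at $z_0$, giving $\nabla g(z_0)=\nabla\phi(z_0)=\frac{\partial f}{\partial z}(x_0,y_0,z_0)$. This one-sided argument is exactly the paper's Lemma~\ref{lem:20} applied to $h(y,z)=\inf_{x\in C}f(x,y,z)$, which is how the paper avoids ever needing the saddle-point property of $(x_0,y_0)$.
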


\begin{Remark}
  \label{rem:4}
  Theorem 5 in \citet{MilgSegal:02} is the closest result to ours we
  could find in the literature. There, the convexity assumptions on
  $f$ are replaced by compactness requirements on $C$ and $D$.
\end{Remark}

The proof of Theorem~\ref{th:11} relies on two lemmas of independent
interest, which were used in the proof of Theorem~\ref{th:7}.  The
first lemma is essentially known, see, for example, Corollary 3 in
\cite{MilgSegal:02}.

\begin{Lemma}
  \label{lem:20}
  Let $f = f(x,y): \mathbf{R}^n\times \mathbf{R}^m \rightarrow
  \mathbf{R}\cup \braces{-\infty}$ be a concave function and let
  $y_0\in \mathbf{R}^m$. Denote
  \begin{displaymath}
    g(y) \set \sup_{x\in \mathbf{R}^n} f(x,y), \quad y\in \mathbf{R}^m, 
  \end{displaymath}
  and assume that the upper bound $g(y_0)$ is attained at some (not
  necessarily unique) $x_0\in \mathbf{R}^n$ and the function
  $f(x_0,\cdot)$ is differentiable at $y_0$.

  Then the function $\map{g}{\mathbf{R}^m}{\mathbf{R}\cup
    \braces{-\infty}}$ is concave, differentiable at $y_0$ and
  \begin{equation}
    \label{eq:88}
    \nabla g(y_0) = \frac{\partial f}{\partial
      y}(x_0,y_0).
  \end{equation}
\end{Lemma}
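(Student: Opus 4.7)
The plan is a classical squeeze argument: a trivial lower bound on $g(y) - g(y_0)$ coming from the specific maximizer $x_0$, combined with a global upper bound coming from any supergradient of the concave function $g$ at $y_0$, pins down the superdifferential of $g$ at $y_0$ to a single vector, which must be $\tfrac{\partial f}{\partial y}(x_0,y_0)$.

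First I would verify concavity of $g$ directly from the joint concavity of $f$: for $y_1,y_2$ and any $x_1,x_2$, the midpoint inequality for $f$ applied to $(x_1,y_1),(x_2,y_2)$ and then supremized over $x_1,x_2$ gives midpoint concavity of $g$, which together with the sup-structure yields full concavity. Next, I would establish that $g$ is finite in a \emph{full} neighborhood of $y_0$. The inequality $g(\cdot)\ge f(x_0,\cdot)$ together with differentiability (hence finiteness) of $f(x_0,\cdot)$ in some open ball $B$ around $y_0$ immediately gives $g>-\infty$ on $B$. For the opposite bound, apply midpoint concavity of $g$ to $y$ and its reflection $2y_0-y$, both in $B$ when $y$ is close enough to $y_0$, to obtain $g(y)\le 2g(y_0)-g(2y_0-y)<\infty$. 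Thus $g$ is a proper concave function finite on a neighborhood of $y_0$, so its superdifferential $\partial g(y_0)$ is nonempty.

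Now the squeeze. Differentiability of $f(x_0,\cdot)$ at $y_0$ together with $g(y_0)=f(x_0,y_0)$ yields
\[
g(y_0+h)-g(y_0)\ \ge\ f(x_0,y_0+h)-f(x_0,y_0)\ =\ \left\langle \tfrac{\partial f}{\partial y}(x_0,y_0),\,h\right\rangle + o(|h|),
\]
as $h\to 0$, while any $\xi\in\partial g(y_0)$ satisfies the global upper bound $g(y_0+h)-g(y_0)\le \langle \xi,h\rangle$. Combining the two and taking $h=\pm t e_i$ with $t\downarrow 0$ for each coordinate direction forces $\xi=\tfrac{\partial f}{\partial y}(x_0,y_0)$. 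Hence $\partial g(y_0)$ is the singleton $\{\tfrac{\partial f}{\partial y}(x_0,y_0)\}$, and by Theorem~25.1 in \cite{Rock:70} a concave function finite near $y_0$ with singleton superdifferential at $y_0$ is differentiable there, with gradient equal to that unique supergradient. This gives~\eqref{eq:88}.

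The only delicate point is the finiteness of $g$ on a full neighborhood of $y_0$ (not merely $>-\infty$), without which the superdifferential machinery is unavailable; the reflection trick through $y_0$ handles this cleanly. Everything else reduces to the one-line envelope squeeze.
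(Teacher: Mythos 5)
Your proposal is correct and follows essentially the same route as the paper: concavity of $g$, finiteness of $g$ near $y_0$ (your reflection trick just makes explicit what the paper asserts from concavity), nonemptiness of the superdifferential, and the squeeze identifying every supergradient of $g$ at $y_0$ with $\frac{\partial f}{\partial y}(x_0,y_0)$, whence differentiability. The only cosmetic point is that your appeal to ``midpoint concavity plus the sup-structure'' is unnecessary, since the very same supremization argument works verbatim for arbitrary convex combinations.
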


\begin{proof} The concavity of $g$ follows from the concavity of $f$
  with respect to both arguments. As $g(y_0)=f(x_0,y_0)<\infty$, this
  concavity property implies that $g<\infty$. Since $g\geq
  f(x_0,\cdot)$, the function $g$ is finite in a neighborhood of
  $y_0$. It follows that $\partial g(y_0)$, the subdifferential of $g$
  at $y_0$, is not empty.

  If $y^*\in \partial g(y_0)$, then
  \begin{displaymath}
    g(y) \leq g(y_0) + \ip{y^*}{y-y_0}, \quad y\in \mathbf{R}^m.
  \end{displaymath}
  As $f(x_0,y) \leq g(y)$, $y\in \mathbf{R}^m$, and $f(x_0,y_0) =
  g(y_0)$, it follows that
  \begin{displaymath}
    f(x_0,y) \leq  f(x_0,y_0) + \ip{y^*}{y-y_0}, \quad y\in \mathbf{R}^m.
  \end{displaymath}
  Hence, $y^*$ belongs to the subdifferential of $f(x_0,\cdot)$ at
  $y_0$, and, therefore, $y^* = \frac{\partial f}{\partial
    y}(x_0,y_0)$.  It follows that $y^*$ is the only element of
  $\partial g(y_0)$, proving the differentiability of $g$ at $y_0$ and
  the identity \eqref{eq:88}.
\end{proof}

\begin{Lemma}
  \label{lem:21}
  Let $C$ be a convex set in $\mathbf{R}^n$, $D$ be a convex open set
  in $\mathbf{R}^m$, $f = f(x,y): C\times D \rightarrow \mathbf{R}$ be
  a function concave with respect to $x$ and convex with respect to
  $y$, and let $y_0\in D$. Define the function
  \begin{displaymath}
    g(y) \set \sup_{x\in C} f(x,y), \quad y\in D, 
  \end{displaymath}
  and assume that the upper bound $g(y_0)$ is attained at a unique
  $x_0 \in \ri C$ and the function $f(x_0,\cdot)$ is differentiable at
  $y_0$.

  Then the function $\map{g}{D}{\mathbf{R}\cup \braces{\infty}}$ is
  convex, differentiable at $y_0$, and the identity~\eqref{eq:88}
  holds.
\end{Lemma}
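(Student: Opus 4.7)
The plan is to mirror the proof of Lemma~\ref{lem:20}, replacing the concave-concave structure by the present concave-convex one and exploiting the new hypothesis that the maximum is attained uniquely in $\ri C$.

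First, $g$ is convex on $D$ as the pointwise supremum over $x \in C$ of the convex functions $f(x,\cdot)$. Combining $g \geq f(x_0,\cdot)$ on $D$ with the equality $g(y_0) = f(x_0,y_0)$ and with the convexity-plus-differentiability of $f(x_0,\cdot)$ at $y_0$, I immediately get, for every $y \in D$,
\begin{displaymath}
  g(y) - g(y_0) \geq f(x_0,y) - f(x_0,y_0) \geq \ip{\nabla_y f(x_0,y_0)}{y - y_0},
\end{displaymath}
so $\nabla_y f(x_0,y_0) \in \partial g(y_0)$ and in particular $\partial g(y_0)$ is nonempty. It remains to show (i) that $g$ is finite in a neighborhood of $y_0$ and (ii) that $\partial g(y_0)$ reduces to the singleton $\{\nabla_y f(x_0,y_0)\}$; together these give the claimed differentiability and the formula~\eqref{eq:88}.

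For (i) I would argue by contradiction: if $g(y_n) = +\infty$ for some $y_n \to y_0$, pick $x_n \in C$ with $f(x_n,y_n) \to \infty$; by the convexity of $f(x_n,\cdot)$ in $y$, reflecting $y_n$ across $y_0$ gives a lower bound on $f(x_n, 2y_0 - y_n)$ involving $f(x_n,y_0)$ and $f(x_n,y_n)$, which together with $f(x_n,y_0) \leq f(x_0,y_0)$ and the finiteness of $f(x_0,\cdot)$ near $y_0$ forces the near-maximizing sequence $(x_n)$ to accumulate at points where $f(\cdot,y_0)$ approaches $f(x_0,y_0)$; the uniqueness of $x_0$ as the maximizer of the concave function $f(\cdot,y_0)$ together with $x_0 \in \ri C$ then yields a contradiction via an upper-semicontinuity/compactness argument (as in the analysis of concave functions with a unique interior maximum). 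For (ii), once $g$ is finite, hence continuous, near $y_0$, the same uniqueness-plus-interior argument shows that near-maximizers $x_y$ of $f(\cdot,y)$ converge to $x_0$ as $y\to y_0$, so $g(y) = f(x_0,y) + o(\abs{y-y_0})$ locally, which combined with the differentiability of $f(x_0,\cdot)$ pins down any element of $\partial g(y_0)$ to $\nabla_y f(x_0,y_0)$.

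The main obstacle is step (i) together with the accompanying compactness-of-near-maximizers claim. In the dual Lemma~\ref{lem:20} the finiteness of $g$ in a neighborhood of $y_0$ was automatic from concavity of $g$ and the inequality $g \geq f(x_0,\cdot)$; here, in contrast, a convex $g$ could a priori equal $+\infty$ arbitrarily close to $y_0$, and ruling this out genuinely requires the joint use of the uniqueness of $x_0$, its location in $\ri C$, and the convexity of $f(x,\cdot)$ in $y$. Once this obstacle is overcome, the remainder is a straightforward parallel to the proof of Lemma~\ref{lem:20}.
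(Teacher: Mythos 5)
Your overall plan (convexity of $g$ is immediate, the crux is finiteness of $g$ near $y_0$ via localization of maximizers, then the envelope identity) coincides in outline with the paper's, but the two steps you yourself flag as the real content are not carried out, and the device you propose for step (i) does not work. If $g(y_n)=\infty$ along $y_n\to y_0$ and $f(x_n,y_n)\to\infty$, the reflection inequality from convexity of $f(x_n,\cdot)$ reads $f(x_n,2y_0-y_n)\ge 2f(x_n,y_0)-f(x_n,y_n)$; since $f(x_n,y_0)\le g(y_0)<\infty$ while $f(x_n,y_n)\to\infty$, its right-hand side tends to $-\infty$, so the inequality is vacuous and yields no control on $(x_n)$ --- in particular it cannot force $(x_n)$ to accumulate near points where $f(\cdot,y_0)$ is nearly maximal, and since $C$ is neither compact nor closed, $(x_n)$ need not accumulate in $C$ at all. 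The ingredient your sketch underuses is the concavity of $f(\cdot,y)$ in $x$, which is how the paper localizes the maximizers: choose $\epsilon>0$ with $C(\epsilon)\set\descr{x\in C}{\abs{x-x_0}\le\epsilon}\subset\ri C$; each $f(x,\cdot)$ is a finite convex function on the open set $D$, hence continuous, so $f(\cdot,y_n)\to f(\cdot,y_0)$ pointwise, and for concave functions this convergence is uniform on the compact set $C(\epsilon)$; because $x_0$ is the unique maximizer, for large $n$ the value $f(x_0,y_n)$ strictly exceeds $f(\cdot,y_n)$ on the sphere $\abs{x-x_0}=\epsilon$, and concavity in $x$ then forces $\sup_{x\in C}f(x,y_n)$ to be attained inside $C(\epsilon)$. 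This is precisely the step that rules out maximizers escaping to infinity or to the relative boundary and gives $g<\infty$ on a neighborhood of $y_0$.

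Step (ii) has a gap of the same kind: from convergence of near-maximizers you assert $g(y)=f(x_0,y)+o(\abs{y-y_0})$, but to pass from $g(y)-g(y_0)\le f(x_y,y)-f(x_y,y_0)$ (plus a small error) to this expansion you need control of the increments $f(x,y)-f(x,y_0)$ that is uniform over $x$ near $x_0$ (for instance a uniform local Lipschitz bound in $y$ for the family $\{f(x,\cdot)\}_{x\in C(\epsilon)}$), and this is neither stated nor proved. The paper sidesteps exactly this point: once the supremum has been localized to the compact set $C(\epsilon)$, it simply invokes the classical envelope result for a supremum over a compact set with a unique maximizer (Remark~\ref{rem:5}; Corollary 4.45 in Hiriart-Urruty and Lemar\'echal), which delivers the differentiability of $g$ at $y_0$ and the identity~\eqref{eq:88}. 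So your plan is completable, but only after replacing the reflection argument by the concavity/uniform-convergence localization and either proving the uniform control required in (ii) or citing the compact-case theorem.
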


\begin{Remark}
  \label{rem:5}
  The proof of Lemma~\ref{lem:21} will follow from the well-known
  analogous result in convex optimization, where the assumption of
  concavity in $x$ is replaced by the requirement that $C$ is compact,
  see, for example, Corollary 4.45 in \citet{HiriarLemar:01}.
\end{Remark}

\begin{proof}
  The convexity of $g$ is straightforward.  Let $\epsilon>0$ be such
  that
  \begin{displaymath}
    C(\epsilon) \set \descr{x\in C}{\abs{x-x_0}\leq \epsilon} \subset \ri
    C. 
  \end{displaymath}
  If $(y_n)_{n\geq 1}$ is a sequence in $D$ converging to $y_0$, then
  the concave functions $f(\cdot,y_n)$, $n\geq 1$, converge to
  $f(\cdot,y_0)$ uniformly on compact subsets of $C$. Since $x_0$ is
  the unique point of maximum for $f(\cdot,y_0)$, there is $n_0>0$
  such that for every $n\geq n_0$ the concave function $f(\cdot,y_n)$
  attains its maximum at some point $x_n \in C(\epsilon)$. This
  argument implies the existence of $\delta>0$ such that
  \begin{displaymath}
    g(y) = \sup_{x\in C(\epsilon)} f(x,y)<\infty, \quad y\in D, \abs{y-y_0}<
    \delta. 
  \end{displaymath}
  As $C(\epsilon)$ is a compact set, the result now follows from the
  well-known fact in convex optimization mentioned in
  Remark~\ref{rem:5}.
\end{proof}

\begin{proof}[Proof of Theorem~\ref{th:11}] The function $h=h(y,z):
  D\times E \rightarrow \mathbf{R}\cup \braces{-\infty}$ given by
  \begin{displaymath}
    h(y,z) \set \inf_{x\in C} f(x,y,z), \quad y\in D, z\in E, 
  \end{displaymath}
  is clearly concave.  By Lemma~\ref{lem:21} the function
  $h(y_0,\cdot)$ is differentiable at $z_0$ and $\frac{\partial
    h}{\partial z}(y_0,z_0) = \frac{\partial f}{\partial z}(x_0,
  y_0,z_0)$. An application of Lemma~\ref{lem:20} completes the proof.
\end{proof}

\section{Integrability of saddle random fields}
\label{sec:unif-integr-saddle}

The following theorem states that the pointwise integrability of a
saddle random field $\xi$ implies the integrability of pseudo-norms
$\norm{\xi}_{0,C}$ for compact sets $C$. It also implies the
integrability of $\norm{\xi}_{1,C}$ if, in addition, the sample paths
of $\xi$ are differentiable. See Section~\ref{sec:stab-under-conv} for
the definition of the semi-norm $\norm{\cdot}_{m,C}$.  This result is
used in the proof of Theorem~\ref{th:9}.

We fix a probability space $(\Omega, \mathcal{F},\mathbb{P})$ and
denote $\mathbf{L}^1 = \mathbf{L}^1(\Omega, \mathcal{F},\mathbb{P})$
the Banach space of integrable random variables.

\begin{Theorem}
  \label{th:12}
  Let $U\subset \mathbf{R}^n$ and $V\subset \mathbf{R}^m$ be open sets
  and $\map{\xi}{U\times V}{\mathbf{L}^1}$ be a random field with
  sample paths in the space of concave-convex functions on $U\times
  V$.  Then for every compact set $C\subset U\times V$
  \begin{equation}
    \label{eq:89}
    \mathbb{E}[\norm{\xi}_{0,C}] < \infty. 
  \end{equation}
  If, in addition, the sample paths of $\xi$ belong to
  $\mathbf{C}^1(U)$, then
  \begin{equation}
    \label{eq:90}
    \mathbb{E}[\norm{\xi}_{1,C}] < \infty.
  \end{equation}
\end{Theorem}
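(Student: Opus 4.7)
The strategy is to dominate $\norm{\xi(\omega)}_{0,C}$ pathwise by a finite linear combination of $|\xi(a_k,b_\ell)|$ evaluated at finitely many \emph{deterministic} points $(a_k,b_\ell)\in U\times V$, each of which lies in $\mathbf{L}^1$ by hypothesis. First I would reduce to a rectangle: since $C\subset U\times V$ is compact, its projections $K_U\set \pi_U(C)\subset U$ and $K_V\set \pi_V(C)\subset V$ are compact with $C\subset K_U\times K_V$, so it suffices to bound $\norm{\xi}_{0,K_U\times K_V}$. Choose polytopes $\Delta_U\subset U$ and $\Delta_V\subset V$ whose interiors contain $K_U$ and $K_V$ respectively, with vertex sets $\{u_\alpha\}_{\alpha=1}^{p}\subset U$ and $\{v_\beta\}_{\beta=1}^{q}\subset V$.

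The key one-variable lemma I would prove first asserts: for any convex $\phi$ on an open $W\subset\mathbf{R}^n$ and any compact $K\subset W$, there exist deterministic points $w_1,\dots,w_N\in W$ and a constant $C_{K,W}>0$, depending only on $K$ and $W$, such that $\sup_{K}|\phi|\leq C_{K,W}\max_{\alpha}|\phi(w_\alpha)|$; the same holds for concave $\phi$. The upper bound $\phi\leq\max_\beta \phi(v_\beta)$ on $\Delta_W$ is immediate from Jensen applied to the convex-combination representation of any point of $\Delta_W$. The matching lower bound uses a mirror trick: cover $K$ by finitely many balls $B(x_i,r_i)$ with $\closure{B}(x_i,2r_i)\subset\Delta_W$; for $y\in B(x_i,r_i)$, the identity $x_i=\tfrac{1}{2}(y+(2x_i-y))$ combined with convexity yields
\begin{displaymath}
  \phi(y)\geq 2\phi(x_i)-\phi(2x_i-y)\geq 2\phi(x_i)-\max_\beta\phi(v_\beta),
\end{displaymath}
since $2x_i-y\in\Delta_W$. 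Taking absolute values then gives the asserted bound with $N=p+\#\{x_i\}$ and an explicit $C_{K,W}$.

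Given the lemma I would prove \eqref{eq:89} by iteration in the two saddle variables. For fixed $x\in K_U$, apply the convex lemma to $\xi(\omega,x,\cdot)$ on $V$ to obtain $\sup_{y\in K_V}|\xi(\omega,x,y)|\leq C_V\max_{\ell}|\xi(\omega,x,b_\ell)|$ with deterministic $b_\ell\in V$; then for each fixed $b_\ell$, apply the concave lemma to $\xi(\omega,\cdot,b_\ell)$ on $U$ to obtain $\sup_{x\in K_U}|\xi(\omega,x,b_\ell)|\leq C_U\max_{k}|\xi(\omega,a_k,b_\ell)|$ with deterministic $a_k\in U$. Combining,
\begin{displaymath}
  \norm{\xi(\omega)}_{0,K_U\times K_V}\leq C\max_{k,\ell}|\xi(\omega,a_k,b_\ell)|,
\end{displaymath}
and since every $\xi(a_k,b_\ell)$ is integrable, taking expectations yields \eqref{eq:89}.

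For the $\mathbf{C}^1$ claim \eqref{eq:90} I would invoke the standard Lipschitz estimate for convex (or concave) functions on open sets: if $\phi$ is convex on open $W$, $\closure{B}(x_0,r)\subset W$, and $\phi\leq M$ on $B(x_0,r)$, then $|\nabla\phi|\leq (2/r)(M+|\phi(x_0)|)$ on $B(x_0,r/2)$. Choosing $r>0$ with the $r$-neighborhood $(K_U\times K_V)^r\subset U\times V$ and applying this estimate coordinatewise to $\xi$ gives $\norm{\xi}_{1,K_U\times K_V}\leq C'\norm{\xi}_{0,(K_U\times K_V)^{r/2}}$, whose expectation is finite by \eqref{eq:89} applied to the enlarged compact rectangle. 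The main obstacle is the careful bookkeeping in the mirror-trick step, namely ensuring every reflected point $2x_i-y$ actually lies in the enclosing polytope $\Delta_W$ so that its $\phi$-value is controlled by the finite deterministic vertex family; once this is cleanly set up, the iterated saddle bound and the passage to first derivatives are routine.
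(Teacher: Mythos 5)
Your argument is correct and uses the same basic convexity tricks as the paper, but packages them differently. The paper proceeds in two stages at the level of random fields: a one-variable lemma showing $\min_C f$ is controlled by $f(y)$ and $\max_{C(\epsilon)}f$ (the same reflection idea as your mirror trick) plus the vertex bound $\max_C\xi=\max_i\xi(x_i)$ for the convex hull of finitely many points, which give $\mathbb{E}[\norm{\xi}_{0,C}]<\infty$ for a one-variable convex field (Lemmas~\ref{lem:22}--\ref{lem:24}); the saddle case is then handled by introducing the intermediate random field $\eta(y)=\sup_{x\in C_1}\xi(x,y)$, checking $\eta(y)\in\mathbf{L}^1$, and applying the one-variable lemma a second time. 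You instead collapse everything into a single deterministic, pathwise domination $\norm{\xi(\omega)}_{0,C}\leq C\max_{k,\ell}\abs{\xi(\omega,a_k,b_\ell)}$ with finitely many \emph{deterministic} evaluation points and a constant depending only on the geometry, after which integrability is immediate; this is slightly more direct (no intermediate field, one expectation at the end) and even gives an explicit $\mathbf{L}^1$ bound, at the cost of more geometric bookkeeping. Your $\mathbf{C}^1$ step is the same as the paper's Lemma~\ref{lem:23}: a value-to-gradient estimate applied separately in the $x$-block (concave) and $y$-block (convex) on a slightly enlarged compact set, then \eqref{eq:89} on that enlargement. One caveat to record: your single enclosing polytope $\Delta_W$ with $K\subset\interior\Delta_W\subset\Delta_W\subset W$ exists only when the convex hull of $K$ stays in $W$ (e.g.\ $W$ convex); for a general open $W$ you should localize, covering $K$ by finitely many balls and enclosing each in a small simplex contained in $W$, and run the vertex and mirror bounds simplex by simplex. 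This is harmless here (the paper's own Lemmas~\ref{lem:22} and~\ref{lem:24} implicitly make the same use of segments and convex hulls lying in the domain, and in the application $U$ and $V$ are convex), but it is the one place where your write-up, as stated, over-claims.
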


The proof is divided into lemmas.

\begin{Lemma}
  \label{lem:22}
  Let $U$ be an open set in $\mathbf{R}^d$, $\map{f}{U}{\mathbf{R}}$
  be a convex function, $C$ be a compact subset of $U$, and
  $\epsilon>0$ be such that
  \begin{equation}
    \label{eq:91}
    C(\epsilon) \set \descr{x\in \mathbf{R}^d}{\inf_{y\in C}\abs{x-y} \leq
      \epsilon } \subset U. 
  \end{equation}
  Then for every $y\in C$ we have
  \begin{equation}
    \label{eq:92}
    \min_{x\in C}f(x) \geq f(y) + \frac{\sup_{x\in C}|x-y|}{\epsilon} \left(f(y) -
      \max_{x\in C(\epsilon)} f(x)\right). 
  \end{equation}
\end{Lemma}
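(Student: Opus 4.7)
My plan is to bound $f(x)$ from below for each $x \in C$ by extrapolating past $y$ along the ray from $x$ through $y$, landing in the enlarged set $C(\epsilon)$, and then using convexity of $f$ to propagate that bound back to $y$.

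The first step: fix $y \in C$ and $x \in C$ with $x \neq y$, and set $z \set y + \frac{\epsilon}{|x-y|}(y - x)$. Since $|z - y| = \epsilon$ and $y \in C$, the point $z$ lies in $C(\epsilon) \subset U$. By construction $y$ is the convex combination
$$y = \frac{\epsilon}{|x-y| + \epsilon}\,x + \frac{|x-y|}{|x-y| + \epsilon}\,z,$$
so convexity of $f$ together with a one-line rearrangement yields
$$f(x) \geq f(y) + \frac{|x-y|}{\epsilon}\bigl(f(y) - f(z)\bigr).$$

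The second step: set $M \set \max_{x' \in C(\epsilon)} f(x')$, which is finite because convex functions are continuous on the interior of their domain. Since $z \in C(\epsilon)$ one has $f(y) - f(z) \geq f(y) - M$, and since $y \in C \subset C(\epsilon)$ the factor $f(y) - M$ is \emph{nonpositive}. Consequently, enlarging $|x-y|$ to $\sup_{x' \in C}|x'-y|$ only weakens the lower bound, giving
$$f(x) \geq f(y) + \frac{\sup_{x' \in C}|x'-y|}{\epsilon}\bigl(f(y) - M\bigr).$$
This estimate also holds trivially when $x = y$ (both $|x-y|=0$ and the right-hand correction is nonpositive). Taking the minimum over $x \in C$ on the left then yields the claimed \eqref{eq:92}.

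The only subtle point in the argument is the sign bookkeeping in the second step: the replacement of $|x-y|$ by the possibly larger quantity $\sup_{x' \in C}|x'-y|$ preserves the direction of the inequality only because $f(y) - M \leq 0$, which in turn rests on the inclusion $C \subset C(\epsilon)$. Everything else is routine convex algebra.
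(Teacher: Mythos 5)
Your proof is correct and follows essentially the same route as the paper: extend the segment from $x$ through $y$ to a point $z\in C(\epsilon)$, write $y$ as a convex combination of $x$ and $z$, and use the bound $|y-z|\geq\epsilon$ (you take $|y-z|=\epsilon$ exactly) together with the nonpositivity of $f(y)-\max_{C(\epsilon)}f$ to replace $|x-y|$ by $\sup_{x'\in C}|x'-y|$. The only differences are cosmetic: you make the choice of $z$, the degenerate case $x=y$, and the sign bookkeeping explicit, which the paper leaves implicit.
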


\begin{proof} Fix $y\in C$. For every $x\in C$ there is $z\in
  \boundary{C}(\epsilon)$ such that $y$ is a convex combination of $x$
  and $z$: $y = tx + (1-t) z$ for some $t\in (0,1)$.  Using the fact
  that $|y-z| \geq \epsilon$ we obtain
  \begin{equation}
    \label{eq:93}
    \frac{1-t}{t} = \frac{\abs{x-y}}{\abs{y-z}} \leq \frac{\sup_{x\in
        C}\abs{x-y}}{\epsilon}. 
  \end{equation}
  The convexity of $f$ implies
  \begin{displaymath}
    f(y) \leq tf(x) + (1-t) f(z), 
  \end{displaymath}
  or, equivalently,
  \begin{displaymath}
    f(x) \geq f(y) + \frac{1-t}{t}(f(y) - f(z)),
  \end{displaymath}
  which, in view of \eqref{eq:93}, yields \eqref{eq:92}.
\end{proof}

\begin{Lemma}
  \label{lem:23}
  In addition to the conditions of Lemma~\ref{lem:22} assume that
  $f\in \mathbf{C}^1(U)$. Then
  \begin{equation}
    \label{eq:94}
    \norm{f}_{1,C} \leq \left(\frac{2\sqrt{d}}{\epsilon}+1\right)
    \norm{f}_{0,C(\epsilon)}.  
  \end{equation}
\end{Lemma}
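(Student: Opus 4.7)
\bigskip

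\noindent\textbf{Proof proposal for Lemma~\ref{lem:23}.}
The plan is to split $\norm{f}_{1,C}$ into its zero-order and first-order parts and bound each. The zero-order part is immediate: since $C\subset C(\epsilon)$, we trivially have $\sup_{x\in C}|f(x)| \leq \norm{f}_{0,C(\epsilon)}$. The work is to estimate the first-order part $\sum_{i=1}^d \sup_{x\in C}|\partial_i f(x)|$ using convexity, and the key insight is that one should first control the Euclidean norm of $\nabla f$ (not the individual partials), and only then pass to the sum of absolute values via Cauchy--Schwarz. This is where the $\sqrt{d}$ rather than $d$ appears.

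Concretely, fix $x\in C$ and a unit vector $v\in \mathbf{R}^d$. Because $C(\epsilon)$ consists of all points within distance $\epsilon$ of $C$, both $x+\epsilon v$ and $x-\epsilon v$ lie in $C(\epsilon)\subset U$. The convexity of $f$ gives the supporting hyperplane inequalities
\begin{align*}
  f(x+\epsilon v) &\geq f(x) + \epsilon\, v\cdot \nabla f(x), \\
  f(x-\epsilon v) &\geq f(x) - \epsilon\, v\cdot \nabla f(x),
\end{align*}
so that
\begin{displaymath}
  |v\cdot \nabla f(x)| \leq \frac{1}{\epsilon}\max\bigl(f(x+\epsilon v)-f(x),\; f(x-\epsilon v)-f(x)\bigr) \leq \frac{2}{\epsilon}\norm{f}_{0,C(\epsilon)}.
\end{displaymath}
Taking the supremum over unit vectors $v$ yields $|\nabla f(x)| \leq \tfrac{2}{\epsilon}\norm{f}_{0,C(\epsilon)}$ for every $x\in C$.

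Finally, by the Cauchy--Schwarz inequality applied to the vectors $(|\partial_1 f(x)|,\ldots,|\partial_d f(x)|)$ and $(1,\ldots,1)$,
\begin{displaymath}
  \sum_{i=1}^d |\partial_i f(x)| \;\leq\; \sqrt{d}\,|\nabla f(x)| \;\leq\; \frac{2\sqrt{d}}{\epsilon}\,\norm{f}_{0,C(\epsilon)}.
\end{displaymath}
Adding to the zero-order bound gives $\norm{f}_{1,C} \leq \bigl(1 + \tfrac{2\sqrt{d}}{\epsilon}\bigr)\norm{f}_{0,C(\epsilon)}$, as required.

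There is no real obstacle: the only subtlety is to avoid the naive step of summing the coordinatewise bounds $|\partial_i f|\leq \tfrac{2}{\epsilon}\norm{f}_{0,C(\epsilon)}$ (which would produce a factor of $d$), and instead bound the full gradient first, then lose only $\sqrt{d}$ via Cauchy--Schwarz. Note that Lemma~\ref{lem:22} is not actually needed here; the one-sided gap estimate it provides is used elsewhere, whereas Lemma~\ref{lem:23} rests directly on the supporting-hyperplane property of convex $C^1$ functions.
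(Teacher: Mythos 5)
Your proof is correct and follows essentially the same route as the paper: the gradient inequality for the convex function $f$ tested against points of $C(\epsilon)$ at distance $\epsilon$ gives $\abs{\nabla f(x)}\leq \frac{2}{\epsilon}\norm{f}_{0,C(\epsilon)}$ on $C$, and the bound $\sum_{i=1}^d\abs{\partial_i f}\leq\sqrt{d}\,\abs{\nabla f}$ then yields \eqref{eq:94}. Your side remark is also accurate: the paper's proof of this lemma uses only the geometric hypothesis $C(\epsilon)\subset U$ from Lemma~\ref{lem:22}, not its conclusion \eqref{eq:92}.
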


\begin{proof}
  For $y\in C$ and $x\in C(\epsilon)$ we obtain from the convexity of
  $f$:
  \begin{displaymath}
    f(x) - f(y) \geq \ip{x-y}{\nabla f(y)}. 
  \end{displaymath}
  It follows that
  \begin{displaymath}
    2 \norm{f}_{0, C(\epsilon)} \geq \sup_{y\in C}
    \sup_{\descr{x}{\abs{x-y}\leq \epsilon}} \left(\ip{x-y}{\nabla
        f(y)}\right) = \epsilon \sup_{y\in C} \abs{\nabla f(y)}.   
  \end{displaymath}
  Since,
  \begin{displaymath}
    \abs{\nabla f(y)} \set \sqrt{\sum_{i=1}^d \left(\frac{\partial f}{\partial
          x_i}(y)\right)^2 } \geq \frac{1}{\sqrt{d}} \sum_{i=1}^d
    \abs{\frac{\partial f}{\partial x_i}(y)},
  \end{displaymath}
  we obtain
  \begin{displaymath}
    \abs{f(y)} +  \sum_{i=1}^d \abs{\frac{\partial f}{\partial 
        x_i}(y)} \leq \left(\frac{2\sqrt{d}}{\epsilon}+1\right)
    \norm{f}_{0,C(\epsilon)},
  \end{displaymath}
  which clearly implies \eqref{eq:94}.
\end{proof}

\begin{Lemma}
  \label{lem:24}
  Let $U$ be an open set in $\mathbf{R}^d$,
  $\map{\xi=\xi(x)}{U}{\mathbf{L}^1}$ be a random field with sample
  paths in the space of convex functions on $U$. Then, for every
  compact set $C\subset U$,
  \begin{equation}
    \label{eq:95}
    \mathbb{E}[\norm{\xi}_{0,C}] < \infty.
  \end{equation}
  If, in addition, the sample paths of $\xi$ belong to
  $\mathbf{C}^1(U)$ then
  \begin{equation}
    \label{eq:96}
    \mathbb{E}[\norm{\xi}_{1,C}] < \infty.
  \end{equation}
\end{Lemma}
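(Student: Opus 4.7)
The plan is to reduce both bounds to the integrability of $\xi$ at finitely many fixed points of $U$ by exploiting the pointwise inequalities already established in Lemmas~\ref{lem:22} and~\ref{lem:23}.

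For~\eqref{eq:95}, first I would fix a compact $C\subset U$ and choose $\epsilon>0$ small enough that the thickening $C(\epsilon)$ defined in~\eqref{eq:91} is still contained in $U$. Since $C(\epsilon)$ is compact and $U$ is open, I can enclose $C(\epsilon)$ inside a closed cube (or any closed convex polytope) $K\subset U$ with finitely many vertices $v_1,\dots,v_N$. For each sample path $f=\xi(\omega,\cdot)$, convexity yields
\begin{displaymath}
\sup_{x\in C(\epsilon)} f(x) \;\leq\; \max_{x\in K}f(x) \;=\; \max_{i=1,\dots,N} f(v_i),
\end{displaymath}
because a convex function attains its maximum on a convex polytope at a vertex. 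This already bounds $\sup_{x\in C}f(x)$ from above by $\max_i f(v_i)^+$. For the lower bound on $\inf_{x\in C}f(x)$ I would fix any $y_0\in C$ and apply Lemma~\ref{lem:22}, which gives
\begin{displaymath}
\min_{x\in C} f(x) \;\geq\; f(y_0)+\frac{D}{\epsilon}\bigl(f(y_0)-\max_{x\in C(\epsilon)}f(x)\bigr), \qquad D\set\sup_{x\in C}|x-y_0|,
\end{displaymath}
and the right side is bounded below by an affine combination of $f(y_0)$ and $\max_i f(v_i)$. Combining the two estimates, $\|f\|_{0,C}$ is dominated pathwise by a finite linear combination of $|f(y_0)|$ and $|f(v_1)|,\dots,|f(v_N)|$. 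Taking expectations and using the standing hypothesis $\xi(y)\in\mathbf{L}^1$ for every $y\in U$ yields~\eqref{eq:95}.

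For~\eqref{eq:96}, I would simply apply Lemma~\ref{lem:23} pathwise: for a compact $C$ and $\epsilon>0$ with $C(\epsilon)\subset U$,
\begin{displaymath}
\|\xi\|_{1,C} \;\leq\; \Bigl(\tfrac{2\sqrt{d}}{\epsilon}+1\Bigr)\|\xi\|_{0,C(\epsilon)},
\end{displaymath}
and then invoke the already proved part~\eqref{eq:95} with $C(\epsilon)$ in place of $C$ (the set $C(\epsilon)$ is itself a compact subset of $U$, so the argument above applies verbatim after shrinking $\epsilon$ slightly).

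The main subtlety is purely geometric: writing $C(\epsilon)$ inside a polytope whose vertices all lie in $U$, so that Jensen's inequality on a finite vertex set bounds $\sup_{C(\epsilon)}f$ above. Everything else is a direct pointwise application of the two preceding lemmas, combined with the linearity of expectation and pointwise integrability.
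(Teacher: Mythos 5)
Your overall strategy is the same as the paper's: bound $\sup_{C}\xi$ by the values of $\xi$ at finitely many fixed points of $U$ using convexity, bound $\inf_{C}\xi$ from below via Lemma~\ref{lem:22} applied at a fixed $y_0\in C$ (with the deterministic constant $\sup_{x\in C}|x-y_0|/\epsilon$), take expectations, and deduce \eqref{eq:96} from \eqref{eq:95} and Lemma~\ref{lem:23}. There is, however, one step that fails as written: the claim that the compact set $C(\epsilon)$ can be enclosed in a closed cube or convex polytope $K$ with $K\subset U$. The lemma assumes only that $U$ is open, not convex, and for non-convex $U$ no such $K$ need exist: take $U=\mathbf{R}^2\setminus\{0\}$ and $C$ a circle centred at the origin; any convex set containing $C$ contains its convex hull, hence the origin, so it cannot lie in $U$. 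Since the sample paths of $\xi$ are convex (indeed, defined) only on $U$, you cannot invoke the vertex maximum principle on a polytope that leaves $U$.

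The gap is local and easily repaired: by compactness, cover $C(\epsilon)$ by finitely many closed cubes (or simplices) each contained in $U$; on each of them a convex sample path attains its maximum at a vertex, so $\sup_{C(\epsilon)}\xi$ is dominated by the maximum of $\xi$ over the finite set of all these vertices, each of which lies in $\mathbf{L}^1$ by hypothesis. This is precisely what the paper's reduction ``without restricting generality $C$ is the closed convex hull of a finite family of points in $U$'' amounts to in its proof of \eqref{eq:97}. With this correction your argument goes through and coincides with the paper's proof: the lower bound on $\min_{C}\xi$ from Lemma~\ref{lem:22} is an affine expression in $\xi(y_0)$ and the vertex maximum with deterministic coefficients, so \eqref{eq:95} follows by taking expectations, and \eqref{eq:96} follows pathwise from Lemma~\ref{lem:23} together with \eqref{eq:95} applied to the compact set $C(\epsilon)$.
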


\begin{proof}
  Let us first show that for every compact set $C\subset U$
  \begin{equation}
    \label{eq:97}
    \mathbb{E}[\max_{x\in C} \xi(x)] < \infty. 
  \end{equation}
  Without restricting generality we can assume that $C$ is the closed
  convex hull of a finite family $(x_i)_{i=1,\dots,I}$ in $U$. From
  the convexity of $\xi$ we then deduce
  \begin{displaymath}
    \max_{x\in C} \xi(x) = \max_{i=1,\dots,I}\xi(x_i),
  \end{displaymath}
  and \eqref{eq:97} follows from the assumption $\xi(x)\in
  \mathbf{L}^1$, $x\in U$.

  Since $C$ is a compact set in $U$, for sufficiently small $\epsilon
  >0$ the set $C(\epsilon)$ defined in \eqref{eq:91} is also a compact
  subset of $U$. By \eqref{eq:97} and Lemma~\ref{lem:22} we obtain
  \begin{displaymath}
    \mathbb{E}[\min_{x\in C} \xi(x)] > - \infty,
  \end{displaymath}
  which implies \eqref{eq:95}.

  Finally, if $f\in \mathbf{C}^1$, then \eqref{eq:96} follows from
  \eqref{eq:95} and Lemma~\ref{lem:23}.
\end{proof}

\begin{proof}[Proof of Theorem~\ref{th:12}.]
  It is sufficient to consider the case $C = C_1 \times C_2$, where
  $C_1$ and $C_2$ are compact subsets of $U$ and $V$, respectively. To
  prove~\eqref{eq:89} it is enough to show that
  \begin{equation}
    \label{eq:98}
    \sup_{x\in C_1} \sup_{y\in C_2} \xi(x,y) \in \mathbf{L}^1. 
  \end{equation}
  Indeed, having established~\eqref{eq:98} for every random field
  $\xi$ and every pair of open sets $U$ and $V$ satisfying the
  conditions of the lemma we deduce
  \begin{displaymath}
    \inf_{x\in C_1} \inf_{y\in C_2} \xi(x,y) = - \sup_{x\in C_1} \sup_{y\in
      C_2} (-\xi(x,y)) \in \mathbf{L}^1, 
  \end{displaymath}
  which, jointly with~\eqref{eq:98}, implies~\eqref{eq:89}. To
  verify~\eqref{eq:98} observe that the random field
  \begin{displaymath}
    \eta(y) \set \sup_{x\in C_1}\xi(x,y), \quad y\in V,
  \end{displaymath}
  has sample paths in the space of convex functions and, by
  Lemma~\ref{lem:24}, $\eta(y) \in \mathbf{L}^1$. Another application
  of Lemma~\ref{lem:24} yields $\norm{\eta}_{0,C_2}\in \mathbf{L}^1$,
  which clearly implies~\eqref{eq:98}.

  To verify \eqref{eq:90}, choose $\epsilon>0$ so that the sets
  $C_1(\epsilon)$ and $C_2(\epsilon)$ defined by \eqref{eq:91} are
  still in $U$ and $V$. Then, by Lemma~\ref{lem:23}, there is
  $c=c(\epsilon)>0$ such that for every $x\in C_1$ and $y\in C_2$
  \begin{align*}
    \norm{\xi(x,\cdot)}_{1,C_2} + \norm{\xi(\cdot,y)}_{1,C_1} & \leq
    c(\norm{\xi(x,\cdot)}_{0,C_2(\epsilon)} +
    \norm{\xi(\cdot,y)}_{0,C_1(\epsilon)}) \\
    &\leq 2c \norm{\xi}_{0,C_1(\epsilon)\times C_2(\epsilon)},
  \end{align*}
  and the result follows.
\end{proof}

\section{Stochastic fields of martingales}
\label{sec:modif-rand-fields}

This appendix contains the results concerning the sample paths of
stochastic fields of martingales used in the proofs of
Theorems~\ref{th:9} and~\ref{th:10}.

We fix a complete filtered probability space
$(\Omega,\mathcal{F},(\mathcal{F}_t)_{t\in [0,T]}, \mathbb{P})$ with
$\mathcal{F} = \mathcal{F}_T$. Recall the Fr\'echet spaces
$\mathbf{C}^m$ with semi-norms $\norm{\cdot}_{m,C}$ from
Section~\ref{sec:stab-under-conv}. As usual, $\mathbf{L}^0 =
\mathbf{L}^0(\Omega,\mathcal{F},\mathbb{P})$ denotes the space of
(equivalent classes of) random variables with convergence in
probability.

\begin{Lemma}
  \label{lem:25}
  Let $m$ be a non-negative integer, $U$ be an open set in
  $\mathbf{R}^d$, and $\map{\xi}{U}{\mathbf{L}^0}$ be a random field
  with sample paths in $\mathbf{C}^m = \mathbf{C}^m(U)$ such that for
  every compact set $C\subset U$
  \begin{displaymath}
    \mathbb{E}[\norm{\xi}_{m,C}] < \infty.
  \end{displaymath}
  Then the stochastic field
  \begin{displaymath}
    M_t(x) \set \mathbb{E}[\xi(x)|\mathcal{F}_t], \quad t\in [0,T], 
    \, x \in U, 
  \end{displaymath}
  has a modification with sample paths in
  $\mathbf{D}([0,T],\mathbf{C}^m)$ and, for every multi-index $k =
  (k_1,\dots,k_d)$ of non-negative integers with $\abs{k} \set
  \sum_{i=1}^d k_i \leq m$,
  \begin{displaymath}
    D^k M_t(x) = \mathbb{E}[D^k \xi(x)|\mathcal{F}_t],
    \quad t\in [0,T], \, x \in U, 
  \end{displaymath}
  where the differential operator $D^k$ is defined in \eqref{eq:56}.
\end{Lemma}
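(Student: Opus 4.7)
The strategy is to build, on a countable skeleton, RCLL modifications of the real valued martingales $t\mapsto\mathbb{E}[D^k\xi(x)\mid\mathcal{F}_t]$ for all multi-indices $k$ with $|k|\le m$, then extend in $x$ by continuity using the integrability of $\norm{\xi}_{m,C}$, and finally identify derivatives via dominated convergence.

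First I would fix a countable dense set $Q\subset U$ and, for each $x\in Q$ and each multi-index $k$ with $|k|\le m$, select the standard Doob RCLL modification of the martingale $t\mapsto\mathbb{E}[D^k\xi(x)\mid\mathcal{F}_t]$; a single $\mathbb{P}$-null set takes care of this countable family. I would also fix an exhausting sequence $(C_n)$ of compact subsets of $U$ with $C_n\subset\interior{C_{n+1}}$, enlarge each $C_n$ slightly to a compact $C_n'\subset U$, and fix an RCLL modification $N^n_t$ of the positive martingale $t\mapsto\mathbb{E}[\norm{\xi}_{m,C_n'}\mid\mathcal{F}_t]$, which is finite by hypothesis. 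All of this is done on a common set $\Omega_0$ of full measure.

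Next I would use the integrability bound to extend in $x$. For $|k|\le m-1$ and $x,y\in Q\cap C_n$ with $[x,y]\subset\interior{C_n'}$, the mean value theorem applied to the $\mathbf{C}^m$ sample paths gives $\abs{D^k\xi(x)-D^k\xi(y)}\le\abs{x-y}\norm{\xi}_{m,C_n'}$, and taking conditional expectations yields
\begin{displaymath}
  \abs{\mathbb{E}[D^k\xi(x)\mid\mathcal{F}_t]-\mathbb{E}[D^k\xi(y)\mid\mathcal{F}_t]}\le\abs{x-y}\,N^n_t
\end{displaymath}
for rational $t$ on $\Omega_0$, hence for all $t$ by right continuity. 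Thus, on $\Omega_0$, each $x\mapsto M^{(k)}_t(x)$ is Lipschitz on $Q\cap C_n$ with an RCLL Lipschitz constant, and admits a unique continuous extension to $\interior{C_n}$; patching over $n$ gives a stochastic field $M^{(k)}_t(x)$ continuous in $x$, RCLL in $t$, and agreeing with $\mathbb{E}[D^k\xi(x)\mid\mathcal{F}_t]$ pointwise. To identify $M^{(k)}$ as the $k$th derivative of $M^{(0)}=:\widetilde M$, I would fix a direction $e_i$, a sequence $h_n\to0$, and a compact neighborhood of $x$ inside some $C_n'$; the difference quotients $(\xi(\cdot+h_ne_i)-\xi(\cdot))/h_n$ are dominated by $\norm{\xi}_{m,C_n'}\in\mathbf{L}^1$, so dominated convergence for conditional expectations gives $\partial_iM^{(k)}_t(x)=M^{(k+e_i)}_t(x)$ almost surely, hence identically after the continuous choice. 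Iterating provides all partial derivatives up to order $m$ and the asserted commutation formula.

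Finally, RCLL in $\mathbf{C}^m$ follows because convergence in $\mathbf{C}^m$ on each $C_n$ is equivalent to uniform convergence of $D^k$ for $|k|\le m$, and the finitely many processes $t\mapsto\sup_{x\in C_n}\abs{D^k\widetilde M_t(x)}$ are controlled, via the Lipschitz estimate above applied along a dense mesh, by $\norm{M^{(0)}_t(x_0)}+\mathrm{diam}(C_n)N^n_t$ for a fixed base point $x_0$, which is RCLL; right-continuity and existence of left limits in each semi-norm then transfer from the countable family to the whole space by uniform control. The main obstacle is the uniformity of null sets: the Lipschitz estimate $\abs{x-y}N^n_t$ must hold on a single full-measure set independent of $(x,y,t,k)$, which is precisely why one takes an RCLL version of the dominating martingale $\mathbb{E}[\norm{\xi}_{m,C_n'}\mid\mathcal{F}_t]$ before extending by continuity; once this is in place, the differentiation and the $\mathbf{C}^m$-RCLL conclusions are routine.
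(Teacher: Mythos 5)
There is a genuine gap at the top order of differentiation. Your only tool for gaining continuity in $x$ on a single null-set-free event is the mean value estimate $\abs{D^k\xi(x)-D^k\xi(y)}\le \abs{x-y}\,\norm{\xi}_{m,C_n'}$, and, as you yourself note, this is available only for $\abs{k}\le m-1$, because it spends one derivative as a Lipschitz constant. Consequently you never construct a version of the fields $\mathbb{E}[D^k\xi(x)\mid\mathcal{F}_t]$ with $\abs{k}=m$ that is continuous in $x$ and RCLL in $t$; yet this is exactly what the lemma requires (already for $m=0$ the whole content of the statement is a jointly regular version of $\mathbb{E}[\xi(x)\mid\mathcal{F}_t]$ with no Lipschitz control whatsoever, only continuity of sample paths plus $\mathbb{E}[\norm{\xi}_{0,C}]<\infty$). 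Your later steps lean on the missing object: the phrase ``hence identically after the continuous choice'' is vacuous when $\abs{k+e_i}=m$, and the final transfer of RCLL in the $\mathbf{C}^m$ semi-norms ``by uniform control'' has no uniform control to invoke at order $m$. The paper closes precisely this hole by working with the modulus-of-continuity martingales $Y_t(\delta)=\mathbb{E}[\sup_{\abs{x_i-x_j}\le\delta}\abs{\xi(x_i)-\xi(x_j)}\mid\mathcal{F}_t]$ (and, for the derivative identification, the Taylor-remainder martingale $Z_t(\delta)$): these are dominated by a multiple of $\norm{\xi}_{m,C}\in\mathbf{L}^1$, their terminal expectations tend to $0$ by uniform continuity of the sample paths on compacts, and Doob's maximal inequality converts this into control that is uniform in $t$, which is what produces the $\mathbf{D}([0,T],\mathbf{C}^m)$ modification (by induction it suffices to treat $m=0,1$).

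A second, related weakness is the identification of derivatives. Dominated convergence along a fixed sequence $h_n\to0$ only shows that the conditional expectations of the difference quotients converge almost surely for that sequence and for fixed $(t,x)$; it does not show that the derivative of $x\mapsto M_t(x)$ exists, let alone that the sample paths are $\mathbf{C}^m$ simultaneously for all $t$ outside one null set. To repair this you would either need the paper's uniform estimate on the first-order Taylor remainder (the $Z(\delta)$ argument, again via Doob), or an integral representation $M^{(k)}_t(x+he_i)-M^{(k)}_t(x)=\int_0^h M^{(k+e_i)}_t(x+se_i)\,ds$ obtained from conditional Fubini on a rational skeleton and extended by right continuity in $t$ and continuity in $x$ --- but the latter route again presupposes the continuous-in-$x$, RCLL-in-$t$ version of the top-order field that your argument does not provide.
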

\begin{proof} By induction, it is sufficient to consider the cases
  $m=0,1$.  Assume first that $m=0$. It is well-known that, for every
  $x\in U$, the martingale $M(x)$ has a modification in
  $\mathbf{D}([0,T],\mathbf{R})$. Fix a compact set $C\subset U$ and
  let $(x_i)_{i\geq 1}$ be a dense countable subset of $C$. Standard
  arguments show that the stochastic field $\map{M}{C\times
    [0,T]}{\mathbf{R}}$ has a modification in
  $\mathbf{D}([0,T],\mathbf{C})$ if
  \begin{align}
    \label{eq:99}
    \lim_{a\to \infty} \mathbb{P}[\sup_{x_i} (M(x_i))^*_T \geq a] &= 0, \\
    \label{eq:100}
    \lim_{\delta \to 0} \mathbb{P}[\sup_{\abs{x_i-x_j}\leq \delta}
    (M(x_i) - M(x_j))^*_T \geq \epsilon ] &= 0, \mtext{for every}
    \epsilon >0,
  \end{align}
  where $X^*_t \set \sup_{0\leq s\leq t}\abs{X_s}$.

  From the conditions on $\xi=\xi(x)$ we deduce that the martingales:
  \begin{align*}
    X_t &\set \mathbb{E}[\sup_{x\in C} |\xi(x)| | \mathcal{F}_t], \\
    Y_t(\delta) &\set \mathbb{E}[\sup_{\abs{x_i-x_j}\leq \delta}
    |\xi(x_i) - \xi(x_j)| | \mathcal{F}_t],
  \end{align*}
  are well-defined and
  \begin{equation}
    \label{eq:101}
    \lim_{\delta\to 0} \mathbb{E}[Y_T(\delta)] = 0. 
  \end{equation}

  Since, clearly,
  \begin{align*}
    \sup_{x_i} |M_t(x_i)| &\leq  X_t, \\
    \sup_{\abs{x_i-x_j}\leq \delta} |M_t(x_i) - M_t(x_j)| & \leq
    Y_t(\delta),
  \end{align*}
  we deduce from Doob's inequality:
  \begin{align*}
    \mathbb{P}[\sup_{x_i} (M(x_i))^*_T \geq a] & \leq
    \mathbb{P}[X^*_T \geq a] \leq \frac1a \mathbb{E}[X_T],\\
    \mathbb{P}[\sup_{\abs{x_i-x_j}\leq \delta} (M(x_i) - M(x_j))^*_T
    \geq \epsilon ] &\leq \mathbb{P}[ (Y(\delta))^*_T \geq \epsilon ]
    \leq \frac1\epsilon \mathbb{E}[Y_T(\delta)],
  \end{align*}
  which, jointly with \eqref{eq:101}, implies \eqref{eq:99} and
  \eqref{eq:100}. This concludes the proof for the case $m=0$.

  Assume now that $m=1$ and define the stochastic field
  \begin{displaymath}
    \map{D_t(x) \set \mathbb{E}[\nabla \xi(x)|\mathcal{F}_t]}{U\times
      [0,T]}{\mathbf{L}^0(\mathbf{R}^d)},  
  \end{displaymath}
  with values in $\mathbf{R}^d$, where $\nabla \set (\frac{\partial
  }{\partial x^1}, \dots, \frac{\partial }{\partial x^d})$ is the
  gradient operator.  From the case $m=0$ we obtain that the
  stochastic fields $M=M_t(x)$ and $D= D_t(x)$ have modifications in
  $\mathbf{D}([0,T],\mathbf{C})$, which we shall use. For $M=M_t(x)$
  to have a modification in $\mathbf{D}([0,T],\mathbf{C}^1)$ with the
  derivatives given by $D=D_t(x)$ we have to show that
  \begin{equation}
    \label{eq:102}
    \lim_{\delta \to 0} \mathbb{P}[\sup_{x\in C, \abs{x-y}\leq \delta}
    \frac1\delta (N(x,y))^*_T \geq \epsilon] = 0, \mtext{for every}
    \epsilon>0, 
  \end{equation}
  where
  \begin{displaymath}
    N(x,y) \set M(y) - M(x) - \ip{D(x)}{y-x}. 
  \end{displaymath}

  We follow the same path as in the proof of the previous case. Our
  assumptions on $\xi = \xi(x)$ imply that, for sufficiently small
  $\delta >0$, the martingale
  \begin{displaymath}
    Z_t(\delta)  \set \mathbb{E}[\sup_{x\in
      C, |x-y|\leq \delta} \frac1\delta |\xi(y) -
    \xi(x) - \ip{\nabla \xi(x)}{y-x}|\; | \mathcal{F}_t], 
  \end{displaymath}
  is well-defined and
  \begin{equation}
    \label{eq:103}
    \lim_{\delta \to 0} \mathbb{E}[Z_T(\delta)] = 0. 
  \end{equation}
  Since
  \begin{displaymath}
    \sup_{x\in C, \abs{x-y}\leq \delta} \frac1\delta N_t(x,y) \leq Z_t(\delta), 
  \end{displaymath}
  we have, by Doob's inequality,
  \begin{align*}
    \mathbb{P}[\sup_{x\in C, \abs{x-y}\leq \delta} \frac1\delta
    (N(x,y))^*_T \geq \epsilon] \leq \mathbb{P}[(Z(\delta))^*_T \geq
    \epsilon] \leq \frac1\epsilon \mathbb{E}[Z_T(\delta)],
  \end{align*}
  and \eqref{eq:102} follows from \eqref{eq:103}.
\end{proof}

\begin{Lemma}
  \label{lem:26}
  Let $U$ be an open set in $\mathbf{R}^m$ (and, in addition, be a
  convex set for conditions \ref{item:17} and \ref{item:21} and a cone
  for \ref{item:18}). Let furthermore $V$ be an open set in
  $\mathbf{R}^l$, and $\map{\xi=\xi(x,y)}{U\times V}{\mathbf{L}^0}$ be
  a random field with continuous sample paths such that for every
  compact set $C\subset U\times V$
  \begin{displaymath}
    \mathbb{E}[\sup_{(x,y)\in C}\abs{\xi(x,y)}] < \infty.
  \end{displaymath}
  Then the stochastic field
  \begin{displaymath}
    M_t(x,y) \set \mathbb{E}[\xi(x,y)|\mathcal{F}_t], \quad 0\leq t\leq T,
    \, x \in U, \; y\in V,
  \end{displaymath}
  has a modification with sample paths in
  $\mathbf{D}([0,T],{\mathbf{C}(U\times V)})$. Moreover, if the sample
  paths of $\xi$ belong to $\widetilde{\mathbf{C}}$, then there is a
  modification of $M$ with sample paths in
  $\mathbf{D}([0,T],\widetilde{\mathbf{C}})$, where
  $\widetilde{\mathbf{C}}=\widetilde{\mathbf{C}}(U\times V)$ is any
  one of the following subspaces of $\mathbf{C}=\mathbf{C}(U\times
  V)$:
  \begin{enumerate}[label=(C\arabic{*}), ref=(C\arabic{*})]
  \item \label{item:15} $\widetilde{\mathbf{C}}$ consists of all
    non-negative functions;
  \item \label{item:16} $\widetilde{\mathbf{C}}$ consists of all
    functions $f=f(x,y)$ which are non-decreasing with respect to $x$;
  \item \label{item:17} $\widetilde{\mathbf{C}}$ consists of all
    functions $f=f(x,y)$ which are convex with respect to $x$;
  \item \label{item:18} $\widetilde{\mathbf{C}}$ consists of all
    functions $f=f(x,y)$ which are positively homogeneous with respect
    to $x$:
    \begin{displaymath}
      f(cx,y) = cf(x,y), \quad c>0. 
    \end{displaymath}
  \item \label{item:19} $\widetilde{\mathbf{C}}$ consists of all
    strictly positive functions;
  \item \label{item:20} $\widetilde{\mathbf{C}}$ consists of all
    functions $f=f(x,y)$ which are strictly increasing with respect to
    $x$:
    \begin{displaymath}
      f(x_1,y)<f(x_2,y), \quad x_1\leq x_2, \; x_1\not=x_2; 
    \end{displaymath}
  \item \label{item:21} $\widetilde{\mathbf{C}}$ consists of all
    functions $f=f(x,y)$ which are strictly convex with respect to
    $x$:
    \begin{displaymath}
      \frac12(f(x_1,y)+f(x_2,y))>f(\frac12(x_1+x_2),y), \quad x_1\not=x_2. 
    \end{displaymath}
  \end{enumerate}
\end{Lemma}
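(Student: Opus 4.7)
The plan has two parts. First, the existence of a modification with sample paths in $\mathbf{D}([0,T],\mathbf{C}(U\times V))$ is a direct consequence of Lemma~\ref{lem:25} applied with $m=0$ to the open set $U\times V\subset\mathbf{R}^{m+l}$: the integrability hypothesis $\mathbb{E}[\sup_C|\xi|]<\infty$ is exactly $\mathbb{E}[\norm{\xi}_{0,C}]<\infty$ in the notation of that lemma. From here on I work with this fixed RCLL-in-$\mathbf{C}$ modification of $M$.

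For the ``closed'' conditions \ref{item:15}--\ref{item:18}, each property of $\xi$ can be encoded as countably many linear (in)equalities in $\xi(x_1,y),\dots,\xi(x_k,y)$: $\xi\geq 0$ for \ref{item:15}; $\xi(x_1,y)\leq\xi(x_2,y)$ at $x_1\leq x_2$ for \ref{item:16}; midpoint convexity at rational weights for \ref{item:17}; and $\xi(cx,y)=c\xi(x,y)$ at rational $c>0$ for \ref{item:18}. Conditional expectation preserves each such relation almost surely, so a single aggregated null set handles all of them simultaneously across a countable dense parameter set and rational times. The spatial continuity of $M_t(\cdot)$ and right-continuity of $M_\cdot$ in the $\mathbf{C}$-topology then extend each inequality to all $(t,x,y)\in[0,T]\times U\times V$.

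The strict conditions \ref{item:19}--\ref{item:21} are more delicate, since strict inequalities can be destroyed in the limit. My plan is a compact-exhaustion argument. Let $K$ be a compact subset of the relevant parameter space and set
\[
  \zeta_K\set\inf_K\Phi(\xi),
\]
where $\Phi(\xi)\set\xi(x,y)$ for \ref{item:19}, $\Phi(\xi)\set\xi(x_2,y)-\xi(x_1,y)$ for \ref{item:20}, and $\Phi(\xi)\set\tfrac12(\xi(x_1,y)+\xi(x_2,y))-\xi(\tfrac12(x_1+x_2),y)$ for \ref{item:21}. By continuity of the sample paths on $K$ combined with the strict condition, $\zeta_K>0$ a.s.; it is integrable since dominated by $\sup_C|\xi|$. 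Hence $N^K_t\set\mathbb{E}[\zeta_K|\mathcal{F}_t]$ admits a nonnegative RCLL modification with $N^K_T=\zeta_K>0$ a.s., and such a martingale is strictly positive on all of $[0,T]$ almost surely (apply optional sampling at $\tau\set\inf\{t:N^K_t=0\}$: on $\{\tau\leq T\}$ one gets $N^K_T=0$, contradicting $\zeta_K>0$). Comparing conditional expectations pointwise gives $\Phi(M_t)\geq N^K_t>0$ on $K$ at rational $t$ on a full-measure set, and the RCLL-in-$\mathbf{C}$ regularity from the first part extends this to all $(t,x,y)\in[0,T]\times K$. A countable exhaustion $K_n\uparrow$ of the parameter space finishes the argument.

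The principal obstacle I expect is keeping strict positivity alive through uncountably many parameter values: a pointwise-a.s.\ strict inequality does not aggregate to a uniform statement, which is precisely why the compact-exhaustion step with the auxiliary martingale $N^K$ is essential. Apart from this, the argument reduces to choosing the correct parameter space for each of \ref{item:19}--\ref{item:21} and checking integrability of $\zeta_K$, both of which are routine given $\mathbb{E}[\sup_C|\xi|]<\infty$.
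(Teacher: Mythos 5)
Your proposal is correct and follows essentially the same route as the paper: Lemma~\ref{lem:25} for the $\mathbf{D}([0,T],\mathbf{C})$ modification, countable-dense-parameter plus continuity arguments for \ref{item:15}--\ref{item:18}, and for the strict conditions the auxiliary martingale $\mathbb{E}[\inf_K\xi\,|\,\mathcal{F}_t]$ together with the fact that a martingale with strictly positive terminal value stays strictly positive, combined with compact exhaustion. Your treatment of \ref{item:20}--\ref{item:21} via the functional $\Phi$ over compacts in the product parameter space is just the paper's re-parameterization to the $F_\sigma$ set $\widetilde U\times V$ written out explicitly.
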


\begin{proof} The existence of a modification for $M$ with sample
  paths in $\mathbf{D}([0,T],{\mathbf{C}})$ has been proved in
  Lemma~\ref{lem:25}. Hereafter we shall use this modification.

  The assertions of items \ref{item:15}--\ref{item:18} are
  straightforward, since for every $t\in [0,T]$ these conditions are
  obviously satisfied for the random field
  $\map{M_t}{U}{\mathbf{L}^0}$ and the sample paths of $M$ belong to
  $\mathbf{D}([0,T],\mathbf{C})$.

  To verify \ref{item:19} recall the well-known fact that if $N$ is a
  martingale on $[0,T]$ such that $N_T>0$, then $\inf_{t\in
    [0,T]}N_t>0$. For every compact set $C\subset U\times V$ we have
  by~\ref{item:19} that $\inf_{(x,y)\in C} \xi(x,y)>0$ and, hence,
  \begin{displaymath}
    \inf_{t\in [0,T]}\inf_{(x,y)\in C} M_t(x,y) \geq \inf_{t\in
      [0,T]}\mathbb{E}[\inf_{(x,y)\in C} \xi(x,y)|\mathcal{F}_t] >0, 
  \end{displaymath}
  implying \ref{item:19}. Observe that this argument clearly extends
  to the case, when $U$ is an $F_\sigma$-set, that is, a countable
  union of closed sets.

  The cases \ref{item:20} and \ref{item:21} follow from \ref{item:19}
  by re-parameterization. For example, to obtain \ref{item:20} define
  the set $\widetilde U\subset {\mathbf{R}}^{2m}$ and the random
  fields $\map{\eta}{\widetilde U\times V}{\mathbf{L}^0}$ and
  $\map{N}{\widetilde U\times V\times [0,T]}{\mathbf{L}^0}$ as
  \begin{align*}
    \widetilde U &\set \descr{(x_1,x_2)}{x_i\in U, \; x_1\leq x_2, \;
      x_1\not=x_2}, \\
    \eta(x_1,x_2,y) & \set \xi(x_2,y) - \xi(x_1,y), \\
    N_t(x_1,x_2,y) & \set M_t(x_2,y) - M_t(x_1,y).
  \end{align*}
  While the set $\widetilde U$ is not open, it is an
  $F_\sigma$-set. An application of \ref{item:19} to $\eta$ and $N$
  then yields \ref{item:20} for $\xi$ and $M$.
\end{proof}

\begin{Lemma}
  \label{lem:27}
  In addition to the assumptions of Lemma~\ref{lem:26} suppose that
  \begin{equation}
    \label{eq:104}
    \mathbb{E}[\sup_{(x,y)\in U\times D} \xi(x,y)] < \infty,
  \end{equation}
  for every compact set $D\subset V$. Then the assertions of
  Lemma~\ref{lem:26} also hold for the following subspaces:

  \begin{enumerate}[label=(C\arabic{*}), ref=(C\arabic{*})]
    \setcounter{enumi}{7}
  \item \label{item:22} $\widetilde{\mathbf{C}}$ consists of all
    non-negative functions $f=f(x,y)$ such that for every increasing
    sequence $(C_n)_{n\geq 1}$ of compact sets in $U$ with
    $\cup_{n\geq 1} C_n = U$ and for every compact set $D\subset V$
    \begin{displaymath}
      \lim_{n\to\infty}\sup_{x\in U/C_n}\sup_{y\in D} f(x,y) = 0;
    \end{displaymath}
  \item \label{item:23} $\widetilde{\mathbf{C}}$ consists of all
    functions $f=f(x,y)$ such that for every increasing sequence
    $(C_n)_{n\geq 1}$ of compact sets in $U$ with $\cup_{n\geq 1} C_n
    = U$ and for every compact set $D\subset V$
    \begin{displaymath}
      \lim_{n\to\infty}\sup_{x\in U/C_n}\sup_{y\in D}f(x,y) = -\infty.
    \end{displaymath}
  \end{enumerate}
\end{Lemma}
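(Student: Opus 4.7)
\medskip
\noindent\textbf{Proof proposal for Lemma~\ref{lem:27}.}
The plan is to start from the modification of $M$ produced by Lemma~\ref{lem:26}, whose sample paths already lie in $\mathbf{D}([0,T], \mathbf{C}(U\times V))$, and to show that this modification almost surely also satisfies~\ref{item:22} or~\ref{item:23}. Fix exhausting sequences $(K_n)_{n\geq 1}$ of compact subsets of $U$ and $(D_k)_{k\geq 1}$ of compact subsets of $V$ with $\bigcup_n K_n = U$ and $\bigcup_k D_k = V$. A routine argument (using that any compact subset is eventually contained in some $K_n$ or $D_k$) shows that~\ref{item:22} and~\ref{item:23} need only be verified along these countable families. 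Moreover, because $M_t$ has continuous sample paths in $(x,y)$ and $\xi$ has continuous sample paths, the suprema over $\sigma$-compact sets below equal the suprema over countable dense subsets, hence are $\mathcal{F}_T$-measurable, and pointwise inequalities of martingales (holding a.s.\ for each $(x,y)$) transfer to inequalities for these suprema a.s.\ for all $t \in [0,T]$ via their joint continuity/RCLL structure.

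For~\ref{item:22} introduce
\begin{displaymath}
  Z_n^{(k)} \set \sup_{x \in U\setminus K_n}\sup_{y \in D_k} \xi(x,y), \quad n,k\geq 1.
\end{displaymath}
These are non-negative, decreasing in $n$, bounded by $\sup_{U\times D_k}\xi \in \mathbf{L}^1$ thanks to~\eqref{eq:104}, and tend to $0$ a.s.\ because the sample paths of $\xi$ satisfy~\ref{item:22}. Dominated convergence gives $\mathbb{E}[Z_n^{(k)}] \to 0$. Let $N_t^{(n,k)}$ be an RCLL version of the martingale $\mathbb{E}[Z_n^{(k)}\mid \mathcal{F}_t]$. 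By Doob's maximal inequality
\begin{displaymath}
  \mathbb{P}\!\left[\sup_{t\in[0,T]} N_t^{(n,k)} > \epsilon\right] \leq \epsilon^{-1}\mathbb{E}[Z_n^{(k)}] \xrightarrow[n\to\infty]{} 0,
\end{displaymath}
and because $(N_t^{(n,k)})_n$ is decreasing in $n$, so is $\sup_t N_t^{(n,k)}$, and convergence in probability upgrades to a.s.\ convergence to $0$. For each fixed $(x,y) \in (U\setminus K_n) \times D_k$, $\xi(x,y)\leq Z_n^{(k)}$ yields $M_t(x,y) \leq N_t^{(n,k)}$ a.s.\ for all $t$; taking a countable dense subset and invoking continuity of $M_t$ in $(x,y)$ gives $\sup_{(x,y)\in (U\setminus K_n)\times D_k} M_t(x,y) \leq N_t^{(n,k)}$ a.s.\ for all $t$, which together with the convergence of $\sup_t N_t^{(n,k)}$ establishes~\ref{item:22} for $M$.

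For~\ref{item:23} the same $Z_n^{(k)}$ now decreases a.s.\ to $-\infty$, while remaining bounded above by $\sup_{U\times D_k}\xi \in \mathbf{L}^1$. The obstacle is that Doob's inequality does not apply directly to a martingale tending to $-\infty$, so I would use truncation. For each integer $M \geq 1$ set
\begin{displaymath}
  \bar Z_n^{(k,M)} \set (Z_n^{(k)} \vee (-M)) + M \geq 0,
\end{displaymath}
which is integrable, decreases to $0$ a.s., and gives rise to the non-negative martingale $\bar N_t^{(n,k,M)} \set \mathbb{E}[\bar Z_n^{(k,M)} \mid \mathcal{F}_t]$. The argument of the previous paragraph shows $\sup_t \bar N_t^{(n,k,M)} \to 0$ a.s.\ as $n\to\infty$. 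Since $\mathbb{E}[Z_n^{(k)}\mid \mathcal{F}_t] \leq \mathbb{E}[Z_n^{(k)}\vee(-M)\mid \mathcal{F}_t] = \bar N_t^{(n,k,M)} - M$, we obtain $\limsup_n \sup_t \mathbb{E}[Z_n^{(k)}\mid \mathcal{F}_t] \leq -M$ a.s.; intersecting the corresponding full-measure events over the countable collection $M \in \mathbb{N}$ gives $\sup_t \mathbb{E}[Z_n^{(k)}\mid \mathcal{F}_t] \to -\infty$ a.s. The spatial bound $\sup_{(x,y)\in (U\setminus K_n)\times D_k} M_t(x,y) \leq \mathbb{E}[Z_n^{(k)}\mid \mathcal{F}_t]$ is obtained as in the previous paragraph from a dense countable subset and the continuity of $M_t$ in $(x,y)$. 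Intersecting over $(n,k,M)$ yields~\ref{item:23} for the chosen modification of $M$, completing the proof.
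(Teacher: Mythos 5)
Your proof is correct and takes essentially the same route as the paper's: for \ref{item:22} it combines dominated convergence with Doob's maximal inequality applied to the martingales generated by the tail suprema of $\xi$, and it handles \ref{item:23} by shifting and truncating to reduce to the non-negative case, which is exactly the paper's observation that $f$ satisfies \ref{item:23} if and only if $\max(f+n,0)$ satisfies \ref{item:22} for every integer $n$. Your explicit reduction to a countable exhausting family of compacts and the monotonicity upgrade from convergence in probability to almost sure convergence simply spell out steps the paper leaves implicit.
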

\begin{proof}
  For the proof of \ref{item:22} recall that by Doob's inequality, if
  $(N^n)_{n\geq 1}$ is a sequence of martingales such that $N^n_T\to
  0$ in $\mathbf{L}^1$, then $(N^n)^*_T \set \sup_{0\leq t\leq
    T}\abs{N^n_t} \to 0$ in $\mathbf{L}^0$. Accounting for
  \eqref{eq:104} we deduce that, for the compact sets $(C_n)_{n\geq
    1}$ and $D$ as in \ref{item:22},
  \begin{displaymath}
    \lim_{n\to\infty}\mathbb{E}[\sup_{x\in U/C_n}\sup_{y\in D} \xi(x,y)]
    = 0. 
  \end{displaymath}
  The validity of \ref{item:22} for the sample paths of $M$ follows
  now from
  \begin{displaymath}
    \sup_{x\in U/C_n}\sup_{y\in D} (M(x,y))^*_T \leq \sup_{0\leq t \leq T}
    \mathbb{E}[\sup_{x\in U/C_n}\sup_{y\in D}\xi(x,y)|\mathcal{F}_t], 
  \end{displaymath}
  where we used the fact that in \ref{item:22} $\xi\geq 0$.

  Finally, \ref{item:23} follows from \ref{item:22} if we observe that
  a function $f=f(x,y)$ satisfies \ref{item:23} if and only if for
  every positive integer $n$ the function
  \begin{displaymath}
    g_n(x,y) \set \max(f(x,y)+n,0), \quad (x,y)\in U\times V,
  \end{displaymath}
  satisfies \ref{item:22}.
\end{proof}

\begin{Acknowledgments}
  We thank Andreas Hamel for references on the max-rule for
  subdifferentials used in Appendix~\ref{sec:envelope-theorem}.
\end{Acknowledgments}

\bibliographystyle{plainnat} \bibliography{../bib/finance}

\end{document}